\Crefname{section}{Sec.}{Secs.}
\Crefname{subsection}{Sec.}{Secs.}
\Crefname{subsection}{Sec.}{Secs.}
\Crefname{figure}{Fig.}{Figs.}
\Crefname{table}{Table}{Tables}
\Crefname{algorithm}{Algorithm}{Algorithms}
\Crefname{theorem}{Theorem}{Theorems}
\Crefname{lemma}{Lemma}{Lemmas}
\Crefname{proposition}{Proposition}{Propositions}
\Crefname{corollary}{Corollary}{Corollaries}
\newtheorem{theorem}{Theorem}
\newtheorem{proposition}[theorem]{Proposition}
\newtheorem{lemma}[theorem]{Lemma}
\theoremstyle{definition}
\theoremstyle{definition}
\newcommand{\ket}[1]{|#1\rangle}
\newcommand{\bra}[1]{\langle#1|}
\newcommand{\ketbra}[2]{|#1\rangle\langle#2|}
\newcommand{\braket}[2]{\langle #1 \vert #2 \rangle}
\DeclareMathOperator{\tr}{Tr}
\newcommand{\rid}{{\mathrm{id}}}
\newcommand{\bR}{\mathbb{R}}
\newcommand{\bC}{\mathbb{C}}
\newcommand{\bN}{\mathbb{N}}
\newcommand{\bI}{\mathbb{I}}
\newcommand{\bE}{\mathbb{E}}
\newcommand{\cA}{{\mathcal{A}}}
\newcommand{\cB}{{\mathcal{B}}}
\newcommand{\cC}{{\mathcal{C}}}
\newcommand{\cD}{{\mathcal{D}}}
\newcommand{\cE}{{\mathcal{E}}}
\newcommand{\cF}{{\mathcal{F}}}
\newcommand{\cG}{{\mathcal{G}}}
\newcommand{\cH}{{\mathcal{H}}}
\newcommand{\cJ}{{\mathcal{J}}}
\newcommand{\cL}{{\mathcal{L}}}
\newcommand{\cM}{{\mathcal{M}}}
\newcommand{\cN}{{\mathcal{N}}}
\newcommand{\cO}{{\mathcal{O}}}
\newcommand{\cP}{{\mathcal{P}}}
\newcommand{\cQ}{{\mathcal{Q}}}
\newcommand{\cR}{{\mathcal{R}}}
\newcommand{\cT}{{\mathcal{T}}}
\newcommand{\cU}{{\mathcal{U}}}
\newcommand{\cV}{{\mathcal{V}}}
\newcommand{\cW}{{\mathcal{W}}}
\newcommand{\cX}{{\mathcal{X}}}
\newcommand{\rA}{{\mathrm{A}}}
\newcommand{\rB}{{\mathrm{B}}}
\newcommand{\rF}{{\mathrm{F}}}
\newcommand{\sA}{{\mathsf{A}}}
\newcommand{\sB}{{\mathsf{B}}}
\newcommand{\sC}{{\mathsf{C}}}
\newcommand{\sD}{{\mathsf{D}}}
\newcommand{\sE}{{\mathsf{E}}}
\newcommand{\sF}{{\mathsf{F}}}
\newcommand{\sG}{{\mathsf{G}}}
\newcommand{\sH}{{\mathsf{H}}}
\newcommand{\sL}{{\mathsf{L}}}
\newcommand{\sM}{{\mathsf{M}}}
\newcommand{\sN}{{\mathsf{N}}}
\newcommand{\sR}{{\mathsf{R}}}
\newcommand{\sS}{{\mathsf{S}}}
\newcommand{\sX}{{\mathsf{X}}}
\newcommand{\tF}{\mathtt{F}}
\newcommand{\f}{\frac}
\newcommand{\rarr}{\rightarrow}
\newcommand{\til}{\tilde}
\newcommand{\supp}{\mathrm{supp}}
\newcommand{\poly}{\mathrm{poly}}
\newcommand{\sgn}{\mathrm{sgn}}
\newcommand{\im}{\mathrm{im}}
\newtcolorbox{dashedbox}[1][]{
  enhanced,
  sharp corners,
  boxrule=0.5pt,
  colback=white,
  colframe=black,
  fonttitle=\bfseries,
  title=#1,
  dash pattern=on 2pt off 2pt,
  borderline={0.5pt}{0pt}{black!60!white,dashed},
  before skip=5pt, after skip=5pt,
  boxsep=5pt
}
\begin{document}

\pagestyle{empty}

\begin{titlepage}


\begin{flushright}
YITP-25-135
\end{flushright}

\vspace*{0.5cm}


\begin{center}
{\LARGE\bfseries Quantum algorithms for Uhlmann transformation \par}


\vspace*{0.4cm}

{\large
Takeru Utsumi$^{1, }$\footnote{takeru-utsumi@g.ecc.u-tokyo.ac.jp},
Yoshifumi Nakata$^{2, }$\footnote{yoshifumi.nakata@yukawa.kyoto-u.ac.jp},
Qisheng Wang$^{3, }$\footnote{QishengWang1994@gmail.com}, and 
Ryuji Takagi$^{1, }$\footnote{ryujitakagi.pat@gmail.com}
}

\vspace{0.3cm}

{\small
$^{1}$Graduate School of Arts and Sciences, The University of Tokyo, 3-8-1 Komaba, Meguro-ku, Tokyo, Japan\\
$^{2}$Yukawa Institute for Theoretical Physics, Kyoto University, Oiwake-cho, Kitashirakawa, Sakyo-ku, Kyoto, Japan\\
$^{3}$School of Informatics, University of Edinburgh, EH8 9AB Edinburgh, United Kingdom
}

\vspace{0.2cm}

\today

\vspace*{-1cm}


\vspace*{1.5cm}

\begin{abstract}

Uhlmann's theorem is a central result in quantum information theory, which associates the closeness of two quantum states with that of their purifications. The theorem also well characterizes a fundamental task: how close a pure quantum state can be transformed into another state via local operations acting only on its subsystem. The optimal transformation for this task is called the Uhlmann transformation, which has broad applications in various information-processing tasks. However, its quantum circuit implementation and computational cost have remained unclear, limiting the utility of the transformation.

In this work, we fill this gap by proposing quantum algorithms that realize the Uhlmann transformation in query and sample access models. Notably, our Uhlmann transformation algorithms can be polynomial-time for low-rank states, exhibiting an \textit{exponential} improvement over the approach of \hyperlink{cite.metger2023stateqipspace}{Metger and Yuen (FOCS 2023)} and other naive approaches based on quantum state tomography. In addition, we derive a lower bound on the query and sample complexities of the Uhlmann transformation for a deeper understanding of its algorithmic features.

We apply our Uhlmann transformation algorithms to fidelity estimation between two states, and substantially improve the previous best query and sample complexities by \hyperlink{cite.gilyen2022improvedfidelity}{Gily\'en and Poremba (TQC 2022)} and \hyperlink{cite.Liu2024geometricmean}{Liu, Wang, Wilde, and Zhang (\textit{npj Quantum Inf.}\ 2025)}. Specifically, for the estimation with additive error $\delta$, our approaches yield query complexity $\tilde O(\min\{\frac{\kappa}{\delta}, \frac{r}{\delta^2}\})$ and sample complexity $\tilde O(\min\{\frac{\kappa^2}{\delta^2}, \frac{r^2}{\delta^4}\})$, compared to the previous results of $\tilde O(\min\{\frac{\kappa^4}{\delta}, \frac{r^{2.5}}{\delta^{5}}\})$ queries and $\tilde O(\min\{\frac{\kappa^9}{\delta^3}, \frac{r^{5.5}}{\delta^{12}}\})$ samples, where $\kappa$ is the reciprocal of the minimum non-zero eigenvalue of the states and $r$ is their rank. 

We further discuss other applications to several information-theoretic tasks, including entanglement transmission, quantum state merging, and the algorithmic implementation of the Petz recovery map, providing a comprehensive evaluation of their computational costs. These results, hence, contribute to the practical realization of such widely recognized and useful protocols.

\end{abstract}
\end{center}
\end{titlepage}

\tableofcontents

\clearpage
\setcounter{page}{1}  
\pagenumbering{arabic} 
\pagestyle{plain}

\section{Introduction}
\label{sec:introduction}

Uhlmann's theorem \cite{uhlmann1976transition} has been playing a crucial role in quantum information
science. The theorem states that for quantum states $\rho^\sA$ and $\sigma^\sA$ on system $\sA$ and their purifications $\ket{\rho}^{\sA\sB}$ and $\ket{\sigma}^{\sA\sB}$, it holds that 
\begin{align}
\label{inteq:57}
    \rF(\rho^\sA, \sigma^\sA) = \max_{U^\sB}\rF\big((\bI^\sA \otimes U^\sB) \ketbra{\rho}{\rho}^{\sA\sB}(\bI^\sA \otimes U^\sB)^\dag, \ketbra{\sigma}{\sigma}^{\sA\sB}\big),
\end{align}
where the maximization is taken over all unitaries acting on the purifying system $\sB$, and $\bI^\sA$ is the identity operator on $\sA$.
The fidelity $\rF(\rho, \sigma) = \big(\tr\big[\sqrt{\sqrt{\sigma}\rho\sqrt{\sigma}}\big]\big)^2$ is a closeness measure of quantum states $\rho$ and $\sigma$; it increases as the states become closer. 
Notably, the Uhlmann's theorem characterizes optimal local transformation between $\ket{\rho}^{\sA\sB}$ and $\ket{\sigma}^{\sA\sB}$ because, for any quantum channel $\cT^\sB$ acting on $\sB$, $\rF\big((\rid^\sA \otimes \cT^\sB)(\ketbra{\rho}{\rho}^{\sA\sB}), \ketbra{\sigma}{\sigma}^{\sA\sB}\big)$ never exceeds $\rF(\rho^\sA, \sigma^\sA)$, where $\rid^\sA$ is the identity map on $\sA$. 
This implies that the unitary achieving the maximization in Eq.~\eqref{inteq:57} is the optimal local transformation on $\sB$, which takes $\ket{\rho}^{\sA\sB}$ as close as theoretically possible to $\ket{\sigma}^{\sA\sB}$. 
This unitary $U^\sB$ is called the Uhlmann unitary, and the resulting transformation is the Uhlmann transformation. The Uhlmann unitary $U^\sB$ is not determined solely by the states $\rho^\sA$ and $\sigma^\sA$, but only once their purifications are specified.
The Uhlmann transformation has long served as a key tool in various fields, including the quantum Shannon theory \cite{devetak2005distillation, Abeyesinghe2009Motherfamilytree, Datta2011apexfamilytree, Datta2013OneShotCompression, bennett2014quantum, Wilde2017ConvrersePrivate, Metger2024entropyaccumulation, mazzola2025uhlmannstheoremrelativeentropies, fang2025variexpressuhlmann}, complexity theory \cite{kitaev2000parallelamplif, arnonfriedman2023computationalentanglementtheory, metger2023stateqipspace, bostanci2023unicompuhlmann, chia2024complexitytheoryquantumpromise, poremba2024LearningStabilizer}, quantum cryptography \cite{Mayer1997unconditionallysecure, Lo1998WhyBitCommit, Yan2022BitCommit, Khurana2024commitment}, and also fundamental physics \cite{hayden2007black, harlow2013CompvsFirewall, aaronson2016moneyblackhole, Kirklin2020holographicdual, Kirklin2022islandUhlmannphase, brakerski2023blackholeradiation}.

Despite its broad applicability, however, the Uhlmann transformation has an underexplored aspect: the Uhlmann's theorem does not offer an explicit way to realize the transformation in the form of quantum circuits, leaving its algorithmic properties open.
In particular, when we do not have prior knowledge of the descriptions of the states $\ket{\rho}^{\sA\sB}$ and $\ket{\sigma}^{\sA\sB}$, yet have multiple access to samples of these states or queries to the unitaries that prepare them, the following question naturally arises:
\begin{center}
\emph{How can we construct a quantum circuit implementing the Uhlmann transformation using these resources?}
\end{center}
As quantum technologies advance and practical interest grows, the importance of these algorithmic features, such as the computational cost, is increasing.

Recently, there has been progress along this line.
In Ref. \cite{metger2023stateqipspace}, a quantum query algorithm was proposed with an explicit quantum circuit for implementing the Uhlmann transformation. Their algorithm realizes the transformation with polynomial-space computation, marking substantial progress. 
However, it incurs unavoidable exponential costs other than space complexity, because it relies on an exponentially large number of purified state preparations and measurements, as well as an exponentially deep quantum circuit (see \Cref{sec:metger and yuen's algorithm} for details).
These high computational costs motivate us to develop more efficient algorithms.

In this paper, we resolve the problem by proposing quantum query and sample algorithms that implement the Uhlmann transformation in the form of a quantum circuit.
Our algorithms are constructed in three commonly studied computational models: the purified query, the purified sample, and the mixed sample access models.
We then analyze their computational costs, including the query and sample complexities, the number of one- and two-qubit gates, and the number of qubits at any one time.
It is shown that our algorithms can significantly outperform the approach in Ref. \cite{metger2023stateqipspace} and other naive approaches based on quantum state tomography.
In particular, when either $\rho^\sA$ or $\sigma^\sA$ is a polynomial-rank state, our algorithms can be implemented with polynomial computational cost up to inverse-polynomial accuracy in purified query and sample access models.
Thus, they achieve an \emph{exponential} improvement.
While previous approaches rely on a large number of state measurements, our improvement is built on leveraging powerful techniques such as the quantum singular value transformation (QSVT) \cite{gilyen2019qsvt, Gilyn2019thesis} and density matrix exponentiation \cite{lloyd2014QuantPrincCompAnaly}, enabling us to perform the Uhlmann transformation without state measurements.

In addition, we provide a lower bound on the query and sample complexities for the Uhlmann transformation, based on a result from the mixedness testing---the task of certifying whether a state is the completely mixed state or far from it \cite{ODonnell2021SpectrumTesting, MdW16surveyquantproptest, Wright2016howtolearnQS}.
Our result shows that, for a pair of pure states $\ket{\rho}^{\sA\sB}$ and $\ket{\sigma}^{\sA\sB}$ whose reduced states $\rho^\sA$ and $\sigma^\sA$ have the same rank, any quantum algorithm for the Uhlmann transformation necessarily requires query or sample complexity that scales as the cube root of the rank.
Together with the explicit construction of the algorithms, the lower bound allows us to characterize the algorithmic features of the Uhlmann transformation in terms of both upper and lower bounds on its computational costs.

We apply the Uhlmann transformation algorithms to fidelity estimation, the task of estimating the fidelity between two states $\rho^\sA$ and $\sigma^\sA$.
This task is crucial for information processing, as it allows us to assess the accuracy of quantum state preparation, the reliability of quantum communication, and the outputs of quantum circuits and algorithms.
We present estimation algorithms for the square root fidelity $\sqrt{\rF}(\rho^\sA, \sigma^\sA)$ based on the Uhlmann transformation and evaluate their computational costs in the three query and sample models.
Notably, our algorithm achieves a significant improvement in both query and sample complexities over the prior state-of-the-art algorithms in Refs.~\cite{gilyen2022improvedfidelity, Liu2024geometricmean}.

The key idea behind our approach is that, according to the Uhlmann's theorem in Eq.~\eqref{inteq:57}, we reduce the problem of estimating the fidelity between generally mixed states $\rho^\sA$ and $\sigma^\sA$ to estimating the fidelity between pure states.
We apply the Uhlmann transformation using our algorithms, followed by known fidelity estimation techniques for pure states \cite{Wang2024OptimalFidelityPure, wang2024sampleoptimal}, which achieve an optimal number of queries or samples.
In the mixed sample access model, this approach requires the given mixed states to be purified. To this end, we employ purification subroutines that prepare a purification from multiple samples, such as the random purification \cite{tang2025conjugatequerieshelp} or the so-called canonical purification.
Consequently, we obtain improved algorithms for square root fidelity estimation between general states.
This also serves as a good application of the recently proposed subroutine in Ref.~\cite{tang2025conjugatequerieshelp}, leading to a constructive algorithmic result.

To highlight the wide-ranging utility of our algorithms, we further explore other applications to several information-theoretic tasks. 
Based on the decoupling approach \cite{hayden2008decoupling, dupuis2010decoupling, Szehr2013decouple2design, dupuis2014one, preskill2016quantumshannon}, the Uhlmann transformation is known as an operation that can achieve theoretical limits, such as the quantum capacity \cite{hayden2008decoupling}.
In this study, we focus on two tasks: entanglement transmission \cite{Schumacher1996sendingentanglement, schumacher2001approximateerrorcorrection, hayden2007black, datta2013oneshotentassistQCcom, beigi2016decoding, khatri2024principlemodern} and quantum state merging \cite{Horodecki2005Partialquantinfo, berta2008singleshot, dupuis2014one, yamasaki2019statemergesmalldimension}.
We apply our Uhlmann transformation algorithms to accomplish these tasks in the form of quantum circuits and evaluate their computational costs.

We also apply our algorithms to the algorithmic implementation of the Petz recovery map \cite{petz1986sufficient, petz1988sufficiency}, a map which plays an important role in various fields of quantum science \cite{barnum2002reversing, ng2010simpleapproach, chen2020entanglement, lauten2022approx, gilyen2022petzmap, biswas2023noiseadapted, nakayama2023petz}.
The Petz recovery map is defined for a quantum channel and a reference state, and is known to reverse the effect of the channel with near-optimal error \cite{barnum2002reversing, beigi2016decoding}.
While quantum circuits for implementing the Petz recovery map were previously studied \cite{gilyen2022petzmap, biswas2023noiseadapted}, they relied on a crucial assumption that a Stinespring dilation unitary \cite{stinespring1955} of the channel and its inverse can be implemented. 
We provide an algorithm without this assumption---which approximately realizes the Petz recovery map using only the quantum channel itself and a reference state---by leveraging our Uhlmann transformation algorithm for the mixed sample access model.
The core subroutine that makes this possible is an algorithm that applies a Stinespring unitary of a quantum channel.
This algorithm is composed of the canonical purification algorithm and the Uhlmann transformation algorithm.

The rest of this paper is organized as follows. An overview of our main results is provided in \Cref{sec:mainresults}, and summary and outlooks are in \Cref{sec:conclusion}. 
We provide thorough preliminaries in \Cref{sec:preliminaries}.
The details of the algorithms in each computational model are explained in Secs.~\ref{sec:purif query},~\ref{sec:purif sample}, and~\ref{sec:mixed sample}, respectively.
Applications of our algorithm are presented in \Cref{sec:fidelity estimation} for fidelity estimation, in \Cref{sec:decoupling and uhlmann} for the combination with the decoupling approach, and in \Cref{sec:sample petz} for the algorithmic implementation of the Petz recovery map.
Technical materials are provided in Appendices~\ref{sec:altanative mixed sample uhlmann} to~\ref{sec:appendix error not accumulate}.

\section{Main results}
\label{sec:mainresults}

We provide an overview of our main results here.
We begin with a brief introduction to the necessary notation and background for providing a clear overview.
The full preliminaries can be found in \Cref{sec:preliminaries}.

We use superscripts to indicate the systems on which linear operators and linear maps are defined.
For instance, $M^{\sA}$ denotes a linear operator on system $\sA$ and $\cT^{\sA \rarr \sB}$ denotes a linear map from $\sA$ to $\sB$, and additionally, $\cT^{\sA \rightarrow \sA}$ is simply written as $\cT^{\sA}$.
The superscript is omitted when it is clear from the context.
The identity operator $\bI$ and the identity map $\rid$ are often implicit; for instance, we denote $\bI^\sA \otimes M^\sB$ simply by $M^{\sB}$.
For quantum states, we denote a pure state on $\sA\sB$ such as $\ket{\omega}^{\sA\sB}$, and its reduced density operator on $\sA$ by $\omega^\sA$, i.e., $\omega^\sA = \tr_\sB\big[\ketbra{\omega}{\omega}^{\sA\sB}\big]$, where $\tr_\sB$ is the partial trace over the system $\sB$.

The (squared) fidelity between quantum states $\rho$ and $\sigma$ is defined as $\rF(\rho, \sigma) = \big(\tr\big[\sqrt{\sqrt{\sigma}\rho\sqrt{\sigma}}\big]\big)^2$.
We also use the square root fidelity, defined as $\sqrt{\rF}(\rho, \sigma) = \sqrt{\rF(\rho, \sigma)}$.
The fidelity is monotonic with respect to the partial trace: $\rF(\rho^\rA, \sigma^\rA) \geq \rF(\rho^{\rA\rB}, \sigma^{\rA\rB})$.
For notational convenience, we write the fidelity as $\rF(\rho, \ket{\phi})$ or $\rF(\ket{\psi}, \ket{\phi})$, when one or both of its arguments are pure states.

Our goal is to construct a quantum algorithm that \emph{approximately} realizes the unitary $U^\sB$ achieving Eq.~\eqref{inteq:57}.
In the algorithm construction, we consider the following three computational models:
\begin{enumerate}[label=\Roman*.]
    \item Purified query access model~\cite{watrous2002limitQSZK, watrous2009QSZKquantumattacks, belovs2019ClassicalDist, gilyen2019DistPropTesting, brandao2019SDPOpt, vanapeldoorn2019impleveSDP, Subramanian2021estimatealphalenni, gilyen2022improvedfidelity, Apeldoorn2023TomoPurifQuery, wang2023QAfidelityest, wang2023SampletoQuary, luo2024SuccinctTest, wang2024NewQuantEnt, Liu2024geometricmean}: let $U_\rho^{\sA\sB}$ and $U_\sigma^{\sA\sB}$ be unitaries which prepare states $\ket{\rho}^{\sA\sB}$ and $\ket{\sigma}^{\sA\sB}$; respectively, that is, $U_\rho^{\sA\sB}\ket{0}^{\sA\sB}=\ket{\rho}^{\sA\sB}$, and $U_\sigma^{\sA\sB}\ket{0}^{\sA\sB}=\ket{\sigma}^{\sA\sB}$.
    We assume that $U_\rho^{\sA\sB}$ and $U_\sigma^{\sA\sB}$ are available multiple times as unitary oracles.
    It is also assumed that we can query their inverses $(U_\rho^{\sA\sB})^\dag$ and $(U_\sigma^{\sA\sB})^\dag$.
    \item Purified sample access model~\cite{chen2024localtestUniInv, liu2024exponentialseparations}:
    multiple independent and identical copies of purified states $\ket{\rho}^{\sA\sB}$ and $\ket{\sigma}^{\sA\sB}$ are available.
    \item Mixed sample access model~\cite{gilyen2022improvedfidelity, wang2023SampletoQuary, wang2024TimeEfficientEntEstSamp, wang2024sampleoptimal}: multiple independent and identical copies of states $\rho^\sA$ and $\sigma^\sA$ are available.
\end{enumerate}

For each model, we evaluate the number of queries/samples for implementing the Uhlmann transformation.
Since each model listed above can simulate those listed below, the models are ordered from the one with the strongest assumptions and highest computational power (I) to the one with the weakest assumptions and lowest power (III).
The details and motivations behind these computational models are presented in \Cref{sec:three computational models}.

When we consider the Uhlmann transformation, the purified sample access model (II) appears to be the most natural setting, as the transformation is specified for a pair of purified states and cannot be determined from mixed states alone.
Nevertheless, there is still motivation to consider the mixed sample access model (III), due to its utility in various information-theoretic tasks, as discussed in our applications.
In the mixed sample access model (III), we consider the Uhlmann transformation tailored to a specific purification named the \emph{canonical purification} (its definition is given by Eq.~\eqref{eq:def of canonical purification}).
We construct a quantum algorithm that prepares the canonical purification from multiple sampled states, and utilize it as part of our Uhlmann transformation algorithm in model (III).

We further introduce some technical notation used throughout this paper; see \Cref{tab:technical notation} for a summary.
These quantities are related to each other. For example, $r_\omega \leq \kappa_\omega$ holds for any state $\omega$, while $r_\omega$ does not imply any upper bound on $\kappa_\omega$.
The inequality $r \leq r_{\rm min}$ follows from the fact that the rank of a product of matrices is at most the rank of each matrix.
Moreover, since $r s_{\rm min} \leq \rF(\rho^\sA, \sigma^\sA) \leq 1$, it follows that $r \leq 1/s_{\rm min}$.


\renewcommand{\arraystretch}{1.8}
\begin{table*}
\centering
\caption{A table of some technical notation.}
\begin{tabular}{wc{5em}|wl{33em}} 
    $d_\sA$, $d_\sB$ & \hspace{1mm} The dimensions of the Hilbert spaces $\cH^\sA$ and $\cH^\sB$, respectively. \\ \hline
    $s_{\rm min}$, $r$ & \hspace{1mm} The minimum non-zero singular value and the rank of $\sqrt{\sigma^\sA}\sqrt{\rho^\sA}$, respectively. \\ \hline
        $r_\rho$, $r_\sigma$& \hspace{1mm} The ranks of $\rho^\sA$ and $\sigma^\sA$, respectively. \\ \hline
    $r_{\rm min}$ & \hspace{1mm} The smaller of the ranks of $\rho^\sA$ and $\sigma^\sA$: $r_{\rm min} = \min\{r_\rho, r_\sigma\}$. \\ \hline 
    $\rho_{\rm min}$, $\sigma_{\rm min}$ & \hspace{1mm} The minimum non-zero eigenvalues of $\rho^\sA$ and $\sigma^\sA$, respectively. \\ \hline
    $\kappa_\rho$, $\kappa_\sigma$ & \hspace{1mm} The reciprocals of $\rho_{\rm min}$ and $\sigma_{\rm min}$, respectively: $\kappa_\rho = 1/\rho_{\rm min}$ and $\kappa_\sigma = 1/\sigma_{\rm min}$. \\
    \hline
    $\kappa_{\rm min}$ & \hspace{1mm} The smaller of the reciprocals of $\rho_{\rm min}$ and $\sigma_{\rm min}$: $\kappa_{\rm min} = \min\{\kappa_\rho, \kappa_\sigma\}$.
\end{tabular}
\label{tab:technical notation}
\end{table*}
\renewcommand{\arraystretch}{1.0}


For evaluating computational costs, we use the Landau notation $\cO(\cdot)$, $\Omega(\cdot)$, and $\til{\cO}(\cdot)$. These describe the scaling in terms of dimension $d$ and the approximation accuracy $\delta$, that is, a notation such as $\cO(d/\delta)$ is understood as both $\cO(d)$ and $\cO(1/\delta)$.
In particular, $\til{\cO}\big(f(d, 1/\delta)\big)$ represents $\til{\cO}\big(f(d, 1/\delta)\big) = \cO\big(f(d, 1/\delta) {\rm polylog}(d, 1/\delta)\big)$,
where $f$ is a certain function; it hides polylogarithmic factors in $d$ and $1/\delta$. 
We also use the notation $\poly(d, 1/\delta)$ to denote a polynomial in $d$ and $1/\delta$.

We first provide upper bounds on the computational costs for implementing the Uhlmann transformation in \Cref{sec:upper bound uhl}.
Next, in \Cref{sec:comparison with state tom}, we compare the computational costs of our algorithm with other algorithms, highlighting the advantage of our algorithm.
In \Cref{sec:overview of lower bound}, we provide a lower bound on the query and sample complexities for generally implementing the Uhlmann transformation, which follows from the hardness of the fidelity estimation.
In \Cref{sec:overview application}, we present applications of the Uhlmann transformation algorithm.


\subsection{Upper bounds on the computational costs via Uhlmann transformation algorithms}
\label{sec:upper bound uhl}

We here provide upper bounds on the computational costs required for the Uhlmann transformation in each computational model in Secs.~\ref{sec:upper purif query main},~\ref{sec:upper purif sample main}, and~\ref{sec:upper mixed sample main}.

\subsubsection{In the purified query access model}
\label{sec:upper purif query main}

Our statement in the purified query access model is as follows.

\begin{theorem}[Uhlmann transformation algorithm in the purified query access models]
\label{infthm:uhlmenn purif query}
For any $\delta \in (0, 1)$, there exists a quantum query algorithm that realizes a quantum channel $\cT^{\sB}$ satisfying 
\begin{align}
\label{inteq:17}
     \rF(\cT^{\sB}(\ketbra{\rho}{\rho}^{\sA\sB}), \ket{\sigma}^{\sA\sB}) \geq \rF(\rho^\sA, \sigma^\sA) - \delta,
\end{align}
using $u$ queries to $U_\rho^{\sA\sB}$, $U_\sigma^{\sA\sB}$, and their inverses, where $u = \cO\big(\min\big\{\f{1}{s_{\rm min}}, \f{r}{\delta}\big\}\log{\big(\f{1}{\delta}\big)}\big)$. The quantum circuit of the algorithm consists of $\cO\big(u\log{(d_\sA d_\sB)}\big)$ one- and two-qubit gates, and at any one time, $\cO\big(\log(d_\sA d_\sB)\big)$ qubits suffice.
\end{theorem}

A more formal and technical version of \Cref{infthm:uhlmenn purif query} is provided in \Cref{sec:purif query} as \Cref{thm:Uhlmann alg purif query model}, where the approximation error of the transformation is measured not only by the fidelity difference but also by the diamond norm, in order to evaluate the performance for quantum channels.

Theorem~\ref{infthm:uhlmenn purif query} states that, by this quantum query algorithm, the Uhlmann transformation is approximately realized with accuracy $\delta$.
Remarkably, when either $1/s_{\rm min}$ or $r/\delta$ is sufficiently small, for instance, a polynomial in the number of qubits, the algorithm is efficiently implementable with a polynomial number of queries.
This implies that our approach achieves an exponential improvement in query and gate costs over the previous approach in Ref.~\cite{metger2023stateqipspace}, which requires exponential costs, $\poly(d_\sA d_\sB, 1/\delta)$.
We will discuss comparison in detail in \Cref{sec:comparison with state tom}.

This algorithm is a query algorithm. However, if quantum circuit descriptions of the unitaries $U_\rho^{\sA\sB}$ and $U_\sigma^{\sA\sB}$ are given, in other words, if one can implement $U_\rho^{\sA\sB}$, $U_\sigma^{\sA\sB}$, their inverses on one's own, the circuit complexity, the total number of one- and two-qubit gates required in the algorithm, is evaluated as
\begin{align}
    \cO\Big(u \big(\cC(U_\rho) + \cC(U_\sigma) + \log{(d_\sA d_\sB)}\big)\Big),
\end{align}
where $\cC(U_\rho)$ and $\cC(U_\sigma)$ denote the circuit complexity of $U_\rho^{\sA\sB}$ and $U_\sigma^{\sA\sB}$, respectively.

One might be concerned about the necessity of knowing either the value of $s_{\rm min}$ or $r$ for implementing this algorithm.
To estimate the singular value, quantum algorithms based on the quantum phase estimation have been proposed in Refs.~\cite{chakraborty2019powerblock, Gilyn2019thesis}.
We can estimate $s_{\rm min}$ with the desired accuracy in advance.
Similarly, several algorithms for rank estimation are known~\cite{Tan2021ValEstRank, wang2024NewQuantEnt}.
In our construction of the Uhlmann transformation algorithm, we employ the QSVT with the sign function introduced in \Cref{sec:BE and QSVT intro}.
Due to a property of the QSVT with the sign function, it is sufficient to have a lower bound on $s_{\rm min}$ or an upper bound on $r$.

\subsubsection{In the purified sample access model}
\label{sec:upper purif sample main}

Next, we provide the statement in the purified sample access model.

\begin{theorem}[Uhlmann transformation algorithm in the purified sample access models]
\label{infthm:uhlmenn purif sample}
For any $\delta \in (0, 1)$, there exists a quantum sample algorithm that realizes a quantum channel $\cT^{\sB}$ satisfying
\begin{align}
\label{eq:fid diff pure samp}
     \rF(\cT^{\sB}(\ketbra{\rho}{\rho}^{\sA\sB}), \ket{\sigma}^{\sA\sB}) \geq \rF(\rho^\sA, \sigma^\sA) - \delta,
\end{align}
using $w$ samples of $\ket{\rho}^{\sA\sB}$ and $\ket{\sigma}^{\sA\sB}$, where $w = \cO\big(\f{1}{\delta}\min\big\{\f{1}{s_{\rm min}^2}, \f{r^2}{\delta^2}\big\}\big(\log{\big(\f{1}{\delta}\big)}\big)^2\big)$. The quantum circuit of this algorithm consists of $\cO\big(w\log (d_\sA d_\sB)\big)$ one- and two-qubit gates, and at any one time, $\cO\big(\log{(d_\sA d_\sB)}\big)$ qubits suffice.
\end{theorem}

A more formal and technical version of \Cref{infthm:uhlmenn purif sample} is \Cref{thm:algorithm Uhlmann} in \Cref{sec:purif sample}, where the approximation error is evaluated by both fidelity difference and diamond norm.

Theorem~\ref{infthm:uhlmenn purif sample} states that, rather remarkably, the sample complexity still does not scale with the dimension $d_\sA$ or $ d_\sB$---when either $1/s_{\rm min}^2$ or $r^2/\delta^2$ is sufficiently small, the Uhlmann transformation can be efficiently implemented using multiple copies of the purified states $\ket{\rho}^{\sA\sB}$ and $\ket{\sigma}^{\sA\sB}$. 
Moreover, because this algorithm is sequential and processes the quantum states one by one, $\cO\big(\log(d_\sA d_\sB)\big)$ qubits suffice at any one time, which includes qubits to store these states.



\subsubsection{In the mixed sample access model}
\label{sec:upper mixed sample main}

In general, purifications of a mixed state are not unique. 
This implies that the Uhlmann transformation is not determined solely by a given pair of mixed states.
In other words, in the mixed sample access model, the Uhlmann transformation is ill-defined in the sense that, from samples of mixed states, it is impossible to perform the Uhlmann transformation that universally works for all possible purifications.
We thus first perform a specific purification using multiple copies of the given mixed states.
This allows us to implement the Uhlmann transformation between the specific purified states obtained by this purification.

In this work, we consider the canonical purification: for any state $\omega^\sA$, the canonical purified state $\ket{\omega_{\rm c}}^{\sA\sB}$ is defined by 
\begin{equation}
\label{eq:def of canonical purification}
    \ket{\omega_{\rm c}}^{\sA\sB} = \big(\sqrt{\omega^\sA} \otimes \bI^{\sB}\big)\ket{\Gamma}^{\sA\sB},
\end{equation}
where $\ket{\Gamma}^{\sA\sB} = \sum_{i=1}^{d_\sA}\ket{i}^{\sA}\ket{i}^{\sB}$ is the unnormalized maximally entangled state in the computational basis $\ket{i}$. 
It is natural to assume that $d_\sA = d_\sB$ here.
The state $\ket{\omega_{\rm c}}^{\sA\sB}$ is indeed one of the purified states of $\omega^\sA$ as $\tr_\sB[\ketbra{\omega_{\rm c}}{\omega_{\rm c}}^{\sA\sB}] = \omega^\sA$.
A quantum algorithm for the canonical purification from multiple identical copies of $\omega^\sA$ is provided in \Cref{sec:appendix B}.

Let $\ket{\rho_{\rm c}}^{\sA\sB}$ and $\ket{\sigma_{\rm c}}^{\sA\sB}$ be the canonical purified states of $\rho^\sA$ and $\sigma^\sA$, respectively.
Our statement for the mixed sample access model is as follows.

\begin{theorem}[Uhlmann transformation algorithm in the mixed sample access models]
\label{infthm:uhlmenn mixed sample}
For any $\delta \in (0, 1)$, there exists a quantum sample algorithm that realizes a quantum channel $\cT^\sB$ satisfying 
\begin{align} 
    \rF\big(\cT^\sB(\ketbra{\rho_{\rm c}}{\rho_{\rm c}}^{\sA\sB}), \ket{\sigma_{\rm c}}^{\sA\sB}\big)  
    \geq \rF(\rho^\sA, \sigma^\sA) - \delta, 
\end{align}
using $\zeta$ samples of $\rho^\sA$ and $\sigma^\sA$, where $\zeta = \til{\cO}\Big(\f{d_\sA}{\delta^2\beta^3}\min\Big\{\f{1}{\rho_{\rm min}^2} + \f{1}{\sigma_{\rm min}^2}, \f{d_\sA^2}{\delta^4\beta^4}\Big\}\Big)$, and $\beta = \cO(\max\{s_{\rm min}, \delta/r\})$. The quantum circuit of this algorithm consists of $\cO(\zeta\log{d_\sA})$ one- and two-qubit gates, and at any one time, $\cO(\log{d_\sA})$ qubits suffice.
\end{theorem}

A formal and technical version of \Cref{infthm:uhlmenn mixed sample} is given by \Cref{thm:Uhlmann mixed state sample} in \Cref{sec:mixed sample}, where the approximation error is evaluated by both fidelity difference and diamond norm.

Due to the symmetry of $\ket{\Gamma}^{\sA\sB}$, the canonical purification has the property that the reduced states on each system are related by the transpose. That is, for the canonical purified state $\ket{\rho_{\rm c}}^{\sA\sB}$ of a state $\rho$ on $\sA$, the reduced states are $\rho_{\rm c}^\sA = \rho$ and $\rho_{\rm c}^\sB = \rho^\top$.
For $\rho_{\rm c}^\sB = \rho^\top$ and $\sigma_{\rm c}^\sB = \sigma^\top$, the Uhlmann unitary $U^\sA$ satisfies $\rF(\rho^\top, \sigma^\top) = \rF(U^\sA\ket{\rho_{\rm c}}^{\sA\sB}, \ket{\sigma_{\rm c}}^{\sA\sB})$, where the states with transpose in the left-hand side are defined on $\sB$.
Since the transpose does not change the fidelity, we obtain $\rF(\rho^\sA, \sigma^\sA) = \rF(U^\sA\ket{\rho_{\rm c}}^{\sA\sB}, \ket{\sigma_{\rm c}}^{\sA\sB})$.
This shows that, when we consider the canonical purification, we can apply a transformation by $U^\sA$ to the original system $\sA$ to achieve the fidelity of the states on the original system. 
This is unlike the original Uhlmann transformation, which is applied to the purifying system $\sB$.
We call the transformation on $\sA$ a \emph{variant} of the Uhlmann transformation and propose an algorithm for implementing it in the mixed sample access model.

Note that such a unitary $U^\sA$ exists due to the property of the canonical purification.
For general purifications, applying a unitary to the original system $\sA$ and achieving the Uhlmann transformation are not achievable.

\begin{theorem}[Algorithm for a variant of the Uhlmann transformation in the mixed sample access models]
\label{thm:variant Uhl overview}
For any $\delta \in (0, 1)$, there exists a quantum sample algorithm that realizes a quantum channel $\cT^\sA$ satisfying 
\begin{align} 
    \rF\big(\cT^\sA(\ketbra{\rho_{\rm c}}{\rho_{\rm c}}^{\sA\sB}), \ket{\sigma_{\rm c}}^{\sA\sB}\big)  
    \geq \rF(\rho^\sA, \sigma^\sA) - \delta, 
\end{align}
using $\zeta$ samples of $\rho^\sA$ and $\sigma^\sA$, where $\zeta = \til{\cO}\big(\f{1}{\delta\beta}\min\big\{\f{1}{\rho_{\rm min}^2} + \f{1}{\sigma_{\rm min}^2}, \f{1}{\beta^4\delta^4}\big\}\big)$, and $\beta = \cO(\max\{s_{\rm min}, \delta/r\})$.
The quantum circuit of this algorithm consists of $\cO(\zeta\log{d_\sA})$ one- and two-qubit gates, and at any one time, $\cO(\log{d_\sA})$ qubits suffice.
\end{theorem}

A formal and technical version of \Cref{thm:variant Uhl overview} is given by \Cref{thm:variant of uhlmann mixed sample} in \Cref{sec:altanative mixed sample uhlmann}, where the approximation error is evaluated using both fidelity difference and the diamond norm.

Compared with the algorithm in \Cref{infthm:uhlmenn mixed sample}, this algorithm in \Cref{thm:variant Uhl overview} can be implemented with substantially fewer samples of the mixed states.
This is because this algorithm does not rely on the quantum algorithm for the canonical purification, which requires a large number of samples. 
Instead, it directly block-encodes the square root of the given mixed states into a large unitary through a combination of the QSVT and the density matrix exponentiation.
Further details on this variant of the Uhlmann transformation can be found in \Cref{sec:altanative mixed sample uhlmann}.

\renewcommand{\arraystretch}{2.4}
\begin{table*}
\centering
\caption{Comparison of the query/sample complexity for the Uhlmann transformation. We here focus only on the results with better scaling with respect to the accuracy $\delta$ for simplicity.
For the notation, see \Cref{tab:technical notation}.
The result of the state tomography-based approach is derived based on the results of, to the best of our knowledge, the most efficient quantum state tomography algorithms~\cite{ODonnell2016EffTomo, Haah2017samploptTomo, Apeldoorn2023TomoPurifQuery}.
Notably, in the purified query access model and the purified sample access model, the query and sample complexities of our algorithms are independent of the dimensions $d_\sA$ and $d_\sB$, which indicates exponential advantages.}
\begin{tabular}{wc{7em}|wc{9em}|wc{10em}|wc{12em}}
                & Our algorithm  & State tomography-based & Prior work \\ \hline
    Purified query &   $\cO\big(\log{(1/\delta)}/{s_{\rm min}}\big)$ & $\til{\cO}\big(d_\sA d_\sB / (\delta s_{\rm min})\big)$ & $\til{\cO}\big(d_\sA^{14} d_\sB^{11}/(\delta^2 s_{\rm min}^3)\big)$~\cite{metger2023stateqipspace} \\ \hline
    Purified sample & $\til{\cO}\big(1/(\delta s_{\rm min}^2)\big)$ & $\til{\cO}\big(d_\sA d_\sB / (\delta^2 s_{\rm min}^2)\big)$ & N/A \\ \hline
    Mixed sample & $\til{\cO}\Big(\f{d_\sA}{\delta^2 s_{\rm min}^3} \big(\f{1}{\rho_{\rm min}^2} + \f{1}{\sigma_{\rm min}^2}\big)\Big)$ & $\til{\cO}\Big(\f{d_\sA}{\delta^4 s_{\rm min}^4} (r_\rho + r_\sigma)\Big)$ & N/A \\ 
\end{tabular}
\label{tab:compare Uhlmann and tomography}
\end{table*}
\renewcommand{\arraystretch}{1.0}


\subsection{Comparison with approaches based on state measurements}
\label{sec:comparison with state tom}

The most naive approach to implementing the Uhlmann transformation would be to use \emph{quantum state tomography}~\cite{ODonnell2016EffTomo, Haah2017samploptTomo, Guta2020FastTomo, Apeldoorn2023TomoPurifQuery, Chen2023WhenAdapTomo, hu2024sampleoptimalmemoryefficient}.
Using quantum state tomography, information about the states is obtained as classical data.
Then, the Uhlmann unitary in Eq.~\eqref{inteq:57} is directly computed.
By a brute-force decomposition of the computed unitary into one- and two-qubit gates~\cite{nielsen2010quantum}, a quantum circuit for the Uhlmann transformation can be obtained.
However, this approach clearly requires exponential query/sample complexity, as well as exponential classical time and space complexities in the number of bits, since it needs to handle matrices of exponential size in classical computation.
In addition, the number of gates arising from the decomposition is, in general, exponentially large.

Recently, an explicit quantum algorithm implementing the Uhlmann transformation was proposed in Ref.~\cite{metger2023stateqipspace}. This algorithm resolves the issue of exponential space use by sequentially encoding the measurement outcomes into small unitaries.
This approach enables the implementation of the Uhlmann transformation in polynomial space, while exponential time in the number of bits in classical computation and exponential query/sample complexity, $\poly(d_\sA d_\sB, 1/\delta)$, cannot be avoided, as it still relies on an exponential number of measurements.

Our algorithms do not require such a measurement on exponentially many copies of the states, and thus offer advantages in the query/sample complexity. 
Moreover, our algorithms only use a polynomial number of qubits at any one time.
Since our algorithms do not use classical data obtained by state measurements, they also avoid the exponential time and space complexities in classical computation that the naive tomography-based approach requires.

To compare our algorithms with the above approaches,
we summarize the query/sample complexity of these approaches in \Cref{tab:compare Uhlmann and tomography}.
For simplicity, we include only the results that have better scaling with respect to $\delta$ in this table.

In the purified query access model, our Uhlmann transformation algorithm clearly has an advantage over the other approaches.
We achieve an exponential speedup: our algorithm removes a factor of $d_\sA d_\sB$ in the query complexity and makes the dependence on $\delta$ logarithmic, demonstrating a significant improvement.
In the purified sample access model, our algorithm also remarkably outperforms the other approaches.

In the mixed sample access model, while our algorithm has advantages in terms of the dependence on $\delta$ and $s_{\rm min}$, the state tomography-based approach does not involve $\rho_{\rm min}$ and $\sigma_{\rm min}$.
Which algorithm achieves lower sample complexity depends on these values.
For instance, when these parameters satisfy $\delta^2 \leq \f{r_\rho + r_\sigma}{(1/\rho_{\rm min}^2 + 1/\sigma_{\rm min}^2)s_{\rm min}}$, our algorithm has an advantage.
In particular, when $r_\rho \approx 1/\rho_{\rm min}^2$ and $r_\sigma \approx 1/\sigma_{\rm min}^2$, for any $\delta$ such that $\delta \leq 1/\sqrt{s_{\rm min}}$, ours achieves a lower sample complexity than the naive state tomography-based approach.
We also stress that, even when our algorithms come with larger sample complexity than the tomography-based approach, our algorithms still have a significant advantage in the time and space complexities required for classical computation.

In \Cref{sec:Comparing with a naive tomography-based}, we provide detailed derivations of the computational costs by the naive state tomography-based approach, employing the most efficient known algorithms for quantum state tomography: $\tilde{\mathcal{O}}(kd/\epsilon)$ queries~\cite{Apeldoorn2023TomoPurifQuery} or $\tilde{\mathcal{O}}(kd/\epsilon^2)$ samples~\cite{ODonnell2016EffTomo, Haah2017samploptTomo} for a rank-$k$ state on a $d$-dimensional system within trace distance error $\epsilon$.
Here, the time and space complexities in classical computation required by this approach are tolerated. 
In addition, in the sample access model, it is allowed to perform collective measurements on multiple copies, possibly even an exponential number of them.

In \Cref{sec:metger and yuen's algorithm}, we provide an overview and performance evaluation of the approach proposed in Ref.~\cite{metger2023stateqipspace}, adapted to our setting, specifically, the purified query access model.
It is not immediately clear whether the approach of Ref.~\cite{metger2023stateqipspace} can be applied to the sample access models.
Even if it is applicable (with slight modifications), the resulting sample complexity would necessarily exceed the query complexity, since the purified query access model is computationally more powerful than the sample access models.

\subsection{Lower bound on the query and sample complexities for the Uhlmann transformation}
\label{sec:overview of lower bound}

Our explicit algorithms provide upper bounds on the computational complexity for the Uhlman transformation.
Here, we ask the converse question---what is the limit that can never be surpassed by any potential algorithm? 
To this end, we provide a lower bound on the query and sample complexities for the Uhlmann transformation.

The Uhlmann transformation can be applied to the task of fidelity estimation. Through fidelity estimation, we can certify the degree of mixedness of a given quantum state, a task known as the mixedness testing~\cite{ODonnell2021SpectrumTesting, MdW16surveyquantproptest, Wright2016howtolearnQS}.
Based on a result from Ref.~\cite{Liu2025EstimationTrace} on the mixedness testing in the purified query access model, we can derive a lower bound on the query complexity of fidelity estimation, which directly implies a lower bound for the Uhlmann transformation.
Consequently, we obtain the following corollary, with further details provided in \Cref{sec:lower bound}.

\begin{restatable}[Lower bound on the query complexity for the Uhlamnn transformation]{corollary}{corloweruhlquery}
\label{cor:lower Uhlmann query}
In the purified query access model, for $\delta \in (0, 1/9)$, there exists a pair of rank-$k$ states $\rho^\sA$ and $\sigma^\sA$ such that any quantum query algorithm that realizes a quantum channel $\cT^{\sB}$ satisfying
\begin{align}
    \rF(\cT^\sB(\ketbra{\rho}{\rho}^{\sA\sB}), \ket{\sigma}^{\sA\sB}) \geq \rF(\rho^\sA, \sigma^\sA) - \delta,
\end{align}
requires a total of $\Omega\big(k^{1/3}\big)$ queries to $U_\rho^{\sA\sB}$, $U_\sigma^{\sA\sB}$, and their inverses.
\end{restatable}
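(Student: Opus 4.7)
The plan is to reduce mixedness testing---for which Ref.~\cite{Liu2025EstimationTrace} already gives an $\Omega(k^{1/3})$ query lower bound in the purified query access model---to the Uhlmann transformation task, by combining it with the elementary observation that a good Uhlmann algorithm yields a good fidelity estimator.

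First, I would fix $\sigma^\sA = \mathbb{I}_k/k$, whose canonical purification $\ket{\sigma}^{\sA\sB} = \ket{\Gamma}^{\sA\sB}/\sqrt{k}$ can be prepared exactly in $O(\log k)$ gates, so no oracle is required on the $\sigma$-side. For any state $\rho^\sA$ on the same $k$-dimensional support, a direct computation gives $\rF(\rho^\sA,\sigma^\sA) = (\tr\sqrt{\rho^\sA})^2/k$, which equals $1$ when $\rho^\sA = \mathbb{I}_k/k$ and is bounded away from $1$ by a constant whenever $\rho^\sA$ is a constant trace distance from $\mathbb{I}_k/k$. For concreteness, the rank-$k$ state whose spectrum has $k/2$ eigenvalues equal to $(2-\alpha)/k$ and $k/2$ equal to $\alpha/k$ has trace distance $(1-\alpha)/2$ from $\mathbb{I}_k/k$ and fidelity $(\sqrt{2-\alpha}+\sqrt{\alpha})^2/4$, so a small constant $\alpha$ yields a constant gap in both quantities between the ``completely mixed'' and ``far from mixed'' cases.

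Second, I would convert any $\delta$-accurate Uhlmann algorithm $\cT^{\sB}$ into a fidelity estimator with negligible query overhead. Applying $(U_\sigma^{\sA\sB})^\dag$ to the channel output $\cT^{\sB}(\ketbra{\rho}{\rho}^{\sA\sB})$ and using amplitude estimation on the all-zeros outcome produces, in $O(1/\delta)$ additional uses of $U_\sigma^{\sA\sB}$ and $\cT^{\sB}$, an estimate of the corresponding probability. By Uhlmann's theorem this probability is at most $\rF(\rho^\sA,\sigma^\sA)$, and by the algorithm's guarantee it is at least $\rF(\rho^\sA,\sigma^\sA)-\delta$, so we obtain an additive-error-$2\delta$ estimate of $\rF(\rho^\sA,\sigma^\sA)$. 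For $\delta < 1/9$, choosing $\alpha$ above so that the constant fidelity gap exceeds $4\delta$ makes this estimator distinguish the ``yes'' and ``no'' mixedness-testing instances.

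Third, invoking the lower bound of Ref.~\cite{Liu2025EstimationTrace} against the composed algorithm gives $u + O(1/\delta) = \Omega(k^{1/3})$, and hence $u = \Omega(k^{1/3})$ since $\delta$ is an absolute constant. The main subtlety I anticipate is to confirm that the hard-instance ensemble of Ref.~\cite{Liu2025EstimationTrace} can be taken to consist of rank-$k$ marginals matched with our choice of $\sigma^\sA = \mathbb{I}_k/k$ (or can be padded into such an ensemble without degrading the bound), and that the composition with amplitude estimation does not hide $\mathrm{polylog}(k)$ factors eating into the $k^{1/3}$ scaling; beyond this verification, the reduction is mechanical.
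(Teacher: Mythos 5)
Your overall strategy is the same as the paper's: convert a $\delta$-accurate Uhlmann channel into a fidelity estimator for $\rF(\rho^\sA,\pi_k^\sA)$ and invoke the $\Omega(k^{1/3})$ mixedness-testing lower bound of Liu et al.\ in the purified query access model. The paper likewise fixes the second state to be $\pi_k^\sA$ with its maximally entangled purification, and estimates the overlap $\bra{\sigma}\cT^\sB(\ketbra{\rho}{\rho}^{\sA\sB})\ket{\sigma}$, which is sandwiched between $\rF(\rho^\sA,\sigma^\sA)-\delta$ and $\rF(\rho^\sA,\sigma^\sA)$. So the reduction itself is sound and matches the intended argument.

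There is, however, one quantitative slip that prevents your version from covering the full range $\delta\in(0,1/9)$. You estimate the all-zeros probability to accuracy $\delta$, giving a total estimator error of $2\delta$ and hence a required fidelity gap of $4\delta$, which approaches $4/9$ as $\delta\to 1/9$. But the only gap you can use for the lower bound is the one enjoyed by the *hard ensemble* of the mixedness-testing theorem, not by your single explicitly chosen spectrum (a single known state is distinguishable from $\pi_k$ with $O(1)$ queries, so it cannot carry the $\Omega(k^{1/3})$ bound on its own — you correctly flag this, but it means the gap is dictated by the ensemble). For the ensemble with $\f{1}{2}\|\rho^\sA-\pi_k^\sA\|_1\geq 1/2$, Fuchs--van de Graaf only guarantees $\rF(\rho^\sA,\pi_k^\sA)\leq 3/4$, i.e., a gap of $1/4 < 4/9$. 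The paper avoids this by estimating the output overlap to a fixed small constant accuracy ($1/72$, via a constant number of swap tests) independent of $\delta$, so the total estimator error is $\delta + 1/72$, which is below the distinguishing threshold $1/8$ exactly when $\delta < 1/9$. The fix to your argument is trivial — estimate the probability to accuracy $1/72$ rather than $\delta$ (this also cleans up the query accounting: each run of amplitude estimation or sampling re-invokes $\cT^\sB$, so the cost is multiplicative, $O(u)$ per repetition, not additive $u + O(1/\delta)$) — but as written the threshold analysis does not close for $\delta$ near $1/9$ without assuming extra structure of the hard ensemble's spectrum.
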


Note that the lower bound on the query complexity in the purified query access model also provides a lower bound on the sample complexity in the purified and mixed sample access models. 
This is because the purified query access model has the most powerful computational power among these models.

We compare the lower bound in \Cref{cor:lower Uhlmann query} with the upper bound obtained in \Cref{infthm:uhlmenn purif query}.
Since $r_{\rm min} = k$ for two rank-$k$ states $\rho^\sA$ and $\sigma^\sA$, the lower bound for the Uhlmann transformation is given by $\Omega(r_{\rm min}^{1/3})$ queries.
Thus, comparing this lower bound with the upper bound $\cO(r_{\rm min})$ queries in \Cref{infthm:uhlmenn purif query}, where $\delta$ is assumed to be a constant, we find a cubic gap between the upper and lower bounds with respect to $r_{\rm min}$.
Note that $r \leq r_{\rm min}$.
Closing this gap is an interesting open problem for future research.

As a brief comment from a complexity-theoretic perspective, testing the closeness of two quantum states is generally known to be $\mathsf{QSZK}$-complete, given access to unitaries that prepare their purifications~\cite{watrous2002limitQSZK, watrous2009QSZKquantumattacks}.
This implies that if a quantum algorithm that computes fidelity with polynomial time complexity exists, then $\mathsf{BQP} = \mathsf{QSZK}$~\cite{Rethinasamy2023estimatingdist}.
Since the Uhlmann transformation applies to fidelity estimation, it follows that unless $\mathsf{BQP} = \mathsf{QSZK}$, there is no polynomial time quantum algorithm implementing the Uhlmann transformation in general, which is consistent with prior suggestions from the context of the information recovery problem~\cite{harlow2013CompvsFirewall, aaronson2016moneyblackhole, brakerski2023blackholeradiation}.


\subsection{Applications}
\label{sec:overview application}

To demonstrate the usefulness of the Uhlmann transformation algorithm, we present three concrete applications: estimation of the square root fidelity, the decoupling approach~\cite{hayden2008decoupling, dupuis2010decoupling, Szehr2013decouple2design, dupuis2014one, preskill2016quantumshannon}, and algorithmic implementation of the Petz recovery map~\cite{petz1986sufficient, petz1988sufficiency}.
Comprehensive discussions for each application are provided in Secs.~\ref{sec:fidelity estimation}, \ref{sec:decoupling and uhlmann}, and \ref{sec:sample petz}, respectively.


\subsubsection{Square root fidelity estimation}
\label{sec:overview of fidelity est}

The Uhlmann transformation algorithm is applicable to estimating the square root fidelity between states $\rho^\sA$ and $\sigma^\sA$.
The key idea of our approach is to first implement the transformation, and then run an optimal algorithm that estimates the absolute value of the inner product between two pure states~\cite{Wang2024OptimalFidelityPure, wang2024sampleoptimal}.
Due to the Uhlmann's theorem, the output coincides with the square root fidelity $\sqrt{\rF}(\rho^\sA, \sigma^\sA)$.
This approach provides algorithms for estimating the fidelity based on the Uhlmann transformation, in the purified query, purified sample, and mixed sample access models.

In the sample access models, our discussion so far has focused only on algorithms that use quantum states sequentially, one by one, \emph{non-collectively}.
Nevertheless, it is also possible to consider algorithms that manipulate multiple quantum states \emph{collectively}.
Collective algorithms potentially lead to a notable reduction in the number of samples, while they require sufficient space to store all states simultaneously. In contrast, non-collective algorithms may require more samples, while they allow us to reuse the qubits that hold the sampled states, thereby reducing the space cost.

Our results on the computational cost of the square root fidelity estimation in all of these settings are provided in the following theorem. 
An in-depth discussion is provided in~\Cref{sec:fidelity estimation}


\begin{table*}[t]
\centering
\caption{A comparison of the query and sample complexities for square root fidelity estimation within error $\delta$ between our algorithms and the prior best algorithms. For notation, see \Cref{tab:technical notation}. The left side summarizes results focusing on the scaling in $\kappa_\rho$ and $\kappa_\sigma$, the reciprocals of the minimum non-zero eigenvalues of $\rho^\sA$ and $\sigma^\sA$, respectively, while the right side focuses on their ranks.
The mark $\ddag$ indicates the results obtained by using Eq.~\eqref{inteq:135} under the assumption that either $\supp[\rho^\sA] \subset \supp[\sigma^\sA]$ or $\supp[\sigma^\sA] \subset \supp[\rho^\sA]$, and the mark $\S$ indicates the results obtained from the mixed sample access model in Ref.~\cite{gilyen2022improvedfidelity} by using the fact that $r_{\rm min} \leq \kappa_{\rm min}$.
To our knowledge, no previous work has directly addressed fidelity estimation in the purified sample access model, and thus the mark $\P$ indicates the results from the mixed sample access model, relying on the fact that the purified sample access model is computationally stronger than the mixed sample access model. The mark $\|$ indicates that the algorithms leading to the results perform collective operations over multiple samples of the states.}
\renewcommand{\arraystretch}{2.0}
\begin{tabular}{>{\centering\arraybackslash}m{7em}|
                >{\centering\arraybackslash}m{9em}
                >{\centering\arraybackslash}m{10em}
                >{\centering\arraybackslash}m{5em}
                >{\centering\arraybackslash}m{6em}}
                
& \parbox[c][2em][c]{19em}{\centering In terms of $\kappa$-scaling} 
& & \parbox[c][2em][c]{14em}{\hspace{-4mm}\centering In terms of $r$-scaling} & \\
\hline 
& This work &\hspace{-1mm} Previous work &\hspace{2mm}This work& Previous work \\
\hline
    \parbox[c][4em][c]{7em}{\centering Purified query}  
    & \parbox[c][4em][c]{9em}{\centering $^\ddag\til{\cO}\big(\f{\sqrt{\kappa_\rho\kappa_\sigma}}{\delta}\big)$} 
    & \hspace{-1mm}\parbox[c][4em][c]{9em}{\centering $^\ddag\til{\cO}\big(\f{\kappa_{\rm min}^2\kappa_\rho\kappa_\sigma}{\delta}\big)$~\cite{Liu2024geometricmean}}
    & \parbox[c][4em][c]{6em}{\centering $\til{\cO}\big(\f{r_{\rm min}}{\delta^2}\big)$}
    & \parbox[c][4em][c]{6em}{\centering $\til{\cO}\big(\f{r_{\rm min}^{2.5}}{\delta^5}\big)$~\cite{gilyen2022improvedfidelity}}
    \\ \hline
    \parbox[c][5em][c]{7em}{\centering Purified sample}
    & \parbox[c][5em][c]{9em}{\centering \makecell{$^{\|, \ddag}\til{\cO}\big(\f{\kappa_\rho\kappa_\sigma}{\delta^2}\big)$  \\[0.9ex]$^\ddag\til{\cO}\big(\f{\kappa_\rho\kappa_\sigma}{\delta^3}\big)$}} 
    & \parbox[c][5em][c]{9em}{\centering \makecell[c]{$^{\P, \S}\til{\cO}\big(\f{\kappa_{\rm min}^{5.5}}{\delta^{12}}\big)$~\cite{gilyen2022improvedfidelity} \\
    [0.9ex]\hspace{-3mm}$^{\P, \ddag}\til{\cO}\big(\f{\kappa_{\rm min}^5\kappa_\rho^2\kappa_\sigma^2}{\delta^3}\big)$~\cite{Liu2024geometricmean}}}
    & \parbox[c][5em][c]{6em}{\centering \makecell{$^\|\til{\cO}\big(\f{r_{\rm min}^2}{\delta^4}\big)$ \\
    [0.9ex]$\til{\cO}\big(\f{r_{\rm min}^2}{\delta^5}\big)$}}
    & \hspace{-3mm}\parbox[c][5em][c]{6em}{\centering $^\P\til{\cO}\big(\f{r_{\rm min}^{5.5}}{\delta^{12}}\big)$~\cite{gilyen2022improvedfidelity}}
    \\ \hline
    \parbox[c][5em][c]{7em}{\centering Mixed sample} 
    & \parbox[c][5em][c]{9em}{\centering \makecell[c]{$^{\|, \ddag}\til{\cO}\big(\f{\kappa_\rho\kappa_\sigma}{\delta^2}\big)$ \\
    [0.9ex]$^\ddag\til{\cO}\big(\f{d_\sA (\kappa_\rho\kappa_\sigma)^{3/2}(\kappa_\rho^2 + \kappa_\sigma^2)}{\delta^4}\big)$}}
    & \parbox[c][5em][c]{9em}{\centering \makecell[c]{$^\S\til{\cO}\big(\f{\kappa_{\rm min}^{5.5}}{\delta^{12}}\big)$~\cite{gilyen2022improvedfidelity} \\
    [0.9ex]\hspace{-1mm}$^\ddag\til{\cO}\big(\f{\kappa_{\rm min}^5\kappa_\rho^2\kappa_\sigma^2}{\delta^3}\big)$~\cite{Liu2024geometricmean}}}
    & \parbox[c][5em][c]{6em}{\centering \makecell[c]{$^\|\til{\cO}\big(\f{r_{\rm min}^2}{\delta^4}\big)$ \\
    [0.9ex]\hspace{1mm}$\til{\cO}\big(\f{d_\sA^3 r_{\rm min}^7}{\delta^{15}}\big)$}}
    & \parbox[c][5em][c]{6em}{\centering $\til{\cO}\big(\f{r_{\rm min}^{5.5}}{\delta^{12}}\big)$~\cite{gilyen2022improvedfidelity}}
    \\ 
\end{tabular}
\label{tab:fidelity}
\end{table*}
\renewcommand{\arraystretch}{1.0}


\begin{theorem}[Square root fidelity estimation using the Uhlmann transformation algorithms]
\label{thm:informal fidelity result}
In each of the three query/sample access models, for any $\delta \in (0, 1)$, there exists a quantum query/sample algorithm that outputs $\sqrt{\tilde{\rF}}$ satisfying $\big|\sqrt{\tilde{\rF}} - \sqrt{\rF}(\rho^\sA, \sigma^\sA)\big| \leq \delta$ with probability at least $2/3$. 
The computational costs of these algorithms in each model are as follows.
\begin{itemize}
\item The purified query access model: $q = \til{\cO}\big(\f{1}{\delta}{\rm min}\big\{\f{1}{s_{\rm min}}, \f{r}{\delta}\big\}\big)$ queries to $U_\rho^{\sA\sB}$, $U_\sigma^{\sA\sB}$, and their inverses. The quantum circuit consists of $\til{\cO}(q)$ one- and two-qubit gates, and $\cO(\log{(d_\sA d_\sB)} + \log{(1/\delta}))$ qubits suffice at any one time.
\item The purified sample access model: 
\begin{itemize}
    \item With collective operations: $n = \til{\cO}\big(\frac{1}{\delta^2} \min\big\{\f{1}{s_{\rm min}^2}, \f{r^2}{\delta^2}\big\}\big)$ samples of $\ket{\rho}^{\sA\sB}$ and $\ket{\sigma}^{\sA\sB}$. The quantum circuit consists of $\til{\cO}(n^4)$ one- and two-qubit gates, and $\til{\cO}(n^2)$ qubits suffice at any one time.
    \item With non-collective operations: $n = \til{\cO}\big(\frac{1}{\delta^3} \min\big\{\f{1}{s_{\rm min}^2}, \f{r^2}{\delta^2}\big\}\big)$ samples of $\ket{\rho}^{\sA\sB}$ and $\ket{\sigma}^{\sA\sB}$. The quantum circuit consists of $\til{\cO}(n)$ one- and two-qubit gates, and $\cO(\log{(d_\sA d_\sB)} + \log{(1/\delta)})$ qubits suffice at any one time.
\end{itemize}
\item The mixed sample access model: 
\begin{itemize}
    \item With collective operations: $n = \til{\cO}\big(\frac{1}{\delta^2} \min\big\{\f{1}{s_{\rm min}^2}, \f{r^2}{\delta^2}\big\}\big)$ samples of $\rho^\sA$ and $\sigma^\sA$. 
    The quantum circuit consists of $\til{\cO}(n^4)$ one- and two-qubit gates, and $\til{\cO}(n^2)$ qubits suffice at any one time.
    \item With non-collective operations: $n = \til{\cO}\Big(\f{d_\sA}{\delta^4}\min\Big\{\f{\kappa_\rho^2 + \kappa_\sigma^2}{s_{\rm min}^3}, \f{d_\sA^2 r^7}{\delta^{11}}\Big\}\Big)$ samples of $\rho^\sA$ and $\sigma^\sA$. The quantum circuit consists of $\til{\cO}(n)$ one- and two-qubit gates, and $\cO(\log{d_\sA} + \log{(1/\delta)})$ qubits suffice at any one time.
\end{itemize}
\end{itemize}

\end{theorem}

We summarize our results on the query and sample complexities in \Cref{tab:fidelity}, along with the most efficient results in prior works to our knowledge.
In comparison, we should note that $r \leq r_{\rm min} \leq \kappa_{\rm min}$. Moreover, as derived in \Cref{sec:appendix product of singvalue}, if either $\supp[\rho^\sA] \subset \supp[\sigma^\sA]$ or $\supp[\sigma^\sA] \subset \supp[\rho^\sA]$, we have
\begin{align}
\label{inteq:135}
    1/s_{\rm min} 
    &\leq 1/\sqrt{\rho_{\rm min}\sigma_{\rm min}} = \sqrt{\kappa_\rho\kappa_\sigma}, 
\end{align}
where $\supp[A]$ is the support of a matrix $A$, i.e., the orthogonal complement of the kernel of $A$.
By applying Eq.~\eqref{inteq:135} to the computational costs in \Cref{thm:informal fidelity result} for clarity, the results in \Cref{tab:fidelity} (marked by $\ddag$) are obtained.

In the purified query access model, our result achieves a significant improvement over the previous best-known results in Refs.~\cite{gilyen2022improvedfidelity, Liu2024geometricmean}.
The result in Ref.~\cite{Liu2024geometricmean} can be derived under the assumption that $\supp[\rho^\sA] \subset \supp[\sigma^\sA]$ or $\supp[\sigma^\sA] \subset \supp[\rho^\sA]$.
For simple comparison, we suppose $\kappa$ such that $\kappa_\rho, \kappa_\sigma \leq \kappa$.
While the result in Ref.~\cite{Liu2024geometricmean} scales as $\til{\cO}(\kappa^4/\delta)$, our result scales linearly as $\til{\cO}(\kappa/\delta)$, which demonstrates a quartic improvement.
Furthermore, the result in Ref.~\cite{gilyen2022improvedfidelity} scales as $\til{\cO}\big(r_{\rm min}^{2.5}/\delta^5\big)$, whereas ours scales as $\til{\cO}(r_{\rm min}/\delta^2)$, highlighting two improvements: linear scaling in $r_{\rm min}$ and a substantial reduction in the dependence on $\delta$.
We also demonstrate significant improvements in the purified sample access model, although this may be due to the model not having been thoroughly explored.

In the mixed sample access model, our results $\til{\cO}(\kappa^2/\delta^2)$ and $\til{\cO}(r_{\rm min}^2/\delta^4)$ with collective operations (marked by $\|$ in \Cref{tab:fidelity}) show notable improvement on the sample complexity over the previous best-known results $\til{\cO}(\kappa^9/\delta^3)$~\cite{Liu2024geometricmean} and $\til{\cO}(r_{\rm min}^{5.5}/\delta^{12})$~\cite{gilyen2022improvedfidelity}.
We achieve this by employing a recently developed technique known as the \emph{random purification}~\cite{tang2025conjugatequerieshelp}.
The random purification enables us to achieve the task in the mixed sample access model with roughly the square of the number of queries in the purified query access model, while it requires collective operations. 
By applying the random purification to the fidelity estimation, we highlight its remarkable power.
In \cite{tang2025conjugatequerieshelp}, the technique of random purification was used to argue the limitation of the power of access to state purification unitaries, which merely shows a quadratic improvement over access to sample copies.
On the other hand, our result shows that one can rather use this technique positively for designing an efficient sample-based protocol.

It is also noteworthy that, focusing on the $\delta$-scaling, we find that our results, $\til{\cO}(\kappa/\delta)$ queries and $\til{\cO}(\kappa^2/\delta^2)$ samples in the purified query and mixed sample access models, are nearly optimal. This is because the lower bounds on the query and sample complexities for square root fidelity estimation are given by $\Omega(1/\delta)$ and $\Omega(1/\delta^2)$~\cite{Liu2024geometricmean}, respectively.
The optimalities with respect to the $\kappa$-scaling are still open.

We have not yet found an algorithm in the mixed sample access model that uses only sequential operations and shows an advantage in sample complexity. Although the Uhlmann transformation inherently requires purified states, no algorithm is currently known that realizes purification solely through sequential operations, with a number of samples that scales with the rank rather than the dimension.
If such a purification algorithm were developed, the Uhlmann transformation would become more tractable. This could improve the sample complexity for fidelity estimation in the mixed sample access model, even without collective operations.

We comment on the number of qubits in the cases with and without collective operations.
When we do not use collective operations, $\cO\big(\log{(d_\sA d_\sB)} + \log{(1/\delta)}\big)$ qubits suffice at any time in the algorithms. The term $\cO(\log(d_\sA d_\sB))$ comes from our Uhlmann transformation algorithms, while $\cO(\log{(1/\delta)})$ is from the algorithm in Ref.~\cite{Wang2024OptimalFidelityPure, wang2024sampleoptimal}, which is based on quantum phase estimation and is employed in our fidelity estimation algorithms.
On the other hand, when we employ collective operations, $\til{\cO}\big(n^2)$ qubits are used simultaneously, where $n$ is the number of sampled states. 
This is due to the Schur transform applied in the random purification~\cite{tang2025conjugatequerieshelp} and follows from the results in Ref.~\cite{burchardt2025highdimensionalschur}.
As Theorem~\ref{thm:informal fidelity result} shows, collective operations can substantially reduce the sample cost $n$; however, the number of qubits required at once increases with $n$. 
Hence, a non-collective approach may be more suitable when the number of simultaneously available qubits is limited.

\subsubsection{Information-theoretic tasks with the decoupling approach}
\label{sec:overview of decoup}

To accomplish the theoretical limits of various information-theoretic tasks, the decoupling approach~\cite{hayden2008decoupling, dupuis2010decoupling, Szehr2013decouple2design, dupuis2014one, preskill2016quantumshannon}, which relies on the existence of the Uhlmann transformation, is widely employed (see e.g., Refs.~\cite{hayden2008decoupling, Abeyesinghe2009Motherfamilytree, Datta2011apexfamilytree}).
Although the existence of the Uhlmann transformation is guaranteed, its implementation methods have long been of interest in the context of the decoupling approach.

Within the decoupling approach, we apply the proposed Uhlmann transformation algorithms to information-theoretic tasks and evaluate the computational costs. 
We particularly focus on two tasks: \emph{entanglement transmission}~\cite{Schumacher1996sendingentanglement, schumacher2001approximateerrorcorrection, hayden2007black, datta2013oneshotentassistQCcom, beigi2016decoding, khatri2024principlemodern} and \emph{quantum state merging}~\cite{Horodecki2005Partialquantinfo, berta2008singleshot, dupuis2014one, yamasaki2019statemergesmalldimension}.
Comprehensive discussions, including the settings, are provided in \Cref{sec:decoupling and uhlmann}.
The main results can be found in \Cref{thm:Uhlmann decoder ent trans} for entanglement transmission and in \Cref{thm:Uhlmann state merging} for quantum state merging.
Our results bring us a step closer to achieving the theoretical limits of these tasks, providing explicit methods toward their realization.

\subsubsection{Algorithmic implementation of the Petz recovery map}
\label{sec:overview of petz}

The Petz recovery map~\cite{petz1986sufficient, petz1988sufficiency} is one of the few powerful tools for reversing the effect of a noisy quantum channel, achieving nearly optimal performance and having an explicit form.
Quantum algorithms for the Petz recovery map are studied, for example, in Ref.~\cite{gilyen2022petzmap, biswas2023noiseadapted}, and have demonstrated significant progress toward its implementation.
These algorithms, however, require access to a Stinespring unitary of the noisy quantum channel, which can be a practically challenging requirement.
We introduce a method for directly implementing the Petz recovery map from a noisy quantum channel without relying on its Stinespring unitary.

The key idea is to approximate the Stinespring unitary using the Uhlmann transformation algorithm for the mixed sample access model. 
In fact, the Uhlmann unitary $U^\sS$ approximately mapping $\ket{\Phi}^{\sR\sA}\ket{0}^{\sF}$ to $\ket{\tau_{\rm c}}^{\sR\sB\hat{\sR}\hat{\sB}}$ coincides with one of the Stinespring unitaries of the noisy quantum channel $\cF^{\sA\rarr\sB}$, where $\sS=\sA\sF = \sB\hat{\sR}\hat{\sB}$.
The state $\ket{\Phi}^{\sR\sA}$ is a maximally entangled state and the state $\ket{\tau_{\rm c}}^{\sR\sB\hat{\sR}\hat{\sB}}$ is a canonical purification of the \emph{Choi--Jamio\l okwki state} of $\cF^{\sA\rarr\sB}$. 
We evaluate the number of uses of the quantum channel $\cF^{\sA\rarr\sB}$ for implementing the Petz recovery map.
A detailed discussion is in \Cref{sec:sample petz}.


\section{Summary and outlooks}
\label{sec:conclusion}

This work provides quantum query and sample algorithms for implementing the Uhlmann transformation in the form of quantum circuits, building on the QSVT and the density matrix exponentiation.
We evaluate the query and sample complexities, the number of one- and two-qubit gates, and the number of qubits at any one time, for our algorithms in the purified query, the purified sample, and the mixed sample access models. 
In the purified query and sample access models, our algorithms demonstrate significant improvements in these computational costs over naive approach based on quantum state tomography and the one in Ref.~\cite{metger2023stateqipspace}.
We also consider a variant of the Uhlmann transformation for the canonical purification and propose a quantum sample algorithm in the mixed sample access model, which can be implemented with relatively few samples.
Moreover, we derive a lower bound on the query and sample complexities required for any quantum algorithm to perform the Uhlmann transformation, based on the mixedness testing in the purified query access model.

With our Uhlmann transformation algorithms, we investigate the computational costs for square root fidelity estimation.
In particular, we show that our algorithm achieves better upper bounds on query and sample complexities than the previous best algorithms.
Combined with the decoupling approach, we apply our Uhlmann transformation algorithms to two tasks: entanglement transmission and quantum state merging.
For each of these tasks, we provide a comprehensive analysis of the computational costs, demonstrating the broad applicability of the algorithms.
We also apply our Uhlmann transformation algorithm to an algorithmic implementation of the Petz recovery map.
While quantum algorithms for the Petz recovery map have been proposed using a Stinespring unitary and its inverse~\cite{gilyen2022petzmap, biswas2023noiseadapted}, we develop an algorithm that relies on the quantum channel defining the Petz recovery map, rather than its Stinespring unitary.

An important future direction is to tighten the difference between the upper and lower bounds on the computational cost for the Uhlmann transformation.
In the purified query access model, the upper and lower bounds on the query complexity obtained in this work scale differently with the rank $r_{\rm min}$; the upper bound scales linearly with $r_{\rm min}$, whereas the lower bound scales as $r_{\rm min}^{1/3}$. 
By further investigating and narrowing the discrepancy in scaling between them, we can obtain valuable insights, both theoretical and practical, into the limitations of implementing the Uhlmann transformation.

It is also worthwhile to explore how our results could be related to quantum complexity theory and quantum cryptography.
Recently, the Uhlmann transformation has been studied from a complexity-theoretic perspective, e.g., in Refs.~\cite{metger2023stateqipspace, bostanci2023unicompuhlmann, chia2024complexitytheoryquantumpromise}. 
Specifically, Ref.~\cite{metger2023stateqipspace} has provided valuable insights into complexity classes related to space and interactive proofs via the transformation, while complexity classes related to time are still not well understood.
Our Uhlmann transformation algorithm can be implemented with polynomial circuit complexity, provided that the purified state preparation unitaries are efficiently implementable and that either $1/s_{\rm min}$ or $r/\delta$ scales polynomially with the number of qubits.
This fact may offer a novel viewpoint on quantum complexity theory and quantum cryptography.

Another potential avenue is to investigate the gap between the purified and mixed sample access models.
The Uhlmann transformation is ill-defined in the mixed sample access model because it inherently requires purification.
To address this, we employed a quantum algorithm that realizes the canonical purification; however, this results in exponential costs. 
If a more efficient algorithm that allows deterministic purification rather than the random purification is developed, this exponential factor would be removed, making the Uhlmann transformation more efficiently implementable even in the mixed sample access model.
As a result, the exponential gap between the purified and mixed sample access models could be bridged.
On the other hand, if purification is easily implementable, it may break certain cryptographic assumptions, potentially leading to a ``no-go'' result for purification.
Although several results through specific cases~\cite{Chen2021ValiationalTraceFidelity, Ezzell2023MixedCompil, liu2024exponentialseparations, chen2024localtestUniInv, Sala2024SpontaneousPurif, weinstein2024efficientdetectionstrongtoweakspontaneous} are known, and efficient universal purification is impossible in the worst case~\cite{liu2025universalpurification}, a comprehensive understanding---especially whether it can be done at cost on the order of the rank---has not yet been achieved.

\section{Preliminaries}
\label{sec:preliminaries}

We here provide technical preliminaries.
We introduce extra notation in \Cref{sec:notation}. In \Cref{sec:uhl intro}, we overview known facts about the Uhlmann transformation.
We present the three computational frameworks for quantum algorithm design in \Cref{sec:three computational models}.
In \Cref{sec:previous key subroutine}, we give a quick review of key algorithmic primitives, including block encoding, the QSVT, and the density matrix exponentiation, which form the building blocks of our algorithms.

\subsection{Extra notations}
\label{sec:notation}

We denote a Hilbert space by $\cH$, and the space associated with a system $\sA$ is denoted by $\cH^\sA$. Hilbert spaces such as $\cH^{\sA'}$, $\cH^{\hat{\sA}}$, or $\cH^{\sA_1}$ are isomorphic to $\cH^\sA$, meaning that they have the same dimension and can be equipped with the same orthonormal basis as $\cH^\sA$.
This applies not only to the system $\sA$, but also to any systems, such as $\cH^{\sB'}$ and $\cH^{\hat{\sC}}$. The dimension of $\cH$ is denoted by $d$ and, for instance, the dimension of $\cH^\sA$ is denoted by $d_\sA$.

For an operator $M$, the kernel of $M$, i.e., the subspace spanned by vectors mapped to zero by $M$, is denoted by $\ker[M]$. The support of $M$ is denoted by $\supp[M]$, which is defined as the orthogonal complement of $\ker[M]$.
The image of $M$, i.e., the subspace spanned by all vectors of the form $M\ket{v}$, is denoted by $\im[M]$.
For any operator $M$, we denote the complex conjugate and the transpose in a given orthonormal basis by $M^*$ and $M^\top$, respectively, and denote the Hermitian conjugate by $M^\dag$.

We omit the symbol of the tensor product between vectors and denote it as $\ket{\rho}\otimes\ket{\sigma} = \ket{\rho}\ket{\sigma}$.
We denote by $\ket{\Phi}$ the maximally entangled state defined in the orthonormal computational basis. For instance, the maximally entangled state between $\sA$ and $\hat{\sA}$ is $\ket{\Phi}^{\sA\hat{\sA}} = \f{1}{\sqrt{d_\sA}}\sum_{i=1}^{d_\sA}\ket{i}^\sA\ket{i}^{\hat{\sA}}$, where $\{\ket{i}\}_{i=1}^{d_\sA}$ is the computational basis in $\sA$ and $\hat{\sA}$, respectively. 
We also denote an unnormalized maximally entangled state by $\ket{\Gamma}$, e.g., $\ket{\Gamma}^{\sA\hat{\sA}} = \sum_{i=1}^{d_\sA}\ket{i}^\sA\ket{i}^{\hat{\sA}}$ for $\sA$ and $\hat{\sA}$.
Note that $\sum_{i=1}^d\ket{i}\ket{i} = \sum_{i=1}^d\ket{e_i}\ket{e_i^*}$ holds for any orthonormal basis $\{\ket{e_i}\}_{i=1}^d$, where the complex conjugate is taken with respect to the computational basis~$\{\ket{i}\}_{i=1}^d$.

For an isometry $U$, the corresponding isometry channel $\cU$ is defined by $\cU(\cdot) = U(\cdot)U^\dag$, and its inverse is defined by $\cU^\dag = U^\dag(\cdot)U$.
For any state $\psi^\sA$, let $\cP_{\psi}^{\bC\rarr\sA}$ be the state-preparation channel defined by $\cP_{\psi}^{\bC\rarr\sA} = \psi^\sA$.
We denote the trace map by $\tr$, and denote the partial trace map over a subsystem, for example, $\sA$, by $\tr_\sA$.
We denote the swap operator by $F$, such as $F^{\sA\hat{\sA}}$, which swaps the systems $\sA$ and $\hat{\sA}$.

For a matrix $M$, the Schatten-$p$ norm is defined by $\| M \|_p = \big(\tr\big[\big(\sqrt{M^\dag M}\big)^p\big]\big)^{1/p}$ for $p\in[1, \infty]$. We particularly use the trace norm, the Hilbert--Schmidt norm, and the operator norm, corresponding to $p=1, 2, \infty$, respectively.
The Schatten $p$-norm is isometric invariant~\cite{wilde2013QItheory, bhatia2013matrix, watrous2018TheoryQI}: $\|M\|_p = \|UMV\|_p,$ for isometries $U$ and $V$, and is monotonic: $\|M\|_p \leq \|M\|_q$, for $p \geq q$.
For the Schatten $p$-norm, the H\"{o}lder's inequality~\cite{Rogers1888extension, Holder1889inequality} holds: for any matrix $M_1$ and $M_2$, we have that
\begin{align}
\label{eq:Holder ineq}
    \|M_1M_2\|_r \leq \|M_1\|_p\|M_2\|_q,
\end{align}
where $1/r = 1/p + 1/q$.
The trace norm has a contraction property against the partial trace~\cite{rastegin2012NormPartialTrace, watrous2018TheoryQI} such that for any matrix $M^{\sA\sB}$,
\begin{equation}
\label{eq:contraction trace norm}
    \big\|M^{\sA\sB}\big\|_1 \geq \big\|\tr_{\sB}\big[M^{\sA\sB}\big]\big\|_1.
\end{equation}
Trace norm also has a variational characterization~\cite{nielsen2010quantum, wilde2013QItheory, watrous2018TheoryQI}: for any quantum states $\rho$ and $\sigma$, it holds that
\begin{align}
\label{eq:trace distance variational}
    \f{1}{2}\|\rho - \sigma\|_1 = \max_{M; 0 \leq M \leq \bI}\tr[M(\rho - \sigma)].
\end{align}
For the trace norm and the Hilbert--Schmidt norm, the Powers--St\o rmer inequality~\cite{powers1970free, kittaneh1987inequalities, bhatia2013matrix} holds: for any Hermitian matrices $M_1$ and $M_2$, 
\begin{align}
\label{eq:powers stormer ineq}
    \big\| M_1^{1/2} - M_2^{1/2} \big\|_2 \leq \big\|M_1 - M_2 \big\|_1^{1/2}.
\end{align}

The diamond norm~\cite{Kitaev1997QCAlgoEC, watrous2004notessuperoperatornorm, watrous2018TheoryQI} for a linear map $\cT^{\sA\rarr\sB}$ is defined by
\begin{align}
    \big\|\cT^{\sA\rarr\sB}\big\|_\diamond = \max_{\sR}\max_{M^{\sR\sA}; \|M^{\sR\sA}\|_1 \leq 1}\big\|\rid^\sR \otimes \cT^{\sA\rarr\sB}(M^{\sR\sA})\big\|_1, 
\end{align}
where the maximization is taken over all reference systems $\sR$ of arbitrary size and all operators $M^{\sR\sA}$ with trace norm at most one.
For quantum channels, it suffices to take the maximization over all pure states on $\sR\sA$, where $\sR$ can be chosen such that $d_\sR = d_\sA$.
The diamond norm is subadditive for the composition of quantum channels~\cite{watrous2018TheoryQI}: for any quantum channels $\cT_1^{\sA\rarr\sB}$, $\cT_2^{\sB\rarr\sC}$, $\cE_1^{\sA\rarr\sB}$, and $\cE_2^{\sB\rarr\sC}$, it holds that
\begin{align}
    &\big\|\cT_2^{\sB\rarr\sC} \circ \cT_1^{\sA\rarr\sB} - \cE_2^{\sB\rarr\sC} \circ \cE_1^{\sA\rarr\sB}\big\|_\diamond     \leq \big\|\cT_2^{\sB\rarr\sC} - \cE_2^{\sB\rarr\sC}\big\|_\diamond + \big\|\cT_1^{\sA\rarr\sB} - \cE_1^{\sA\rarr\sB}\big\|_\diamond.
\end{align}

The trace norm and the fidelity are related by the Fuchs--van de Graaf inequalities~\cite{Fuchs1999fuchsvandegraaf}: 
\begin{equation}
\label{eq:fuchs van de graaf}
   1-\sqrt{\rF}(\rho, \sigma) \leq \f{1}{2}\|\rho - \sigma \|_1 \leq \sqrt{1-\rF(\rho, \sigma)}. 
\end{equation}
For isometry channels $\til{\cV}$ and $\cV$, the diamond norm and the operator norm are related as~\cite{Kitaev1997QCAlgoEC} 
\begin{align}
\label{eq:rel of operator and diamond norm}
    \f{1}{2}\big\|\til{\cV} - \cV\big\|_\diamond \leq \big\|\til{V} - V\big\|_\infty.
\end{align}
The diamond norm and the fidelity satisfy the following relation: for any state $\rho$ and any pure state $\ket{\phi}$,
\begin{align}
\label{eq:fidelity and diamond}
    \big|\rF\big(\cT_1(\rho), \ket{\phi}\big) - \rF\big(\cT_2(\rho), \ket{\phi}\big)\big| \leq \f{1}{2}\big\|\cT_1 - \cT_2\big\|_\diamond,
\end{align}
which follows from the variational characterization of the trace norm and the definition of the diamond norm.

For a vector $v \in \bC^d$, the Euclidean norm is defined by $\|v\| =\sqrt{v^\dag v}$. For pure states $\ket{\psi}$ and $\ket{\varphi}$, the trace norm and the Euclidean norm satisfy the inequalities:
\begin{align}
\label{eq:relation of Euclidean and trace norm}
    \f{1}{2}\big\|\ketbra{\psi}{\psi} - \ketbra{\varphi}{\varphi}\big\|_1 
    \leq \min_\theta\big\|\ket{\psi} - e^{i\theta}\ket{\varphi}\big\| \leq \f{1}{\sqrt{2}}\big\|\ketbra{\psi}{\psi} - \ketbra{\varphi}{\varphi}\big\|_1.
\end{align}


\subsection{Uhlmann transformation}
\label{sec:uhl intro}

We first recall the Uhlmann's theorem~\cite{uhlmann1976transition}.
\begin{theorem}[Uhlmann's theorem~\cite{uhlmann1976transition}]
\label{thm:Uhlmann theorem}
    Let $\rho^\sA$ and $\sigma^\sA$ be quantum states. The fidelity $\rF(\rho^\sA, \sigma^\sA)$ can be rephrased using their purified states $\ket{\rho}^{\sA\sB}$ and $\ket{\sigma}^{\sA\sB}$ as
\begin{align}
    \rF(\rho^\sA, \sigma^\sA) 
    &= \max_{U^\sB}\big| \bra{\sigma}^{\sA\sB}(\bI^\sA \otimes U^{\sB})\ket{\rho}^{\sA\sB} \big|^2 \\
    &= \max_{U^\sB}\rF(U^\sB \ket{\rho}^{\sA\sB}, \ket{\sigma}^{\sA\sB}),
\end{align}
where the maximization is taken over all unitaries acting on $\sB$.
\end{theorem}
This theorem states that for a given state $\ket{\rho}^{\sA\sB}$, there exists a unitary $U^\sB$ that realizes the transformation from $\ket{\rho}^{\sA\sB}$ to $\ket{\sigma}^{\sA\sB}$, with the minimum error achievable by local operations on $\sB$.
This optimal transformation and the corresponding unitary $U^\sB$ are called the \emph{Uhlmann transformation} and the \emph{Uhlmann unitary}, respectively.

While we focus here on transformations between pure states, the Uhlmann transformation actually characterizes the optimal local transformation between mixed states, by considering an environment. See \Cref{sec:opt local and Uhlmann}.

We should note several remarks. 
First, since we can always expand the purifying systems of $\rho^\sA$ and $\sigma^\sA$ by adding additional qubits, we can assume that their purifying systems are of the same size, which we denote by $\sB$.
Second, the Uhlmann unitary is not unique; in the kernel of $\rho^\sB$, the unitary $U^\sB$ can act arbitrarily.
Hence, the partial isometry $V^\sB$, which is uniquely determined as the part of $U^\sB$ that acts only on the support of $\rho^\sB$, is truly crucial.
We refer to this partial isometry $V^\sB$ as a \emph{Uhlmann partial isometry}.

An explicit form of the Uhlmann partial isometry $V^\sB$, which consists of $\ket{\rho}^{\sA\sB}$ and $\ket{\sigma}^{\sA\sB}$, is implied in Ref.~\cite{Jozsa1994FidelityforMixedQuantumStates}, and recently its uniqueness and robustness have been studied in Ref.~\cite{bostanci2025localtransformationsbipart}.
To present this, we define the sign function over real numbers as  
\begin{equation}
\sgn(x) = 
\begin{cases}
-1 & x<0, \\
0 & x=0, \\
1 & x>0.
\end{cases}
\end{equation}
Then, for a matrix $A$, $\sgn^{(\rm SV)}(A)$ is defined as the sign function acting on the singular values of $A$; that is, when $A$ has the singular value decomposition $A = \sum_j s_j \ketbra{\eta_j}{\xi_j}$, we have
\begin{align}
\label{inteq:158}
    \sgn^{(\rm SV)}(A) = \sum_j \sgn(s_j) \ketbra{\eta_j}{\xi_j}.
\end{align}

\begin{proposition}[Explicit form of the Uhlmann partial isometry~{\cite[Lemma 6]{Jozsa1994FidelityforMixedQuantumStates}}]
\label{prop:explicit form of Uhl}
An explicit form of the Uhlmann partial isometry is given by 
\begin{align}
\label{eq:explicit form Uhl partial iso}
    V^\sB = \sgn^{(\rm SV)}\big(\tr_\sA\big[\ketbra{\sigma}{\rho}^{\sA\sB}\big]\big).
\end{align}
\end{proposition}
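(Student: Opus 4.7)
The plan is to reduce Uhlmann's maximization in Equation~\eqref{eq:Uhlmann thm} to a matrix-trace problem on system $\sB$ and then apply the standard singular-value characterization of trace-norm optimizers. First, I would use the identity $\bra{\sigma}M\ket{\rho}=\tr[M\ketbra{\rho}{\sigma}]$ and push the trace through $\sA$ and $\sB$ separately to get
\begin{align}
    \bra{\sigma}^{\sA\sB}(\bI^\sA\otimes U^\sB)\ket{\rho}^{\sA\sB}
    = \tr_\sB\!\bigl[U^\sB\, N^{\dag}\bigr],
\end{align}
where $N=\tr_\sA\bigl[\ketbra{\sigma}{\rho}^{\sA\sB}\bigr]$ is precisely the operator appearing inside $\sgn^{(\rm SV)}$ in Equation~\eqref{eq:explicit form Uhl partial iso}. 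The Uhlmann theorem then rephrases as $\sqrt{\rF}(\rho^\sA,\sigma^\sA)=\max_{U^\sB}\bigl|\tr[U^\sB N^\dag]\bigr|$, and so identifying the Uhlmann partial isometry amounts to identifying the maximizer of this matrix trace over unitaries on $\sB$.

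Next, I would apply the standard fact $\max_{U\text{ unitary}}|\tr[U N^\dag]|=\|N\|_1$, attained and essentially unique on $\supp[N]$. To see explicitly which operator attains it, I would take a singular value decomposition $N=\sum_j s_j\ketbra{\eta_j}{\xi_j}$ and compute
\begin{align}
    \tr[U N^\dag]=\sum_j s_j\,\bra{\eta_j}U\ket{\xi_j},
\end{align}
which, since each $|\bra{\eta_j}U\ket{\xi_j}|\le 1$, is maximized to $\sum_j s_j=\|N\|_1$ precisely when $U\ket{\xi_j}=\ket{\eta_j}$ for every $j$ with $s_j>0$. The minimal such operator is the partial isometry $V^\sB=\sum_{j:s_j>0}\ketbra{\eta_j}{\xi_j}$, which by Equation~\eqref{inteq:158} is exactly $\sgn^{(\rm SV)}(N)$. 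This also recovers the non-uniqueness remark: any completion of $V^\sB$ to a unitary by arbitrary action on $\ker N$ remains optimal.

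Finally, I would check that this $V^\sB$ is really the Uhlmann partial isometry in the sense described earlier, namely a partial isometry from $\supp[\rho^\sB]$ to $\supp[\sigma^\sB]$. Using Schmidt decompositions $\ket{\rho}^{\sA\sB}=\sum_j\sqrt{\rho_j}\ket{a_j}\ket{b_j}$ and $\ket{\sigma}^{\sA\sB}=\sum_k\sqrt{\sigma_k}\ket{a'_k}\ket{b'_k}$, I would expand
\begin{align}
    N=\sum_{j,k}\sqrt{\sigma_k \rho_j}\,\braket{a_j}{a'_k}\,\ketbra{b'_k}{b_j},
\end{align}
from which the coimage of $N$ clearly lies in $\mathrm{span}\{\ket{b_j}\}=\supp[\rho^\sB]$ and its image in $\supp[\sigma^\sB]$, so that $V^\sB=\sgn^{(\rm SV)}(N)$ inherits the same support structure and vanishes on $\ker[\rho^\sB]$.

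The hard part is really just bookkeeping of daggers and conjugations: one must verify that the ordering $\ketbra{\sigma}{\rho}$ inside $\tr_\sA$ matches the direction of $\sgn^{(\rm SV)}$ (i.e., that one does not accidentally obtain $(V^\sB)^\dag$ instead of $V^\sB$), and that the \emph{partial} isometry that drops out of the SVD argument coincides, on $\supp[\rho^\sB]$, with what the Uhlmann theorem singles out as the essentially unique optimizer. Everything else is a direct consequence of the standard trace-norm duality.
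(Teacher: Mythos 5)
Your proposal is correct and follows essentially the same route as the paper's derivation in its appendix: both reduce Uhlmann's maximization to $\max_{U^\sB}|\tr[U^\sB \tr_\sA[\ketbra{\rho}{\sigma}^{\sA\sB}]]|$ and invoke the trace-norm duality (equivalently, the explicit SVD computation you give), with the optimizer on the support being the sign of the singular values, i.e., $\sgn^{(\rm SV)}\big(\tr_\sA[\ketbra{\sigma}{\rho}^{\sA\sB}]\big)$. Your dagger bookkeeping is consistent, and your Schmidt-decomposition check of the support structure matches the paper's accompanying computation relating this operator to $\sqrt{\sigma^\sA}\sqrt{\rho^\sA}$.
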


One can readily check that the singular values of $\tr_\sA\big[\ketbra{\sigma}{\rho}^{\sA\sB}\big]$ and $\sqrt{\sigma^\sA}\sqrt{\rho^\sA}$ are identical.
Their ranks are also equal, as they have the same number of non-zero singular values.
In \Cref{sec:analyze partial iso}, we provide a derivation of \Cref{prop:explicit form of Uhl} for completeness.

Note that, since $V^\sB$ in Eq.~\eqref{eq:explicit form Uhl partial iso} is a partial isometry, it cannot be directly implemented as a deterministic physical operation. However, with the assistance of an ancillary system, the Uhlmann transformation can be approximately realized as a quantum channel.
That is, we aim for a transformation: $\ket{\rho}^{\sA\sB}\ket{\phi}^\sF \mapsto V^\sB\ket{\rho}^{\sA\sB}\ket{\phi'}^\sF$, with auxiliary states $\ket{\phi}^\sF$ and $\ket{\phi'}^\sF$, by operating on the system $\sB\sF$.


\subsection{Computational models for algorithm design}
\label{sec:three computational models}

We construct quantum algorithms that realize the Uhlmann transformation in the following three standard computational models. 
Here, we explain these computational models for a general state $\omega$; in the case of the Uhlmann transformation, $\omega$ is taken to be $\rho$ and $\sigma$.
\begin{enumerate}[label=\Roman*.]
    \item Purified query access model~\cite{watrous2002limitQSZK, watrous2009QSZKquantumattacks, belovs2019ClassicalDist, gilyen2019DistPropTesting, brandao2019SDPOpt, vanapeldoorn2019impleveSDP, Subramanian2021estimatealphalenni, gilyen2022improvedfidelity, Apeldoorn2023TomoPurifQuery, wang2023QAfidelityest, wang2023SampletoQuary, luo2024SuccinctTest, wang2024NewQuantEnt, Liu2024geometricmean}: let $U_\omega^{\sA\sB}$ be a unitary that prepares its purified state $\ket{\omega}^{\sA\sB}$, i.e., $U_\omega^{\sA\sB}\ket{0}^{\sA\sB}=\ket{\omega}^{\sA\sB}$.
    We assume that $U_\omega^{\sA\sB}$ is available multiple times as a unitary oracle.
    It is also assumed that we can utilize its inverse $(U_\omega^{\sA\sB})^\dag$.
    \item Purified sample access model~\cite{chen2024localtestUniInv, liu2024exponentialseparations}:
    multiple independent and identical copies of purified states $\ket{\omega}^{\sA\sB}$ are available.
    \item Mixed sample access model~\cite{gilyen2022improvedfidelity, wang2023SampletoQuary, wang2024TimeEfficientEntEstSamp, wang2024sampleoptimal}: multiple independent and identical copies of states $\omega^\sA$ are available.
\end{enumerate}

The purified query access model (Model~I) has been actively investigated in connection with quantum algorithms for processing classical-data distributions in oracle settings~\cite{belovs2019ClassicalDist, gilyen2019DistPropTesting, luo2024SuccinctTest}.
It is also commonly employed in quantum computational complexity~\cite{watrous2002limitQSZK, watrous2009QSZKquantumattacks}, quantum query algorithms~\cite{gilyen2022improvedfidelity, Apeldoorn2023TomoPurifQuery}, and semidefinite programming~\cite{brandao2019SDPOpt, vanapeldoorn2019impleveSDP}. 
The purified sample access model (Model~II) was recently considered in Refs.~\cite{chen2024localtestUniInv, liu2024exponentialseparations}.
Model~I and Model~II are suitable when we can manipulate the entire system, for example, in simulating quantum systems on a quantum computer.
Model~II further allows the relevant states to be prepared by other parties, even when one is unable to generate them. 
The mixed sample access model (Model~III) is now the standard sample input model for quantum property testing (see, e.g., Ref.~\cite{MdW16surveyquantproptest}), employed in many fundamental tasks such as quantum state tomography~\cite{ODonnell2016EffTomo, Haah2017samploptTomo, Guta2020FastTomo, Chen2023WhenAdapTomo, hu2024sampleoptimalmemoryefficient}.

These computational models are hierarchically related in terms of their computational power.
It is clear that Model~III can be simulated by Model~II, and Model~II can be simulated by Model~I. 
In other words, let $\mathrm{Q}$, $\mathrm{S}_{\mathrm{p}}$, and $\mathrm{S}_{\mathrm{m}}$ be the query/sample complexity of Models I, II, and III, respectively. 
Then, it immediately holds that $\mathrm{Q} \leq \mathrm{S}_{\mathrm{p}} \leq \mathrm{S}_{\mathrm{m}}$ for any computational tasks. 
However, their quantitative relation is generally unclear. 
A known simple separation between $\mathrm{Q}$ and $\mathrm{S}_{\mathrm{p}}$ is by, for example, the estimation of pure-state closeness: for the problem of estimating the trace distance between two pure states to within additive error $\delta$, it is known that $\mathrm{Q} = \Theta(1/\delta)$~\cite{Wang2024OptimalFidelityPure,Fang2025OptimalFidelityToPure} whereas $\mathrm{S}_{\mathrm{p}} = \Omega(1/\delta^2)$ (implied by Ref.~\cite{ALL22} using quantum state discrimination). 
On the other hand, it was shown in Ref.~\cite{chen2024localtestUniInv} that, for testing unitarily invariant properties, $\mathrm{S_p} = \mathrm{S_m}$ holds, that is, Model~II has no advantage over Model~III.
In general cases, however, whether there is a strict separation between $\mathrm{S}_{\mathrm{p}}$ and $\mathrm{S}_{\mathrm{m}}$ is not yet known.

Our results provide some insights into this open problem.
As can be seen in \Cref{tab:compare Uhlmann and tomography}, our algorithms for the Uhlmann transformation for Models~II and~III come with distinct sample complexity.
Although this still does not prove the separation between $\mathrm{S}_{\mathrm{p}}$ and $\mathrm{S}_{\mathrm{m}}$ due to the lack of a good lower bound for $\mathrm{S}_{\mathrm{m}}$, our results suggest that the Uhlmann transformation is a viable candidate that could show the separation.


\subsection{Building blocks of quantum algorithms}
\label{sec:previous key subroutine}

The Uhlmann transformation algorithm can be constructed by combining several quantum algorithms as subroutines. 
Here, we briefly review these primitives: the block-encoding and the QSVT in \Cref{sec:BE and QSVT intro}, and the density matrix exponentiation in \Cref{sec:DME intro}.

\subsubsection{Block encoding and the quantum singular value transformation}
\label{sec:BE and QSVT intro}

The notion of block-encoding was introduced in Refs.~\cite{Low2019HamiltonianQubitize, gilyen2019qsvt} to describe that certain operators of interest $A$ are encoded as a block of a larger unitary $U$.
The block is specified by an $a$-qubit auxiliary state $\ket{0^a}$ as 
\begin{align}
        U
     = \hspace{1mm}
\begin{blockarray}{ccc}
& \bra{0^a} &  & \vspace{1mm}\\
\begin{block}{c(cc)}
  \ket{0^a} \hspace*{1mm} & A/\alpha & \hspace{0mm} * \hspace{3mm}\\
   \hspace*{1mm} & * & \hspace{0mm} * \hspace{3mm}\\
\end{block}
\end{blockarray}
\   \iff   A = \alpha(\bra{0^a} \otimes \bI)U(\ket{0^a} \otimes \bI).
\end{align}
Here, $\alpha$ is the normalization constant such that $\alpha \geq \|A\|_\infty$, which ensures that $U$ is a unitary with appropriate choices of the remaining blocks in $U$. 
More generally, a unitary $U$ is called an $(\alpha, a, \epsilon)$-block-encoding (unitary) of $A$, when $\big\|A - \alpha(\bra{0^a} \otimes \bI)U(\ket{0^a} \otimes \bI)\big\|_\infty \leq \epsilon$. If $(\alpha, a, \epsilon) = (1, a, 0)$ and $a$ is clear from the context, we simply refer to $U$ as an (exact) block-encoding (unitary) of $A$ and omit $a$ from $\ket{0^a}$.

Block-encoding is a powerful tool for implementing flexible operations.
In particular, one can construct a unitary $\tilde{U}$ that block-encodes a certain degree odd or even polynomial $P$ of a matrix.  
The construction is called the \emph{quantum singular value transformation (QSVT)}~\cite{gilyen2019qsvt, Gilyn2019thesis}:
\begin{align}
        U
     = \!\hspace{1mm}
\begin{blockarray}{ccc}
& \bra{0^a} &  & \vspace{1mm}\\
\begin{block}{c(cc)}
  \ket{0^a} \hspace*{1mm} & A/\alpha & \hspace{0mm} * \hspace{3mm}\\
   \hspace*{1mm} & * & \hspace{0mm} * \hspace{3mm}\\
\end{block}
\end{blockarray}
\ \ \overset{\rm QSVT}{\longrightarrow}  \
        \til{U}
     = \!\hspace{1mm}
\begin{blockarray}{ccc}
& \bra{0^{\til{a}}} &  & \vspace{1mm}\\
\begin{block}{c(cc)}
  \ket{0^{\til{a}}} \hspace*{1mm} & P^{(\rm SV)}(A/\alpha) & \hspace{0mm} * \hspace{3mm}\\
   \hspace*{1mm} & * & \hspace{0mm} * \hspace{3mm}\\
\end{block}
\end{blockarray}\!\hspace{2.5mm},
\end{align}
where $P^{(\rm SV)}(A/\alpha)$ is defined by 
\begin{align}
\label{eq:P poly def}
    P^{(\rm SV)}(A/\alpha) =
    \begin{cases}
    \sum_j P(s_j/\alpha)\ketbra{\eta_j}{\xi_j}, \ \  \text{if $P$ is odd,} \\
    \sum_j P(s_j/\alpha)\ketbra{\xi_j}{\xi_j}, \ \ \text{if $P$ is even,}
    \end{cases}
\end{align}
and $A = \sum_j s_j \ketbra{\eta_j}{\xi_j}$ is the singular value decomposition of $A$.
The formal statement of the QSVT, specialized to real even/odd polynomials, is as follows, where the polynomial $P$ must satisfy some conditions.

\begin{theorem}[Quantum singular value transformation with a real polynomial~{\cite[Corollary 11]{gilyen2019qsvt}}; see also Ref.~{\cite[Corollary 2.3.8]{Gilyn2019thesis}}]
\label{prop:general QSVT}
Let $P$ be a real even/odd polynomial of degree-$l$ satisfying $|P(x)| \leq 1$ for $x \in [-1,1]$, and let $U$ be an $(\alpha, a, 0)$-block-encoding unitary of $A$.
Then, a quantum circuit for implementing a block-encoding unitary $\tilde{U}$ of $P^{(\rm SV)}(A/\alpha)$ can be constructed using $\f{l+1}{2}$ calls to $U$ and $\f{l-1}{2}$ calls to $U^\dag$ if $P$ is odd, and $\f{l}{2}$ calls to $U$ and $\f{l}{2}$ calls to $U^\dag$ if $P$ is even, along with $\cO(al)$ other one- and two-qubit gates.  
\end{theorem}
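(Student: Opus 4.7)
The plan is to reduce the problem to \emph{quantum signal processing} (QSP) applied separately within a family of two-dimensional invariant subspaces, in what is now known as the qubitization framework. First, I would fix the singular value decomposition $A = \sum_j s_j \ketbra{\eta_j}{\xi_j}$ and, for each $j$, identify the two-dimensional subspace $\cS_j$ of the full Hilbert space spanned by $\ket{0^a}\ket{\xi_j}$ and a normalized vector $\ket{\perp_j}$ capturing the component of $U\ket{0^a}\ket{\xi_j}$ orthogonal to $\ket{0^a}\ket{\eta_j}$, together with the analogously defined subspace at the output. A short computation shows that, relative to this basis, $U$ and $U^\dag$ act as direct sums of $2 \times 2$ rotations with angle $\theta_j$ satisfying $\cos\theta_j = s_j/\alpha$.

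Next, I would alternate applications of $U$ or $U^\dag$ with phase-shift unitaries $\Pi_{\phi_k} = e^{i\phi_k(2\ketbra{0^a}{0^a}-\bI)\otimes \bI}$ acting only on the block-encoding register. Because $\Pi_{\phi_k}$ also preserves each $\cS_j$, the full alternating product reduces, within each invariant subspace, to the standard single-qubit QSP sequence of Low and Chuang, yielding a $2 \times 2$ matrix whose $(0,0)$ entry is a polynomial of $s_j/\alpha$ of definite parity. Carrying out this procedure on the entire Hilbert space thus produces a unitary whose upper-left block equals $P^{(\rm SV)}(A/\alpha)$ in the sense of Eq.~\eqref{eq:P poly def}, for either parity.

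The hard part will be the converse direction: showing that for \emph{every} real polynomial $P$ of degree $l$ with the matching parity and $|P(x)| \leq 1$ on $[-1,1]$, there exists a sequence $\phi_0,\ldots,\phi_l$ that realizes $P$ as the $(0,0)$ entry of the QSP product. This is the content of the QSP achievability theorem. My plan to handle it would be to first complete $P$ to an achievable normal form by constructing an auxiliary real polynomial $Q$ such that the matrix with entries built from $P$ and $\sqrt{1-x^2}\,Q$ is unitary on $[-1,1]$; the existence of such $Q$ follows from a Fej\'er-type factorization of $1 - P(x)^2$, and an explicit recursive decomposition of this unitary then extracts the phases $\phi_k$ one at a time.

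Finally, I would count resources by inspecting the alternating structure. The QSP sequence of length $l$ interleaves $l$ signal unitaries with $l+1$ phase shifts; distributing $U$ and $U^\dag$ alternately, odd $l$ yields $\tfrac{l+1}{2}$ uses of $U$ and $\tfrac{l-1}{2}$ uses of $U^\dag$, while even $l$ yields $\tfrac{l}{2}$ of each, matching the claim. Each $\Pi_{\phi_k}$ is a phase gate controlled on the $a$-qubit flag state $\ket{0^a}$, implementable with $\cO(a)$ one- and two-qubit gates, so summing over the $l+1$ layers gives $\cO(al)$ additional elementary gates beyond the queries to $U$ and $U^\dag$.
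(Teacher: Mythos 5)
First, a note on scope: the paper does not prove this theorem---it is imported verbatim from the cited references ([Corollary 11] of Gily\'en--Su--Low--Wiebe and [Corollary 2.3.8] of Gily\'en's thesis)---so I am comparing your sketch against the standard proof there rather than against anything in this manuscript.

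Your overall architecture (the two-projector/Jordan-lemma decomposition into $2\times 2$ invariant subspaces, alternating $U$ and $U^\dag$ with projector-controlled phases $e^{i\phi_k(2\ketbra{0^a}{0^a}-\bI)}$, reduction to single-qubit QSP on each subspace, and the final query and gate counts) is the right one and matches the cited proof. The genuine gap is in the step you yourself flag as the hard part. A real polynomial $P$ of parity $l \bmod 2$ with $|P|\le 1$ on $[-1,1]$ is in general \emph{not} realizable as the $(0,0)$ entry $P_\Phi$ of a QSP product with real phases: unitarity of the $2\times 2$ QSP matrix forces $P(x)^2 + (1-x^2)|Q(x)|^2 = 1$ identically, so $1-P(x)^2$ must vanish at $x=\pm 1$ and the quotient $(1-P^2)/(1-x^2)$ must be a polynomial of the form $|Q|^2$. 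Already for $P(x)=x/2$ this fails, so your proposed Fej\'er-type factorization of $1-P(x)^2$ has nothing to factor into the required form. What is actually true (GSLW Corollary 5) is that such a $P$ is the \emph{real part} of an achievable complex $(0,0)$ entry; the Fej\'er/root-pairing argument lives at that stage, in constructing the complementary complex polynomials. One then extracts $\mathrm{Re}(P_\Phi)=\tfrac{1}{2}(P_\Phi+P_{-\Phi})$ by adjoining a single ancilla qubit prepared in $\ket{+}$ and making only the phase gates (not the queries to $U$ and $U^\dag$) controlled on it, applying $\pm\phi_k$ in the two branches. This preserves your stated query counts and the $\cO(al)$ gate count, and it is exactly why the paper's Proposition~5 ($\mathtt{QSVTSIGN}$) outputs a $(1,a+1,0)$-block-encoding rather than a $(1,a,0)$ one. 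Without this real-part step, your construction covers only the special real polynomials with $P(\pm 1)=\pm 1$ and $(1-P^2)/(1-x^2)$ a perfect square (e.g., Chebyshev polynomials), not the full class in the hypothesis.
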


Note that it is straightforward to construct $\tilde{U}^\dag$ by slightly modifying the construction of $\tilde{U}$~\cite{gilyen2019qsvt, Gilyn2019thesis, martyn2021grand}.

By choosing a polynomial $P$ appropriately, we can implement various useful quantum algorithms within the QSVT framework, such as the amplitude amplification, matrix inversion, and Hamiltonian simulation.
To implement the QSVT in practice, one needs to compute a sequence of real phase factors corresponding to a given $P$. Algorithms that compute such a phase factor sequence in time polynomial in the degree of $P$ are known~\cite{Haah2019product, Chao2020FindingAF, dong2021findingphase, Lin2022LectureNotes, Mizuta2023vzw}.

The following proposition is a result of selecting an appropriate real odd polynomial $P_\sgn$ that closely approximates the sign function~\cite{gilyen2019qsvt, Gilyn2019thesis, martyn2021grand}.
For future reference, we present it in an algorithmic form.

\begin{proposition}[QSVT with the sign function~{\cite[Theorem 1]{gilyen2019qsvt}}; see also Ref.~{\cite[Theorem 3.2.3]{Gilyn2019thesis}}]
\label{prop:FPAA general}
    Let $\delta \in (0, 1/2)$ and $\beta \in (0, 1]$, and let $U$ be a $(1, a, 0)$-block-encoding unitary of a matrix $M$. Then, there exists a quantum algorithm $\mathtt{QSVTSIGN}(U, U^\dag; \delta, \beta)$ that outputs a unitary $\til{U}$, which is a $(1, a+1, 0)$-block-encoding of a degree-$u$ polynomial $P_\sgn^{(\rm SV)}(M)$ such that $\big|P_\sgn(x) - \sgn(x)\big| \leq \delta$,
    for $|x| \in [\beta, 1]$ and $|P_\sgn(x)| \leq 1$ for $|x| \in [0, 1]$, where $u$ is the minimum odd integer satisfying $u \geq \big\lceil\f{8e}{\beta}\log{(2/\delta)}\big\rceil$.
    This algorithm uses $U$ and $U^\dag$ a total of $u$ times.
\end{proposition}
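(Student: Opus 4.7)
The plan is to exhibit a real odd polynomial $P_\sgn$ with the claimed approximation properties and degree $u$, and then feed it into the general QSVT machinery of \Cref{prop:general QSVT}. Since the QSVT already handles the circuit implementation once the polynomial is fixed, the bulk of the work is selecting $P_\sgn$ correctly.

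First I would construct $P_\sgn$ as a real odd polynomial of degree $u$ (with $u$ odd) satisfying (i) $|P_\sgn(x) - \sgn(x)| \le \delta$ for $|x| \in [\beta,1]$, and (ii) $|P_\sgn(x)| \le 1$ for all $x \in [-1,1]$. A standard route in the QSVT literature is to approximate $\sgn(x)$ by a scaled error function $\mathrm{erf}(k x)$, which is odd, bounded by $1$, and converges uniformly to $\sgn(x)$ away from zero at a rate exponential in $k|x|$; then one approximates $\mathrm{erf}(kx)$ itself by a low-degree odd polynomial via, e.g., a Jacobi--Anger-type expansion. Tuning the two steps so that the combined error at $|x|=\beta$ is at most $\delta$, and rescaling by a factor $1-\epsilon$ for a tiny $\epsilon$ to strictly enforce the uniform bound (ii), yields a polynomial whose degree is the minimum odd integer $u \ge \lceil (8e/\beta)\log(2/\delta)\rceil$, as claimed.

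Next I would instantiate \Cref{prop:general QSVT} with this odd polynomial $P_\sgn$ and the input block-encoding $U$ of $M$. Since $P_\sgn$ is real, odd, of degree $u$, and bounded by $1$ on $[-1,1]$, the theorem's hypotheses are satisfied and it produces a $(1,a+1,0)$-block-encoding $\tilde U$ of $P_\sgn^{(\rm SV)}(M)$, using $(u+1)/2$ invocations of $U$ and $(u-1)/2$ invocations of $U^\dag$---exactly $u$ oracle calls in total. The single additional auxiliary qubit beyond $U$ comes from the standard phase-factor implementation inside \Cref{prop:general QSVT}. Packaging the two steps as the subroutine $\mathtt{QSVTSIGN}(U, U^\dag;\delta,\beta)$ gives the claimed algorithm.

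The main obstacle is really the first step---constructing the polynomial with both tight degree and the two sharp pointwise bounds (i) and (ii) simultaneously. The difficulty is that the natural $\mathrm{erf}$-based approximation saturates (ii) almost tightly, so one must slightly rescale while keeping the degree bound $\lceil (8e/\beta)\log(2/\delta)\rceil$ intact; the constants $8e$ and $2$ track the conservative estimates in the erf-tail bound and in the truncation error. Since this polynomial construction has been carried out in detail in the original QSVT literature, I would cite it rather than re-derive it, noting only the parameter substitutions that produce the explicit constants.
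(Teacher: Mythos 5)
Your proposal is correct and follows essentially the same route as the paper: both defer the polynomial construction to the QSVT literature and then invoke \Cref{prop:general QSVT} to get the $(1,a+1,0)$-block-encoding with $u$ total oracle calls. The only difference is that the paper makes the origin of the constant explicit, quoting the degree bound $\gamma_{\delta,\beta}$ of Ref.~\cite{martyn2023efficient} in terms of the Lambert $W$ function and then using $W(x)\leq\log x$ for $x>e$ to arrive at $\big\lceil\frac{8e}{\beta}\log(2/\delta)\big\rceil$, whereas you leave the constant $8e$ as an unspecified artifact of the erf-based estimates.
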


This proposition is based on the fact that, for given $\delta$ and $\beta$, there exists a degree-$u$ polynomial $P_{\rm sgn}$ that well approximates the sign function.
More precisely, for any odd degree $u \geq \gamma_{\delta, \beta}$, where
\begin{align}
    &\gamma_{\delta, \beta} = 2 \Big\lceil\max\Big\{
    \frac{e}{2\beta} \sqrt{W\Big(\frac{8}{\pi\delta^2}\Big) W\Big(\frac{512}{e^2\pi\delta^2}\Big)}, \sqrt{2}\,W\Big(\frac{4}{\beta\delta}\sqrt{\frac{2}{\pi}W\Big(\frac{8}{\pi\delta^2}\Big)}\Big)\Big\}\Big\rceil + 1,
\end{align}
there is a polynomial that approximates $\sgn(x)$ within error $\delta$ over $|x| \in [\beta, 1]$~\cite{martyn2023efficient}.
Here, $W(x)$ is the Lambert $W$ function~\cite{Corless1996onthelambert}.
From the fact that $W(x) \leq \log x$ holds for $x > e$~\cite{Hassani2005Approximationlambert, Hoorfar2008inequlambert}, we have $\gamma_{\delta, \beta} \leq \big\lceil\frac{8e}{\beta}\log(2/\delta)\big\rceil$ for $\delta \in (0,1/2)$ and $\beta \in (0,1]$. 
Thus, it is sufficient to set the number of uses $u$ of the unitaries $U$ and $U^\dag$, which coincides with the degree of the polynomial, to the minimum odd integer satisfying $u \geq \big\lceil\frac{8e}{\beta}\log(2/\delta)\big\rceil$.

\subsubsection{Density matrix exponentiation}
\label{sec:DME intro}

Given identical copies of a quantum state $\rho$, we can approximately implement a unitary $e^{it\rho}$ using a technique known as the \emph{density matrix exponentiation}, which was introduced in Ref.~\cite{lloyd2014QuantPrincCompAnaly} and developed in subsequent works~\cite{marvian2016universalquantumemulator, kimmel2017HamSimSampleComp, wei2023DMEhermitianpreserv, go2024DMEsamplebased}.

\begin{theorem}[Density matrix exponentiation~{\cite[Theorem 2]{go2024DMEsamplebased}}]
\label{prevthm:cntrl-DME}
Let $\rho$ be a quantum state in a $d$-dimensional Hilbert space.
For $\delta > 0$ and $t \geq \delta/4$, there exists a quantum algorithm that realizes a quantum channel $\cG$ such that
$\f{1}{2}\|\cG - \cU\|_\diamond \leq \delta$,
where $\cU(\cdot) = e^{it\rho}(\cdot)e^{-it\rho}$.
The algorithm uses $m = \lceil4t^2/\delta\rceil$ copies of $\rho$, and $\cO(m\log{d})$ one- and two-qubit gates.
\end{theorem}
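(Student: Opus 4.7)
The plan is to follow the now-standard SWAP-based construction of density matrix exponentiation introduced by Lloyd, Mohseni, and Rebentrost. The key algebraic identity is that, for the swap operator $F$ on two copies of a $d$-dimensional system, a small-time partial trace gives
\begin{equation}
    \tr_2\bigl[e^{-iF\Delta t}(\sigma \otimes \rho)e^{iF\Delta t}\bigr] = \sigma - i\Delta t [\rho,\sigma] + \cO(\Delta t^2) = e^{-i\rho\Delta t}\sigma\, e^{i\rho\Delta t} + \cO(\Delta t^2).
\end{equation}
Thus, given one fresh copy of $\rho$ per step, a single application of this SWAP-based primitive implements the channel $\cU_{\Delta t}(\sigma) = e^{-i\rho\Delta t}\sigma\, e^{i\rho\Delta t}$ with error $\cO(\Delta t^2)$ in diamond norm. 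Concatenating $m$ such steps with $\Delta t = t/m$ uses $m$ copies of $\rho$ in total and yields an approximation of $\cU$.

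Concretely, I would first define the single-step channel $\cG_{\Delta t}(\sigma) = \tr_2[e^{-iF\Delta t}(\sigma\otimes\rho)e^{iF\Delta t}]$, which is implemented by coupling the working system to one fresh copy of $\rho$ through a SWAP-based Hamiltonian evolution for time $\Delta t$ and discarding the ancilla. By Taylor expanding $e^{-iF\Delta t}$ to second order and bounding the remainder using $\|F\|_\infty = 1$ and the trace-preservation of the partial trace (together with the Powers--St\o{}rmer/purification machinery, or a direct Stinespring argument), I would show that
\begin{equation}
    \tfrac{1}{2}\bigl\|\cG_{\Delta t} - \cU_{\Delta t}\bigr\|_\diamond \leq c\,\Delta t^2,
\end{equation}
for an absolute constant $c$ (one can take $c=2$ after careful accounting, and the factor $4$ in the theorem reflects the worst-case constant). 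The overall channel is $\cG = \cG_{\Delta t}^{\circ m}$ and $\cU = \cU_{\Delta t}^{\circ m}$, and the subadditivity of the diamond norm under channel composition gives
\begin{equation}
    \tfrac{1}{2}\bigl\|\cG - \cU\bigr\|_\diamond \leq m \cdot c\,\Delta t^2 = \frac{c t^2}{m}.
\end{equation}
Choosing $m = \lceil 4t^2/\delta\rceil$ then enforces the target error $\delta$, and the assumption $t \geq \delta/4$ merely guarantees that $m \geq 1$ is meaningful.

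The gate count follows from the fact that a controlled partial SWAP between two $d$-dimensional systems decomposes into $\cO(\log d)$ controlled-SWAPs on individual qubits, plus a single-qubit phase, so each of the $m$ steps uses $\cO(\log d)$ one- and two-qubit gates, for a total of $\cO(m\log d)$. The main obstacle will be the clean per-step diamond-norm bound: one must control not only the $\cO(\Delta t^2)$ Taylor remainder but also its dependence on reference systems when the input is entangled with an external purifying register, which is precisely where a Stinespring-dilation argument (treating the SWAP-based channel as an isometry into a larger space followed by a partial trace) is the cleanest tool. Once that per-step estimate is in place with a dimension-independent constant, the composition bound and the parameter choice $m = \lceil 4t^2/\delta\rceil$ complete the proof.
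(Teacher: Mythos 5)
Your proposal is correct and follows essentially the same route as the paper, which itself only sketches this cited result via the standard Lloyd--Mohseni--Rebentrost partial-SWAP construction: a per-step error of $\cO(\Delta t^2)$ from the Taylor expansion of $e^{i\Delta t F}$, accumulation over $m$ steps by subadditivity of the diamond norm, and the choice $m = \lceil 4t^2/\delta\rceil$. The one technical point you flag — obtaining a dimension-independent per-step diamond-norm constant via a Stinespring/purification argument — is exactly the detail the paper defers to Refs.~\cite{wei2023DMEhermitianpreserv, go2024DMEsamplebased}.
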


The key idea of the density matrix exponentiation is as follows.
If, for a small parameter $\Delta t$, we apply partial swap unitary $e^{i\Delta t F^{\sA\hat{\sA}}}$, we see that
\begin{align}
    \tr_{\hat{\sA}}\big[e^{i\Delta t F^{\sA\hat{\sA}}} (\psi^{\sR\sA} \otimes \rho^{\hat{\sA}}) e^{-i\Delta t F^{\sA\hat{\sA}}}\big] 
    &= \psi^{\sR\sA} + i\Delta t [\rho^\sA, \psi^{\sR\sA}] + \cO\big((\Delta t)^2\big) \\ 
    &= e^{i\Delta t\rho^\sA} \psi^{\sR\sA} e^{-i\Delta t \rho^\sA} + \cO\big((\Delta t)^2\big),
\end{align}
where the last line follows from the Hadamard lemma, also known as the Baker-Campbell-Hausdorff formula~\cite{campbell1896BCHlemma, Baker1901BCHlemma, hausdorff1906symbolische, dynkin1947calculation}.
By repeating this procedure $m$ times, we end up implementing the unitary $e^{i\rho^\sA t}$ up to the error $\cO\big(m(\Delta t)^2)$, where $m \Delta t = t$. 
Thus, to achieve the approximation of $e^{it\rho^\sA}$ within an error of $\delta$, it is sufficient to set $m = \Omega(t^2/\delta)$, which represents the number of copies of $\rho^\sA$.
A quantum channel that approximates $\cU^\dag(\cdot) = e^{-it\rho}(\cdot)e^{it\rho}$ can also be constructed by simply modifying this procedure to replace $e^{i\Delta t F}$ with $e^{-i\Delta t F}$.

Intuitively, the first-order relation: $\tr_{\hat{\sA}}\big[F^{\sA\hat{\sA}}(\psi^{\sR\sA}\otimes\rho^{\hat{\sA}})\big] = \rho^\sA\psi^{\sR\sA}$, implies that by repeatedly preparing $\rho^{\hat{\sA}}$ and applying $e^{i\Delta t F^{\sA\hat{\sA}}}$ and $\tr_{\hat{\sA}}$, higher-order deviations $\delta$ can be sufficiently suppressed, which enables us to approximately simulate the unitary $e^{it\rho^\sA}$.
For detailed discussions of higher-order terms, refer to Refs.~\cite{wei2023DMEhermitianpreserv, go2024DMEsamplebased}.

The density matrix exponentiation approximately realizes a variety of unitary operations by appropriately preparing multiple copies of states and designing repeated operations.
The following is one of the results, provided in an algorithmic form for future convenience.

\begin{proposition}[Density matrix exponentiation for the difference of two subnormalized states~{\cite[Lemma 12]{kimmel2017HamSimSampleComp}}]
\label{prevthm:DME variant}
Let $\delta > 0$ and $t \geq \delta/4$, and let $\Upsilon$ be a quantum state in a $d$-dimensional Hilbert space of the form  
\begin{align}
    \Upsilon = \ketbra{0}{0} \otimes \xi_0 + \ketbra{1}{1} \otimes \xi_1,
\end{align}
where $\xi_0$ and $\xi_1$ are subnormalized states with $\tr[\xi_0] + \tr[\xi_1] = 1$. 
Then, there exists a quantum algorithm $\mathtt{DMESUB}(\Upsilon;\delta, t)$ that outputs a quantum channel $\cG$ such that $\frac{1}{2} \big\|\cG - \cU\big\|_\diamond \leq \delta$, where $\cU(\cdot) = e^{it(\xi_0 - \xi_1)}(\cdot)e^{-it(\xi_0 - \xi_1)}$.
The algorithm uses $m = \lceil 4t^2/\delta\rceil$ samples of $\Upsilon$ and $\cO(m\log{d})$ one- and two-qubits gates.

Moreover, there exists a quantum algorithm $\mathtt{DMESUB}^\dag(\Upsilon; \delta, t)$ that outputs a quantum channel $\cG_{\rm inv}$ satisfying $\frac{1}{2} \big\|\cG_{\rm inv} - \cU^\dag \big\|_\diamond \leq \delta$, using the same number of samples and gates as $\mathtt{DMESUB}(\Upsilon; \delta, t)$.
\end{proposition}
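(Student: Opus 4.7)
The plan is to reduce this statement to the standard density matrix exponentiation (Theorem~\ref{prevthm:cntrl-DME}) by encoding the sign structure of $\xi_0 - \xi_1$ into the flag qubit of $\Upsilon$ and using a \emph{signed} infinitesimal swap. Concretely, I would work with the Hermitian generator $G^{\sC\sA\hat{\sA}} = Z^{\sC} \otimes F^{\sA\hat{\sA}}$, where the Pauli $Z$ acts on the flag qubit of the ancillary copy of $\Upsilon$. The algorithm $\mathtt{DMESUB}(\Upsilon; \delta, t)$ then has the Lloyd-style form: starting from the input state on $\sR\sA$, for $k = 1, \dots, m$ it appends a fresh copy $\Upsilon^{\sC_k \hat{\sA}_k}$, applies $\exp\!\bigl(i(t/m)\,G^{\sC_k\sA\hat{\sA}_k}\bigr)$, and then traces out $\sC_k\hat{\sA}_k$. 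Each $\exp(i(t/m)G)$ is a controlled partial swap (the sign of the swap angle depends on the flag), which is implementable with $\cO(\log d)$ one- and two-qubit gates.

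The first-order identity that drives the analysis is the direct generalization of Eq.~\eqref{eq:dme first order}: for any $\psi^{\sR\sA}$,
\begin{align}
    \tr_{\sC\hat{\sA}}\!\bigl[G^{\sC\sA\hat{\sA}}(\psi^{\sR\sA} \otimes \Upsilon^{\sC\hat{\sA}})\bigr]
    &= \tr_{\hat{\sA}}\!\bigl[F^{\sA\hat{\sA}}(\psi^{\sR\sA} \otimes \xi_0^{\hat{\sA}})\bigr] - \tr_{\hat{\sA}}\!\bigl[F^{\sA\hat{\sA}}(\psi^{\sR\sA} \otimes \xi_1^{\hat{\sA}})\bigr] \notag \\
    &= (\xi_0 - \xi_1)\psi^{\sR\sA},
\end{align}
where I used $\tr[Z\ketbra{b}{b}] = (-1)^b$ and the cyclicity of the partial trace through $F$. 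The analogous identity with $G$ on the right yields $\psi^{\sR\sA}(\xi_0 - \xi_1)$. Consequently, expanding a single step to second order gives the commutator $i(t/m)[\xi_0 - \xi_1,\, \psi^{\sR\sA}]$ plus a remainder $\cO\bigl((t/m)^2\bigr)$, which is exactly the local action of $\exp\bigl(i(t/m)(\xi_0 - \xi_1)\bigr)(\cdot)\exp\bigl(-i(t/m)(\xi_0 - \xi_1)\bigr)$ up to the same order.

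To control the accumulated error, I would invoke the subadditivity of the diamond norm under channel composition, so that the $m$ iterations contribute a total error of at most $m$ times the per-step error. Since $\|\xi_0 - \xi_1\|_\infty \leq 1$ (because $\xi_0, \xi_1 \geq 0$ with $\tr[\xi_0] + \tr[\xi_1] = 1$) and $\|F\|_\infty = 1$, a Taylor estimate of $e^{i(t/m)G}$ versus its first-order truncation bounds the per-step diamond error by $\cO\bigl((t/m)^2\bigr)$, giving total error $\cO(t^2/m)$. Choosing $m = \lceil 4t^2/\delta \rceil$ makes this at most $\delta$, matching the stated sample count. This is essentially the argument of Kimmel--Lin--Low--Ozols--Yoder~\cite{kimmel2017HamSimSampleComp}; the main technical obstacle is tracking the constants carefully so that the factor $4$ comes out precisely, and verifying that the subnormalization of $\xi_0, \xi_1$ does not spoil the operator-norm bound on the effective generator.

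For the inverse channel $\cU^\dag$, the algorithm $\mathtt{DMESUB}^\dag(\Upsilon;\delta,t)$ is obtained by replacing $G$ with $-G$ (equivalently, flipping the sign of the angle $t/m$ in every controlled partial swap, or equivalently conjugating the flag by $X^{\sC}$ so that the roles of $\xi_0$ and $\xi_1$ are swapped). The same first-order calculation and error accumulation then apply verbatim, yielding the identical sample and gate counts.
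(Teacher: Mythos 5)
Your proposal follows essentially the same route as the paper, which itself only sketches this result and defers to Kimmel et al.: your generator $G = Z^{\sC}\otimes F^{\sA\hat{\sA}}$ yields exactly the controlled partial swap $\ketbra{0}{0}\otimes e^{i\Delta t F} + \ketbra{1}{1}\otimes e^{-i\Delta t F}$ of Eq.~\eqref{inteq:63}, your first-order trace identity matches the paper's computation giving $(\xi_0-\xi_1)\psi^{\sR\sA}$, and your sign-flip construction of the inverse is the same as the paper's. The Lloyd-style error accumulation via diamond-norm subadditivity with $m=\lceil 4t^2/\delta\rceil$ is the standard argument from the cited reference, so the proposal is correct.
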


The main difference from \Cref{prevthm:cntrl-DME} is that, instead of using the partial swap $e^{i\Delta t F}$, this algorithm adopts a unitary operator $\ketbra{0}{0} \otimes e^{i\Delta t F} + \ketbra{1}{1} \otimes e^{-i\Delta t F}$.
When we see the first-order term, we find that
\begin{align}
    \tr_{\sC\hat{\sA}}\big[\big(\ketbra{0}{0}^\sC \otimes F^{\sA\hat{\sA}}
    + \ketbra{1}{1}^\sC \otimes (-F^{\sA\hat{\sA}})\big) \big(\psi^{\sR\sA}\otimes (\ketbra{0}{0}^\sC \otimes \xi_0^{\hat{\sA}} + \ketbra{1}{1}^\sC \otimes \xi_1^{\hat{\sA}})\big)\big]
    = (\xi_0^\sA - \xi_1^\sA)\psi^{\sR\sA}.
\end{align}
From its similarity to the standard density matrix exponentiation, we can intuitively understand that the operation realizes $e^{it(\xi_0^\sA - \xi_1^\sA)}$.
By only replacing $e^{i\Delta t F}$ with $e^{-i\Delta t F}$, we can implement the inverse $e^{-it(\xi_0^\sA - \xi_1^\sA)}$ using the rest of the same procedure.


\section{Uhlmann transformation algorithm in the purified query access model}
\label{sec:purif query}

We consider the Uhlmann transformation algorithm in the purified query access model.
The formal statement of \Cref{infthm:uhlmenn purif query} is as follows, where $\sD$ is a $\big(\log{(d_\sA d_\sB)} + 1\big)$-qubit system. 
A quantum algorithm $\mathtt{UhlmannPurifiedQuery}$ is provided in \Cref{alg:Uhl purif query}, in which the symbol ``$\gets$" denotes assignment.
Recall that $s_{\rm min}$ and $r$ are the minimum non-zero singular value and the rank of $\sqrt{\sigma^\sA}\sqrt{\rho^\sA}$, respectively, and $U_\rho^{\hat{\sA}\hat{\sB}}$ and $U_\sigma^{\hat{\sA}\hat{\sB}}$ are unitaries that prepare states $\ket{\rho}^{\hat{\sA}\hat{\sB}}$ and $\ket{\sigma}^{\hat{\sA}\hat{\sB}}$, respectively.

\begin{theorem}[Uhlmann transformation algorithm in the purified query access model]
\label{thm:Uhlmann alg purif query model}
Let $\delta \in (0, 1)$ and $\chi \in \{\diamond, \tF\}$. Then, the quantum query algorithm $\mathtt{UhlmannPurifiedQuery}$ given by \Cref{alg:Uhl purif query} satisfies the following. 

A unitary $\til{W}_\diamond^{\sB\sD}$ given by $\til{W}_\diamond^{\sB\sD} = \mathtt{UhlmannPurifiedQuery}(U_\rho^{\hat{\sA}\hat{\sB}}, U_\sigma^{\hat{\sA}\hat{\sB}}; \delta, \diamond)$, satisfies that
\begin{equation}
\label{eq:error purif access}
    \f{1}{2}\big\|\til{\cW}_\diamond^{\sB\sD} \circ\cP_{\ket{0}}^{\bC\rarr\sD} - \til{\cU}_{\rm ideal}^{\sB\sD}\circ\cP_{\ket{0}}^{\bC\rarr\sD}\big\|_\diamond
    \leq \delta,
\end{equation}
where $\til{U}_{\rm ideal}^{\sB\sD}$ is an exact block-encoding unitary of the Uhlmann partial isometry $V^\sB$, and $\cP_{\ket{0}}^{\bC\rarr\sD}$ is the channel preparing the state $\ketbra{0}{0}^\sD$ on system $\sD$.
The algorithm uses $u_\diamond = \cO\big(\f{1}{s_{\rm min}}\log{\big(\f{1}{\delta}\big)}\big)$ queries to $U_\rho^{\hat{\sA}\hat{\sB}}$, $U_\sigma^{\hat{\sA}\hat{\sB}}$, and their inverses, where $s_{\min}$ is the minimum non-zero singular value of $\sqrt{\sigma^{\sA}}\sqrt{\rho^{\sA}}$ as in \Cref{tab:technical notation}.

Moreover, a unitary $\til{W}_{\tF}^{\sB\sD}$ given by $\til{W}_{\tF}^{\sB\sD} = \mathtt{UhlmannPurifiedQuery}(U_\rho^{\hat{\sA}\hat{\sB}}, U_\sigma^{\hat{\sA}\hat{\sB}}; \delta, \tF)$, satisfies that 
\begin{align}
        \rF\big(\cT^\sB(\ketbra{\rho}{\rho}^{\sA\sB}), \ket{\sigma}^{\sA\sB}\big) \geq \rF(\rho^\sA, \sigma^\sA) - \delta,
\end{align}
where $\cT^\sB = \tr_{\sD}\circ\til{\cW}_{\tF}^{\sB\sD}\circ\cP_{\ket{0}}^{\bC\rarr\sD}$.
The algorithm uses $u_{\tF} = \cO\Big(\min\big\{\f{1}{s_{\rm min}}, \f{r}{\delta}\big\}\log{\big(\f{1}{\delta}\big)}\Big)$ queries to $U_\rho^{\hat{\sA}\hat{\sB}}$, $U_\sigma^{\hat{\sA}\hat{\sB}}$, and their inverses, where $s_{\min}$ and $r$ are the minimum non-zero singular value and the rank of $\sqrt{\sigma^\sA}\sqrt{\rho^\sA}$ as in \Cref{tab:technical notation}.

In both cases, the quantum circuit for implementing the algorithm consists of $\cO\big(u_\chi \log{(d_\sA d_\sB)}\big)$ one- and two-qubit gates, and $\cO\big(\log{(d_\sA d_\sB)}\big)$ qubits suffice at any one time.
\end{theorem}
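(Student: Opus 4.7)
The plan is to reduce the Uhlmann transformation to a quantum singular value transformation applied to a cleverly chosen block-encoding, guided by the explicit form in \Cref{prop:explicit form of Uhl}. The strategy has two stages: first, construct an exact block-encoding $U_M$ of the operator $M^\sB := \tr_{\hat{\sA}}\!\big[\ketbra{\sigma}{\rho}^{\hat{\sA}\hat{\sB}}\big]$ from the oracles $U_\rho$ and $U_\sigma$; second, invoke $\mathtt{QSVTSIGN}$ from \Cref{prop:FPAA general} to turn $U_M$ into a unitary block-encoding a polynomial approximation $\til{V}^\sB$ of $V^\sB=\sgn^{(\mathrm{SV})}(M^\sB)$. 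Preparing the ancillary state $\ket{0}^\sD$ then makes this into a quantum channel that implements the Uhlmann transformation.

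For the block-encoding, I would introduce auxiliary registers $\hat{\sA}\hat{\sB}$ isomorphic to $\sA\sB$ and set
\begin{align*}
U_M \;=\; (U_\rho^{\hat{\sA}\hat{\sB}})^\dag\, F^{\hat{\sB}\sB}\, U_\sigma^{\hat{\sA}\hat{\sB}}.
\end{align*}
Expanding in the computational basis using $U_\sigma\ket{0}=\ket{\sigma}$ and $U_\rho\ket{0}=\ket{\rho}$ confirms $\bra{0}^{\hat{\sA}\hat{\sB}}U_M\ket{0}^{\hat{\sA}\hat{\sB}}=M^\sB$, so $U_M$ is an exact $(1,\log(d_\sA d_\sB),0)$-block-encoding of $M^\sB$, whose singular values equal those of $\sqrt{\sigma^\sA}\sqrt{\rho^\sA}$. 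A single use of $U_M$ (resp. $U_M^\dag$) costs one query each to $U_\sigma$ and $U_\rho^\dag$ (resp. $U_\sigma^\dag$ and $U_\rho$), a SWAP, and $\cO(\log(d_\sA d_\sB))$ auxiliary qubits. Running $\mathtt{QSVTSIGN}(U_M,U_M^\dag;\delta',\beta)$ then returns a unitary $\til{W}^{\sB\sD}$ that exactly block-encodes $\til{V}^\sB := P_\sgn^{(\mathrm{SV})}(M^\sB)$, with $\sD$ collecting the block-encoding and QSVT ancillas.

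The two clauses of the theorem differ only in the calibration of $(\delta',\beta)$. For the diamond-norm guarantee I would take $\beta=s_{\min}$ and $\delta'=\delta$, so that $P_\sgn$ approximates $\sgn$ to within $\delta$ on the entire support of $M^\sB$; this forces $\big\|\til{V}^\sB-V^\sB\big\|_\infty\leq\delta$, and choosing a unitary extension $\til{U}_{\rm ideal}^{\sB\sD}$ of $V^\sB$ that matches $\til{W}^{\sB\sD}$ outside the $(0,0)$-block, the bound \eqref{eq:rel of operator and diamond norm} lifts this to Eq.~\eqref{eq:error purif access}. By \Cref{prop:FPAA general} this costs $u_\diamond=\cO(s_{\min}^{-1}\log(1/\delta))$ queries. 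For the fidelity guarantee, I would exploit the identity $\bra{\sigma}V^\sB\ket{\rho}=\tr[V^\sB M^{\sB\dag}]=\sum_j s_j=\sqrt{\rF}(\rho^\sA,\sigma^\sA)$ and compare it with $\bra{\sigma}\til{V}^\sB\ket{\rho}=\sum_j P_\sgn(s_j)\,s_j$. Splitting at the threshold $\beta$ and using $P_\sgn(s_j)\geq 1-\delta'$ for $s_j\in[\beta,1]$ while $|P_\sgn(s_j)|\leq 1$ otherwise gives
\begin{align*}
\bra{\sigma}\til{V}^\sB\ket{\rho} \;\geq\; \sqrt{\rF}(\rho^\sA,\sigma^\sA)(1-\delta') - 2r\beta.
\end{align*}
Since tracing out $\sD$ only contributes non-negative terms to the fidelity of the output with $\ket{\sigma}^{\sA\sB}$, squaring and selecting $\delta'=\Theta(\delta)$, $\beta=\Theta(\delta/r)$ secures $\rF(\cT^\sB(\ketbra{\rho}{\rho}),\ket{\sigma})\geq\rF(\rho^\sA,\sigma^\sA)-\delta$. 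Taking the better of this and the diamond-norm choice of $\beta$ produces the claimed $u_{\tF}=\cO\big(\min\{s_{\min}^{-1},r/\delta\}\log(1/\delta)\big)$. The gate and ancilla counts follow directly from \Cref{prop:general QSVT}.

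The main obstacle I anticipate is the fidelity bound: one must carefully isolate the contribution from singular values below $\beta$ (this is precisely the source of the $r$-dependent alternative), rather than collapsing everything into an operator-norm estimate, and one must verify that the dilation-then-trace structure of $\cT^\sB$ preserves this decomposition. The block-encoding identity in the first step is essentially bookkeeping once the action of each oracle and the SWAP on the correct subsystems is fixed.
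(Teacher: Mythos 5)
Your construction is the same as the paper's: the unitary $U_M=(U_\rho^{\hat{\sA}\hat{\sB}})^\dag F^{\sB\hat{\sB}}U_\sigma^{\hat{\sA}\hat{\sB}}$ is exactly the paper's $W^{\sB\hat{\sA}\hat{\sB}}$, and both clauses are obtained by running $\mathtt{QSVTSIGN}$ with the same two calibrations of $\beta$ (namely $s_{\min}$ versus $\max\{s_{\min},\Theta(\delta/r)\}$). The fidelity clause is handled exactly as in the paper: split the singular values at $\beta$, use $|1-P_\sgn(s_j)|\leq\delta'$ above the threshold and $|1-P_\sgn(s_j)|\leq 2$ with $\#\{j:s_j<\beta\}\leq r$ below it, then pass to the squared fidelity and use monotonicity under $\tr_\sD$.

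There is, however, one step in your diamond-norm argument that would fail as written. You propose to choose the ideal unitary $\til{U}_{\rm ideal}^{\sB\sD}$ so that it ``matches $\til{W}^{\sB\sD}$ outside the $(0,0)$-block'' and then conclude $\|\til{W}-\til{U}_{\rm ideal}\|_\infty\leq\|\til{V}^\sB-V^\sB\|_\infty\leq\delta'$. Unitarity forbids this matching: if the top-left blocks are $P_\sgn^{(\rm SV)}(M^\sB)$ and $V^\sB$ respectively, the corresponding bottom-left blocks $Y$ and $X$ must satisfy $Y^\dag Y=\bI-P_\sgn^{(\rm SV)}(M^\sB)^\dag P_\sgn^{(\rm SV)}(M^\sB)$ and $X^\dag X=\bI-(V^\sB)^\dag V^\sB$, which are different operators, so $X\neq Y$. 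The best one can do (this is the paper's Lemma~\ref{lem:FPAA diamond}) is to use a common partial isometry in the polar factors of $X$ and $Y$, and then $\|Y-X\|_\infty=\max_j\sqrt{1-P_\sgn(s_j)^2}\leq\sqrt{2\delta'}$, giving a total first-column error of $\delta'+\sqrt{2\delta'}=O(\sqrt{\delta'})$ rather than $\delta'$. Consequently your calibration $\delta'=\delta$ does not deliver Eq.~\eqref{eq:error purif access}; one must set $\delta'=\Theta(\delta^2)$ as the paper does. This only changes the constant inside the logarithm, so the claimed $u_\diamond=\cO\big(\f{1}{s_{\min}}\log(1/\delta)\big)$ survives, but the intermediate step needs the square-root correction to be valid.
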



\begin{algorithm}[h]
\caption{Uhlmann transformation algorithm in the purified query access model \\ \parbox{\linewidth}{\centering $\mathtt{UhlmannPurifiedQuery}(U_\rho^{\hat{\sA}\hat{\sB}}, U_\sigma^{\hat{\sA}\hat{\sB}}; \delta, \chi)$ (In Theorem~\ref{thm:Uhlmann alg purif query model})}}
\label{alg:Uhl purif query}
\SetKwInput{KwInput}{Input}
\SetKwInput{KwOutput}{Output}
\SetKwInput{KwParameters}{Parameters}

\SetAlgoNoEnd
\SetAlgoNoLine
\KwInput{Unitary oracles $U_\rho^{\hat{\sA}\hat{\sB}}$, $U_\sigma^{\hat{\sA}\hat{\sB}}$, and their inverses.}
\KwParameters{$\delta \in (0, 1)$ and $\chi \in \{\diamond, \tF\}$.}
\KwOutput{Unitary $\til{W}^{\sB\sD}$.}
\SetAlgoLined

Set $W^{\sB\hat{\sA}\hat{\sB}} \gets (U_\rho^{\hat{\sA}\hat{\sB}})^\dag(\bI^{\hat{\sA}} \otimes F^{\sB\hat{\sB}})U_\sigma^{\hat{\sA}\hat{\sB}}$, where $F^{\sB\hat{\sB}}$ is the swap operator between $\sB$ and $\hat{\sB}$. \\
\If{$\chi = \diamond$}{
  Set $\delta_1 \gets (\delta/3)^2$ and $\beta \gets s_{\rm min}$. \\
}
\ElseIf{$\chi = \tF$}{
  Set $\delta_1 \gets \delta/4$ and $\beta \gets \max\{s_{\rm min}, \delta_1/(2r)\}$. \\
}
Set $\til{W}^{\sB\sD} \gets \mathtt{QSVTSIGN}(W^{\sB\hat{\sA}\hat{\sB}}, (W^{\sB\hat{\sA}\hat{\sB}})^\dag; \delta_1, \beta)$ (Proposition~\ref{prop:FPAA general}).  \\
Return $\til{W}^{\sB\sD}$.
\end{algorithm}

As we show in the following subsection, the unitary
$W^{\sB\hat{\sA}\hat{\sB}}$ in Algorithm~\ref{alg:Uhl purif query} is a block-encoding of the operator
$M^{\sB} = \tr_{\sA'}\big[\ketbra{\sigma}{\rho}^{\sA'\sB}\big]$.
Applying $\mathtt{QSVTSIGN}$, the QSVT with the sign function described in \Cref{prop:FPAA general}, with input $W^{\sB\hat{\sA}\hat{\sB}}$ and $(W^{\sB\hat{\sA}\hat{\sB}})^\dag$ yields a unitary $\til{W}^{\sB\sD}$ that is a block-encoding of $P_\sgn^{(\rm SV)}(M^\sB)$. If the polynomial $P_\sgn$ sufficiently approximates the sign function, then by \Cref{prop:explicit form of Uhl}, $\til{W}^{\sB\sD}$ is close to a block-encoding unitary of the Uhlmann partial isometry $V^\sB$.


We emphasize the importance of clarifying which measure is used to evaluate the degree of approximation to the Uhlmann transformation.
In Theorem~\ref{thm:Uhlmann alg purif query model}, the first statement evaluates the difference between the quantum \emph{channels} implemented by our algorithm and the ideal Uhlmann transformation using the diamond norm, accommodating the setting where input states to the Uhlmann transformation can be arbitrary.
On the other hand, the second statement evaluates the accuracy in terms of the output \emph{state} when the input to the Uhlmann transformation is $\ket{\rho}^{\sA\sB}$---where we measure the distance between the ideal and actual outputs by the fidelity.
While evaluating the approximation error via fidelity difference may suffice, it is indeed more convenient to use the diamond norm in cases where the Uhlmann transformation algorithm is used as a subroutine (e.g., in \Cref{sec:sample petz}).

One might think that post-selection is required to extract the Uhlmann partial isometry $V^\sB$ encoded in the top-left block of $U_{\rm ideal}^{\sB\sD}$. 
However, this is not the case, because the domain of the Uhlmann partial isometry covers the support of the state $\rho^\sB$, and thus, as long as the input state is $\ket{\rho}^{\sA\sB}$, the bottom-left block of $U_{\rm ideal}^{\sB\sD}$ does not affect the state: $U_{\rm ideal}^{\sB\sD}\ket{\rho}^{\sA\sB}\ket{0}^\sD = V^\sB\ket{\rho}^{\sA\sB}\ket{0}^{\sD}$.
This follows from the fact that $U_{\rm ideal}^{\sB\sD}$ is a unitary and $V^\sB$ is a partial isometry, which implies the bottom-left block acts non-trivially only on the kernel of $V^\sB$.

If we are able to implement $U_\rho^{\hat{\sA}\hat{\sB}}$, $U_\sigma^{\hat{\sA}\hat{\sB}}$, and their inverses by ourselves, the circuit complexity, a total number of one- and two-qubit gates, for the Uhlmann transformation can be quantified. 
This applies to both cases where the error is evaluated using the diamond norm and where it is evaluated using the fidelity difference.
Since $W^{\sB\hat{\sA}\hat{\sB}}$ is given by $(U_\rho^{\hat{\sA}\hat{\sB}})^\dag(\bI^{\hat{\sA}} \otimes F^{\sB\hat{\sB}})U_\sigma^{\hat{\sA}\hat{\sB}}$ (see line~1 of \Cref{alg:Uhl purif query}), its implementation requires the unitaries $(U_\rho^{\hat{\sA}\hat{\sB}})^\dag$ and $U_\sigma^{\hat{\sA}\hat{\sB}}$, along with the swap operation $F^{\sB\hat{\sB}}$, which can be realized using $\log d_\sB$ two-qubit swap gates.
Furthermore, the number of one- and two-qubit gates used in $\mathtt{QSVTSIGN}(W^{\sB\hat{\sA}\hat{\sB}}, (W^{\sB\hat{\sA}\hat{\sB}})^\dag; \delta_1, \beta)$ is $\cO\big(u_\chi\log{(d_\sA d_\sB)}\big)$, where $\chi \in \{\diamond, \tF\}$, because $W^{\sB\hat{\sA}\hat{\sB}}$ is a $(1, \log{(d_\sA d_\sB)}, 0)$-block-encoding unitary (see also Theorem~\ref{prop:general QSVT}). 
Therefore, the total number of one- and two-qubit gates for the overall algorithm is 
\begin{align}
    \cO\Big(u_\chi \big(\cC(U_\rho) + \cC(U_\sigma) + \log{d_\sB}\big) + u_\chi\log{(d_\sA d_\sB)} \Big) 
    =\cO\big(u_\chi\big(\cC(U_\rho) + \cC(U_\sigma) + \log{(d_\sA d_\sB)}\big)\big),
\end{align}
where $\cC(U_\rho)$ and $\cC(U_\sigma)$ are the number of one- and two-qubit gates required to implement $U_\rho^{\hat{\sA}\hat{\sB}}$ and $U_\sigma^{\hat{\sA}\hat{\sB}}$, respectively.

We first prove Theorem~\ref{thm:Uhlmann alg purif query model} in \Cref{sec:proof purif query Uhl}, and then provide a proof of a technical lemma used to show the theorem in \Cref{sec:proof of lem FPAA diamond}.


\subsection{Proof of Theorem~\ref{thm:Uhlmann alg purif query model}}
\label{sec:proof purif query Uhl}

Let $U_\rho^{\hat{\sA}\hat{\sB}}$ and $U_\sigma^{\hat{\sA}\hat{\sB}}$ be unitaries such that $U_\rho^{\hat{\sA}\hat{\sB}}\ket{0}^{\hat{\sA}\hat{\sB}} = \ket{\rho}^{\hat{\sA}\hat{\sB}}$, and $U_\sigma^{\hat{\sA}\hat{\sB}}\ket{0}^{\hat{\sA}\hat{\sB}} = \ket{\sigma}^{\hat{\sA}\hat{\sB}}$.
We then define a unitary $W^{\sB\hat{\sA}\hat{\sB}}$ by $W^{\sB\hat{\sA}\hat{\sB}} = (U_\rho^{\hat{\sA}\hat{\sB}})^\dag(\bI^{\hat{\sA}} \otimes F^{\sB\hat{\sB}})U_\sigma^{\hat{\sA}\hat{\sB}}$.
The unitary $W^{\sB\hat{\sA}\hat{\sB}}$ is a $(1, \log{(d_\sA d_\sB)}, 0)$-block-encoding of $\tr_{\sA'}\big[\ketbra{\sigma}{\rho}^{\sA'\sB}\big]$:
\begin{align}
\label{eq:block W fidelity}
        W^{\sB\hat{\sA}\hat{\sB}}
        = \hspace{1mm}
\begin{blockarray}{ccc}
& \bra{0}^{\hat{\sA}\hat{\sB}}&  & \vspace{1mm}\\
\begin{block}{c(cc)}
  \ket{0}^{\hat{\sA}\hat{\sB}} \hspace*{1mm} & \tr_{\sA'}\big[\ketbra{\sigma}{\rho}^{\sA'\sB}\big] & \hspace{0mm} * \hspace{3mm}\\
   \hspace*{1mm} & * & \hspace{0mm} * \hspace{3mm}\\
\end{block}
\end{blockarray}\hspace{2.5mm}.
\end{align}
To see this, we use the Schmidt decomposition of $\ket{\rho}^{\sA\sB}$ and $\ket{\sigma}^{\sA\sB}$:
\begin{align}
      \ket{\rho}^{\sA\sB} = \sum_{k=1}^{r_\rho}\sqrt{p_k}\ket{e_k}^\sA\ket{f_k}^\sB, \ \ \text{and} \ \ \ 
      \ket{\sigma}^{\sA\sB} = \sum_{l=1}^{r_\sigma}\sqrt{q_l}\ket{g_l}^\sA\ket{h_l}^\sB.
\end{align}
We then compute the elements of $W^{\sB\hat{\sA}\hat{\sB}}$ as
\begin{align}
    \bra{i}^\sB\bra{0}^{\hat{\sA}\hat{\sB}}W^{\sB\hat{\sA}\hat{\sB}}\ket{0}^{\hat{\sA}\hat{\sB}}\ket{j}^{\sB}
    &=  \bra{i}^\sB\bra{\rho}^{\hat{\sA}\hat{\sB}}(\bI^{\hat{\sA}} \otimes F^{\sB\hat{\sB}})\ket{\sigma}^{\hat{\sA}\hat{\sB}}\ket{j}^{\sB} \\
    &= \sum_{k, l} \sqrt{p_k}\sqrt{q_l}\braket{e_k}{g_l}^{\hat{\sA}} \bra{i}^\sB\bra{f_k}^{\hat{\sB}} F^{\sB\hat{\sB}} \ket{h_l}^{\hat{\sB}}\ket{j}^\sB \\
    &= \sum_{k, l} \sqrt{p_k}\sqrt{q_l}\braket{e_k}{g_l}\braket{i}{h_l}\braket{f_k}{j} \\
    &= \bra{i}^\sB \Big(\sum_{k, l} \sqrt{p_k}\sqrt{q_l}\braket{e_k}{g_l}\ketbra{h_l}{f_k}^\sB \Big) \ket{j}^\sB \\
    \label{inteq:171}
    &= \bra{i}^\sB \big(\tr_{\sA'}\big[\ketbra{\sigma}{\rho}^{\sA'\sB}\big]\big) \ket{j}^\sB.
\end{align}
Thus, $\bra{0}^{\hat{\sA}\hat{\sB}}W^{\sB\hat{\sA}\hat{\sB}}\ket{0}^{\hat{\sA}\hat{\sB}} = \tr_{\sA'}\big[\ketbra{\sigma}{\rho}^{\sA'\sB}\big]$.

Due to \Cref{prop:explicit form of Uhl}, it suffices to apply the sign function on the top-left block of $W^{\sB\hat{\sA}\hat{\sB}}$.
We implement the QSVT with the sign function $\mathtt{QSVTSIGN}$ in \Cref{prop:FPAA general} to approximate the sign function by a polynomial $P_\sgn$.
We denote by $\til{W}^{\sB\sD}$ a unitary which is a block-encoding of $P_\sgn^{(\rm SV)}\big(\tr_{\sA'}\big[\ketbra{\sigma}{\rho}^{\sA'\sB}\big]\big)$:
\begin{align}
\label{eq:block tildeW fidelity}
    \til{W}^{\sB\sD}
    = \hspace{1mm}
\begin{blockarray}{ccc}
& \bra{0}^{\sD}&  & \vspace{1mm}\\
\begin{block}{c(cc)}
  \ket{0}^{\sD} \hspace*{1mm} & P_\sgn^{(\rm SV)}\big(\tr_{\sA'}\big[\ketbra{\sigma}{\rho}^{\sA'\sB}\big]\big) & \hspace{0mm} * \hspace{3mm}\\
   \hspace*{1mm} & * & \hspace{0mm} * \hspace{3mm}\\
\end{block}
\end{blockarray}\hspace{2.5mm}, 
\end{align}
where $\sD$ is a system including $\log{(d_\sA d_\sB)} + 1$ qubits.

The discussion so far applies to both cases, in which the performance of our algorithm is measured in terms of the diamond norm and the fidelity difference.
In the following, we discuss each case separately.
In Secs.~\ref{sec:eval diamond in purif query} and~\ref{sec:uhlfidquery}, we evaluate the approximation error using the diamond norm and the fidelity difference, respectively.
These results demonstrate the complete statement of Theorem~\ref{thm:Uhlmann alg purif query model}.


\subsubsection{Evaluation in the diamond norm}
\label{sec:eval diamond in purif query}

To evaluate the error in the diamond norm, we use the following lemma.
For its application in the following sections, we state it in a slightly more general form than the form used in this section. We provide a proof of this lemma in \Cref{sec:proof of lem FPAA diamond}.

\begin{lemma}
\label{lem:FPAA diamond}
    Let $\delta \in (0, 1)$, $M^\sB$ be a matrix, and $\ket{\varphi}^\sE$ and $\ket{\phi}^\sE$ be two pure states. 
    Suppose that $\til{U}^{\sB\sD\sE}$ is a block-encoding unitary of $\ketbra{\varphi}{\phi}^\sE\otimes P_\sgn^{(\rm SV)}(M^\sB)$ that satisfies
\begin{align}
\label{inteq:24}
    \big\|P_\sgn^{(\rm SV)}(M^\sB) - \sgn^{(\rm SV)}(M^\sB)\big\|_\infty \leq \delta.
\end{align}
Then, there exists a unitary $\til{U}_{\rm ideal}^{\sB\sD\sE}$ which is an exact block-encoding of $\ketbra{\varphi}{\phi}^\sE \otimes \sgn^{(\rm SV)}(M^\sB)$ and satisfies that 
\begin{align}
\label{eq:distance of L and L_ideal}
    \f{1}{2}\big\|\til{\cU}^{\sB\sD\sE} \circ \cP_{\ket{0}\ket{\phi}}^{\bC\rarr\sD\sE} - \til{\cU}_{\rm ideal}^{\sB\sD\sE} \circ \cP_{\ket{0}\ket{\phi}}^{\bC\rarr\sD\sE}\big\|_\diamond \leq 3\sqrt{\delta},
\end{align}
where $\cP_{\ket{0}\ket{\phi}}^{\bC\rarr\sD\sE}$ is a state-preparation channel such that $\cP_{\ket{0}\ket{\phi}}^{\bC\rarr\sD\sE} = \ketbra{0}{0}^{\sD}\otimes\ketbra{\phi}{\phi}^\sE$.
\end{lemma}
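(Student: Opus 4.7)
The plan is to construct $\til U_{\rm ideal}^{\sB\sD\sE}$ explicitly by modifying only the first ancilla block-column of $\til U^{\sB\sD\sE}$ (the one selected by $\ket{0}^\sD$), so that it exactly block-encodes the ideal operator while remaining $\cO(\sqrt{\delta})$-close to $\til U^{\sB\sD\sE}$ on inputs whose ancilla part is $\ket{0}^\sD\ket{\phi}^\sE$. First I would decompose $\til U^{\sB\sD\sE}$ with respect to the ancilla projectors $\ketbra{0}{0}^\sD$ and $I^\sD-\ketbra{0}{0}^\sD$; the hypothesis fixes the top-left block as $A=\ketbra{\varphi}{\phi}^\sE\otimes P_\sgn^{(\rm SV)}(M^\sB)$, and unitarity of $\til U^{\sB\sD\sE}$ forces $A^\dag A+C^\dag C=I^{\sB\sE}$ for the bottom-left block $C$. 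A polar-type decomposition then gives $C=\til V_C\sqrt{I^{\sB\sE}-A^\dag A}$ for some isometry $\til V_C$ (an extension of the polar partial isometry, which exists because $\sD$ has at least two levels). Setting $A_{\rm ideal}=\ketbra{\varphi}{\phi}^\sE\otimes\sgn^{(\rm SV)}(M^\sB)$ and $C_{\rm ideal}=\til V_C\sqrt{I^{\sB\sE}-A_{\rm ideal}^\dag A_{\rm ideal}}$, the column $(A_{\rm ideal},C_{\rm ideal})^\top$ is again isometric, so it extends to a unitary $\til U_{\rm ideal}^{\sB\sD\sE}$, which by construction is an exact block-encoding of $\ketbra{\varphi}{\phi}^\sE\otimes\sgn^{(\rm SV)}(M^\sB)$.

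The heart of the proof is the quantitative comparison. From \eqref{inteq:24} I have $\|A-A_{\rm ideal}\|_\infty\le\delta$, and combining this with $\|A\|_\infty,\|A_{\rm ideal}\|_\infty\le 1$ yields $\|A^\dag A-A_{\rm ideal}^\dag A_{\rm ideal}\|_\infty\le 2\delta$. Applying the operator-monotone square-root perturbation bound $\|\sqrt{X}-\sqrt{Y}\|_\infty\le\sqrt{\|X-Y\|_\infty}$ for positive $X,Y$ (derived via operator monotonicity of $\sqrt{\cdot}$ using $\sqrt{Y+tI}\le\sqrt{Y}+\sqrt{t}\,I$) then gives $\|C-C_{\rm ideal}\|_\infty\le\sqrt{2\delta}$, since $\til V_C$ is an isometry. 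For any pure input $\ket{\psi}^{\sR\sB}$, only the $\ket{0}^\sD$-column of $\til U-\til U_{\rm ideal}$ contributes on $\ket{\psi}^{\sR\sB}\ket{0}^\sD\ket{\phi}^\sE$, and a Schmidt decomposition of $\ket{\psi}^{\sR\sB}$ shows the squared norm of the output difference is bounded by $\|A-A_{\rm ideal}\|_\infty^2+\|C-C_{\rm ideal}\|_\infty^2\le\delta^2+2\delta\le 3\delta$. Finally, \eqref{eq:relation of Euclidean and trace norm} converts this into a trace-norm bound on the output density operators, $\tfrac{1}{2}\|\cdot\|_1\le\sqrt{3\delta}\le 3\sqrt{\delta}$, and maximizing over reference systems $\sR$ and pure inputs yields \eqref{eq:distance of L and L_ideal}.

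The main technical obstacle is the operator-norm square-root perturbation bound: the Powers--St\o rmer estimate \eqref{eq:powers stormer ineq} is phrased in the Schatten-$2$ norm and would introduce dimension factors, so keeping the bound dimension-free requires the sharper operator-monotonicity version. A secondary subtlety is choosing the extension $\til V_C$ of the polar partial isometry so that the identity $C-C_{\rm ideal}=\til V_C\bigl(\sqrt{I-A^\dag A}-\sqrt{I-A_{\rm ideal}^\dag A_{\rm ideal}}\bigr)$ holds, which is what lets the operator-norm square-root bound transfer directly to the difference $C-C_{\rm ideal}$.
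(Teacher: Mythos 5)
Your proposal is correct and follows essentially the same route as the paper: both polar-decompose the bottom-left block of $\til U^{\sB\sD\sE}$, reuse the resulting (partial) isometry factor to build $\til U_{\rm ideal}^{\sB\sD\sE}$ with $\sgn^{(\rm SV)}$ in place of $P_\sgn^{(\rm SV)}$ in both blocks of the first column, and bound the two contributions to the column difference by $\delta$ and $\sqrt{2\delta}$ before converting to the diamond norm via Eq.~\eqref{eq:relation of Euclidean and trace norm}. The only divergence is in the $\sqrt{2\delta}$ step: the paper diagonalizes $\sqrt{\bI - S}$ explicitly in the singular basis of $M^\sB$ (the relevant operators commute, so no general perturbation lemma is needed), whereas you invoke the operator-monotone bound $\|\sqrt{X}-\sqrt{Y}\|_\infty\le\sqrt{\|X-Y\|_\infty}$; both give $\sqrt{2\delta}$, and your orthogonal (Pythagorean) splitting even yields the marginally tighter $\sqrt{\delta^2+2\delta}$ in place of the paper's $\delta+\sqrt{2\delta}$.
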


When we use $\mathtt{QSVTSIGN}$ in \Cref{prop:FPAA general}, we set $\beta$ to the minimum non-zero singular value of $\tr_{\sA'}\big[\ketbra{\sigma}{\rho}^{\sA'\sB}\big]$, denoted by $s_{\rm min}$.
Then, we obtain a block-encoding unitary 
\begin{align}
    \til{W}^{\sB\sD} = \mathtt{QSVTSIGN}(W^{\sB\hat{\sA}\hat{\sB}}, (W^{\sB\hat{\sA}\hat{\sB}})^\dag; \delta_1, \beta_\diamond)    
\end{align}
of a polynomial $P_\sgn^{(\rm SV)}\big(\tr_{\sA'}\big[\ketbra{\sigma}{\rho}^{\sA'\sB}\big]\big)$, which satisfies  
\begin{align}
\label{inteq:65}
    \big\|P_\sgn^{(\rm SV)}\big(\tr_{\sA'}\big[\ketbra{\sigma}{\rho}^{\sA'\sB}\big]\big) - \sgn^{(\rm SV)}\big(\tr_{\sA'}\big[\ketbra{\sigma}{\rho}^{\sA'\sB}\big]\big)\big\|_\infty \leq \delta_1,
\end{align}
where $\beta_\diamond = s_{\rm min}$ and $\delta_1 \in (0, 1/2)$.
The number of iteration of the unitary $W^{\sB\hat{\sA}\hat{\sB}}$ is given by the minimum odd integer $u_\diamond$ satisfying $u_\diamond \geq \big\lceil\f{8e}{\beta_\diamond}\log{(2/\delta_1)}\big\rceil = \big\lceil\f{8e}{s_{\rm min}}\log{(2/\delta_1)}\big\rceil$.
By direct calculation, one can check that $s_{\rm min}$ corresponds to the minimum non-zero singular value of $\sqrt{\sigma^\sA}\sqrt{\rho^\sA}$ (see also \Cref{sec:analyze partial iso}).

From Eq.~\eqref{inteq:65} and \Cref{lem:FPAA diamond} with $d_\sE = 1$, we have that
\begin{align}
    \f{1}{2}\big\|\til{\cW}^{\sB\sD}\circ\cP_{\ket{0}}^{\bC\rarr\sD} - \cU_{\rm ideal}^{\sB\sD}\circ \cP_{\ket{0}}^{\bC\rarr\sD}\big\|_\diamond \leq 3\sqrt{\delta_1},
\end{align}
where $U_{\rm ideal}^{\sB\sD}$ is an exact block encoding of $\sgn^{(\rm SV)}\big(\tr_{\sA'}\big[\ketbra{\sigma}{\rho}^{\sA'\sB}\big]\big)$, which is nothing but the Uhlmann partial isometry $V^\sB$ (see \Cref{prop:explicit form of Uhl}).
As the number of queries to $W^{\sB\hat{\sA}\hat{\sB}}$ and $(W^{\sB\hat{\sA}\hat{\sB}})^\dag$ are given by $u_\diamond = \cO\big(\log{(1/\delta_1)}/s_{\rm min}\big)$, the number of queries to $U_\rho^{\hat{\sA}\hat{\sB}}$, $(U_\rho^{\hat{\sA}\hat{\sB}})^\dag$, $U_\sigma^{\hat{\sA}\hat{\sB}}$, and $(U_\sigma^{\hat{\sA}\hat{\sB}})^\dag$ are of the same.
By rescaling $\delta_1$ as $\delta_1 = (\delta/3)^2$, we obtain the result for the number of queries.
Since this algorithm is conducted sequentially, $\cO\big(\log{(d_\sA d_\sB)}\big)$ qubits are needed at once.


\subsubsection{Evaluation in the fidelity difference}
\label{sec:uhlfidquery}

When we evaluate the error between the ideal Uhlmann transformation and the transformation $\cT^\sB$ realized by our algorithm using the fidelity difference,
\begin{align}
    \rF(\rho^\sA, \sigma^\sA) - \rF(\cT^\sB(\ketbra{\rho}{\rho}^{\sA\sB}), \ket{\sigma}^{\sA\sB}),
\end{align}
there is a case where the transformation can be realized with fewer queries than $\cO\big(\log{(1/\delta)/s_{\rm min}}\big)$ derived in the previous section.

Let $\til{W}^{\sB\sD} = \mathtt{QSVTSIGN}(W^{\sB\hat{\sA}\hat{\sB}}, (W^{\sB\hat{\sA}\hat{\sB}})^\dag; \delta_1, \beta)$, and $M^\sB = \tr_{\sA'}\big[\ketbra{\sigma}{\rho}^{\sA'\sB}\big]$.
The square root fidelity between $\til{W}^{\sB\sD}\ket{\rho}^{\sA\sB}\ket{0}^{\sD}$ and $\ket{\sigma}^{\sA\sB}\ket{0}^{\sD}$ is given by 
\begin{align}
\label{inteq:14}
    \sqrt{\rF'} &=\sqrt{\rF}\big(\til{W}^{\sB\sD}\ket{\rho}^{\sA\sB}\ket{0}^{\sD}, \ket{\sigma}^{\sA\sB}\ket{0}^{\sD}\big) \\
    &= \big|\bra{\sigma}^{\sA\sB}\bra{0}^{\sD}\til{W}^{\sB\sD}\ket{0}^{\sD}\ket{\rho}^{\sA\sB}\big| \\
    &= \big|\bra{\sigma}^{\sA\sB}P_\sgn^{(\rm SV)}(M^\sB)\ket{\rho}^{\sA\sB}\big|,
\end{align}
where $P_\sgn$ is a polynomial such that $|P_\sgn(x) - \sgn(x)| \leq \delta_1$ for $|x| \in [\beta, 1]$ and $|P_\sgn(x)| \leq 1$ for $|x| \in [0, 1]$.
We then evaluate the difference between $\sqrt{\rF'}$ and $\sqrt{\rF} = \sqrt{\rF}(\rho^\sA, \sigma^\sA) = \sqrt{\rF}(V^\sB\ket{\rho}^{\sA\sB}, \ket{\sigma}^{\sA\sB})$ as
\begin{align}
    \label{inteq:104}
    \big|\sqrt{\rF} - \sqrt{\rF'}\big| 
    &= \big||\bra{\sigma}^{\sA\sB}V^\sB\ket{\rho}^{\sA\sB}| - |\bra{\sigma}^{\sA\sB}P_\sgn^{(\rm SV)}(M^\sB)\ket{\rho}^{\sA\sB}| \big| \\
    &\leq \big| \bra{\sigma}^{\sA\sB}V^\sB\ket{\rho}^{\sA\sB} - \bra{\sigma}^{\sA\sB}P_\sgn^{(\rm SV)}(M^\sB)\ket{\rho}^{\sA\sB} \big| \\
    &= \Big|\tr\big[\big(V^\sB - P_\sgn^{(\rm SV)}(M^\sB)\big)\tr_{\sA}\big[\ketbra{\rho}{\sigma}^{\sA\sB}\big]\big]\Big| \\
    &= \Big|\tr\big[\big(\sgn^{(\rm SV)}(M^\sB) - P_\sgn^{(\rm SV)}(M^\sB)\big)(M^{\sB})^\dag\big]\Big| \\
    &= \Big|\sum_k \big(1 - P_\sgn(s_k)\big)s_k\Big| \\
    \label{inteq:8}
    &\leq \sum_k \big|1 - P_\sgn(s_k)\big|s_k,
\end{align}
where $\{s_k\}_k$ are the singular values of $M^\sB$.
We should note that they correspond to the singular values of $\sqrt{\sigma^\sA}\sqrt{\rho^\sA}$ and that $\sum_k s_k = \big\|\sqrt{\sigma^\sA}\sqrt{\rho^\sA}\big\|_1 = \sqrt{\rF} \leq 1$.

We divide the label $k$ of the singular values into two parts $I_\beta$ and $\bar{I}_\beta$, where $I_\beta = \{k \in \bN; s_k \geq \beta\}$ and $\bar{I}_\beta = \{k \in \bN; s_k < \beta\}$.
Then, we compute Eq.~\eqref{inteq:8} as
\begin{align}
    \sum_k \big|1 - P_\sgn(s_k)\big|s_k
    &= \sum_{k\in I_\beta} \big|1 - P_\sgn(s_k)\big|s_k 
    + \sum_{k\in \bar{I}_\beta} \big|1 - P_\sgn(s_k)\big|s_k \\
    &\leq \delta_1 \sum_{k\in I_\beta} s_k + 2\sum_{k\in \bar{I}_\beta}s_k, \\
    &\leq \delta_1 + 2\beta \#\bar{I}_\beta \\
    &\leq \delta_1 + 2\beta r,
\end{align}
where $\#\bar{I}_\beta$ is the number of elements in $\bar{I}_\beta$ and $r$ is the rank of $\sqrt{\sigma^\sA}\sqrt{\rho^\sA}$.
We used \Cref{prop:FPAA general} and the inequality $\big|1 - P_\sgn(x)\big| \leq 2$ for $x \in [0, 1]$.
We also used the fact that $\#\bar{I}_\beta$ is at most $r$.

When we choose the threshold $\beta$, as $1/\beta = 2r/\delta_1$, we have that
$\big|\sqrt{\rF} - \sqrt{\rF'}| \leq 2\delta_1$.
On the other hand, if we set $1/\beta = 1/s_{\min}$, then $\#\bar{I}_\beta$ becomes zero, which leads to $\big|\sqrt{\rF} - \sqrt{\rF'}\big| \leq \delta_1$.
Thus, $\beta$ is chosen as $1/\beta =  1/\beta_\tF = \min\{1/s_{\rm min}, 2r/\delta_1\}$ to ensure at least 
\begin{align}
\label{inteq:13}
    \big|\sqrt{\rF} - \sqrt{\rF'}\big| \leq 2\delta_1.
\end{align}
Using the inequality that $|x-y|\leq 2 |\sqrt{x} - \sqrt{y}|$ for $0 \leq x, y \leq 1$, we have that
\begin{align}
    \big|\rF\big(\til{W}^{\sB\sD}\ket{\rho}^{\sA\sB}\ket{0}^{\sD}, \ket{\sigma}^{\sA\sB}\ket{0}^{\sD}\big)
    - \rF(\rho^\sA, \sigma^\sA)\big|\leq 4 \delta_1.
\end{align}

By denoting $\tr_{\sD}\circ\til{\cW}^{\sB\sD}\circ\cP_{\ket{0}}^{\bC\rarr\sD}$ by $\cT^\sB$ and rescaling $\delta_1$ as $\delta_1 = \delta/4$, we obtain that
\begin{align}
    \rF(\rho^\sA, \sigma^\sA) - \delta 
    &\leq \rF\big(\til{W}^{\sB\sD}\ket{\rho}^{\sA\sB}\ket{0}^{\sD}, \ket{\sigma}^{\sA\sB}\ket{0}^{\sD}\big) \\
    &\leq \rF\big(\cT^\sB(\ketbra{\rho}{\rho}^{\sA\sB}), \ket{\sigma}^{\sA\sB}\big),
\end{align}
where we used the monotonicity of the fidelity under the partial trace. 
The number of queries to $U_\rho^{\hat{\sA}\sB}$, $U_\sigma^{\hat{\sA}\sB}$, and their inverses is evaluated as
\begin{align}
    u_\tF &= \cO\big(\log{(1/\delta)}/\beta_\tF\big) \\
    &= \cO\Big(\min\Big\{\f{1}{s_{\rm min}}, \f{r}{\delta}\Big\}\log{\Big(\f{1}{\delta}\Big)}\Big),
\end{align}
and at any one time, $\cO\big(\log(d_\sA d_\sB)\big)$ ancilla qubits suffice.
We complete a proof of Theorem~\ref{thm:Uhlmann alg purif query model}.


\subsection{Proof of \Cref{lem:FPAA diamond}}
\label{sec:proof of lem FPAA diamond}

We now give the proof of \Cref{lem:FPAA diamond}.
\begin{proof}[Proof of \Cref{lem:FPAA diamond}]

We use the (right) polar decomposition; for any matrix $A$, there exists a partial isometry $V$ with $\supp[V] \supseteq \im[A^\dag A]$ such that $A$ can be expressed as $A = V\sqrt{A^\dag A}$.
Since $\til{U}^{\sB\sD\sE}$ is a unitary, we have $(\til{U}^{\sB\sD\sE})^\dag \til{U}^{\sB\sD\sE} = \bI^{\sB\sD\sE}$.
Thus, by applying the polar decomposition to the bottom-left block of $\til{U}^{\sB\sD\sE}$, its first column block can be represented as
\begin{align}
\til{U}^{\sB\sD\sE} =
\begin{blockarray}{ccc}
& \bra{0}^{\sD}&  \vspace{1mm}\\
\begin{block}{c(cc)} 
  \ket{0}^{\sD}\hspace*{1mm} &  \ketbra{\varphi}{\phi}^\sE\otimes P_\sgn^{(\rm SV)}(M^\sB) & \hspace*{2mm} * \hspace*{2mm}\\ 
  &  V^{\sB\sE \rarr \sB\sD\sE}\sqrt{\bI^{\sB\sE} - S^{\sB\sE}} &\hspace*{2mm} * \hspace*{2mm}\\
\end{block}
\end{blockarray}\hspace{2.5mm},
\end{align}
where $S^{\sB\sE} = \ketbra{\phi}{\phi}^\sE \otimes P_\sgn^{(\rm SV)}((M^\sB)^\dag)P_\sgn^{(\rm SV)}(M^\sB)$, and $V^{\sB\sE \rarr \sB\sD\sE}$ is a partial isometry such that $\supp\big[V^{\sB\sE\rarr\sB\sD\sE}\big] \supseteq \im\big[\bI^{\sB\sE} - S^{\sB\sE}\big]$.
We define a unitary $\til{U}_{\rm ideal}^{\sB\sD\sE}$, which is exactly block-encoding of $\ketbra{\varphi}{\phi}^\sE \otimes \sgn^{(\rm SV)}(M^\sB)$, by
\begin{align}
\til{U}_{\rm ideal}^{\sB\sD\sE} = 
\begin{blockarray}{ccc}
& \bra{0}^{\sD}&  \vspace{1mm}\\
\begin{block}{c(cc)} 
  \ket{0}^{\sD}\hspace*{1mm} &  \ketbra{\varphi}{\phi}^\sE\otimes\sgn^{(\rm SV)}(M^\sB) & \hspace*{2mm} * \hspace*{2mm}\\
  &  V^{\sB\sE \rarr \sB\sD\sE}\Pi_\perp^{\sB\sE} &\hspace*{2mm} * \hspace*{2mm}\\
\end{block}
\end{blockarray}\hspace{2.5mm},
\end{align}
where $\Pi_\perp^{\sB\sE}$ is a projection onto $\ker[\ketbra{\phi}{\phi}^\sE\otimes M^\sB]$, given by 
\begin{align}
    \Pi_\perp^{\sB\sE} = \bI^{\sB\sE} - \ketbra{\phi}{\phi}^\sE\otimes \sgn^{(\rm SV)}((M^\sB)^\dag) \sgn^{(\rm SV)}(M^\sB).
\end{align}

For any state $\ket{\psi}^{\sR\sB}$ with any size of reference system $\sR$, we consider two states such that
\begin{align}
    \ket{\Psi}^{\sR\sB\sD\sE} 
    &= \til{U}^{\sB\sD\sE}\ket{0}^{\sD}\ket{\phi}^\sE\ket{\psi}^{\sR\sB} \\
    &=\ket{0}^{\sD}\ket{\phi}^\sE\otimes P_\sgn^{(\rm SV)}(M^\sB)\ket{\psi}^{\sR\sB} 
    + V^{\sB\sE \rarr \sB\sD\sE}\sqrt{\bI^{\sB\sE} - S^{\sB\sE}}\ket{\phi}^\sE\ket{\psi}^{\sR\sB},
\end{align}
and
\begin{align}
    \ket{\Psi'}^{\sR\sB\sD\sE} 
    &= \til{U}_{\rm ideal}^{\sB\sD\sE}\ket{0}^{\sD}\ket{\phi}^\sE\ket{\psi}^{\sR\sB} \\
    &= \ket{0}^{\sD}\ket{\phi}^\sE \otimes \sgn^{(\rm SV)}(M^\sB)\ket{\psi}^{\sR\sB} 
    + V^{\sB\sE \rarr \sB\sD\sE}\Pi_\perp^{\sB\sE}\ket{\phi}^\sE\ket{\psi}^{\sR\sB}.
\end{align}
The distance between these states is bounded as
\begin{align}
\label{inteq:5}
    \big\|\ket{\Psi}^{\sR\sB\sD\sE} - \ket{\Psi'}^{\sR\sB\sD\sE}\big\| 
    \leq \big\|P_\sgn^{(\rm SV)}(M^\sB) - \sgn^{(\rm SV)}(M^\sB)\big\|_\infty
    + \Big\|\sqrt{\bI^{\sB\sE} - S^{\sB\sE}} - \Pi_\perp^{\sB\sE}\Big\|_\infty.
\end{align}
The first term in the right-hand side of Eq.~\eqref{inteq:5} is, by assumption Eq.~\eqref{inteq:24}, bounded above by $\delta \in (0, 1)$. 
Regarding the second term, we can see that
\begin{align}
    \sqrt{\bI^{\sB\sE} - S^{\sB\sE}}
    &= \sqrt{\bI^{\sB\sE} - \ketbra{\phi}{\phi}^\sE\otimes P_\sgn^{(\rm SV)}((M^\sB)^\dag)P_\sgn^{(\rm SV)}(M^\sB)} \\
    &= \sqrt{\sum_{k=1}^r\Big(1-\big(P_\sgn(s_k)\big)^2\Big)\ketbra{\phi}{\phi}^\sE\otimes\ketbra{\xi_k}{\xi_k}^\sB + \Pi_\perp^{\sB\sE}} \\
    \label{inteq:4}
    &= \sum_{k=1}^r\sqrt{1-\big(P_\sgn(s_k)\big)^2}\ketbra{\phi}{\phi}^\sE\otimes\ketbra{\xi_k}{\xi_k}^{\sB} + \Pi_\perp^{\sB\sE},
\end{align}
where the singular value decomposition of $M^\sB$ is expressed by $\sum_{k=1}^{r}s_k\ketbra{\eta_k}{\xi_k}^\sB$.
Thus, the second term in the right-hand side of Eq.~\eqref{inteq:5} is bounded as 
\begin{align}
    \Big\|\sqrt{\bI^{\sB\sE} -S^{\sB\sE}} - \Pi_\perp^{\sB\sE}\Big\|_\infty 
    &=\max_{k \in [1, r]}\Big|\sqrt{1-\big(P_\sgn(s_k)\big)^2}\Big| \\
    &\leq\max_{k \in [1, r]}\Big|\sqrt{2\big(1-P_\sgn(s_k)\big)}\Big| \\
    &\leq \sqrt{2\delta}.
\end{align}
In the first inequality, we used $1-x^2 \leq 2(1-x)$ for any $x$, and in the last inequality, we used the assumption Eq.~\eqref{inteq:24}.

Hence, we can bound Eq.~\eqref{inteq:5} as
\begin{equation}
\label{inteq:20}
    \big\|\ket{\Psi}^{\sR\sB\sD\sE} - \ket{\Psi'}^{\sR\sB\sD\sE}\big\| \leq \delta + \sqrt{2\delta}.
\end{equation}
Using the fact that $\f{1}{2}\|\ketbra{\Psi}{\Psi}^{\sR\sB\sD\sE} - \ketbra{\Psi'}{\Psi'}^{\sR\sB\sD\sE}\|_1 \leq \|\ket{\Psi}^{\sR\sB\sD\sE} - \ket{\Psi'}^{\sR\sB\sD\sE}\|$ (see Eq.~\eqref{eq:relation of Euclidean and trace norm}) and that Eq.~\eqref{inteq:20} holds for any state $\ket{\psi}^{\sR\sB}$ with any system $\sR$, we can conclude that
\begin{align}
    \f{1}{2}\big\|\til{\cU}^{\sB\sD\sE} \circ \cP_{\ket{0}\ket{\phi}}^{\bC\rarr\sD\sE} - \til{\cU}_{\rm ideal}^{\sB\sD\sE} \circ \cP_{\ket{0}\ket{\phi}}^{\bC\rarr\sD\sE}\big\|_\diamond &\leq \delta + \sqrt{2\delta} \\
    \label{inteq:15}
    &\leq 3\sqrt{\delta},
\end{align}
where $\delta \in (0, 1)$ and $\cP_{\ket{0}\ket{\phi}}^{\bC\rarr\sD\sE} = \ketbra{0}{0}^\sD \otimes \ketbra{\phi}{\phi}^\sE$.

\end{proof}

\section{Uhlmann transformation algorithm in the purified sample access model}
\label{sec:purif sample}

We provide an in-depth analysis of the Uhlmann transformation algorithm in the purified sample access model.
The main challenge here, compared to the purified query access model, is to prepare a unitary that encodes $\tr_{\sA'}\big[\ketbra{\sigma}{\rho}^{\sA'\sB}\big]$ directly from given states $\ket{\rho}^{\hat{\sA}\hat{\sB}}$ and $\ket{\sigma}^{\hat{\sA}\hat{\sB}}$, rather than from unitaries $U_\rho^{\hat{\sA}\hat{\sB}}$ and $U_\sigma^{\hat{\sA}\hat{\sB}}$ that generate them.
To this end, we propose a method inspired by the density matrix exponentiation.

Let $\sH$ be a two-qubit system, and let $\cP_{\ket{0}\ket{\sigma}}^{\bC\rarr\sH\hat{\sA}\hat{\sB}}$ be a state-preparation channel given by $\cP_{\ket{0}\ket{\sigma}}^{\bC\rarr\sH\hat{\sA}\hat{\sB}} = \ketbra{0^2}{0^2}^\sH \otimes \ketbra{\sigma}{\sigma}^{\hat{\sA}\hat{\sB}}$.
The following is a formal statement of \Cref{infthm:uhlmenn purif sample}.

\begin{theorem}[Uhlmann transformation algorithm in the purified sample access model]
\label{thm:algorithm Uhlmann}
Let $\delta \in (0, 1)$ and $\chi \in \{\diamond, \tF\}$. Then, the quantum sample algorithm $\mathtt{UhlmannPurifiedSample}$ given by \Cref{alg:Uhl purif sample} satisfies the following.

A quantum channel $\cJ_\diamond^{\sB\sH\hat{\sA}\hat{\sB}}$ given by $\cJ_\diamond^{\sB\sH\hat{\sA}\hat{\sB}} = \mathtt{UhlmannPurifiedSample}(\ket{\rho}^{\hat{\sA}\hat{\sB}}, \ket{\sigma}^{\hat{\sA}\hat{\sB}}; \delta, \diamond)$,
satisfies that
\begin{equation}
    \f{1}{2}\big\|\cJ_\diamond^{\sB\sH\hat{\sA}\hat{\sB}} \circ\cP_{\ket{0}\ket{\sigma}}^{\bC\rarr\sH\hat{\sA}\hat{\sB}} - \cU_{\rm ideal}^{\sB\sH\hat{\sA}\hat{\sB}}\circ\cP_{\ket{0}\ket{\sigma}}^{\bC\rarr\sH\hat{\sA}\hat{\sB}}\big\|_\diamond \leq \delta,
\end{equation}
where $U_{\rm ideal}^{\sB\sH\hat{\sA}\hat{\sB}}$ is an exact block-encoding unitary of $\ketbra{\rho}{\sigma}^{\hat{\sA}\hat{\sB}} \otimes V^\sB$ and $V^\sB$ is the Uhlmann partial isometry.
The algorithm uses $w_\diamond = \cO\Big(\f{1}{\delta s_{\rm min}^2}\big(\log{\big(\f{1}{\delta}\big)}\big)^2\Big)$ samples of $\ket{\rho}^{\hat{\sA}\hat{\sB}}$ and $\ket{\sigma}^{\hat{\sA}\hat{\sB}}$.

Moreover, a quantum channel $\cJ_{\tF}^{\sB\sH\hat{\sA}\hat{\sB}}$ given by $\cJ_{\tF}^{\sB\sH\hat{\sA}\hat{\sB}} = \mathtt{UhlmannPurifiedSample}(\ket{\rho}^{\hat{\sA}\hat{\sB}}, \ket{\sigma}^{\hat{\sA}\hat{\sB}}; \delta, \tF)$,
satisfies that
\begin{align}
        \rF\big(\cT^\sB(\ketbra{\rho}{\rho}^{\sA\sB}), \ket{\sigma}^{\sA\sB}) \geq \rF(\rho^\sA, \sigma^\sA) - \delta,
\end{align}
where $\cT^\sB = \tr_{\sH\hat{\sA}\hat{\sB}}\circ\cJ_{\tF}^{\sB\sH\hat{\sA}\hat{\sB}}\circ\cP_{\ket{0}\ket{\sigma}}^{\bC\rarr\sH\hat{\sA}\hat{\sB}}$.
The algorithm uses $w_{\tF} = \cO\Big(\f{1}{\delta}\min\big\{\f{1}{s_{\rm min}^2}, \f{r^2}{\delta^2}\big\}\big(\log{\big(\f{1}{\delta}\big)}\big)^2\Big)$ samples of $\ket{\rho}^{\hat{\sA}\hat{\sB}}$ and $\ket{\sigma}^{\hat{\sA}\hat{\sB}}$.

In both cases, the quantum circuit for implementing the algorithm consists of $\cO\big(w_\chi \log{(d_\sA d_\sB)}\big)$ one- and two-qubit gates, and $\cO\big(\log{(d_\sA d_\sB)}\big)$ qubits suffice at any one time.
\end{theorem}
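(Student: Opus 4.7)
The plan is to mirror the proof of Theorem~\ref{thm:Uhlmann alg purif query model} in Section~\ref{sec:proof purif query Uhl}, replacing the oracle-built block encoding $W^{\sB\hat{\sA}\hat{\sB}}=(U_\rho^{\hat{\sA}\hat{\sB}})^\dag(\bI^{\hat{\sA}}\otimes F^{\sB\hat{\sB}})U_\sigma^{\hat{\sA}\hat{\sB}}$ with a sample-based approximate block encoding constructed from the density matrix exponentiation primitive $\mathtt{DMESUB}$ of Proposition~\ref{prevthm:DME variant}. Because the sample access model does not give us $U_\rho^\dag$ or $U_\sigma$, the target of the block encoding cannot be $M^\sB=\tr_{\sA'}\big[\ketbra{\sigma}{\rho}^{\sA'\sB}\big]$ alone; instead I aim at $\ketbra{\rho}{\sigma}^{\hat{\sA}\hat{\sB}}\otimes M^\sB$, with the $\hat{\sA}\hat{\sB}$ register prepared in $\ket{\sigma}^{\hat{\sA}\hat{\sB}}$, which is precisely the role of $\cP_{\ket{0}\ket{\sigma}}^{\bC\rarr\sH\hat{\sA}\hat{\sB}}$ in the theorem. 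Since the rank-one factor $\ketbra{\rho}{\sigma}^{\hat{\sA}\hat{\sB}}$ has a single non-zero singular value equal to $1$, $\mathtt{QSVTSIGN}$ (Proposition~\ref{prop:FPAA general}) turns the block encoding into one of $\ketbra{\rho}{\sigma}^{\hat{\sA}\hat{\sB}}\otimes V^\sB$; feeding it $\ket{0^2}^\sH\ket{\sigma}^{\hat{\sA}\hat{\sB}}\ket{\rho}^{\sA\sB}$ and tracing out $\sH\hat{\sA}\hat{\sB}$ uses the $\braket{\sigma}{\sigma}=1$ collapse to produce $V^\sB\ket{\rho}^{\sA\sB}$ in the ideal case.

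For the sample-based block encoding, I would prepare, from one sample each of $\ket{\rho}$ and $\ket{\sigma}$, a block-diagonal state $\Upsilon=\ketbra{0}{0}\otimes\xi_0+\ketbra{1}{1}\otimes\xi_1$ on a label qubit tensored with $\hat{\sA}\hat{\sB}$, with $\xi_0,\xi_1$ built from $\ketbra{\rho}{\rho}$ and $\ketbra{\sigma}{\sigma}$, and invoke $\mathtt{DMESUB}(\Upsilon;\delta_{\rm DME},t)$ with constant simulation time $t=\Theta(1)$ interleaved with the swap $F^{\sB\hat{\sB}}$. A Hadamard-test-style projection on a second ancilla qubit (making $\sH$ two qubits in total, consistent with $a+1$ in Proposition~\ref{prop:general QSVT}) then extracts an approximate block encoding of $\ketbra{\rho}{\sigma}^{\hat{\sA}\hat{\sB}}\otimes M^\sB$ with per-call diamond error $\cO(\delta_{\rm DME})$ using $m=\cO(1/\delta_{\rm DME})$ samples. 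The overall algorithm $\mathtt{UhlmannPurifiedSample}$ applies $\mathtt{QSVTSIGN}$ to this sample-based block encoding, with each of the $u=\cO(\log(1/\delta_1)/\beta)$ QSVT layers invoking a fresh $\mathtt{DMESUB}$ (and its inverse $\mathtt{DMESUB}^\dag$) on independent samples; I set $\beta=s_{\min}$ in the $\chi=\diamond$ case and $\beta=\max\{s_{\min},\delta/(2r)\}$ in the $\chi=\tF$ case, exactly as in Algorithm~\ref{alg:Uhl purif query}.

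For the diamond-norm bound, Lemma~\ref{lem:FPAA diamond} with $\ketbra{\varphi}{\phi}^\sE=\ketbra{\rho}{\sigma}^{\hat{\sA}\hat{\sB}}$ converts the QSVT polynomial error $\delta_1$ into a diamond distance of $3\sqrt{\delta_1}$ from the ideal, and the accumulated $\mathtt{DMESUB}$ error across the $u$ layers is bounded by $\cO(u\,\delta_{\rm DME})$ using subadditivity of the diamond norm. Choosing $\delta_1=\Theta(\delta^2)$ and $\delta_{\rm DME}=\Theta(\delta/u)$ makes the total error at most $\delta$, so the sample count $u\cdot m=\cO(u^2/\delta)$ reduces to $\til{\cO}\big(1/(\delta s_{\min}^2)\big)$, matching $w_\diamond$. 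For $\chi=\tF$ the analysis follows Section~\ref{sec:uhlfidquery} almost verbatim: the $\ketbra{\rho}{\sigma}$ factor contributes a harmless $\braket{\sigma}{\sigma}=1$ inside $\bra{\sigma}^{\sA\sB}\bra{0}(\cdots)\ket{\sigma}^{\hat{\sA}\hat{\sB}}\ket{\rho}^{\sA\sB}$, so the same singular-value-splitting at threshold $\beta$ gives $|\sqrt{\rF'}-\sqrt{\rF}|\leq 2\delta_1+\cO(u\,\delta_{\rm DME})$, and a similar budgeting yields the claimed $w_{\tF}$. The main technical obstacle is the construction in the second paragraph: producing from $\cO(1/\delta_{\rm DME})$ samples a single sub-normalized block encoding of $\ketbra{\rho}{\sigma}^{\hat{\sA}\hat{\sB}}\otimes M^\sB$ whose operator-norm guarantee holds on all inputs—not merely on the specific ancilla $\ket{\sigma}^{\hat{\sA}\hat{\sB}}$—because any sub-optimality there propagates through all $u$ QSVT layers and would spoil both the advertised sample count and the gate bound $\cO(w_\chi\log(d_\sA d_\sB))$.
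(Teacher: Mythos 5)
Your high-level architecture matches the paper's: build a sample-based substitute for the oracle block encoding of $\ketbra{\rho}{\sigma}^{\hat{\sA}\hat{\sB}}\otimes M^{\sB}$ via $\mathtt{DMESUB}$, feed it to $\mathtt{QSVTSIGN}$, control the polynomial error with \Cref{lem:FPAA diamond} and the accumulated DME error with subadditivity of the diamond norm, and budget $\delta_1=\Theta(\delta^2)$, $\delta_{\rm DME}=\Theta(\delta/u)$ to get $w_\diamond = u\cdot m = \cO(u^2/\delta)$. That error accounting and the resulting sample counts are correct. However, the step you flag as ``the main technical obstacle'' is in fact the heart of the proof, and the way you have set it up cannot be completed as stated: the density matrix exponentiation primitive does not give you a block encoding of $L=\ketbra{\rho}{\sigma}^{\hat{\sA}\hat{\sB}}\otimes M^{\sB}$ itself. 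What $\mathtt{DMESUB}$ delivers is $e^{iK}$ for the Hermitian dilation $K=\ketbra{1}{0}\otimes L+\ketbra{0}{1}\otimes L^{\dag}$ (obtained as $\xi-Z\xi Z$ for a block-diagonal $\Upsilon$ prepared by an explicit circuit from one copy each of $\ket{\rho}$ and $\ket{\sigma}$), and a spectral analysis of $K$ in the basis $\frac{1}{\sqrt2}(\ket{0}\ket{r_j}\pm\ket{1}\ket{l_j})$ shows that the off-diagonal blocks of $e^{iK}$ are $i\sin^{(\rm SV)}(L)$ and $i\sin^{(\rm SV)}(L^{\dag})$. So after multiplying by $X^{\sC}$ you obtain a block encoding of $\ketbra{\rho}{\sigma}^{\hat{\sA}\hat{\sB}}\otimes\sin^{(\rm SV)}(M^{\sB})$ — the sine of the singular values, not the singular values themselves. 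A ``Hadamard-test-style projection'' does not remove this nonlinearity.

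The resolution is not to fight the sine but to exploit that $\sgn(\sin(x))=\sgn(x)$ on $[0,1]$, so $\mathtt{QSVTSIGN}$ applied to $\sin^{(\rm SV)}(M^{\sB})$ still yields the Uhlmann partial isometry $V^{\sB}$; the only consequence is that the QSVT threshold must be set against the transformed singular values, using $\sin(x)\ge\frac{2}{\pi}x$ to take $\beta=\frac{2}{\pi}s_{\min}$ in the diamond-norm case (and correspondingly $1/\beta_{\tF}=\pi\min\{1/(2s_{\min}),\,r/\delta_1\}$ in the fidelity case, where the splitting argument uses $s_k\le\frac{\pi}{2}\sin(s_k)$). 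Your choice ``$\beta=s_{\min}$ exactly as in Algorithm~\ref{alg:Uhl purif query}'' is therefore predicated on a block encoding you cannot construct; with the correct construction the thresholds change by a factor of $2/\pi$, which preserves all the advertised asymptotics but must be justified. Without (i) the explicit circuit producing $\Upsilon$ with the right block structure so that $\xi-Z\xi Z=K$, (ii) the verification that $e^{iK}$ block-encodes $i\sin^{(\rm SV)}(L)$, and (iii) the $\sgn\circ\sin=\sgn$ and $\sin(x)\ge 2x/\pi$ arguments, the proof has a genuine gap precisely where you located it.
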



\begin{algorithm}[h]
\caption{Uhlmann transformation algorithm in the purified sample model \\ \parbox{\linewidth}{\centering $\mathtt{UhlmannPurifiedSample}(\ket{\rho}^{\hat{\sA}\hat{\sB}}, \ket{\sigma}^{\hat{\sA}\hat{\sB}}; \delta, \chi)$ (In Theorem~\ref{thm:algorithm Uhlmann})}}
\label{alg:Uhl purif sample}
\SetKwInput{KwInput}{Input}
\SetKwInput{KwOutput}{Output}
\SetKwInput{KwParameters}{Parameters}
\SetKwComment{Comment}{$\triangleright$\ }{}
\SetCommentSty{textnormal}

\SetAlgoNoEnd
\SetAlgoNoLine
\KwInput{Two pure quantum state $\ket{\rho}^{\hat{\sA}\hat{\sB}}$ and $\ket{\sigma}^{\hat{\sA}\hat{\sB}}$.}
\KwParameters{$\delta \in (0, 1)$ and $\chi \in \{\diamond, \tF\}$.}
\KwOutput{Quantum channel $\cJ^{\sB\sH\hat{\sA}\hat{\sB}}$.}
\SetAlgoLined

Prepare a quantum state $\Upsilon^{\sC_2\sC_3\sA_1\sB_1\sB_2}$ (Eq.~\eqref{inteq:64}) by running the circuit in \Cref{fig:upsilon circuit} with $\ket{\rho}^{\sA_1\sB_1}$ and $\ket{\sigma}^{\sA_2\sB_2}$. \\
\If{$\chi = \diamond$}{
  Set $\delta_1 \gets (\delta/6)^2$ and $\beta \gets 2s_{\rm min}/\pi$. \\
}
\ElseIf{$\chi = \tF$}{
  Set $\delta_1 \gets \delta/8$ and $\beta \gets \f{1}{\pi}\max\{2s_{\rm min}, \delta_1/r\}$. \\
}
Set $u$ to be the minimum odd integer satisfying $u \geq \big\lceil\f{8e}{\beta}\log{(2/\delta_1)}\big\rceil$. \\
Set $\delta_2 \gets \delta/(2u)$. \\
Set $\cF^{\sC\hat{\sA}\hat{\sB}\sB} \gets \mathtt{DMESUB}(\Upsilon^{\sC_2\sC_3\sA_1\sB_1\sB_2}; \delta_2, t=2)$ and $\cF_{\rm inv}^{\sC\hat{\sA}\hat{\sB}\sB} \gets \mathtt{DMESUB}^\dag(\Upsilon^{\sC_2\sC_3\sA_1\sB_1\sB_2}; \delta_2, t=2)$ (Proposition~\ref{prevthm:DME variant}). \\
Set $\cG^{\sC\hat{\sA}\hat{\sB}\sB} \gets \cX^\sC\circ\cF^{\sC\hat{\sA}\hat{\sB}\sB}$ and $\cG_{\rm inv}^{\sC\hat{\sA}\hat{\sB}\sB} \gets \cF_{\rm inv}^{\sC\hat{\sA}\hat{\sB}\sB}\circ\cX^\sC$, where $\cX^\sC(\cdot) = X^\sC(\cdot)X^\sC$ and $X^\sC$ is the one-qubit Pauli-$X$ gate on $\sC$. \\
Set $\cJ^{\sB\sH\hat{\sA}\hat{\sB}} \gets \mathtt{QSVTSIGN}(\cG^{\sC\hat{\sA}\hat{\sB}\sB}, \cG_{\rm inv}^{\sC\hat{\sA}\hat{\sB}\sB}; \delta_1, \beta)$ (Proposition~\ref{prop:FPAA general}). \\
Return $\cJ^{\sB\sH\hat{\sA}\hat{\sB}}$.
\end{algorithm}

At a high level, \Cref{alg:Uhl purif sample} proceeds in three steps:
\begin{align}
\ket{\rho}^{\sA_1\sB_1}, \ket{\sigma}^{\sA_2\sB_2}
&\hspace{0.6pc}\overset{1. \, \text{Fig.~\ref{fig:upsilon circuit}}}{\longrightarrow}\hspace{0.5pc}
\Upsilon^{\sC_2\sC_3\sA_1\sB_1\sB_2} \\
&\hspace{0.3pc}\overset{2. \, \mathtt{DMESUB}}{\longrightarrow}\hspace{0.5pc}
e^{i(\ketbra{1}{0}^\sC \otimes L^{\hat{\sA}\hat{\sB}\sB} + \ketbra{0}{1}^\sC\otimes (L^{\hat{\sA}\hat{\sB}\sB})^\dag)} \\
&\hspace{0.1pc}\overset{3. \, \mathtt{QSVTSIGN}}{\longrightarrow}\hspace{0.5pc}
\begin{pmatrix}
    \hspace*{1.5mm} \ketbra{\rho}{\sigma}^{\hat{\sA}\hat{\sB}} \otimes V^\sB & \hspace*{0mm} * \hspace*{3mm} \\
    \hspace*{2mm} * & \hspace{0mm} * \hspace*{3mm}
\end{pmatrix},
\end{align}
where $L^{\hat{\sA}\hat{\sB}\sB} = \ketbra{\rho}{\sigma}^{\hat{\sA}\hat{\sB}}\otimes\tr_{\sA'}\big[\ketbra{\sigma}{\rho}^{\sA'\sB}\big]$ and $V^\sB$ is the Uhlmann partial isometry.
The number of samples of $\ket{\rho}^{\hat{\sA}\hat{\sB}}$ and $\ket{\sigma}^{\hat{\sA}\hat{\sB}}$ required in step~1 is $\cO(1)$, and in step~2, the number of samples of $\Upsilon$ is $\cO(1/\delta)$ by \Cref{prevthm:DME variant}. In step~3, the number of uses of $e^{i(\ketbra{1}{0} \otimes L + \ketbra{0}{1}\otimes L^\dag)}$ is $\cO(\log(1/\delta)/\beta)$ by \Cref{prop:FPAA general}.  
Thus, the total sample complexity of $\ket{\rho}^{\hat{\sA}\hat{\sB}}$ and $\ket{\sigma}^{\hat{\sA}\hat{\sB}}$ for \Cref{alg:Uhl purif sample} is given by the product of these, namely, $\cO\big(\log(1/\delta)/(\beta\delta)\big)$.  
Note that the parameters $\delta$ and $\beta$ are chosen so that the overall approximation error remains within $\delta$, thereby demonstrating our statement in Theorem~\ref{thm:algorithm Uhlmann}.

Note that, in \Cref{alg:Uhl purif sample}, $\mathtt{QSVTSIGN}$ takes a quantum channel $\cG$ (and $\cG_{\rm inv}$) as an input, instead of a unitary $W$ (and $W^\dag$).
This represents that in the quantum circuit for $\mathtt{QSVTSIGN}$, every access to $W$ is replaced by an access to $\cG$. 
As a result, the output of $\mathtt{QSVTSIGN}$ is not a unitary $\til{W}$ whose top-left block approximates the sign function, but rather a quantum channel $\cJ$.
We need to evaluate the distance between the actual output channel $\cJ$ and the desired unitary channel $\til{\cW}$.
The error in approximating $\cW$ by $\cG$ accumulates proportionally to the number of repetitions $u$ in $\mathtt{QSVTSIGN}$, resulting in the error in approximating $\til{\cW}$ by $\cJ$.
To keep the final error sufficiently small, the initial error of $\cG$ in approximating $\cW$ is pre-scaled by $1/u$.
For this reason, $u$ appears in the denominator of the assignment in line~7 of \Cref{alg:Uhl purif sample}.


The additional factor $\ketbra{\rho}{\sigma}^{\hat{\sA}\hat{\sB}}$ appears in the top-left block of $U_{\rm ideal}^{\sB\sH\hat{\sA}\hat{\sB}}$ in Theorem~\ref{thm:algorithm Uhlmann}.
This does not cause any issues, since we do not operate on system $\sA$.
To see this, we fix the input state to $\ket{\rho}^{\sA\sB}$ and observe that 
\begin{align}
    |\bra{\sigma}^{\sA\sB}\bra{0}^\sH\bra{\rho}^{\hat{\sA}\hat{\sB}}U_{\rm ideal}^{\sB\sH\hat{\sA}\hat{\sB}}\ket{\rho}^{\sA\sB}\ket{0}^\sH\ket{\sigma}^{\hat{\sA}\hat{\sB}}|
    &= |\bra{\sigma}^{\sA\sB}\bra{0}^\sH \bra{\rho}^{\hat{\sA}\hat{\sB}}V^\sB\ket{\rho}^{\sA\sB}\ket{0}^\sH\ket{\rho}^{\hat{\sA}\hat{\sB}}| \\
    &=  \rF(\rho^\sA, \sigma^\sA),
\end{align} 
where $V^\sB$ is the Uhlmann partial isometry.
Thus, focusing only on the system $\sA\sB$, the Uhlmann transformation only on $\sB$ is approximately implemented, and the states $\ket{0}^\sH\ket{\sigma}^{\hat{\sA}\hat{\sB}}$ merely serve as auxiliary state.
The state $\ket{\sigma}^{\hat{\sA}\hat{\sB}}$ can be prepared in the purified sample access model.

Similarly to Theorem~\ref{thm:Uhlmann alg purif query model}, the first part of Theorem~\ref{thm:algorithm Uhlmann} measures the approximation error to the Uhlmann transformation using the diamond norm, while the latter part evaluates it in terms of fidelity difference.
We should also recall that the bottom-left block of $U_{\rm ideal}^{\sB\sH\hat{\sA}\hat{\sB}}$ does not affect the result as long as $U_{\rm ideal}^{\sB\sH\hat{\sA}\hat{\sB}}$ acts on $\ket{\rho}^{\sA\sB}$.


\subsection{Proof of Theorem~\ref{thm:algorithm Uhlmann}}
\label{sec:proofUhlmann}

We explain steps 1, 2, and 3 in order, while, for step 3, the details are similar to those in the previous section, \Cref{sec:proof purif query Uhl}.


\begin{figure}
    \centering
    \includegraphics[width=90mm]{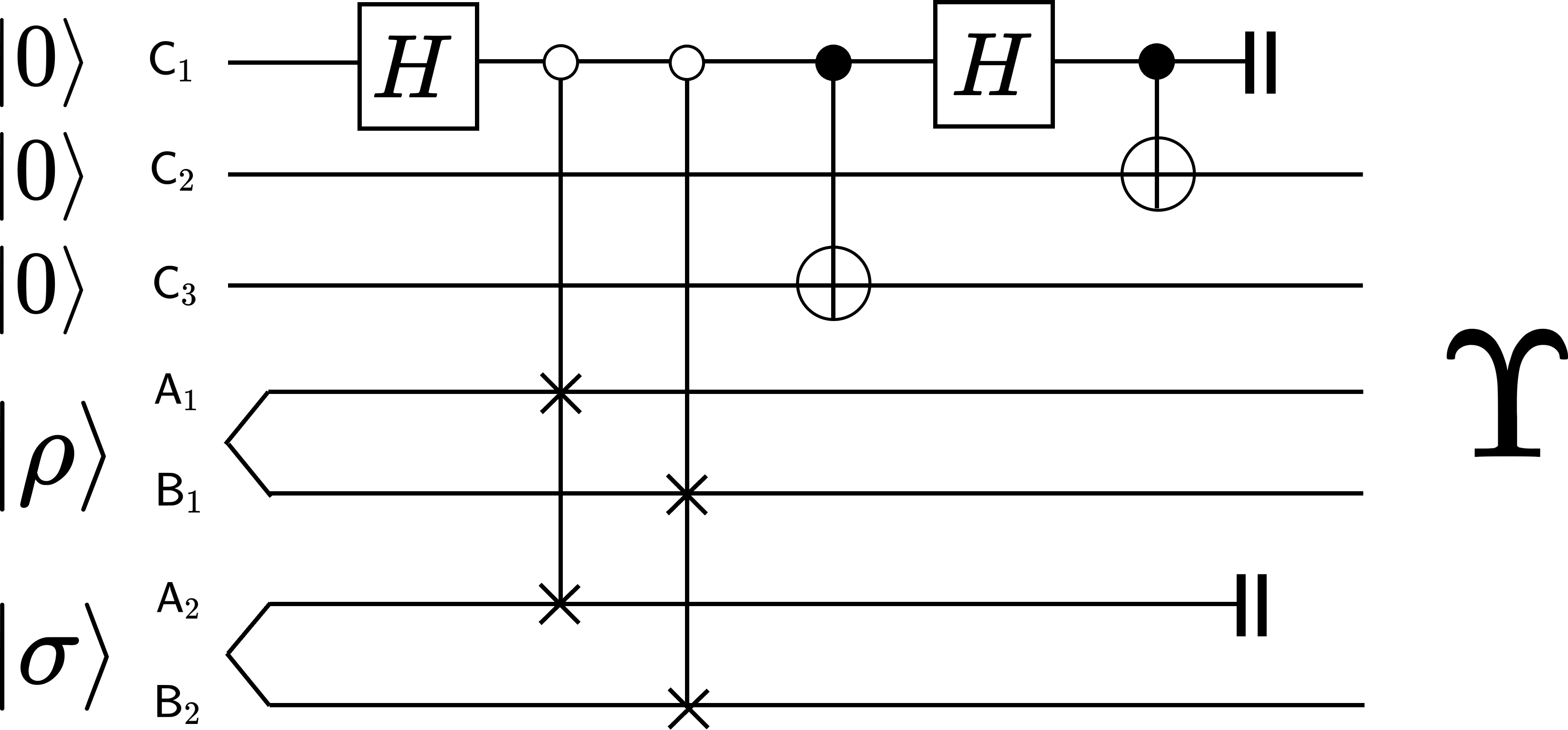}
    \caption{A quantum circuit preparing the state $\Upsilon$ in Eq.~\eqref{inteq:64}.
    The systems $\sC_1$, $\sC_2$, and $\sC_3$ are each one-qubit systems. Open circles indicate that the gate is applied when the control qubit is in the state $\ket{0}$, while closed circles indicate control for a qubit in the state $\ket{1}$. The gate $H$ is the one-qubit Hadamard gate. The double vertical lines represent that the qubits of that system are traced out. By convention, the objects with a cross mark and a circle with a cross represent the swap operator and the NOT operator, respectively.}
    \label{fig:upsilon circuit}
\end{figure}

In step 1, we denote by $\Upsilon^{\sC_2\sC_3\sA_1\sB_1\sB_2}$ the output state of the quantum circuit given in \Cref{fig:upsilon circuit}, where each of $\sC_1$, $\sC_2$, and $\sC_3$ is a one-qubit system.
From a direct calculation, we see that $\Upsilon^{\sC_2\sC_3\sA_1\sB_1\sB_2}$ is the form
\begin{align}
\label{inteq:64}
    \Upsilon^{\sC_2\sC_3\sA_1\sB_1\sB_2} 
    = \f{1}{2}\big(\ketbra{0}{0}^{\sC_2} \otimes \xi^{\sC_3\sA_1\sB_1\sB_2} 
    + \ketbra{1}{1}^{\sC_2} \otimes Z^{\sC_3}\xi^{\sC_3\sA_1\sB_1\sB_2} Z^{\sC_3}\big),
\end{align}
where $Z^{\sC_3}$ is the one-qubit Pauli-$Z$ operator acting on $\sC_3$.
The state $\xi^{\sC_3\sA_1\sB_1\sB_2}$ is given by
\begin{align}
    \xi^{\sC_3\sA_1\sB_1\sB_2}
    &= \f{1}{2}\tr_{\sA_2}\big[\big(\ket{0}^{\sC_3}\ket{\sigma}^{\sA_1\sB_1}\ket{\rho}^{\sA_2\sB_2}+\ket{1}^{\sC_3}\ket{\rho}^{\sA_1\sB_1}\ket{\sigma}^{\sA_2\sB_2}\big) \notag \\
    &\hspace{9pc}\big(\bra{0}^{\sC_3}\bra{\sigma}^{\sA_1\sB_1}\bra{\rho}^{\sA_2\sB_2}+\bra{1}^{\sC_3}\bra{\rho}^{\sA_1\sB_1}\bra{\sigma}^{\sA_2\sB_2}\big)\big] \\
    &= \f{1}{2}
\begin{blockarray}{ccc}
& \bra{0}^{\sC_3}&  \bra{1}^{\sC_3} \vspace{1mm}\\
\begin{block}{c(cc)}
  \ket{0}^{\sC_3}\hspace*{1mm} & \ketbra{\sigma}{\sigma}^{\sA_1\sB_1}\otimes\rho^{\sB_2} & \ketbra{\sigma}{\rho}^{\sA_1\sB_1}\otimes (M^{\sB_2})^\dag \\
  \ket{1}^{\sC_3}\hspace*{1mm} & \ketbra{\rho}{\sigma}^{\sA_1\sB_1}\otimes M^{\sB_2} & \ketbra{\rho}{\rho}^{\sA_1\sB_1}\otimes\sigma^{\sB_2} \\
\end{block}
\end{blockarray}\hspace{2.5mm}.
\end{align}
where $M^{\sB_2} = \tr_{\sA_1}\big[\ketbra{\sigma}{\rho}^{\sA_1\sB_2}\big]$.
Since the signs of the off-diagonal entries are flipped by $Z^{\sC_3}$, the state $Z^{\sC_3}\xi^{\sC_3\sA_1\sB_1\sB_2}Z^{\sC_3}$ is given by 
\begin{align}
    Z^{\sC_3}\xi^{\sC_3\sA_1\sB_1\sB_2}Z^{\sC_3}
    = \f{1}{2}
\begin{blockarray}{ccc}
& \bra{0}^{\sC_3}&  \bra{1}^{\sC_3} \vspace{1mm}\\
\begin{block}{c(cc)}
  \ket{0}^{\sC_3}\hspace*{1mm} & \ketbra{\sigma}{\sigma}^{\sA_1\sB_1}\otimes\rho^{\sB_2} & -\ketbra{\sigma}{\rho}^{\sA_1\sB_1}\otimes (M^{\sB_2})^\dag \\
  \ket{1}^{\sC_3}\hspace*{1mm} & -\ketbra{\rho}{\sigma}^{\sA_1\sB_1}\otimes M^{\sB_2} & \ketbra{\rho}{\rho}^{\sA_1\sB_1}\otimes\sigma^{\sB_2} \\
\end{block}
\end{blockarray}\hspace{2mm}.
\end{align}

For step 2, from \Cref{prevthm:DME variant} with $t=2$, we can approximately realize a unitary $e^{iK^{\sC\hat{\sA}\hat{\sB}\sB}}$ within the error $\delta_2$ by using $\mathtt{DMESUB}$ with $m$ copies of $\Upsilon^{\sC_2\sC_3\sA_1\sB_1\sB_2}$, where 
\begin{equation}
\label{eq:m}
m = \Big\lceil\f{16}{\delta_2}\Big\rceil.
\end{equation}
Here, $K^{\sC\hat{\sA}\hat{\sB}\sB}$ is given by
\begin{align}
\label{eq:def of L}
    K^{\sC\hat{\sA}\hat{\sB}\sB} 
    = \xi^{\sC\hat{\sA}\hat{\sB}\sB} - Z^{\sC}\xi^{\sC\hat{\sA}\hat{\sB}\sB}Z^{\sC}
    = 
\begin{blockarray}{ccc}
& \bra{0}^{\sC}&  \bra{1}^{\sC} \vspace{1mm}\\
\begin{block}{c(cc)}
  \ket{0}^{\sC}\hspace*{1.5mm} & 0 & \ketbra{\sigma}{\rho}^{\hat{\sA}\hat{\sB}}\otimes (M^{\sB})^\dag \\
  \ket{1}^{\sC}\hspace*{1.5mm} & \hspace*{0.5mm} \ketbra{\rho}{\sigma}^{\hat{\sA}\hat{\sB}}\otimes M^{\sB} & 0 \\
\end{block}
\end{blockarray}\hspace{2.5mm},
\end{align}
which is an Hermitian matrix with zero diagonal entries.
Although the parameter $t$ can be set to other values, we fix $t = 2$ for simplicity.

Before proceeding to step~3, we examine the structure of the unitary $e^{iK}$. 
For a matrix $A$ with the singular value decomposition $A = \sum_j a_j \ketbra{\eta_j}{\xi_j}$, we define the action of sine and cosine functions on $A$ as
\begin{align}
    &\sin^{(\rm SV)}(A) = \sum_j \sin(a_j)\ketbra{\eta_j}{\xi_j}, \ \ \text{and} \ \ \ \cos^{(\rm SV)}(A) = \sum_j \cos(a_j)\ketbra{\xi_j}{\xi_j}.
\end{align}
Then, the unitary $e^{iK}$ can be rephrased as
\begin{align}
\label{eq:block rep. e^{itK}}
    e^{iK^{\sC\hat{\sA}\hat{\sB}\sB}}
    = \hspace{1mm}
\begin{blockarray}{ccc}
& \bra{0}^{\sC}&  \bra{1}^{\sC} \vspace{1mm}\\
\begin{block}{c(cc)}
  \ket{0}^{\sC}\hspace*{1.5mm} & \cos^{(\rm SV)}(L) & i\sin^{(\rm SV)}(L^\dag) \\
  \ket{1}^{\sC}\hspace*{1.5mm} & i\sin^{(\rm SV)}(L) & \cos^{(\rm SV)}(L^\dag) \\
\end{block}
\end{blockarray}\hspace{2.5mm},
\end{align}
where $L^{\hat{\sA}\hat{\sB}\sB} = \ketbra{\rho}{\sigma}^{\hat{\sA}\hat{\sB}}\otimes M^{\sB}$ and $M^\sB = \tr_{\sA'}\big[\ketbra{\sigma}{\rho}^{\sA'\sB}\big]$.

To see this, let the singular value decomposition of $L^{\hat{\sA}\hat{\sB}\sB}$ be given by $L^{\hat{\sA}\hat{\sB}\sB} = \sum_j s_j \ketbra{l_j}{r_j}^{\hat{\sA}\hat{\sB}\sB}$, and we define $\ket{\psi_j^+}^{\sC\hat{\sA}\hat{\sB}\sB}$ and $\ket{\psi_j^-}^{\sC\hat{\sA}\hat{\sB}\sB}$ by
\begin{align}
    &\ket{\psi_j^+}^{\sC\hat{\sA}\hat{\sB}\sB} = \f{1}{\sqrt{2}}(\ket{0}^\sC \ket{r_j}^{\hat{\sA}\hat{\sB}\sB} + \ket{1}^\sC \ket{l_j}^{\hat{\sA}\hat{\sB}\sB}), \ \ \text{and} \ \ \ \ket{\psi_j^-}^{\hat{\sA}\hat{\sB}\sB} = \f{1}{\sqrt{2}}(\ket{0}^\sC \ket{r_j}^{\hat{\sA}\hat{\sB}\sB} - \ket{1}^\sC \ket{l_j}^{\hat{\sA}\hat{\sB}\sB}).
\end{align}
We can see that these states $\ket{\psi_j^+}^{\sC\hat{\sA}\hat{\sB}\sB}$ and $\ket{\psi_j^-}^{\sC\hat{\sA}\hat{\sB}\sB}$ are eigenstates of $K^{\sC\hat{\sA}\hat{\sB}\sB}$ corresponding eigenvalues $s_j$ and $-s_j$, respectively, because $K^{\sC\hat{\sA}\hat{\sB}\sB}\ket{\psi_j^+}^{\sC\hat{\sA}\hat{\sB}\sB} = s_j \ket{\psi_j^+}^{\sC\hat{\sA}\hat{\sB}\sB}$ and $K^{\sC\hat{\sA}\hat{\sB}\sB}\ket{\psi_j^-}^{\sC\hat{\sA}\hat{\sB}\sB} = -s_j \ket{\psi_j^-}^{\sC\hat{\sA}\hat{\sB}\sB}$~\cite{lloyd2020polardecomposition, lloyd2021hamiltonianQSVT, Lin2022LectureNotes, Odake2024Highoder}.
Thus, $K^{\sC\hat{\sA}\hat{\sB}\sB}$ can be decomposed as $K^{\sC\hat{\sA}\hat{\sB}\sB} = \sum_j \big(s_j\ketbra{\psi_j^+}{\psi_j^+}^{\sC\hat{\sA}\hat{\sB}\sB} - s_j \ketbra{\psi_j^-}{\psi_j^-}^{\sC\hat{\sA}\hat{\sB}\sB} \big)$.
Hence, the unitary $e^{iK}$ can be rewritten as
\begin{align}
    e^{iK}
    &= \sum_j \big(e^{i s_j}\ketbra{\psi_j^+}{\psi_j^+} + e^{-i s_j} \ketbra{\psi_j^-}{\psi_j^-}\big) \\
    &= \sum_j \Big(\f{e^{i s_j}+e^{-i s_j}}{2}\big(\ketbra{0}{0} \otimes \ketbra{r_j}{r_j} 
    + \ketbra{1}{1} \otimes \ketbra{l_j}{l_j}\big) \notag\\
    &\hspace{6pc}+ \f{e^{i s_j}-e^{-i s_j}}{2}\big(\ketbra{0}{1} \otimes \ketbra{r_j}{l_j}
    + \ketbra{1}{0} \otimes \ketbra{l_j}{r_j}\big)\Big) \\
    &= \ketbra{0}{0} \otimes \cos^{(\rm SV)}(L) + \ketbra{1}{1} \otimes \cos^{(\rm SV)}(L^\dag) \notag\\
    &\hspace{6pc}+ \ketbra{0}{1} \otimes i\sin^{(\rm SV)}(L^\dag) + \ketbra{1}{0} \otimes i\sin^{(\rm SV)}(L).
\end{align}

Let $U_2^{\sC\hat{\sA}\hat{\sB}\sB} = -iX^\sC e^{iK^{\sC\hat{\sA}\hat{\sB}\sB}}$, where $X^\sC$ is the one-qubit Pauli-$X$ gate on the system $\sC$. Since $e^{iK^{\sC\hat{\sA}\hat{\sB}\sB}}$ is given in Eq.~\eqref{eq:block rep. e^{itK}}, we see that $U_2^{\sC\hat{\sA}\hat{\sB}\sB}$ is a block-encoding unitary of $\sin^{(\rm SV)}(L^{\hat{\sA}\hat{\sB}\sB}) = \ketbra{\rho}{\sigma}^{\hat{\sA}\hat{\sB}} \otimes \sin^{(\rm SV)}(M^\sB)$:
\begin{align}
    U_2^{\sC\hat{\sA}\hat{\sB}\sB}
    &= \hspace{1mm}
\begin{blockarray}{ccc}
& \bra{0}^\sC &  \vspace{1mm}\\
\begin{block}{c(cc)}
  \ket{0}^\sC \hspace*{1mm} & \ketbra{\rho}{\sigma}^{\hat{\sA}\hat{\sB}}\otimes \sin^{(\rm SV)}\big(M^\sB\big) & \hspace{0mm} * \hspace{3mm}\\
  & * & \hspace{0mm} * \hspace{3mm}\\
\end{block}
\end{blockarray}\hspace{2.5mm}.
\end{align}
The term $\ketbra{\sigma}{\rho}^{\hat{\sA}\hat{\sB}}$ is irrelevant to the singular values.
Although the coefficient $-i$ is clearly not essential since $-iX(\cdot)(-iX)^\dag = X(\cdot)X$, it is introduced here for simplicity.
Noting that the diamond norm is invariant under unitaries, we can obtain a quantum channel $\cG^{\sC\hat{\sA}\hat{\sB}\sB}$ that satisfies
\begin{equation}
\label{eq:matrix exponentiation of M}
    \f{1}{2}\|\cG^{\sC\hat{\sA}\hat{\sB}\sB} - \cU_2^{\sC\hat{\sA}\hat{\sB}\sB}\|_\diamond \leq \delta_2,
\end{equation}
by $\mathtt{DMESUB}(\Upsilon^{\sC_2\sC_3\sA_1\sB_1\sB_2}; \delta_2, t=2)$.

By using $\mathtt{DMESUB}^\dag$ instead of $\mathtt{DMESUB}$, we can approximate $e^{-iK^{\sC\hat{\sA}\hat{\sB}\sB}}$ from $\Upsilon^{\sC_2\sC_3\sA_1\sB_1\sB_2}$, and hence approximate $(U_2^{\sC\hat{\sA}\hat{\sB}\sB})^\dag = ie^{-iK^{\sC\hat{\sA}\hat{\sB}\sB}} X^\sC$. That is, we can obtain a quantum channel $\cG_{\rm inv}^{\sC\hat{\sA}\hat{\sB}\sB}$ such that
\begin{align}
\label{inteq:92}
    \f{1}{2}\big\|\cG_{\rm inv}^{\sC\hat{\sA}\hat{\sB}\sB} - (\cU_2^{\sC\hat{\sA}\hat{\sB}\sB})^\dag\big\|_\diamond \leq \delta_2,    
\end{align}
using $\mathtt{DMESUB}^\dag$ that takes the same number of samples and gates as $\mathtt{DMESUB}$.

Since $M^\sB = \tr_{\sA'}\big[\ketbra{\sigma}{\rho}^{\sA'\sB}\big]$, we obtain $\ketbra{\rho}{\sigma}^{\hat{\sA}\hat{\sB}} \otimes V^\sB$ by lifting up all the singular values of $\ketbra{\rho}{\sigma}^{\hat{\sA}\hat{\sB}}\otimes \sin^{(\rm SV)}\big(M^\sB\big)$ to unity, where $V^\sB$ is the Uhlmann partial isometry (see also \Cref{prop:explicit form of Uhl}).
To achieve that, we use the QSVT with the sign function in step 3.

Using $\mathtt{QSVTSIGN}$ in \Cref{prop:FPAA general} with input $\cG^{\sC\hat{\sA}\hat{\sB}\sB}$ and $\cG_{\rm inv}^{\sC\hat{\sA}\hat{\sB}\sB}$, we can realize a quantum channel $\cJ^{\sB\sH\hat{\sA}\hat{\sB}}$ that satisfies
\begin{equation}
\label{inteq:16}
    \f{1}{2}\big\|\cJ^{\sB\sH\hat{\sA}\hat{\sB}} - \cU_1^{\sB\sH\hat{\sA}\hat{\sB}}\big\|_\diamond \leq u\delta_2,
\end{equation}
where $\sH$ is a two-qubit system and $u$ is the minimum odd integer such that 
\begin{align} 
\label{eq:u} 
    u \geq \Big\lceil \f{8 e}{\beta}\log{\Big(\f{2}{\delta_1}\Big)} \Big\rceil.
\end{align}
The unitary $U_1^{\sH\hat{\sA}\hat{\sB}\sH}$ is given by
\begin{align}
\label{inteq:61}
    U_1^{\sB\sH\hat{\sA}\hat{\sB}} 
    = \hspace{1mm}
\begin{blockarray}{ccc}
& \bra{0^2}^\sH&  \vspace{1mm}\\
\begin{block}{c(cc)}
  \ket{0^2}^\sH \hspace*{1mm} & \ketbra{\rho}{\sigma}^{\hat{\sA}\hat{\sB}}\otimes P_\sgn^{(\rm SV)}\big(\sin^{(\rm SV)}(M^\sB)\big) & \hspace{0mm} * \hspace{3mm}\\
  & * & \hspace{0mm} * \hspace{3mm}\\
\end{block}
\end{blockarray}\hspace{2.5mm},
\end{align}
where $P_\sgn$ satisfies $|P_\sgn(x) - \sgn(x)| \leq \delta_1$ for $x \in [\beta, 1]$.
This follows from replacing $U_2^{\sC\hat{\sA}\hat{\sB}\sB}$ and $(U_2^{\sC\hat{\sA}\hat{\sB}\sB})^\dag$, used a total of $u$ times in the quantum circuit for $\mathtt{QSVTSIGN}$, with a channel $\cG^{\sC\hat{\sA}\hat{\sB}\sB}$ and $\cG_{\rm inv}^{\sC\hat{\sA}\hat{\sB}\sB}$, respectively, which approximate these unitaries as Eqs.~\eqref{eq:matrix exponentiation of M} and~\eqref{inteq:92}.
The overall error is evaluated using the subadditivity of the diamond norm.
Note that
\begin{align}
    P_\sgn^{(\rm SV)}\big(\ketbra{\sigma}{\rho}^{\hat{\sA}\hat{\sB}} \otimes \sin^{(\rm SV)}(M^\sB)\big)
    = \ketbra{\sigma}{\rho}^{\hat{\sA}\hat{\sB}} \otimes P_\sgn^{(\rm SV)}\big(\sin^{(\rm SV)}(M^\sB)\big).
\end{align}

The discussion so far is applicable to both cases, where the performance of our algorithm is measured by either the diamond norm or the fidelity difference.
In the following, we evaluate these two cases separately.
In \Cref{sec:FPAA part}, we provide a proof with the approximation error measured by the diamond norm, while in \Cref{sec:uhlfidquery}, we focus on the fidelity difference, which leads to fewer samples.
These results give the full statement of Theorem~\ref{thm:algorithm Uhlmann}.


\subsubsection{Evaluation in the diamond norm}
\label{sec:FPAA part}

We evaluate a distance between $U_1^{\sB\sH\hat{\sA}\hat{\sB}}$ and a unitary $U_{\rm ideal}^{\sB\sH}$ which is an exact block-encoding of $\ketbra{\rho}{\sigma}^{\hat{\sA}\hat{\sB}}\otimes\sgn^{(\rm SV)}\big(\sin^{(\rm SV)}(M^\sB)\big)$:
\begin{align}
    U_{\rm ideal}^{\sB\sH\hat{\sA}\hat{\sB}} 
    = \hspace{1mm}
\begin{blockarray}{ccc}
& \bra{0^2}^\sH&  \vspace{1mm}\\
\begin{block}{c(cc)}
  \ket{0^2}^\sH \hspace*{1mm} & \ketbra{\rho}{\sigma}^{\hat{\sA}\hat{\sB}}\otimes \sgn^{(\rm SV)}\big(\sin^{(\rm SV)}(M^\sB)\big) & \hspace{0mm} * \hspace{3mm}\\
  & * & \hspace{0mm} * \hspace{3mm}\\
\end{block}
\end{blockarray}\hspace{2.5mm}.
\end{align}
We note that $\sgn^{(\rm SV)}\big(\sin^{(\rm SV)}(M^\sB)\big)$ corresponds to the Uhlmann partial isometry $V^\sB$ because the singular values of $M^\sB$ correspond to those of $\sqrt{\sigma^\sA}\sqrt{\rho^\sA}$, hence all lie within $[0, 1]$, and in $x \in [0, 1]$, $0 \leq \sin(x) \leq 1$. 

The function $\sin(x)$ increases monotonically in $x \in [0, 1]$ and satisfies $\f{2}{\pi} x \leq \sin(x)$.
Thus, by setting $\beta$ as $\beta = \beta_\diamond = \f{2}{\pi}s_{\rm min}$, where $s_{\min}$ is the minimum non-zero singular value of $\sqrt{\sigma^\sA}\sqrt{\rho^\sA}$, from \Cref{prop:FPAA general}, we have that
\begin{align}
\label{inteq:89}
    \big\|P_\sgn^{(\rm SV)}\big(\sin^{(\rm SV)}(M^\sB)\big) - \sgn^{(\rm SV)}\big(\sin^{(\rm SV)}(M^\sB)\big)\big\|_\infty \leq \delta_1.
\end{align}
From \Cref{lem:FPAA diamond}, when Eq.~\eqref{inteq:89} holds, we obtain that
\begin{align}
\label{inteq:25}
    \f{1}{2}\big\|\cU_1^{\sB\sH\hat{\sA}\hat{\sB}}\circ\cP_{\ket{0}\ket{\sigma}}^{\bC\rarr\sH\hat{\sA}\hat{\sB}} - \cU_{\rm ideal}^{\sB\sH\hat{\sA}\hat{\sB}}\circ\cP_{\ket{0}\ket{\sigma}}^{\bC\rarr\sH\hat{\sA}\hat{\sB}}\big\|_\diamond \leq 3\sqrt{\delta_1},
\end{align}
where $\cP_{\ket{0}\ket{\sigma}}^{\bC\rarr\sH\hat{\sA}\hat{\sB}} = \ketbra{0^2}{0^2}^\sH \otimes \ketbra{\sigma}{\sigma}^{\hat{\sA}\hat{\sB}}$.
Note that, in the purified sample access model, we can easily implement the channel $\cP_{\ket{\sigma}}^{\bC\rarr\hat{\sA}\hat{\sB}}$, since it only involves assigning one of the given copies of $\ket{\sigma}^{\hat{\sA}\hat{\sB}}$.

Finally, we evaluate the total error of the whole algorithm using the triangle inequality. 
We can evaluate the error as
\begin{align}
    \big\|\cJ^{\sB\sH\hat{\sA}\hat{\sB}}\circ\cP_{\ket{0}\ket{\sigma}}^{\bC\rarr\sH\hat{\sA}\hat{\sB}} - \cU_{\rm ideal}^{\sB\sH\hat{\sA}\hat{\sB}}\circ\cP_{\ket{0}\ket{\sigma}}^{\bC\rarr\sH\hat{\sA}\hat{\sB}}\big\|_\diamond  
    &\leq \big\|\cJ^{\sB\sH\hat{\sA}\hat{\sB}}\circ\cP_{\ket{0}\ket{\sigma}}^{\bC\rarr\sH\hat{\sA}\hat{\sB}} - \cU_1^{\sB\sH\hat{\sA}\hat{\sB}}\circ\cP_{\ket{0}\ket{\sigma}}^{\bC\rarr\sH\hat{\sA}\hat{\sB}}\big\|_\diamond     \notag\\
    &\hspace{3pc}+ \big\|\cU_1^{\sB\sH\hat{\sA}\hat{\sB}}\circ\cP_{\ket{0}\ket{\sigma}}^{\bC\rarr\sH\hat{\sA}\hat{\sB}} - \cU_{\rm ideal}^{\sB\sH\hat{\sA}\hat{\sB}}\circ\cP_{\ket{0}\ket{\sigma}}^{\bC\rarr\sH\hat{\sA}\hat{\sB}}\big\|_\diamond \\
    &\leq \big\|\cJ^{\sB\sH\hat{\sA}\hat{\sB}} - \cU_1^{\sB\sH\hat{\sA}\hat{\sB}}\big\|_\diamond \big\|\cP_{\ket{0}\ket{\sigma}}^{\bC\rarr\sH\hat{\sA}\hat{\sB}}\big\|_\diamond    \notag\\
    &\hspace{3pc}+ \big\|\cU_1^{\sB\sH\hat{\sA}\hat{\sB}}\circ\cP_{\ket{0}\ket{\sigma}}^{\bC\rarr\sH\hat{\sA}\hat{\sB}} - \cU_{\rm ideal}^{\sB\sH\hat{\sA}\hat{\sB}}\circ\cP_{\ket{0}\ket{\sigma}}^{\bC\rarr\sH\hat{\sA}\hat{\sB}}\big\|_\diamond.
\end{align}
Thus, as $\big\|\cP_{\ket{0}\ket{\sigma}}^{\bC\rarr\sH\hat{\sA}\hat{\sB}}\big\|_\diamond = 1$, and from Eqs.~\eqref{inteq:16} and~\eqref{inteq:25}, we obtain that, for $\delta_2 \in (0, 1)$ and $\delta_1 \in (0, 1/2)$,
\begin{align}
\label{inteq:103}
    \f{1}{2} \big\|\cJ^{\sB\sH\hat{\sA}\hat{\sB}}\circ\cP_{\ket{0}\ket{\sigma}}^{\bC\rarr\sH\hat{\sA}\hat{\sB}} - \cU_{\rm ideal}^{\sB\sH\hat{\sA}\hat{\sB}}\circ\cP_{\ket{0}\ket{\sigma}}^{\bC\rarr\sH\hat{\sA}\hat{\sB}}\big\|_\diamond
    \leq 3\sqrt{\delta_1} + u\delta_2.
\end{align}
By rescaling these parameters, such as $\delta_1 = (\delta/6)^2$ and $\delta_2 = \delta/(2u)$, the overall error is bounded by $\delta$.

The number of samples $w_\diamond = um$ of $\ket{\rho}^{\hat{\sA}\hat{\sB}}$ and $\ket{\sigma}^{\hat{\sA}\hat{\sB}}$ is given by 
\begin{align}
    w_\diamond 
    &= \cO\Big(\f{1}{\delta_2 \beta_\diamond}\log{\Big(\f{1}{\delta_1}\Big)}\Big) \\
    &= \cO\Big(\f{1}{\delta s_{\rm min}^2}\Big(\log{\Big(\f{1}{\delta}\Big)}\Big)^2\Big),
\end{align}
where $u$ is the minimum odd integers satisfying Eq.~\eqref{eq:u} and $m$ is given by Eq.~\eqref{eq:m}, while we set $\beta = \beta_\diamond = 2 s_{\rm min}/\pi$.

The quantum circuit in \Cref{fig:upsilon circuit} involves a constant number of one- and two-qubit gates, as well as a controlled-swap operation acting on $\log{(d_\sA d_\sB)}$ qubits, which can be implemented using $\cO\big(\log{(d_\sA d_\sB)}\big)$ gates. The entire algorithm consists of $\cO(w_\diamond)$ applications of this circuit.
Hence, the quantum circuit of the algorithm involves $\cO\big(w\log{(d_\sA d_\sB)}\big)$ one- and two-qubit gates.
Note that, although it might be expected that $\cO\big(w_\diamond\log{(d_\sA d_\sB)}\big)$ auxiliary qubits are required, in fact, $\cO\big(\log{(d_\sA d_\sB)}\big)$ qubits suffice at any time, since this algorithm is carried out sequentially.
This qubits includes those for storing the samples of $\ket{\rho}^{\hat{\sA}\hat{\sB}}$ and $\ket{\sigma}^{\hat{\sA}\hat{\sB}}$, and the constant number of additional qubits.


\subsubsection{Evaluation in the fidelity difference}
\label{sec:uhlfidpurifsamp}

When we evaluate the error using the fidelity difference, 
\begin{align}
    \rF(\rho^\sA, \sigma^\sA) - \rF(\cT^\sB(\ketbra{\rho}{\rho}^{\sA\sB}), \ket{\sigma}^{\sA\sB}),
\end{align}
the sample complexity of the Uhlmann transformation could be smaller than $\cO\Big(\big(\log{(1/\delta)}\big)^2/(\delta s_{\rm min}^2)\Big)$ derived in the previous section.
To show this, we modify only the part of step~3 corresponding to the QSVT with the sign function.

First, we consider the difference between 
\begin{align}
    \sqrt{\rF} 
    = \sqrt{\rF}(\rho^\sA, \sigma^\sA)
    = \sqrt{\rF}(V^\sB\ket{\rho}^{\sA\sB}, \ket{\sigma}^{\sA\sB}), \ \ \ \textrm{and} \ \ \ 
    \sqrt{\rF'} 
    = \sqrt{\rF}(U_1^{\sB\sH\hat{\sA}\hat{\sB}}\ket{\rho}^{\sA\sB}\ket{0^2}^\sH\ket{\sigma}^{\hat{\sA}\hat{\sB}}, \ket{\sigma}^{\sA\sB}\ket{0^2}^\sH\ket{\rho}^{\hat{\sA}\hat{\sB}}).
\end{align}
Remember that $V^\sB$ is the Uhlmann partial isometry, $U_1^{\sB\sH\hat{\sA}\hat{\sB}}$ is a block-encoding unitary of $\ketbra{\rho}{\sigma}^{\hat{\sA}\hat{\sB}} \otimes P_\sgn^{(\rm SV)}\big(\sin^{(\rm SV)}(M^\sB)\big)$ (see Eq.~\eqref{inteq:61}), and $M^\sB = \tr_{\sA'}[\ketbra{\sigma}{\rho}^{\sA'\sB}]$.

As the similar calculation in \Cref{sec:uhlfidquery}, we see that
\begin{align}
\label{inteq:120}
    \big|\sqrt{\rF} - \sqrt{\rF'}\big|
    &= \big| |\bra{\sigma}V^\sB\ket{\rho}^{\sA\sB}| - |\bra{\sigma}P_\sgn^{(\rm SV)}\big(\sin^{(\rm SV)}(M)^\sB\big)\ket{\rho}^{\sA\sB}| \big| \\
    &\leq \sum_k \big|1 - P_\sgn\big(\sin(s_k)\big)\big|s_k \notag\\
    \label{inteq:40}
    &= \sum_{k\in I_\beta} \big|1 - P_\sgn\big(\sin(s_k)\big)\big|s_k  + \sum_{k\in \bar{I}_\beta} \big|1 - P_\sgn\big(\sin(s_k)\big)\big|s_k \\
    &\leq \delta_1 + 2 \sum_{k\in \bar{I}_\beta} s_k \\
    &\leq \delta_1 +  \pi\beta r,
\end{align}
where $I_\beta = \{k \in \bN; \beta \leq \sin(s_k)\}$, $\bar{I}_\beta = \{k \in \bN; \beta > \sin(s_k)\}$, and $r$ is the rank of $\sqrt{\sigma^\sA}\sqrt{\rho^\sA}$.
In the last inequality, we use $s_k \leq \f{\pi}{2}\sin(s_k) < \f{\pi}{2}\beta$ for $k \in \bar{I}_\beta$ and $\#\bar{I}_\beta = \sum_{k \in \bar{I}_\beta} 1 \leq r$.

Thus, setting $\beta$ as $1/\beta = \pi r/\delta_1$ ensures $\big|\sqrt{\rF} - \sqrt{\rF'}\big| \leq 2\delta_1$. 
On the other hand, if we choose $1/\beta = \pi/(2s_{\rm min})$, $\bar{I}_\beta$ is empty, which leads to $\big|\sqrt{\rF} - \sqrt{\rF'}\big|\leq \delta_1$.
Hence, $\beta$ is chosen as $\beta = \beta_\tF$, where $1/\beta_\tF = \pi\min\{1/(2s_{\rm min}), r/\delta_1\}$, 
to ensure that 
\begin{align}
\label{inteq:51}
    \big|\sqrt{\rF} - \sqrt{\rF'}| \leq 2\delta_1.
\end{align}
Using that $|x-y|\leq 2|\sqrt{x}-\sqrt{y}|$ for $0\leq x, y \leq 1$, we obtain $\big|\rF - \rF'| \leq 4\delta_1$.

Next, we define $\rF''$ by 
\begin{align}
    \rF'' = \rF\big(\cJ^{\sB\sH\hat{\sA}\hat{\sB}}\circ\cP_{\ket{0}\ket{\sigma}}^{\bC\rarr\sH\hat{\sA}\hat{\sB}}(\ketbra{\rho}{\rho}^{\sA\sB}), \ket{\sigma}^{\sA\sB}\ket{0^2}^\sH\ket{\rho}^{\hat{\sA}\hat{\sB}}\big).
\end{align} 
Then, the difference between $\rF''$ and $\rF$ can be calculated as
\begin{align}
    \big|\rF'' - \rF\big| 
    &\leq \big|\rF'' - \rF'\big| + \big|\rF' - \rF\big| \\
    &\leq\big|\rF'' - \rF'\big| + 4\delta_1 \\
    &\leq \f{1}{2}\big\|\cJ^{\sB\sH\hat{\sA}\hat{\sB}} - \cU_1^{\sB\sH\hat{\sA}\hat{\sB}} \big\|_\diamond + 4\delta_1 \\
    \label{inteq:106}
    &\leq u\delta_2 + 4\delta_1,
\end{align}
where $u$ is given by the minimum odd integer satisfying $u \geq \big\lceil\f{8e}{\beta_\tF}\log{(2/\delta_1)}\big\rceil$.
In the third inequality, we used the relation between the diamond norm and the fidelity Eq.~\eqref{eq:fidelity and diamond} and $\|\cP_{\ket{0}\ket{\sigma}}^{\bC\rarr\sH\hat{\sA}\hat{\sB}}\|_\diamond = 1$.
We also used Eq.~\eqref{inteq:16} in the last inequality.

By rescaling these parameters, such as $\delta_1 = \delta/8$ and $\delta_2 = \delta/(2u)$, and defining $\cT^\sB$ by $\cT^\sB = \tr_{\sH\hat{\sA}\hat{\sB}}\circ\cJ^{\sB\sH\hat{\sA}\hat{\sB}}\circ\cP_{\ket{0}\ket{\sigma}}^{\bC\rarr\sH\hat{\sA}\hat{\sB}}$, we obtain that
\begin{align}
    \rF\big(\cT^\sB(\ketbra{\rho}{\rho}^{\sA\sB}), \ket{\sigma}^{\sA\sB}\big)
    \geq \rF(\rho^\sA, \sigma^\sA) - \delta,
\end{align}
which follows from the fact that $\rF(\rho^\sA, \sigma^\sA) \geq \rF\big(\cT^\sB(\ketbra{\rho}{\rho}^{\sA\sB}), \ket{\sigma}^{\sA\sB}\big)$.
The number of samples $w_\tF = um$ of the states $\ket{\rho}^{\hat{\sA}\hat{\sB}}$ and $\ket{\sigma}^{\hat{\sA}\hat{\sB}}$ is given by
\begin{align}
    w_\tF &= \cO\Big(\f{1}{\delta_2\beta_\tF}\log{\Big(\f{1}{\delta_1}\Big)}\Big) \\
    &= \cO\Big(\f{1}{\delta}\min\Big\{\f{1}{s_{\rm min}^2}, \f{r^2}{\delta^2}\Big\}\Big(\log{\Big(\f{1}{\delta}\Big)}\Big)^2\Big).
\end{align}
The circuit of the algorithm consists of
$\cO\big(w_\tF \log{(d_\sA d_\sB)}\big)$ one- and two-qubit gates, and $\cO(\log{(d_\sA d_\sB)})$ qubits suffice at any one time.
We completed the proof of Theorem~\ref{thm:algorithm Uhlmann}.


\section{Uhlmann transformation algorithm in the mixed sample access model}
\label{sec:mixed sample}

We provide a quantum sample algorithm for the Uhlmann transformation in the mixed sample access model.
The key idea for construction is to use a property of the maximally entangled state $\ket{\Phi}^{\sA\sB} = \f{1}{\sqrt{d_\sA}}\sum_{i=1}^{d_\sA}\ket{i}^\sA\ket{i}^{\sB}$: for any state $\omega^\sA$,
\begin{equation}
(\sqrt{\omega^\sA} \otimes \bI^{\sB})\ket{\Phi}^{\sA\sB}
= \f{1}{\sqrt{d_\sA}}\ket{\omega_{\rm c}}^{\sA\sB},
\end{equation}
where $\ket{\omega_{\rm c}}^{\sA\sB}$ is the canonical purified state of $\omega^\sA$. 
Note that $\tr_\sB[\ketbra{\omega_{\rm c}}{\omega_{\rm c}}^{\sA\sB}] = \omega^\sA$.
It is natural to assume that $d_\sA = d_\sB$ in this model.
Our approach is to first prepare canonical purified states $\ket{\rho_{\rm c}}^{\sA\sB}$ and $\ket{\sigma_{\rm c}}^{\sA\sB}$ from multiple copies of given states $\rho^{\hat{\sA}}$ and $\sigma^{\hat{\sA}}$, and then to perform the Uhlmann transformation algorithm for the purified sample access model described in the previous section.  
This is to find the Uhlmann transformation from $\ket{\rho_{\rm c}}^{\sA\sB}$ to $\ket{\sigma_{\rm c}}^{\sA\sB}$.

The statement is as follows, where the system $\sH$ is a two-qubit system. 
Remember that $s_{\rm min}$ and $r$ are the minimum non-zero singular value and the rank of $\sqrt{\sigma^\sA}\sqrt{\rho^\sA}$, respectively. Also, $\rho_{\rm min}$ and $\sigma_{\rm min}$ are the minimum non-zero eigenvalues of $\rho^\sA$ and $\sigma^\sA$, respectively.

\begin{theorem}[Uhlmann transformation algorithm in the mixed sample access model]
\label{thm:Uhlmann mixed state sample}
Let $\delta \in (0, 1)$ and $\chi \in \{\diamond, \tF\}$. 
Then, the quantum sample algorithm $\mathtt{UhlmannMixedSample}$ given by \Cref{alg:Uhl mixed sample} satisfies the following.

A quantum channel $\cJ_\diamond^{\sB\sH\hat{\sA}\hat{\sB}}$ given by $\cJ_\diamond^{\sB\sH\hat{\sA}\hat{\sB}}\circ\cP_{\til{\sigma}_{\rm c}}^{\bC\rarr\hat{\sA}\hat{\sB}}  
    =\mathtt{UhlmannMixedSample}(\rho^{\hat{\sA}}, \sigma^{\hat{\sA}}; \delta, \diamond)$,
with some state $\til{\sigma}_{\rm c}^{\hat{\sA}\hat{\sB}}$, satisfies that
\begin{equation}
    \f{1}{2}\big\|\cJ_\diamond^{\sB\sH\hat{\sA}\hat{\sB}}\circ\cP_{\ket{0}}^{\bC\rarr\sH}\circ\cP_{\til{\sigma}_{\rm c}}^{\bC\rarr\hat{\sA}\hat{\sB}} - \cU_{\rm ideal}^{\sB\sH\hat{\sA}\hat{\sB}}\circ\cP_{\ket{0}\ket{\sigma_{\rm c}}}^{\bC\rarr\sH\hat{\sA}\hat{\sB}}\big\|_\diamond \leq \delta,
\end{equation}
where $\til{U}_{\rm ideal}^{\sB\sH\hat{\sA}\hat{\sB}}$ is an exact block-encoding unitary of $\ketbra{\rho_{\rm c}}{\sigma_{\rm c}}^{\hat{\sA}\hat{\sB}} \otimes V^\sB$ and $V^\sB$ is the Uhlmann partial isometry.
The algorithm uses
\begin{align}
    \zeta_\diamond = \til{\cO}\Big(\f{d_\sA}{\delta^2 s_{\rm min}^3} \min\Big\{\f{1}{\rho_{\rm min}^2} + \f{1}{\sigma_{\rm min}^2}, \f{d_\sA^2}{\delta^4 s_{\rm min}^4}\Big\}\Big),
\end{align}
samples of $\rho^{\hat{\sA}}$ and $\sigma^{\hat{\sA}}$.

A quantum channel $\cJ_\tF^{\sB\sH\hat{\sA}\hat{\sB}}$ given by $\cJ_\tF^{\sB\sH\hat{\sA}\hat{\sB}}\circ\cP_{\til{\sigma}_{\rm c}}^{\bC\rarr\hat{\sA}\hat{\sB}} 
    = \mathtt{UhlmannMixedSample}(\rho^{\hat{\sA}}, \sigma^{\hat{\sA}}; \delta, \tF)$,
with some state $\til{\sigma}_{\rm c}^{\hat{\sA}\hat{\sB}}$, satisfies that
\begin{align}
    \rF\big(\cT^\sB(\ketbra{\rho_{\rm c}}{\rho_{\rm c}}^{\sA\sB}), \ket{\sigma_{\rm c}}^{\sA\sB}\big) \geq \rF(\rho^\sA, \sigma^\sA) - \delta,
\end{align}
where $\cT^{\sB}= \tr_{\sH\hat{\sA}\hat{\sB}}\circ\cJ_\tF^{\sB\sH\hat{\sA}\hat{\sB}}\circ\cP_{\ket{0}}^{\bC\rarr\sH}\circ\cP_{\til{\sigma}_{\rm c}}^{\bC\rarr\hat{\sA}\hat{\sB}}$.
The algorithm uses 
\begin{align}
    \zeta_{\tF} = \til{\cO}\Big(\f{d_\sA}{\delta^2\beta_\tF^3}\min\Big\{\f{1}{\rho_{\rm min}^2} + \f{1}{\sigma_{\rm min}^2}, \f{d_\sA^2}{\beta_\tF^4\delta^4}\Big\}\Big),
\end{align}
samples of $\rho^{\hat{\sA}}$ and $\sigma^{\hat{\sA}}$, where $\beta_\tF = \f{2}{\pi}\max\{s_{\rm min}, \delta/(8r)\}$.

In both cases, the quantum circuit for implementing the algorithm consists of $\cO\big(\zeta_\chi \log{d_\sA}\big)$ one- and two-qubit gates, and $\cO\big(\log{d_\sA}\big)$ qubits suffice at any one time. 
\end{theorem}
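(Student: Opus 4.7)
The plan is to reduce this theorem to \Cref{thm:algorithm Uhlmann} by first preparing canonical purifications from the given mixed state samples, and then invoking the purified sample access algorithm on those approximate purifications. Concretely, I would use the canonical purification subroutine developed in \Cref{sec:appendix B} to convert batches of copies of $\rho^{\hat{\sA}}$ and $\sigma^{\hat{\sA}}$ into (mixed) states $\til{\rho}_{\rm c}^{\hat{\sA}\hat{\sB}}$ and $\til{\sigma}_{\rm c}^{\hat{\sA}\hat{\sB}}$ that approximate the pure canonical purifications $\ket{\rho_{\rm c}}^{\hat{\sA}\hat{\sB}}$ and $\ket{\sigma_{\rm c}}^{\hat{\sA}\hat{\sB}}$ in trace distance, and then feed these into $\mathtt{UhlmannPurifiedSample}$ in place of exact purified samples. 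The output channel $\cJ^{\sB\sH\hat{\sA}\hat{\sB}}$ is exactly the channel produced by the purified sample algorithm when fed approximate purifications; one copy of $\til{\sigma}_{\rm c}^{\hat{\sA}\hat{\sB}}$ plays the role of the ancillary $\ket{\sigma}$-register appearing in that algorithm, while the remaining copies of both $\til{\rho}_{\rm c}$ and $\til{\sigma}_{\rm c}$ are consumed by the internal density matrix exponentiation step.

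The error budget splits into two parts. For the purified-sample step, taking $\beta = \beta_\chi$ and running $\mathtt{UhlmannPurifiedSample}$ with target accuracy $\delta/2$ uses $w_\chi = \til{\cO}\big(1/(\delta \beta_\chi^2)\big)$ copies of the purified states, by \Cref{thm:algorithm Uhlmann}. For the purification step, I would request each purification to accuracy $\epsilon = \cO(\delta/w_\chi)$, so that by subadditivity of the diamond norm (or by a hybrid argument across the $w_\chi$ density matrix exponentiation applications) the total additional error from using approximate purifications is bounded by $\delta/2$. For the fidelity-difference version, I would additionally use the property $\rF(\ket{\rho_{\rm c}},\ket{\sigma_{\rm c}}) = \rF(\rho^\sA,\sigma^\sA)$ combined with the Fuchs--van de Graaf inequalities~\eqref{eq:fuchs van de graaf} and Eq.~\eqref{eq:fidelity and diamond} to convert the diamond-norm bound on the purification error into a fidelity bound. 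The $\min\{\cdot,\cdot\}$ in the stated complexity \eqref{inteq:47} reflects two distinct purification routes provided by \Cref{sec:appendix B}: one that scales with $1/\rho_{\min}^2 + 1/\sigma_{\min}^2$ (via a QSVT-based block encoding of $\sqrt{\omega}$ whose condition number is governed by $\omega_{\min}$), and one that scales with $d_\sA^2/(\delta^4 \beta^4)$ (via a more direct density-matrix-exponentiation-based construction). Multiplying the per-purification cost by $w_\chi$, and absorbing the $\sqrt{d_\sA}$ factor from the normalization of $\ket{\Phi}^{\sA\sB}$, produces the claimed $\zeta_\chi$. Gate and ancilla counts follow from the fact that each elementary step (purification call, partial swap, QSVT unitary) uses $\cO(\log d_\sA)$ ancilla qubits and $\cO(\log d_\sA)$ gates, and the overall algorithm is sequential.

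The main technical obstacle will be the careful propagation of the purification error through $\mathtt{UhlmannPurifiedSample}$. Because that algorithm builds a unitary $e^{iK}$ via density matrix exponentiation from copies of $\Upsilon$, and because the QSVT sign-function amplification applies that unitary $u$ times, any per-sample deviation $\epsilon$ from the ideal pure purification accumulates as roughly $u\epsilon$ in diamond norm; this is analogous to the accumulation bound \eqref{inteq:103} in the purified sample proof, but now with an extra layer of error coming from $\tr_\sB[\til{\rho}_{\rm c}^{\hat{\sA}\hat{\sB}}] \neq \rho^\sA$ exactly. A clean way to handle this is to argue at the level of the generator $K$: replacing the exact $\Upsilon$ by its approximation produces an $\epsilon$-perturbation of $K$ in trace norm, which perturbs $e^{iK}$ by $\cO(\epsilon t)$ in operator norm, and this can then be fed through the existing error analysis of \Cref{thm:algorithm Uhlmann} without modification. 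Balancing $\epsilon$, $w_\chi$, and the purification cost against $\delta$ yields the optimal trade-off stated in the theorem.
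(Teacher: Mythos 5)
Your proposal follows essentially the same route as the paper: run $\mathtt{CanonicalPurification}$ on the mixed-state samples, feed the resulting approximate purifications into $\mathtt{UhlmannPurifiedSample}$, and control the propagation of the purification error; the generator-level perturbation argument you describe in your last paragraph is exactly the paper's Lemma~\ref{lem:state error not accumulate}, which gives $\|e^{i\til{K}}-e^{iK}\|_\infty \le \|\til{K}-K\|_\infty \le 4\delta_1$ independently of the number $m$ of density-matrix-exponentiation samples.

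One caution: of the two error budgets you offer, only the second one works. The first --- requesting each purification to accuracy $\epsilon = \cO(\delta/w_\chi)$ and invoking subadditivity across all $w_\chi$ DME applications --- is too pessimistic: since $w_\chi = um$ with $m = \cO(1/\delta)$, this would force $\epsilon = \cO(\delta^2\beta)$ rather than the $\cO(\delta\beta)$ that suffices, and because the per-copy purification cost $q$ from \Cref{prop:canonical purification} scales at least as $1/\epsilon$ (and as $1/\epsilon^5$ in its second branch), the resulting $\zeta_\chi = w_\chi q$ would exceed the stated bound. The point of the generator-level argument is precisely that the purification error enters \emph{once} per QSVT repetition (through $\til{K}$ and through the ancillary $\cP_{\til{\sigma}_{\rm c}}$ register), so $\epsilon$ need only scale as $\delta/u$, not $\delta/(um)$; the paper's choice $\delta_1 = \delta/(2(4u+1))$ reflects this. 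A minor attribution point: the $\min\{\cdot,\cdot\}$ does not come from two distinct purification routes but from a single pipeline, namely the two admissible polynomial degrees in the QSVT square-root step of \Cref{prop:QSVT sqrt} (one governed by $\omega_{\min}$, one by the target accuracy); this does not affect the final complexity if you cite \Cref{prop:canonical purification} as a black box.
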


\begin{algorithm}[h]
\caption{Uhlmann transformation algorithm in the mixed sample access model \\ \parbox{\linewidth}{\centering $\mathtt{UhlmannMixedSample}(\rho^{\hat{\sA}}, \sigma^{\hat{\sA}}; \delta, \chi)$ (In Theorem~\ref{thm:Uhlmann mixed state sample})}}
\label{alg:Uhl mixed sample}
\SetKwInput{KwInput}{Input}
\SetKwInput{KwOutput}{Output}
\SetKwInput{KwParameters}{Parameters}
\SetKwComment{Comment}{$\triangleright$\ }{}
\SetCommentSty{textnormal}
\SetKwProg{Fn}{Subroutine}{}{end}
\SetKwFunction{UPS}{UhlmannPurifiedSample}

\SetAlgoNoEnd
\SetAlgoNoLine
\KwInput{Two quantum states $\rho^{\hat{\sA}}$ and $\sigma^{\hat{\sA}}$.}
\KwParameters{$\delta \in (0, 1)$ and $\chi \in \{\diamond, \tF\}$.}
\KwOutput{Quantum channel $\cJ^{\sB\sH\hat{\sA}\hat{\sB}}\circ\cP_{\til{\sigma}_{\rm c}}^{\bC\rarr\hat{\sA}\hat{\sB}}$.}
\SetAlgoLined

\If{$\chi = \diamond$}{
  Set $\delta_3 \gets (\delta/12)^2$ and $\beta \gets 2s_{\rm min}/\pi$. \\
}
\ElseIf{$\chi = \tF$}{
  Set $\delta_3 \gets \delta/16$ and $\beta \gets \f{1}{\pi}\max\{2s_{\rm min}, \delta_3/r\}$. \\
}
Set $u$ to be the minimum odd integer satisfying $u \geq \big\lceil\f{8e}{\beta}\log{(2/\delta_3)}\big\rceil$. \\
Set $\delta_1 \gets \delta/\big(2(4u+1)\big)$. \\
Set $\til{\rho}_{\rm c}^{\hat{\sA}\hat{\sB}} \gets \mathtt{CanonicalPurification}(\rho^{\hat{\sA}}; \delta_1)$ and $\til{\sigma}_{\rm c}^{\hat{\sA}\hat{\sB}} \gets \mathtt{CanonicalPurification}(\sigma^{\hat{\sA}}; \delta_1)$ (Algorithm~\ref{alg:canonical purification}). \\
Set $\cJ^{\sB\sH\hat{\sA}\hat{\sB}} \gets \mathtt{UhlmannPurifiedSample}(\til{\rho}_{\rm c}^{\hat{\sA}\hat{\sB}}, \til{\sigma}_{\rm c}^{\hat{\sA}\hat{\sB}}; \delta/2, \chi)$ (Algorithm~\ref{alg:Uhl purif sample}). \\
Set $\cP_{\til{\sigma}_{\rm c}}^{\bC\rarr\hat{\sA}\hat{\sB}} \gets \til{\sigma}_{\rm c}^{\hat{\sA}\hat{\sB}}$. \\
Return $\cJ^{\sB\sH\hat{\sA}\hat{\sB}}\circ\cP_{\til{\sigma}_{\rm c}}^{\bC\rarr\hat{\sA}\hat{\sB}}$. 
\end{algorithm}

In \Cref{alg:Uhl mixed sample}, $\mathtt{CanonicalPurification}$ is a quantum sample algorithm that prepares the states $\til{\rho}_{\rm c}^{\hat{\sA}\hat{\sB}}$ and $\til{\sigma}_{\rm c}^{\hat{\sA}\hat{\sB}}$, which approximate the canonical purified state $\ket{\rho_{\rm c}}^{\hat{\sA}\hat{\sB}}$ and $\ket{\sigma_{\rm c}}^{\hat{\sA}\hat{\sB}}$, respectively. We will introduce and explain this algorithm in detail in the next section, \Cref{sec:appendix B}. 

Note that, although $\mathtt{UhlmannPurifiedSample}$ is originally designed for pure state inputs, in this case, we allow general mixed states as inputs (see \Cref{alg:Uhl mixed sample}).
This requires a careful error analysis, which we will discuss in the proof of Theorem~\ref{thm:Uhlmann mixed state sample} provided in \Cref{sec:proof eval sample mixed}.


\subsection{Canonical purification algorithm}
\label{sec:appendix B}

We provide a quantum sample algorithm for the canonical purification $\mathtt{CanonicalPurification}$, which outputs an approximation of the canonical purified state $\ket{\omega_{\rm c}}^{\sA\sB}$ from multiple copies of $\omega^\sA$. 
The details of $\mathtt{CanonicalPurification}$ are provided in \Cref{alg:canonical purification}, where $\sF$ and $\sG$ are five- and six-qubit systems, respectively.
A quantum sample algorithm $\mathtt{BlockEncSqrtState}$ is described in \Cref{prop:block enc sqrt state} below.
Regarding the performance of $\mathtt{CanonicalPurification}$, the following proposition holds.

\begin{theorem}[Quantum algorithm for the canonical purification]
\label{prop:canonical purification}
Let $\delta \in (0, 1)$, and let $\omega^\sA$ be a state whose minimum non-zero eigenvalue is $\omega_{\rm min}$. Then, a quantum sample algorithm $\mathtt{CannonicalPurification}(\omega^\sA; \delta)$ given by \Cref{alg:canonical purification} outputs a state $\til{\omega}_{\rm c}^{\sA\sB}$ such that
\begin{align}
    \f{1}{2}\|\til{\omega}_{\rm c}^{\sA\sB} - \ketbra{\omega_{\rm c}}{\omega_{\rm c}}^{\sA\sB}\|_1 \leq \delta,
\end{align}
using $f = \til{\cO}\Big(\f{d_\sA}{\delta}\min\Big\{\f{1}{\omega_{\rm min}^2}, \f{d_\sA^2}{\delta^4}\Big\}\Big)$ samples of $\omega^\sA$.
The quantum circuit for the algorithm is constructed by $\cO(f\log d_\sA)$ one- and two-qubit gates, and at any one time, $\cO(\log{d_\sA})$ qubits suffice.
\end{theorem}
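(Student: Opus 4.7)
\emph{Proof proposal.} I would exploit the identity $\ket{\omega_{\rm c}}^{\sA\sB}=\sqrt{d_\sA}\,(\sqrt{\omega^\sA}\otimes\bI^\sB)\ket{\Phi}^{\sA\sB}$, which reduces the canonical-purification task to three pieces: (i) a sample-efficient block-encoding of $\sqrt{\omega^\sA}$, (ii) preparation of the maximally entangled state $\ket{\Phi}^{\sA\sB}$, and (iii) post-selection on the block-encoding ancilla being $\ket{0^a}$. In the ideal case, applying a $(1,a,0)$-block-encoding of $\sqrt{\omega^\sA}$ to $\ket{0^a}\otimes\ket{\Phi}^{\sA\sB}$ and measuring the ancilla yields $\ket{\omega_{\rm c}}^{\sA\sB}$ with probability $\|(\sqrt{\omega^\sA}\otimes\bI^\sB)\ket{\Phi}^{\sA\sB}\|^2=1/d_\sA$, which is the origin of the linear-in-$d_\sA$ factor in the bound.

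For the block-encoding step I would invoke the subroutine $\mathtt{BlockEncSqrtState}$, realized by chaining two primitives from \Cref{sec:previous key subroutine}: density matrix exponentiation (\Cref{prevthm:cntrl-DME}) converts samples of $\omega^\sA$ into an approximation of $e^{it\omega^\sA}$ and, via qubitization, into a block-encoding of $\omega^\sA$; the QSVT (\Cref{prop:general QSVT}) then applies an odd polynomial approximation of $\sqrt{x}$ to lift this to a block-encoding of $\sqrt{\omega^\sA}$. Because $\sqrt{x}$ is non-smooth at $0$, two complementary polynomials are available---one of degree $\til\cO(\omega_{\min}^{-1/2})$ accurate on $[\omega_{\min},1]$, and one of dimension/accuracy-dependent degree uniformly accurate on $[0,1]$---and propagating each through the per-call DME sample count yields the two branches of the $\min$ in the sample complexity.

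Next, I would prepare $\ket{\Phi}^{\sA\sB}$ on fresh registers using $\cO(\log d_\sA)$ Hadamard and CNOT gates, apply the block-encoding on the $\sA$-side with ancilla in $\ket{0^a}$, and measure the ancilla. Iterating with fresh samples each round, and aborting after $\cO(d_\sA\log(1/\delta))$ unsuccessful rounds, drives the overall failure probability below $\delta/2$, which is absorbed into the final $\delta$ budget. The output is simply the first post-selected state, so errors do not accumulate across trials; only the per-trial block-encoding error matters.

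The main obstacle is the error accounting under approximate post-selection, which inflates small block-encoding errors because the ``good'' amplitude is only $1/\sqrt{d_\sA}$. Using the Euclidean/trace-norm relation \Cref{eq:relation of Euclidean and trace norm}, an $\epsilon$-accurate block-encoding yields a post-selected state that is $\cO(\epsilon/|\alpha|)=\cO(\epsilon\sqrt{d_\sA})$-close to $\ket{\omega_{\rm c}}^{\sA\sB}$ in trace distance. Setting $\epsilon=\cO(\delta/\sqrt{d_\sA})$ fixes the per-call block-encoding accuracy, and multiplying by the per-call sample cost of $\mathtt{BlockEncSqrtState}$---with the two polynomial options above---together with the $\cO(d_\sA)$ trial count produces the stated $q=\til\cO\!\bigl(\tfrac{d_\sA}{\delta}\min\{\omega_{\min}^{-2},\,d_\sA^2/\delta^4\}\bigr)$. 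Gate and ancilla counts follow by standard bookkeeping, since each round is implemented sequentially on a constant number of $\log d_\sA$-qubit registers with recycled ancillas. A minor subtlety is that to select the first branch one needs a lower bound on $\omega_{\min}$; in its absence the algorithm defaults to the dimension/accuracy-based branch, which is always applicable.
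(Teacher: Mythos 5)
Your decomposition (i)--(ii) matches the paper's: the paper also builds $\mathtt{BlockEncSqrtState}$ from density matrix exponentiation plus the QSVT for $\sqrt{x}$ and then observes that $U_{\rm ideal}^{\sA\sF}U_\Phi^{\sA\sB}$ block-encodes $\tfrac{1}{2\sqrt{2d_\sA}}\ket{\omega_{\rm c}}^{\sA\sB}$. The gap is in step (iii). You extract $\ket{\omega_{\rm c}}^{\sA\sB}$ by measuring the ancilla and repeating until success, which costs $\Theta(d_\sA)$ trials, each consuming a fresh block-encoding built from $h=\til\cO\bigl(\tfrac{1}{\epsilon}\min\{\omega_{\min}^{-2},\epsilon^{-4}\}\bigr)$ samples with $\epsilon=\cO(\delta/\sqrt{d_\sA})$; this gives $q=\til\cO\bigl(\tfrac{d_\sA^{3/2}}{\delta}\min\{\omega_{\min}^{-2},d_\sA^{2}/\delta^{4}\}\bigr)$, a factor $\sqrt{d_\sA}$ worse than the claimed bound. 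The paper instead runs $\mathtt{QSVTSIGN}$ (Proposition~\ref{prop:FPAA general}) with threshold $\beta=1/(2\sqrt{2d_\sA})$ on the block-encoded vector, i.e., fixed-point amplitude amplification, which lifts the amplitude $1/(2\sqrt{2d_\sA})$ to (approximately) unity deterministically using only $u=\cO(\sqrt{d_\sA}\log(1/\delta))$ coherent repetitions of the block-encoding channel and its approximate inverse; with the per-call error rescaled to $\delta/(2u)$ this yields exactly $q=hu=\til\cO\bigl(\tfrac{d_\sA}{\delta}\min\{\omega_{\min}^{-2},d_\sA^{2}/\delta^{4}\}\bigr)$. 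Post-selection cannot recover this quadratic saving, so your route does not prove the stated sample complexity.

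A secondary problem: your $\cO(\epsilon\sqrt{d_\sA})$ error-inflation bound uses the Euclidean/trace-norm relation for \emph{pure} states, but the approximate block-encoding is a general channel (its output is mixed), so the post-selected error inflates like $\epsilon/p=\cO(\epsilon d_\sA)$ rather than $\epsilon/\sqrt{p}$; this would force $\epsilon=\cO(\delta/d_\sA)$ and degrade the count further. The paper sidesteps this entirely because $\mathtt{QSVTSIGN}$ involves no conditioning, and the residual discrepancy between the polynomial sign function and the exact one is controlled through Lemma~\ref{lem:FPAA diamond} in the diamond norm. Also note the degree of the square-root polynomial on $[\omega_{\min},1]$ used in Proposition~\ref{prop:QSVT sqrt} is $\cO(\omega_{\min}^{-1}\log(1/\delta))$, not $\til\cO(\omega_{\min}^{-1/2})$; the $\omega_{\min}^{-2}$ branch of the final bound comes from squaring this degree after rescaling the per-call DME error.
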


\begin{algorithm}[h]
\caption{Canonical purification algorithm \\ \parbox{\linewidth}{\centering $\mathtt{CanonicalPurification}(\omega^\sA; \delta)$ (In Theorem~\ref{prop:canonical purification})}}
\label{alg:canonical purification}
\SetKwInput{KwInput}{Input}
\SetKwInput{KwOutput}{Output}
\SetKwInput{KwParameters}{Parameters}
\SetKwComment{Comment}{$\triangleright$\ }{}
\SetCommentSty{textnormal}

\SetAlgoNoEnd
\SetAlgoNoLine
\KwInput{Quantum state $\omega^\sA$.}
\KwParameters{$\delta \in (0, 1)$.}
\KwOutput{Quantum state $\til{\omega}_{\rm c}^{\sA\sB}$.}
\SetAlgoLined

Set $\delta_1 \gets (\delta/6)^2$ and $\beta_\diamond \gets 1/(2\sqrt{2d_\sA})$. \\
Set $u$ to be the minimum odd integer satisfying $u \geq \big\lceil\f{8e}{\beta_\diamond}\log{(2/\delta_1)}\big\rceil$. \\

Set $\cF^{\sA\sF} \gets \mathtt{BlockEncSqrtState}(\omega^\sA; \delta_1/(2u))$ and $\cF_{\rm inv}^{\sA\sF} \gets \mathtt{BlockEncSqrtState}^\dag(\omega^\sA; \delta_1/(2u))$ (Proposition~\ref{prop:block enc sqrt state}). \\
Set $\cG^{\sA\sB\sF} \gets \cF^{\sA\sF}\circ\cU_{\Phi}^{\sA\sB}$ and $\cG_{\rm inv}^{\sA\sB\sF} \gets (\cU_{\Phi}^{\sA\sB})^\dag\circ\cF_{\rm inv}^{\sA\sF}$, where $U_{\Phi}^{\sA\sB}$ is a unitary preparing the maximally entangled state: $U_{\Phi}^{\sA\sB}\ket{0}^{\sA\sB} = \ket{\Phi}^{\sA\sB}$. \\
Set $\cL^{\sA\sB\sG} \gets \mathtt{QSVTSIGN}(\cG^{\sA\sB\sF}, \cG_{\rm inv}^{\sA\sB\sF}; \delta_1, \beta_\diamond)$ (Proposition~\ref{prop:FPAA general}). \\
Set $\til{\omega}_{\rm c}^{\sA\sB} \gets \tr_\sG\circ\cL^{\sA\sB\sG}\circ\cP_{\ket{0}}^{\bC\rarr\sG}(\ketbra{0}{0}^{\sA\sB})$. \\
Return $\til{\omega}_{\rm c}^{\sA\sB}$.
\end{algorithm}

Our approach begins by approximately constructing a unitary, which is a block-encoding of $\sqrt{\omega^\sA}$, using $\mathtt{BlockEncSqrtState}$.
We then apply the unitary to the state $\ket{0}^{\sM}\ket{\Phi}^{\sA\sB}$ and amplify the coefficient to unity by the QSVT with the sign function.

\begin{proposition}
\label{prop:block enc sqrt state}
Let $\delta \in (0, 1)$, and let $\omega$ be a state in a $d$-dimensional Hilbert space, whose minimum non-zero eigenvalue is $\omega_{\rm min}$. 
Then, there is a quantum sample algorithm $\mathtt{BlockEncSqrtState}(\omega; \delta)$ that outputs a channel $\cF$ such that $\f{1}{2}\|\cF - \cU\|_\diamond \leq \delta$, using $h = \til{\cO}\big(\f{1}{\delta}\min\big\{1/\omega_{\rm min}^2, 1/\delta^4\big\}\big)$ samples of $\omega$,
where $U$ is a $(1, 5, 0)$-block-encoding unitary of $\sqrt{\omega}/\big(2\sqrt{2}\big)$.
The quantum circuit for the algorithm consists of $\cO(h\log{d})$ one- and two-qubit gates, and $\cO(\log{d})$ qubits suffice at any one time.

Moreover, there is a quantum sample algorithm $\mathtt{BlockEncSqrtState}^\dag(\omega; \delta)$ that outputs a channel $\cF_{\rm inv}$ such that $\f{1}{2}\|\cF_{\rm inv} - \cU^\dag\|_\diamond \leq \delta$, using the same number of samples, gates, and qubits.
\end{proposition}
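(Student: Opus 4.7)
The strategy is a three-step pipeline: (i) use density matrix exponentiation to simulate a Hamiltonian proportional to $\omega$, (ii) convert this Hamiltonian-simulation primitive into an approximate block-encoding of $\omega$ (up to some constant normalization), and (iii) apply QSVT with a polynomial approximating $\sqrt{x}$ to obtain a block-encoding of $\sqrt{\omega}/(2\sqrt{2})$.

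For step (i), I would prepare the auxiliary state
\begin{align}
\Upsilon_\omega^{\sC\sA} = \ketbra{0}{0}^\sC \otimes \tfrac{1}{2}\omega^\sA + \ketbra{1}{1}^\sC \otimes \tfrac{1}{2d_\sA}\bI^\sA,
\end{align}
which is of the subnormalized form required by \Cref{prevthm:DME variant} and costs one copy of $\omega$ per preparation (the maximally mixed state needs only ancillas that are traced out). Applying $\mathtt{DMESUB}(\Upsilon_\omega;\,\delta',t)$ with a fixed constant $t$ then yields a channel approximating $e^{it(\omega-\bI/d_\sA)/2}$, which equals $e^{it\omega/2}$ up to a global phase. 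For step (ii), I would use a standard Hadamard-test/LCU-type circuit: with an additional ancilla in $\ket{+}$ and a controlled application of $e^{\pm i t \omega/2}$, the resulting block contains $\cos(t\omega/2)$ or $\sin(t\omega/2)$; choosing $t$ constant and composing with QSVT for $\arcsin(x)/t$ (or absorbing the sinusoidal nonlinearity directly into step (iii)) produces a $(1,a,0)$-block-encoding of $\omega/\alpha$ for a constant $\alpha$. Each use of this block-encoding consumes $\cO(1/\delta')$ samples of $\omega$ by \Cref{prevthm:DME variant}, where $\delta'$ is the per-call DME accuracy we can pay.

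For step (iii), I would invoke QSVT (in the style of \Cref{prop:general QSVT}) with an odd polynomial $P_{\sqrt{\cdot}}$ of degree $D$ approximating $\sqrt{x}$ on a suitable subinterval of $[0,1]$ with error $\delta/2$; the standard construction (e.g.\ from Gily\'en--Su--Low--Wiebe) achieves this with $D=\tilde{\cO}(1/\gamma)$ when the subinterval is $[\gamma,1]$. Taking $\gamma = \omega_{\min}/\alpha$ yields a block-encoding of $\sqrt{\omega}/(2\sqrt{2})$, where the factor $2\sqrt{2}$ absorbs the cumulative normalization from $\alpha$ and from the Hadamard-test ancillas (which is why five auxiliary qubits suffice). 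Alternatively, when $\omega_{\min}$ is very small, I would choose $\gamma=\Theta(\delta^2)$: on eigenvectors with eigenvalue in $[0,\gamma]$, the replacement $P_{\sqrt{\cdot}}(x)\leftrightarrow\sqrt{x}$ only incurs error $\cO(\delta)$ because $\sqrt{x}\leq\delta$ there. This second choice gives the $\tilde{\cO}(1/\delta^5)$ branch of the stated sample complexity, while the first choice gives the $\tilde{\cO}(1/(\delta\,\omega_{\min}^2))$ branch; the minimum of the two is what the proposition claims. The inverse variant $\mathtt{BlockEncSqrtState}^\dag$ is obtained simply by running the circuit in reverse and replacing $\mathtt{DMESUB}$ by $\mathtt{DMESUB}^\dag$.

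The main obstacle will be the careful error bookkeeping: each of the $D$ QSVT calls to the block-encoding of $\omega$ is itself only approximate (coming from DME), so the per-call error must be set to $\delta'=\Theta(\delta/D)$ by the subadditivity of the diamond norm used identically to Eq.~\eqref{inteq:103} in \Cref{sec:FPAA part}. This turns the $\cO(1/\delta')$ per-call DME cost into a total of $\cO(D^2/\delta)$ samples, and substituting the two values of $D$ chosen above reproduces the stated bound after saving the $\log$ factors in $\tilde{\cO}$. A second, milder subtlety is that the $\sqrt{x}$ polynomial is not an odd polynomial in the form used by \Cref{prop:FPAA general}; one handles this either by working with $x\,P_{1/\sqrt{\cdot}}(x)$ inside QSVT (an odd polynomial) or by the standard even-QSVT construction of \Cref{prop:general QSVT}. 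The gate count and ancilla bound then follow routinely, since each layer contributes $\cO(\log d_\sA)$ one- and two-qubit gates and the algorithm runs sequentially.
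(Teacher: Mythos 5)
Your proposal is correct and follows essentially the same route as the paper: first obtain an approximate block-encoding of $\omega$ (up to a constant normalization) from samples of $\omega$, then apply the square-root QSVT with degree $\til{\cO}\big(\min\{1/\omega_{\rm min}, 1/\delta^2\}\big)$, rescaling the per-call error by the degree so that the total sample count is $\til{\cO}(D^2/\delta)$. The only difference is that the paper invokes Theorem~\ref{thm:BEquantum state form sample} as a black box for the first step (and uses \Cref{prevthm:robustness BE unitary} to convert the operator-norm polynomial error into the stated diamond-norm guarantee), whereas you reconstruct that step by hand via $\mathtt{DMESUB}$ and a Hadamard-test/$\arcsin$ correction---which is precisely what the cited result does internally.
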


We first prove \Cref{prop:block enc sqrt state}.    
At a high level, this is achieved by the following two steps:
\begin{align}
\label{eq:flow of trans purif}
\omega^\sA \overset{1. \, \text{Theorem~\ref{thm:BEquantum state form sample}}}{\longrightarrow}
\begin{pmatrix}
    \omega^\sA & \hspace*{1.5mm} * \hspace*{3mm} \\
    * & \hspace{1.5mm} * \hspace*{3mm}
\end{pmatrix}
\overset{2. \, \text{Proposition~\ref{prop:QSVT sqrt}}}{\longrightarrow}
\begin{pmatrix}
    \sqrt{\omega^\sA} & \hspace*{1.5mm} * \hspace*{3mm} \\
    * & \hspace{1.5mm} * \hspace*{3mm}
\end{pmatrix}.
\end{align}

In step~1, we use quantum algorithms that block-encode the density matrix of a quantum state \cite{gilyen2022improvedfidelity, wang2024fastQAtracedist, wang2023SampletoQuary, wang2024TimeEfficientEntEstSamp}, which are obtained by combining the density matrix exponentiation and the QSVT. 

\begin{theorem}[Block-encoding of a quantum state from samples~{\cite[Lemma 2.21]{wang2023SampletoQuary}}]
\label{thm:BEquantum state form sample}
Let $\delta \in (0, 1)$, and let $\omega$ be a state in a $d$-dimensional Hilbert
space. 
Then, there are quantum sample algorithms that output channels $\cE$ and $\cE_{\rm inv}$, respectively, which satisfy $\f{1}{2}\|\cE - \cU\|_\diamond \leq \delta$ and $\f{1}{2}\|\cE_{\rm inv} - \cU^\dag\|_\diamond \leq \delta$, where $U$ is a $(1, 4, 0)$-block-encoding unitary of $\omega/2$.
Both algorithms use $y$ samples of $\omega$, where $y = \cO\big(\f{1}{\delta}\big(\log{\big(\f{1}{\delta}\big)}\big)^2\big)$.
The quantum circuits for these algorithms consist of $\cO(y\log{d})$ one- and two-qubit gates, and $\cO(\log{d})$ qubits suffice at any one time.
\end{theorem}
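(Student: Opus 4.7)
The plan is to execute precisely the two-step pipeline sketched in Eq.~\eqref{eq:flow of trans purif}: first lift $m$ copies of $\omega$ into an (approximate) block-encoding of $\omega/2$ using Theorem~\ref{thm:BEquantum state form sample}, and then apply the QSVT of Proposition~\ref{prop:general QSVT} with a polynomial that approximates $\sqrt{x}/2$, so that the singular values $\lambda_i/2$ of the inner block are mapped to $\sqrt{\lambda_i/2}/\sqrt{2}=\sqrt{\lambda_i}/(2\sqrt{2})$. Theorem~\ref{thm:BEquantum state form sample} supplies channels $\cE$ and $\cE_{\rm inv}$ that are $\delta_2$-close in diamond norm to the ideal unitary channel $\cU_{\omega/2}$ and its inverse, each costing $y=\tilde{\cO}(1/\delta_2)$ samples of $\omega$ and $4$ ancilla qubits; the subsequent QSVT adds one further ancilla, placing the final object in the required $(1,5,0)$ block-encoding format.

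For the second step I would take a real even polynomial $P$ of degree $u$ with $|P(x)|\leq 1$ on $[-1,1]$ and $|P(x)-\sqrt{x}/2|\le\epsilon_{\rm poly}$ on $[\Delta,\tfrac12]$, constructed via the standard QSVT-compatible approximation of $x^{1/2}$, for which $u$ scales inverse-polynomially in $\Delta$ and logarithmically in $1/\epsilon_{\rm poly}$. Two natural choices of the threshold give the two arms of the $\min$: (a) $\Delta=\omega_{\min}/2$, which makes the approximation guarantee cover every non-zero eigenvalue of $\omega/2$, so that the zero eigenvalues are the only troublesome ones and are handled by choosing $P$ with $P(0)\approx 0$; or (b) $\Delta=\Theta(\delta^{2})$, which ignores the spectral gap entirely and controls the eigenvalues below $\Delta$ by the fact that both $\sqrt{x}/2$ and $P$ are individually $\cO(\sqrt{\Delta})=\cO(\delta)$ there. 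Plugging the ideal $U_{\omega/2}$ into the QSVT circuit yields an exact block-encoding $\tilde U$ of $P^{(\rm SV)}(\omega/2)$, whereas replacing each of its $u$ calls to $U_{\omega/2}$ and $U_{\omega/2}^{\dag}$ by the approximate channels $\cE$ and $\cE_{\rm inv}$ produces $\cF$; by sub-additivity of the diamond norm (identical to the derivation leading to Eq.~\eqref{inteq:16}) one gets $\tfrac12\|\cF-\tilde\cU\|_\diamond\leq u\delta_2$. A polar-decomposition lift in the spirit of Lemma~\ref{lem:FPAA diamond} then upgrades the operator-norm bound $\|P^{(\rm SV)}(\omega/2)-\sqrt{\omega}/(2\sqrt{2})\|_\infty\leq\epsilon_{\rm poly}+\cO(\sqrt{\Delta})$ into a diamond-norm bound between $\tilde\cU$ and the prescribed exact $(1,5,0)$-block-encoding channel $\cU$ of $\sqrt{\omega}/(2\sqrt{2})$. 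Balancing $u\delta_2$, $\epsilon_{\rm poly}$, and $\sqrt{\Delta}$ at $\cO(\delta)$ fixes $\delta_2=\tilde{\cO}(\delta/u)$, so the total sample cost is $h=u\cdot y=\tilde{\cO}(u^{2}/\delta)$, which evaluates to $\tilde{\cO}(1/(\delta\,\omega_{\min}^{2}))$ in regime~(a) and $\tilde{\cO}(1/\delta^{5})$ in regime~(b); picking whichever is smaller yields the stated $h=\tilde{\cO}\big((1/\delta)\min\{1/\omega_{\min}^{2},\,1/\delta^{4}\}\big)$. The inverse channel $\cF_{\rm inv}$ is produced symmetrically by running the inverse QSVT circuit on the same approximate block-encoding, with identical sample, gate, and ancilla counts. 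The gate count is then $\cO(h\log d)$ and, because the samples are consumed sequentially, only $\cO(\log d)$ ancillas are simultaneously live.

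The main obstacle I anticipate is precisely the diamond-norm lift in the last step: Lemma~\ref{lem:FPAA diamond} was tailored to the sign function, whose target is a partial isometry, but $\sqrt{x}/2$ is not, so the polar-decomposition bookkeeping that fills out the ``bottom-left'' block of the ideal unitary and controls the off-diagonal discrepancy must be redone from scratch. Simultaneously one has to ensure that the low-eigenvalue contribution (in regime~(b)) and the zero-eigenvalue sector (in both regimes) are handled by a single self-consistent choice of $P$ satisfying $P(0)\approx 0$, rather than by an ad-hoc post-hoc bound; this is what ultimately couples the polynomial-approximation error and the threshold $\sqrt{\Delta}$ term in the diamond-norm estimate above.
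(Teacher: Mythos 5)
There is a fundamental mismatch here: you have written a proof of the wrong statement. The target is Theorem~\ref{thm:BEquantum state form sample}, which asserts that from $y=\cO\big(\tfrac{1}{\delta}(\log\tfrac{1}{\delta})^2\big)$ samples of $\omega$ one can approximate, to diamond-norm error $\delta$, a $(1,4,0)$-block-encoding unitary of $\omega/2$ itself (and its inverse), with a sample count that is independent of $\omega_{\min}$. What you have produced instead is a proof of \Cref{prop:block enc sqrt state}: a $(1,5,0)$-block-encoding of $\sqrt{\omega}/(2\sqrt{2})$ with sample complexity $\tilde{\cO}\big(\tfrac{1}{\delta}\min\{1/\omega_{\min}^2,1/\delta^4\}\big)$. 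Worse, your very first step is to ``lift $m$ copies of $\omega$ into an (approximate) block-encoding of $\omega/2$ using Theorem~\ref{thm:BEquantum state form sample}'' --- i.e., you invoke the theorem you were asked to prove as a black box. As a derivation of the target statement the argument is therefore circular, and as written it establishes a different result (one that the paper states separately and proves by essentially your route, including the two-regime $\min$, the robustness lift via \Cref{prevthm:robustness BE unitary}, and the balancing $\delta_1=\delta/(2l)$, $\delta_2=\delta/4$).

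For the statement actually in question, the paper gives no proof: it is imported verbatim from Ref.~\cite[Lemma~2.21]{wang2023SampletoQuary}. A self-contained argument would have to start one level lower: use density matrix exponentiation (\Cref{prevthm:cntrl-DME}) to synthesize the controlled Hamiltonian-simulation unitary $e^{\pm i t\omega}$ from $\cO(t^2/\epsilon)$ samples, and then apply a QSVT/LCU step that converts $e^{i\omega}$ into a block-encoding of $\omega/2$ (an arcsin-type polynomial of constant-degree-in-$d$ and degree $\cO(\log(1/\delta))$ in the accuracy), distributing the per-call simulation error as $\epsilon=\cO(\delta/\log(1/\delta))$ over the $\cO(\log(1/\delta))$ calls; this is exactly what produces the $\cO\big(\tfrac{1}{\delta}(\log\tfrac{1}{\delta})^2\big)$ sample count with no spectral-gap dependence. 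None of that machinery appears in your proposal, so the gap is not a technicality in the diamond-norm bookkeeping you flag at the end --- it is that the intended construction is absent entirely.
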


In step~2, we use the QSVT to approximate the square root function. 
Specifically, we take the target function for a polynomial approximation to be $\sqrt{x}/2$. This is a special case of the QSVT for positive power functions; see, e.g., Refs.~\cite{Gilyn2019thesis, gilyen2022improvedfidelity, wang2024NewQuantEnt} for details.

\begin{proposition}[QSVT with the square root function~{\cite[Lemma 17 and Corollary 18]{gilyen2022improvedfidelity}},~{\cite[Lemma II.6 and Lemma II.13]{wang2024NewQuantEnt}}]
\label{prop:QSVT sqrt}
Let $\delta \in (0, 1/2)$, and let $U$ be a $(1, a, 0)$-block-encoding unitary of a positive semidefinite matrix $M$ whose minimum non-zero eigenvalue is $m_{\rm min}$. Then, there are quantum algorithms that output unitaries $\til{U}$ and $\til{U}^\dag$, respectively, where $\til{U}$ is a $(1, a+1, 0)$-block-encoding of an even/odd polynomial $R_{\rm sqrt}(M)$.
The polynomial $R_{\rm sqrt}$ satisfies $\big\|R_{\rm sqrt}(M) - \sqrt{M}/2\big\|_\infty \leq \delta$, and $|R_{\rm sqrt}(x)| \leq 1$ for $x \in [-1, 1]$. 
Both algorithms use $U$ and $U^\dag$ a total of $l = \cO\big(\min\{1/m_{\rm min}, 1/\delta^2\}\log{(1/\delta)}\big)$ times.
\end{proposition}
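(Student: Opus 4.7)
The plan is to reduce the statement to the combination of a polynomial approximation result for $\sqrt{x}/2$ and the generic QSVT construction in \Cref{prop:general QSVT}. Concretely, I would first construct a real even or odd polynomial $R_{\rm sqrt}$ of low degree that approximates $\sqrt{x}/2$ uniformly on the relevant interval, then feed it into the QSVT framework to obtain $\til{U}$, and finally appeal to the standard observation that a QSVT circuit can be inverted essentially by running it backward to produce $\til{U}^\dag$.

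For the polynomial approximation step, I would work with the positive semidefinite matrix $M$ whose spectrum lies in $[0,1]$ (after block-encoding normalization) and handle two regimes. In the first regime, I would use a truncated Chebyshev expansion of $\sqrt{x}/2$ on $[\gamma, 1]$ for a small $\gamma$, combined with a smooth cutoff near the origin; standard estimates give an $\delta$-approximating polynomial on $[0,1]$ of degree $\cO((1/\delta^2)\log(1/\delta))$, together with the bound $|R_{\rm sqrt}(x)|\le 1$ on $[-1,1]$ needed by \Cref{prop:general QSVT}. In the second regime, where a lower bound $m_{\rm min}$ on the smallest non-zero eigenvalue of $M$ is available, I would instead approximate $\sqrt{x}/2$ only on $[m_{\rm min},1]$ (the support of $M$ is where the function value actually matters), using Chebyshev-type estimates or the explicit constructions of Refs.~\cite{gilyen2022improvedfidelity, wang2024NewQuantEnt}, obtaining degree $\cO((1/m_{\rm min})\log(1/\delta))$. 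Taking the minimum of the two degrees yields the claimed $l=\cO(\min\{1/m_{\rm min},1/\delta^2\}\log(1/\delta))$.

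Once $R_{\rm sqrt}$ is in hand, I would invoke \Cref{prop:general QSVT} with the block-encoding $U$ of $M$ and the polynomial $R_{\rm sqrt}$ (of definite parity) to obtain a $(1,a+1,0)$-block-encoding $\til{U}$ of $R_{\rm sqrt}^{(\rm SV)}(M)$. Since $M$ is positive semidefinite, its singular value decomposition coincides with its spectral decomposition, so $R_{\rm sqrt}^{(\rm SV)}(M)=R_{\rm sqrt}(M)$, and the error bound $\|R_{\rm sqrt}(M)-\sqrt{M}/2\|_\infty\le\delta$ follows from the spectral theorem applied to the uniform polynomial bound. For the inverse, I would use the well-known fact that the QSVT circuit, which alternates applications of $U$, $U^\dag$, and fixed ancilla rotations controlled by a phase sequence, can be rewritten so that replacing each phase by its negation and reversing the order of the gates produces a block-encoding of the same polynomial's adjoint, yielding $\til{U}^\dag$ with exactly the same counts of queries to $U$ and $U^\dag$.

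The main obstacle is the polynomial approximation itself: naive Chebyshev truncation of $\sqrt{x}$ converges slowly near $x=0$ because of the branch-point singularity, so getting the sharp degree bound in both parameter regimes requires a careful handling of the neighborhood of the origin, either via an explicit smooth cutoff that degrades only by a factor $\cO(\log(1/\delta))$, or via the more refined rational/Chebyshev constructions of Refs.~\cite{Gilyn2019thesis, gilyen2022improvedfidelity, wang2024NewQuantEnt}. The other delicate point is ensuring $|R_{\rm sqrt}(x)|\le 1$ on all of $[-1,1]$ (not just on the positive part), which is automatic if one enforces the correct parity during the Chebyshev expansion and rescales by a small constant if necessary.
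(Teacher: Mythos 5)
Your proposal matches the paper's argument: the paper likewise invokes two parity-constrained polynomials approximating $\sqrt{x}/2$ --- one accurate on $[m_{\rm min},1]$ of degree $\cO\big((1/m_{\rm min})\log(1/\delta)\big)$ and one accurate on the whole interval of degree $\cO\big((1/\delta^2)\log(1/\delta)\big)$, both bounded by $1$ on $[-1,1]$ and taken from Refs.~\cite{gilyen2022improvedfidelity, wang2024NewQuantEnt} --- then feeds whichever has smaller degree into \Cref{prop:general QSVT} and obtains $\til{U}^\dag$ by the standard reversal of the QSVT circuit. The only difference is that you sketch how the polynomials would be built (Chebyshev truncation plus a cutoff near the origin) where the paper simply cites them, so the approach is essentially the same.
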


This proposition follows from the existence of polynomials $P_{\rm sqrt}$ and $Q_{\rm sqrt}$, each of which is either even or odd, with degrees $\cO\big(\f{1}{\alpha}\log{(1/\delta)}\big)$ and $\cO\big(\f{1}{\delta^2}\log{(1/\delta)}\big)$, respectively, where $\alpha \in (0, 1/2)$.
These polynomials satisfy that 
\begin{align}
    &\big|P_{\rm sqrt}(x)\big|, \big|Q_{\rm sqrt}(x)\big| \leq 1 
    \ \ \textrm{for} \ \ x \in [-1, 1], \\
    &\Big|P_{\rm sqrt}(x) - \sqrt{x}/2\Big| \leq \delta
    \ \ \textrm{for} \ \ x \in \big[\alpha/2, 1\big], \\
    &\Big|Q_{\rm sqrt}(x) - \sqrt{x}/2\Big| \leq \delta
    \ \ \textrm{for} \ \ x \in [-1, 1].
\end{align}
For our purpose, it suffices to set $\alpha$ so that $\alpha/2 = m_{\rm min}$.

\begin{proof}[Proof of \Cref{prop:block enc sqrt state}]

Let $\delta_1 \in (0, 1)$, and $\sE$ be a four-qubit system.
From Theorem~\ref{thm:BEquantum state form sample}, we can obtain quantum channels $\cE^{\sA\sE}$ and $\cE_{\rm inv}^{\sA\sE}$, which satisfy that
\begin{align}
    \f{1}{2}\big\|\cE^{\sA\sE} - \cU_1^{\sA\sE}\big\|_\diamond \leq \delta_1, \ \ \text{and} \ \ \ 
    \f{1}{2}\big\|\cE_{\rm inv}^{\sA\sE} - (\cU_1^{\sA\sE})^\dag\big\|_\diamond \leq \delta_1,
\end{align}
where $U_1^{\sA\sE}$ is a block- encoding unitary of $\omega/2$, using $y = \cO\Big(\big(\log{(1/\delta_1)}\big)^2/\delta_1\Big)$ samples of $\omega^\sA$.

Let $\delta_2 \in (0, 1/2)$, $\omega_{\rm min}$ be the minimum non-zero eigenvalue of $\omega^\sA$, and $\sF$ be a five-qubit system. 
We evaluate the accumulated error that arises from using quantum channels $\cE^{\sA\sE}$ and $\cE_{\rm inv}^{\sA\sE}$ instead of a unitary $U_1^{\sA\sE}$ and $(U_1^{\sA\sE})^\dag$ in the QSVT with the square root function.
From \Cref{prop:QSVT sqrt}, by using $\cE^{\sA\sE}$ and $\cE_{\rm inv}^{\sA\sE}$ for a total of $l = \cO\big(\min\{1/\omega_{\rm min}, 1/\delta_2^2\}\log{(1/\delta_2)}\big)$ times, we can obtain quantum channels $\cF^{\sA\sF}$ and $\cF_{\rm inv}^{\sA\sF}$ such that 
\begin{align}
\label{inteq:93}
    \f{1}{2}\big\|\cF^{\sA\sF} - \cU_2^{\sA\sF}\big\|_\diamond \leq l\delta_1, \ \  \text{and} \ \ \
    \f{1}{2}\big\|\cF_{\rm inv}^{\sA\sF} - (\cU_2^{\sA\sF})^\dag\big\|_\diamond \leq l\delta_1,
\end{align}
where we used the subadditivity of the diamond norm to bound the overall error.
The unitary $U_2^{\sA\sF}$ is a block-encoding unitary of a polynomial $R_{\rm sqrt}(\omega^\sA/2)$, which satisfies that 
\begin{align}
    \big\|R_{\rm sqrt}(\omega^\sA/2) - \sqrt{\omega^\sA}/\big(2\sqrt{2}\big)\big\|_\infty \leq \delta_2.
\end{align}

To evaluate the distance, in terms of the diamond norm, between $\cU_2^{\sA\sF}$ and $\cU_{\rm ideal}^{\sA\sF}$, where $U_{\rm ideal}^{\sA\sF}$ is an exact block-encoding unitary of $\sqrt{\omega^\sA}/\big(2\sqrt{2}\big)$, we use the following lemma~\cite{Gilyn2019thesis, gilyen2022improvedfidelity, wang2023SampletoQuary}.

\begin{lemma}[Robustness of block-encoding unitaries~{\cite[Corollary 12]{gilyen2022improvedfidelity}}]
\label{prevthm:robustness BE unitary}
Suppose that $U$ is a $d \times d$ unitary that is block-encoding of a $m' \times m$ matrix $A$ and $d \geq 4(m'+m)$. Then, for any $m' \times m$ matrix $\til{A}$ satisfying $\|A-\til{A}\|_\infty + \Big\|\f{A+\til{A}}{2}\Big\|_\infty^2 \leq 1$, there exists a $d \times d$ unitary $\til{U}$ that is block-encoding of $\til{A}$ satisfying
\begin{align}
    \big\|U - \til{U}\big\|_\infty &\leq \sqrt{\f{2}{1-\big\|\f{A+\til{A}}{2}\big\|_\infty^2}}\big\|A-\til{A}\big\|_\infty.
\end{align}  
\end{lemma}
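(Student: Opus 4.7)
The plan is to construct $\til{U}$ explicitly via a normalized Halmos-type unitary dilation of $\til{A}$, and then bound the operator-norm distance $\|U - \til{U}\|_\infty$ block by block. First, I would put $U$ into a canonical dilation form: since $U$ is unitary with top-left block $A$, the column-orthonormality forces the bottom-left block to have the shape $W D_A$ for some isometry $W$, where $D_A = \sqrt{I - A^\dag A}$ is the defect operator; an analogous statement holds for rows with $D_{A^\dag} = \sqrt{I - A A^\dag}$. The slack condition $d \geq 4(m' + m)$ provides enough ancilla room to apply a unitary change of basis bringing $U$ to the standard Halmos form whose four canonical blocks are $A$, $D_{A^\dag}$, $-D_A$, and $A^\dag$ (padded by identity on the remaining ancilla directions). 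I would then \emph{define} $\til{U}$ by substituting $\til{A}$ for $A$ in this same canonical form, which automatically makes $\til{U}$ a block-encoding of $\til{A}$.

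With this construction in hand, the triangle inequality reduces $\|U - \til{U}\|_\infty$ to controlling $\|A - \til{A}\|_\infty$ (given) together with $\|D_A - D_{\til{A}}\|_\infty$ and $\|D_{A^\dag} - D_{\til{A}^\dag}\|_\infty$. The algebraic step is the identity
\begin{equation*}
D_A^2 - D_{\til{A}}^2 \;=\; \til{A}^\dag \til{A} - A^\dag A \;=\; \tfrac{1}{2}\big[(A+\til{A})^\dag(\til{A} - A) + (\til{A} - A)^\dag(A+\til{A})\big],
\end{equation*}
which by the submultiplicativity of the operator norm yields $\|D_A^2 - D_{\til{A}}^2\|_\infty \leq 2\big\|\tfrac{A+\til{A}}{2}\big\|_\infty \|A - \til{A}\|_\infty$, and symmetrically for $D_{A^\dag}^2 - D_{\til{A}^\dag}^2$.

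The main obstacle will be converting a bound on $\|D_A^2 - D_{\til{A}}^2\|_\infty$ into one on $\|D_A - D_{\til{A}}\|_\infty$, because $D_A$ and $D_{\til{A}}$ need not commute, so a naive spectral-calculus comparison is unavailable and the Powers--St\o rmer square-root bound \eqref{eq:powers stormer ineq} would give the wrong scaling. I would instead invoke the operator-monotone integral representation
\begin{equation*}
\sqrt{X} \;=\; \frac{1}{\pi}\int_0^\infty \lambda^{-1/2}\big(I - \lambda\,(\lambda I + X)^{-1}\big)\,d\lambda,
\end{equation*}
apply it to $X = D_A^2$ and $X = D_{\til{A}}^2$, and use the resolvent identity to express $D_A - D_{\til{A}}$ as an integral of resolvents sandwiching $D_A^2 - D_{\til{A}}^2$. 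The resulting Lipschitz-type constant degenerates only when $D_A^2$ or $D_{\til{A}}^2$ has spectrum touching zero, and the hypothesis $\|A - \til{A}\|_\infty + \big\|\tfrac{A+\til{A}}{2}\big\|_\infty^2 \leq 1$ is precisely the quantitative gap that keeps $1 - \big\|\tfrac{A+\til{A}}{2}\big\|_\infty^2$ as a valid lower bound on the relevant spectra.

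Combining these ingredients gives $\|D_A - D_{\til{A}}\|_\infty \lesssim \frac{\|A - \til{A}\|_\infty}{\sqrt{1 - \|(A+\til{A})/2\|_\infty^2}}$, and absorbing the $\|A - \til{A}\|_\infty$ contributions from the diagonal blocks into the same form yields the stated prefactor $\sqrt{2/\big(1 - \big\|\tfrac{A+\til{A}}{2}\big\|_\infty^2\big)}$. The only place where the constant $2$ (rather than a sharper value) appears is in the crude application of the triangle inequality over the four blocks; I would not attempt to optimize the constant since it does not affect any downstream use.
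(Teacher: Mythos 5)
First, a point of reference: the paper does not prove this lemma at all. It imports it verbatim as Corollary~12 of Ref.~\cite{gilyen2022improvedfidelity} and uses it as a black box, so there is no in-paper proof to compare against; your proposal has to stand on its own.

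As written, it does not, and the failure is concentrated in the step that converts a bound on $\|D_A^2-D_{\tilde{A}}^2\|_\infty$ into one on $\|D_A-D_{\tilde{A}}\|_\infty$. Your justification --- that the hypothesis keeps $1-\|\tfrac{A+\tilde{A}}{2}\|_\infty^2$ as a lower bound on the spectra of $D_A^2$ and $D_{\tilde{A}}^2$ --- is false: for the scalars $A=-1/2$, $\tilde{A}=1/2$ the hypothesis holds with equality and $1-\|\tfrac{A+\tilde{A}}{2}\|_\infty^2=1$, yet $D_A^2=3/4$. More importantly, even after substituting the correct spectral gaps $1-\|A\|_\infty^2$ and $1-\|\tilde{A}\|_\infty^2$, the resolvent estimate you describe yields only $\|D_A-D_{\tilde{A}}\|_\infty\le\|D_A^2-D_{\tilde{A}}^2\|_\infty\big/\big(\sqrt{\lambda_{\min}(D_A^2)}+\sqrt{\lambda_{\min}(D_{\tilde{A}}^2)}\big)$, which cannot reproduce the stated $1/\sqrt{1-\|\cdot\|_\infty^2}$ dependence. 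Concretely, take $\mu=\sqrt{1-2\delta}$, $A=\mathrm{diag}(\mu-\delta,\mu+\delta)$, $\tilde{A}=\mathrm{diag}(\mu+\delta,\mu-\delta)$: the hypothesis holds with equality, the lemma's right-hand side is $\approx 2\sqrt{\delta}$ and is saturated by $\|D_A-D_{\tilde{A}}\|_\infty\approx 2\sqrt{\delta}$, but $\lambda_{\min}(D_A^2)=\lambda_{\min}(D_{\tilde{A}}^2)\approx\delta^2$, so your estimate returns $O(1)$. The prefactor $\sqrt{2/(1-\|\tfrac{A+\tilde{A}}{2}\|_\infty^2)}$ cannot come from a global Lipschitz constant for the operator square root; one must either exploit the factorized structure $D_A^2-D_{\tilde{A}}^2=\tfrac{1}{2}\big[(A+\tilde{A})^\dag(\tilde{A}-A)+(\tilde{A}-A)^\dag(A+\tilde{A})\big]$ \emph{inside} the integral representation, absorbing the factor $(A+\tilde{A})^\dag$ into the adjacent resolvent, or bound the difference of the two $2\times 2$ dilations as a whole rather than block by block (note that $(V-\tilde{V})^\dag(V-\tilde{V})=4(I-M^\dag M)-(D_A+D_{\tilde{A}})^2$ for the column isometries, which is where the midpoint $M$ naturally enters).

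Two smaller gaps: (i) reducing the \emph{given} $U$ to the padded Halmos form by basis changes that fix the encoding subspaces is a genuine step, since after canonicalizing the first block row and column the remaining corner block is determined only up to residual unitary freedom that must be shown absorbable (this is where $d\ge 4(m'+m)$ is actually used); and (ii) the lemma asserts the specific constant $\sqrt{2/(1-\|\tfrac{A+\tilde{A}}{2}\|_\infty^2)}$, so declining to track constants proves a strictly weaker statement, even though the downstream uses in this paper would tolerate a worse constant.
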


Observe that, for $\delta_2 \in (0, 1/2)$,
\begin{align}
    \big\|R_{\rm sqrt}(\omega^\sA/2) - \sqrt{\omega^\sA}/\big(2\sqrt{2}\big)\big\|_\infty + \Big\|\f{1}{2}\Big(R_{\rm sqrt}(\omega^\sA/2) + \sqrt{\omega^\sA}/\big(2\sqrt{2}\big)\Big)\Big\|_\infty^2 
    &\leq \delta_2 + \f{1}{4}\big(2\big\|\sqrt{\omega^\sA}/\big(2\sqrt{2}\big)\big\|_\infty + \delta_2\big)^2 \\
    &\leq \delta_2 + \f{1}{4}\big(1/\sqrt{2} + \delta_2\big)^2 \\
    &< 1/2 + \f{1}{4}\big(1/\sqrt{2} + 1/2\big)^2 \\
    &< 1.
\end{align}
Thus, applying \Cref{prevthm:robustness BE unitary}, it holds that 
\begin{equation}
\label{inteq:94}
    \f{1}{2}\big\|\cU_2^{\sA\sF} - \cU_{\rm ideal}^{\sA\sF}\big\|_\diamond \leq 2 \delta_2,
\end{equation}
where we used $1-\Big\|\f{1}{2}\Big(R_{\rm sqrt}(\omega^\sA/2) + \sqrt{\omega^\sA}/\big(2\sqrt{2}\big)\Big)\Big\|_\infty^2 \geq 1/2$ and the relation between the diamond norm and the operator norm in Eq.~\eqref{eq:rel of operator and diamond norm}.
From Eqs.~\eqref{inteq:93} and~\eqref{inteq:94}, we have that $\f{1}{2}\big\|\cF^{\sA\sF} - \cU_{\rm ideal}^{\sA\sF}\big\|_\diamond \leq l\delta_1 + 2\delta_2$.
Similarly, we have $\f{1}{2}\big\|\cF_{\rm inv}^{\sA\sF} - (\cU_{\rm ideal}^{\sA\sF})^\dag\big\|_\diamond \leq l\delta_1 + 2\delta_2$.
Rescaling the parameters as $\delta_1 = \delta/(2l)$, $\delta_2 = \delta/4$, the overall errors are bounded by $\delta$.

The number of samples of $\omega^\sA$ required so far is evaluated as
\begin{align}
    h 
    &= yl \\ 
    &= \cO\Big(\f{1}{\delta_1}\Big(\log{\Big(\f{1}{\delta_1}\Big)}\Big)^2\Big) \cO\Big(\min\Big\{\f{1}{\omega_{\rm min}}, \f{1}{\delta_2^2}\Big\}\log{\Big(\f{1}{\delta_2}\Big)}\Big) \\
    &=\cO\bigg(\f{1}{\delta}\min\Big\{\f{1}{\omega_{\rm min}^2}, \f{1}{\delta^4}\Big\} \Big(\log{\Big(\f{1}{\delta}\Big)}\Big)^2 \Big(\log{\Big(\f{1}{\delta}\min\Big\{\f{1}{\omega_{\rm min}}, \f{1}{\delta^2}\Big\}\log\Big(\f{1}{\delta}\Big)}\Big)\Big)^2\bigg) \\
    &=\til{\cO}\Big(\f{1}{\delta}\min\Big\{\f{1}{\omega_{\rm min}^2}, \f{1}{\delta^4}\Big\}\Big).
\end{align}
The number of one- and two-qubit gates in the quantum circuit of this algorithm is given by $\cO\big(h\log{d_\sA}\big)$.
Since this algorithm is conducted sequentially, $\cO(\log{d_\sA})$ qubits suffice at any one time.

\end{proof}

We now provide a proof of Theorem~\ref{prop:canonical purification}.

\begin{proof}[Proof of Theorem~\ref{prop:canonical purification}]

Let $U_{\rm ideal}^{\sA\sF}$ be a (1, 5, 0)-block-encoding unitary of $\sqrt{\omega^\sA}/\big(2\sqrt{2}\big)$, and let $U_{\Phi}^{\sA\sB}$ be a unitary that prepares the maximally entangled state $\ket{\Phi}^{\sA\sB}$: $U_{\Phi}^{\sA\sB}\ket{0}^{\sA\sB} = \ket{\Phi}^{\sA\sB}$.
We can observe that
\begin{align}
    \bra{0}^{\sF}U_{\rm ideal}^{\sA\sF}U_{\Phi}^{\sA\sB}\ket{0}^{\sF}\ket{0}^{\sA\sB} 
    &= \f{\sqrt{\omega^\sA}}{2\sqrt{2}}\ket{\Phi}^{\sA\sB} \\
    &= \f{1}{2
    \sqrt{2d_\sA}}\ket{\omega_{\rm c}}^{\sA\sB}.
\end{align}
This implies that the unitary $U_{\rm ideal}^{\sA\sF}U_{\Phi}^{\sA\sB}$ is a block encoding of $\f{1}{2\sqrt{2d_\sA}}\ket{\omega_{\rm c}}^{\sA\sB}$, i.e., 
\begin{align}
\label{inteq:95}
    U_{\rm ideal}^{\sA\sF}U_{\Phi}^{\sA\sB}
    = \hspace{1mm}
\begin{blockarray}{ccc}
& \bra{0}^\sF\bra{0}^{\sA\sB}&  & \vspace{1mm}\\
\begin{block}{c(cc)}
  \ket{0}^\sF \hspace*{1mm} & \f{1}{2\sqrt{2d_\sA}}\ket{\omega_{\rm c}}^{\sA\sB} & \hspace{0mm} * \hspace{3mm}\\
   \hspace*{1mm} & * & \hspace{0mm} * \hspace{3mm}\\
\end{block}
\end{blockarray}\hspace{2.5mm}.
\end{align}
By the QSVT with the sign function in \Cref{prop:FPAA general}, we amplify the coefficient $1/(2\sqrt{2d_\sA})$ to unity.
To this end, we set $\beta$ to $\beta_\diamond = 1/(2\sqrt{2d_\sA})$.
Note that, on the right-hand side of Eq.~\eqref{inteq:95}, the states $\ket{0}^\sF$ and $\ket{0}^\sF\ket{0}^{\sA\sB}$, which specify the encoded block, do not have the same dimensions. Even in this case, the QSVT can still be performed without any issue~\cite{gilyen2019qsvt, Gilyn2019thesis, martyn2021grand}.

We denote by $\cF^{\sA\sF}$ and $\cF_{\rm inv}^{\sA\sF}$ the quantum channels given by $\cF^{\sA\sF} = \mathtt{BlockEncSqrtState}(\omega^\sA; \delta_1)$ and $\cF_{\rm inv}^{\sA\sF} = \mathtt{BlockEncSqrtState}^\dag(\omega^\sA; \delta_1)$ for $\delta_1 \in (0, 1)$, respectively.
Proposition~\ref{prop:block enc sqrt state} ensures that
\begin{align}
    \f{1}{2}\big\|\cF^{\sA\sF} - \cU_{\rm ideal}^{\sA\sF}\big\|_\diamond \leq \delta_1, \ \ \text{and} \ \ \ \f{1}{2}\big\|\cF_{\rm inv}^{\sA\sF} - (\cU_{\rm ideal}^{\sA\sF})^\dag\big\|_\diamond \leq \delta_1,
\end{align}
when we use $h = \til{\cO}\big(\f{1}{\delta_1}\min\big\{\f{1}{\omega_{\rm min}^2}, \f{1}{\delta_1^4}\big\}\big)$ samples of $\omega^\sA$.

We evaluate the error originated by using the quantum channel $\cG^{\sA\sB\sF} = \cF^{\sA\sF}\circ\cU_{\Phi}^{\sA\sB}$ and $\cG_{\rm inv}^{\sA\sB\sF} = (\cU_{\Phi}^{\sA\sB})^\dag \circ \cF_{\rm inv}^{\sA\sF}$, instead of $U_{\rm ideal}^{\sA\sF}U_{\Phi}^{\sA\sB}$ and $(U_{\rm ideal}^{\sA\sF}U_{\Phi}^{\sA\sB})^\dag$, in the input of $\mathtt{QSVTSIGN}$.
From \Cref{lem:FPAA diamond}, the subadditivity of the diamond norm, and the triangle inequality, we see that a quantum channel $\cL^{\sA\sB\sG} = \mathtt{QSVTSIGN}(\cG^{\sA\sB\sF}, \cG_{\rm inv}^{\sA\sB\sF}; \delta_2, \beta_\diamond)$ for $\delta_2 \in (0, 1/2)$ satisfies
\begin{align}
    &\f{1}{2}\big\|\cL^{\sA\sB\sG} \circ \cP_{\ket{0}}^{\bC\rarr\sG} - \cW_{\rm ideal}^{\sA\sB\sG} \circ\cP_{\ket{0}}^{\bC\rarr\sG}\|_\diamond \leq 3\sqrt{\delta_2} + u\delta_1,    
\end{align}
where $W_{\rm ideal}^{\sA\sB\sG}$ is an exact block-encoding unitary of $\ket{\omega_{\rm c}}^{\sA\sB}$.
Here, we use $\cG^{\sA\sB\sF}$ and $\cG_{\rm inv}^{\sA\sB\sF}$ $u$ times in total, where $u$ is the minimum odd integer such that $u \geq \big\lceil\f{8e}{\beta_\diamond}\log(2/\delta_2)\big\rceil = \big\lceil16e\sqrt{2d_\sA}\log(2/\delta_2)\big\rceil$.

Noting that $\tr_\sG\circ\cW_{\rm ideal}^{\sA\sB\sG} \circ\cP_{\ket{0}}^{\bC\rarr\sG}(\ketbra{0}{0}^{\sA\sB}) = \ketbra{\omega_{\rm c}}{\omega_{\rm c}}^{\sA\sB}$, we obtain a quantum state $\til{\omega}_{\rm c}^{\sA\sB} = \tr_\sG\circ\cL^{\sA\sB\sG}\circ\cP_{\ket{0}}^{\bC\rarr\sG}(\ketbra{0}{0}^{\sA\sB})$ such that
\begin{align}
    \f{1}{2}\big\|\til{\omega}_{\rm c}^{\sA\sB} - \ketbra{\omega_{\rm c}}{\omega_{\rm c}}^{\sA\sB}\big\|_1 \leq 3\sqrt{\delta_2} + u\delta_1,
\end{align}
where we used the definition of the diamond norm and the contraction property of the trace norm under the partial trace.
Rescaling the parameters as $\delta_1 = \delta/(2u)$ and $\delta_2 = (\delta/6)^2$, the overall error can be bounded by $\delta$.

The number of samples of $\omega^\sA$ required by the entire algorithm is evaluated as
\begin{align}
    f 
    &= hu \\ 
    &= \til{\cO}\Big(\f{1}{\delta_1}\min\Big\{\f{1}{\omega_{\rm min}^2}, \f{1}{\delta_1^4}\Big\}\Big)\cO\Big(\sqrt{d_\sA}\log{\Big(\f{1}{\delta_2}\Big)}\Big) \\
    &=\til{\cO}\Big(\f{d_\sA}{\delta}\min\Big\{\f{1}{\omega_{\rm min}^2}, \f{d_\sA^2}{\delta^4}\Big\}\Big).
\end{align}
The number of one- and two-qubit gates in the quantum circuit of this algorithm is evaluated as $\cO\big(f\log{d_\sA}\big)$.
Since this algorithm is conducted sequentially, $\cO(\log{d_\sA})$ qubits suffice at any one time.

\end{proof}


\subsection{Proof of Theorem~\ref{thm:Uhlmann mixed state sample}}
\label{sec:proof eval sample mixed}

Using $\mathtt{CanonicalPurification}$ with $\delta_1 \in (0, 1)$, we obtain states $\til{\rho}_{\rm c}^{\hat{\sA}\hat{\sB}}$ and $\til{\sigma}_{\rm c}^{\hat{\sA}\hat{\sB}}$ such that
\begin{align}
\label{inteq:41}
    &\f{1}{2}\big\|\til{\rho}_{\rm c}^{\hat{\sA}\hat{\sB}} - \ketbra{\rho_{\rm c}}{\rho_{\rm c}}^{\hat{\sA}\hat{\sB}}\big\|_1 \leq \delta_1, \ \ \text{and} \ \ \ \f{1}{2}\big\|\til{\sigma}_{\rm c}^{\hat{\sA}\hat{\sB}} - \ketbra{\sigma_{\rm c}}{\sigma_{\rm c}}^{\hat{\sA}\hat{\sB}}\big\|_1 \leq \delta_1,
\end{align}
from $f$ samples of $\rho^{\hat{\sA}}$ and $\sigma^{\hat{\sA}}$, where 
\begin{align}
\label{inteq:43}
    f = \til{\cO}\Big(\f{d_\sA}{\delta_1}\min\Big\{\f{1}{\rho_{\rm min}^2} + \f{1}{\sigma_{\rm min}^2}, \f{d_\sA^2}{\delta_1^4}\Big\}\Big),
\end{align}
and $\rho_{\rm min}$ and $\sigma_{\rm min}$ are minimum non-zero eigenvalues of $\rho^\sA$ and $\sigma^\sA$, respectively. Here, we used that 
\begin{align}
    \min\{1/\rho_{\rm min}^2, d_\sA^2/\delta_1^4\} + \min\{1/\sigma_{\rm min}^2, d_\sA^2/\delta_1^4\}  
    &\leq \min\{1/\rho_{\rm min}^2 + 1/\sigma_{\rm min}^2, 2d_\sA^2/\delta_1^4\} \\
    &= \cO(\min\{1/\rho_{\rm min}^2 + 1/\sigma_{\rm min}^2, d_\sA^2/\delta_1^4\}).
\end{align}
We need to evaluate how the error $\delta_1$ in Eqs.~\eqref{inteq:41} affect the result, when we use $\til{\rho}_{\rm c}^{\hat{\sA}\hat{\sB}}$ and $\til{\sigma}_{\rm c}^{\hat{\sA}\hat{\sB}}$ as inputs to $\mathtt{UhlmannPurifiedSample}$, instead of pure states.

When we run the circuit in \Cref{fig:upsilon circuit}, replacing $\ket{\rho}^{\sA_1\sB_1}$ and $\ket{\sigma}^{\sA_2\sB_2}$ with $\til{\rho}_{\rm c}^{\sA_1\sB_1}$ and $\til{\sigma}_{\rm c}^{\sA_2\sB_2}$, respectively, the output state is given by 
\begin{align}
    &\til{\Upsilon}^{\sC_2\sC_3\sA_1\sB_1\sB_2}
    =\f{1}{2}\big(\ketbra{0}{0}^{\sC_2} \otimes \til{\xi}^{\sC_3\sA_1\sB_1\sB_2} + \ketbra{1}{1}^{\sC_2} \otimes Z^{\sC_3}\til{\xi}^{\sC_3\sA_1\sB_1\sB_2}Z^{\sC_3}\big),
\end{align}
where $\til{\xi}^{\sC_3\sA_1\sB_1\sB_2}$ is given by
\begin{align}
    \til{\xi}^{\sC_3\sA_1\sB_1\sB_2} 
    = \hspace{1mm} \f{1}{2}
\begin{blockarray}{ccc}
& \bra{0}^{\sC_3}&  \bra{1}^{\sC_3} \vspace{1mm}\\
\begin{block}{c(cc)}
  \ket{0}^{\sC_3}\hspace*{1mm} & \til{\sigma}_{\rm c}^{\sA_1\sB_1}\otimes \til{\rho}_{\rm c}^{\sB_2} & \big(\til{L}^{\sA_1\sB_1\sB_2}\big)^\dag \\
  \ket{1}^{\sC_3}\hspace*{1mm} & \til{L}^{\sA_1\sB_1\sB_2} & \til{\rho}_{\rm c}^{\sA_1\sB_1}\otimes \til{\sigma}_{\rm c}^{\sB_2} \\
\end{block}
\end{blockarray}\hspace{2.5mm},
\end{align}
and $\til{L}^{\sA_1\sB_1\sB_2} = \tr_{\sA_2}\big[(\til{\rho}_{\rm c}^{\sA_1\sB_1}\otimes \til{\sigma}_{\rm c}^{\sA_2\sB_2})F^{\sA_1\sA_2\sB_1\sB_2}\big]$.
From \Cref{prevthm:DME variant}, for $\delta_2 \in (0, 1)$, we obtain a quantum channel $\til{\cF}^{\sC\hat{\sA}\hat{\sB}\sB} = \mathtt{DMESUB}(\til{\Upsilon}^{\sC_2\sC_3\sA_1\sB_1\sB_2}; \delta_2, t=2)$, which satisfies that
\begin{align}
\label{inteq:46}
    \f{1}{2}\big\|\til{\cF}^{\sC\hat{\sA}\hat{\sB}\sB} - \til{\cU}_2^{\sC\hat{\sA}\hat{\sB}\sB}\big\|_\diamond \leq \delta_2,
\end{align}
where $\til{U}_2^{\sC\hat{\sA}\hat{\sB}\sB} = e^{i\til{K}^{\sC\hat{\sA}\hat{\sB}\sB}}$
and $\til{K}^{\sC\hat{\sA}\hat{\sB}\sB} = \ketbra{1}{0}^\sC \otimes \til{L}^{\hat{\sA}\hat{\sB}\sB} + \ketbra{0}{1}^\sC \otimes (\til{L}^{\hat{\sA}\hat{\sB}\sB})^\dag$, using $m = \cO(1/\delta_2)$ samples of $\til{\rho}_{\rm c}^{\hat{\sA}\hat{\sB}}$ and $\til{\sigma}_{\rm c}^{\hat{\sA}\hat{\sB}}$.

We evaluate the distance between $\til{U}_2^{\sC\hat{\sA}\hat{\sB}\sB}$ and the desired unitary $U_2^{\sC\hat{\sA}\hat{\sB}\sB} = e^{iK^{\sC\hat{\sA}\hat{\sB}\sB}}$,
where $K^{\sC\hat{\sA}\hat{\sB}\sB} = \ketbra{1}{0}^\sC \otimes L^{\hat{\sA}\hat{\sB}\sB} + \ketbra{0}{1}^\sC \otimes (L^{\hat{\sA}\hat{\sB}\sB})^\dag$,
and
\begin{align}
    L^{\hat{\sA}\hat{\sB}\sB} 
    &= \ketbra{\rho_{\rm c}}{\sigma_{\rm c}}^{\hat{\sA}\hat{\sB}} \otimes \tr_{\sA'}\big[\ketbra{\sigma_{\rm c}}{\rho_{\rm c}}^{\sA'\sB}\big] \\
    &= \tr_{\sA'}\big[(\ketbra{\rho_{\rm c}}{\rho_{\rm c}}^{\hat{\sA}\hat{\sB}} \otimes \ketbra{\sigma_{\rm c}}{\sigma_{\rm c}}^{\sA'\sB})F^{\hat{\sA}\sA'\hat{\sB}\sB}\big].
\end{align}
To this end, we use the following lemma shown in Ref.~\cite{chakraborty2019powerblock}. We provide a proof in \Cref{sec:appendix error not accumulate} for completeness.

\begin{restatable}[Robustness of the exponential function of an Hermitian matrix~{\cite[Lemma 50 in its full version]{chakraborty2019powerblock}}]{lemma}{lemrobstexp}
\label{lem:state error not accumulate}
Let $A$ and $B$ be Hermitian matrices.
Then, for any $t \in \bR$, it holds that
\begin{align}
    &\big\|e^{itA} - e^{itB}\big\|_\infty \leq |t|\|A - B\|_\infty.
\end{align}
\end{restatable}

From \Cref{lem:state error not accumulate} and the fact that $\|\til{K}^{\sC\hat{\sA}\hat{\sB}\sB} - K^{\sC\hat{\sA}\hat{\sB}\sB}\|_\infty = \|\til{L}^{\sC\hat{\sA}\hat{\sB}\sB} - L^{\sC\hat{\sA}\hat{\sB}\sB}\|_\infty$, we observe
\begin{align}
    \big\|\til{U}_2^{\sC\hat{\sA}\hat{\sB}\sB} - U_2^{\sC\hat{\sA}\hat{\sB}\sB}\big\|_\infty
    &\leq \big\|\til{K}^{\sC\hat{\sA}\hat{\sB}\sB} - K^{\sC\hat{\sA}\hat{\sB}\sB}\big\|_\infty \\
    \label{inteq:147}
    &= \|\til{L}^{\hat{\sA}\hat{\sB}\sB} - L^{\hat{\sA}\hat{\sB}\sB}\|_\infty.
\end{align}
We further compute as follows:
\begin{align}
    \big\|\til{L}^{\hat{\sA}\hat{\sB}\sB} -L^{\hat{\sA}\hat{\sB}\sB}\big\|_\infty 
    &\leq \big\|\til{L}^{\hat{\sA}\hat{\sB}\sB} - L^{\hat{\sA}\hat{\sB}\sB}\big\|_1 \\
    &\leq \big\|\til{\rho}_{\rm c}^{\hat{\sA}\hat{\sB}} \otimes \til{\sigma}_{\rm c}^{\sA'\sB} - \ketbra{\rho_{\rm c}}{\rho_{\rm c}}^{\hat{\sA}\hat{\sB}} \otimes \ketbra{\sigma_{\rm c}}{\sigma_{\rm c}}^{\sA'\sB}\big\|_1 \\
    &\leq \big\|\til{\rho}_{\rm c}^{\hat{\sA}\hat{\sB}} - \ketbra{\rho_{\rm c}}{\rho_{\rm c}}^{\hat{\sA}\hat{\sB}}\big\|_1 + \big\|\til{\sigma}_{\rm c}^{\sA'\sB} - \ketbra{\sigma_{\rm c}}{\sigma_{\rm c}}^{\sA'\sB}\big\|_1 \\
    \label{inteq:148}
    &\leq 4\delta_1,
\end{align}
where we used the contraction property of trace norm Eq.~\eqref{eq:contraction trace norm} and its isometric invariance in the second inequality, and used Eqs.~\eqref{inteq:41} in the last inequality.
Thus, we have 
\begin{align}
    \f{1}{2}\big\|\til{\cU}_2^{\sC\hat{\sA}\hat{\sB}\sB} - \cU_2^{\sC\hat{\sA}\hat{\sB}\sB}\big\|_\diamond
    &\leq \big\|\til{U}_2^{\sC\hat{\sA}\hat{\sB}\sB} - U_2^{\sC\hat{\sA}\hat{\sB}\sB}\big\|_\infty \\
    &\leq\|\til{L}^{\hat{\sA}\hat{\sB}\sB} - L^{\hat{\sA}\hat{\sB}\sB}\|_\infty \\
    \label{inteq:62}
    &\leq 4\delta_1,
\end{align}
where we used the relation between the diamond norm and operator norm in Eq.~\eqref{eq:rel of operator and diamond norm} and Eqs.~\eqref{inteq:147} and~\eqref{inteq:148}.

Using Eqs.~\eqref{inteq:46} and~\eqref{inteq:62}, we observe that
\begin{align}
\label{inteq:98}
    \f{1}{2}\big\|\til{\cF}^{\sC\hat{\sA}\hat{\sB}\sB} - \cU_2^{\sC\hat{\sA}\hat{\sB}\sB}\big\|_\diamond
    &\leq \f{1}{2}\big\|\til{\cF}^{\sC\hat{\sA}\hat{\sB}\sB} - \til{\cU}_2^{\sC\hat{\sA}\hat{\sB}\sB}\big\|_\diamond 
    + \f{1}{2}\big\|\til{\cU}_2^{\sC\hat{\sA}\hat{\sB}\sB} - \cU_2^{\sC\hat{\sA}\hat{\sB}\sB}\big\|_\diamond \\
    \label{inteq:79}
    &\leq \delta_2 + 4\delta_1.
\end{align}
This suggests that the error $\delta_1$ arising in the preparation of the canonical purified state is not accumulated by $\mathtt{DMESUB}$.

The above discussion about the channel $\til{\cF}^{\sC\hat{\sA}\hat{\sB}\sB}$ applies similarly to $\til{\cF}_{\rm inv}^{\sC\hat{\sA}\hat{\sB}\sB} = \mathtt{DMESUB}^\dag(\til{\Upsilon}^{\sC_2\sC_3\sA_1\sB_1\sB_2}; \delta_2, t=2)$. Thus, $\til{\cF}_{\rm inv}^{\sC\hat{\sA}\hat{\sB}\sB}$ approximates a unitary channel $(\cU_2^{\sC\hat{\sA}\hat{\sB}\sB})^\dag$ within the error $\delta_2 + 4\delta_1$.

Up to this point, the number of samples of $\rho^{\hat{\sA}}$ and $\sigma^{\hat{\sA}}$ needed for $\til{\cF}^{\sC\hat{\sA}\hat{\sB}\sB}$ and $\til{\cF}_{\rm inv}^{\sC\hat{\sA}\hat{\sB}\sB}$ is
\begin{align}
\label{inteq:99}
    mf 
    &= \til{\cO}\big(1/\delta_2\big)\til{\cO}\Big(\f{d_\sA}{\delta_1}\min\Big\{\f{1}{\rho_{\rm min}^2} + \f{1}{\sigma_{\rm min}^2}, \f{d_\sA^2}{\delta_1^4}\Big\}\Big) \notag\\
    &=\til{\cO}\Big(\f{d_\sA}{\delta_1\delta_2}\min\Big\{\f{1}{\rho_{\rm min}^2} + \f{1}{\sigma_{\rm min}^2}, \f{d_\sA^2}{\delta_1^4}\Big\}\Big).
\end{align}

The rest is almost the same as the discussion in the purified sample access model.
Using $\mathtt{QSVTSIGN}$, we lift up all the non-zero singular values of the top-left block of $-iX^\sC U_2^{\sC\hat{\sA}\hat{\sB}\sB}$ to unity.
Note that the top-left block of $-iX^\sC U_2^{\sC\hat{\sA}\hat{\sB}\sB}$ is given by $\ketbra{\rho_{\rm c}}{\sigma_{\rm c}}^{\hat{\sA}\hat{\sB}} \otimes \sin^{(\rm SV)}(M^{\sB})$, where $M^{\sB} = \tr_{\sA'}\big[\ketbra{\sigma_{\rm c}}{\rho_{\rm c}}^{\sA'\sB}\big]$.

For $\delta_3 \in (0, 1/2)$, let $\cJ^{\sB\sH\hat{\sA}\hat{\sB}} = \mathtt{QSVTSIGN}(\til{\cG}^{\sC\hat{\sA}\hat{\sB}\sB}, \til{\cG}_{\rm inv}^{\sC\hat{\sA}\hat{\sB}\sB}; \delta_3, \beta)$, where $\til{\cG}^{\sC\hat{\sA}\hat{\sB}\sB} = \cX^\sC \circ \til{\cF}^{\sC\hat{\sA}\hat{\sB}\sB}$, $\til{\cG}_{\rm inv}^{\sC\hat{\sA}\hat{\sB}\sB} = \til{\cF}_{\rm inv}^{\sC\hat{\sA}\hat{\sB}\sB}\circ\cX^\sC$, and $X^\sC$ is one-qubit Pauli-$X$ gate.
From Eq.~\eqref{inteq:79}, we see that 
\begin{align}
\label{inteq:78}
    \f{1}{2}\big\|\cJ^{\sB\sH\hat{\sA}\hat{\sB}} - \cU_1^{\sB\sH\hat{\sA}\hat{\sB}}\big\|_\diamond \leq u(\delta_2 + 4\delta_1),
\end{align}
where $u = \cO\big(\log{(1/\delta_3)}/\beta\big)$.
The unitary $\cU_1^{\sB\sH\hat{\sA}\hat{\sB}}$ is a block-encoding of $\ketbra{\rho_{\rm c}}{\sigma_{\rm c}}^{\hat{\sA}\hat{\sB}} \otimes P_\sgn^{(\rm SV)}\big(\sin^{(\rm SV)}(M^\sB)\big)$, and $P_\sgn$ is a polynomial approximation of the sign function.

It should be noted that in the mixed sample access model, the quantum channel $\cP_{\ket{\sigma_{\rm c}}}^{\bC\rarr\hat{\sA}\hat{\sB}}$ cannot be implemented exactly.
To this end, we define a state-preparation channel $\cP_{\til{\sigma}_{\rm c}}^{\bC\rarr\hat{\sA}\hat{\sB}}$ by $\cP_{\til{\sigma}_{\rm c}}^{\bC\rarr\hat{\sA}\hat{\sB}} = \til{\sigma}_{\rm c}^{\hat{\sA}\hat{\sB}}$, and evaluate a distance between $\cP_{\til{\sigma}_{\rm c}}^{\bC\rarr\hat{\sA}\hat{\sB}}$ and $\cP_{\ket{\sigma_{\rm c}}}^{\bC\rarr\hat{\sA}\hat{\sB}}$, obtaining that
\begin{align}
\label{inteq:97}
    \f{1}{2}\big\|\cP_{\til{\sigma}_{\rm c}}^{\bC\rarr\hat{\sA}\hat{\sB}} - \cP_{\ket{\sigma_{\rm c}}}^{\bC\rarr\hat{\sA}\hat{\sB}}\big\|_\diamond 
    &\leq \delta_1,
\end{align}
where we used Eq.~\eqref{inteq:41}.
Thus, we have 
\begin{align}
\label{inteq:100}
    \f{1}{2}\big\|\cJ^{\sB\sH\hat{\sA}\hat{\sB}}\circ\cP_{\til{\sigma}_{\rm c}}^{\bC\rarr\hat{\sA}\hat{\sB}} - \cU_1^{\sB\sH\hat{\sA}\hat{\sB}}\circ\cP_{\ket{\sigma_{\rm c}}}^{\bC\rarr\hat{\sA}\hat{\sB}}\big\|_\diamond
    \leq u\delta_2 + (4u+1)\delta_1,
\end{align}
where we used Eqs.~\eqref{inteq:78} and~\eqref{inteq:97}.

The above discussion applies to both cases where the error is measured in the diamond norm and the fidelity difference.
In the following, we analyze them separately in Secs.~\ref{sec:mixed sample diamond} and~\ref{sec:uhlfidsamp}. These establish Theorem~\ref{thm:Uhlmann mixed state sample}.


\subsubsection{Evaluation in the diamond norm}
\label{sec:mixed sample diamond}

When we evaluate the performance of this algorithm in the diamond norm, we set $\beta$ to $\beta_\diamond = 2s_{\rm min}/\pi$. We then apply \Cref{lem:FPAA diamond} and obtain that
\begin{align}
\label{inteq:149}
    \f{1}{2}\big\|\cU_1^{\sB\sH\hat{\sA}\hat{\sB}}\circ\cP_{\ket{0}\ket{\sigma_{\rm c}}}^{\bC\rarr\sH\hat{\sA}\hat{\sB}} - \cU_{\rm ideal}^{\sB\sH\hat{\sA}\hat{\sB}}\circ\cP_{\ket{0}\ket{\sigma_{\rm c}}}^{\bC\rarr\sH\hat{\sA}\hat{\sB}}\big\|_\diamond \leq 3\sqrt{\delta_3},
\end{align}
where $\delta_3 \in (0, 1/2)$, $\cU_{\rm ideal}^{\sB\sH\hat{\sA}\hat{\sB}}$ is a block-encoding of $\ketbra{\rho_{\rm c}}{\sigma_{\rm c}}^{\hat{\sA}\hat{\sB}} \otimes V^\sB$, and $V^\sB$ is the Uhlmann partial isometry from $\ket{\rho_{\rm c}}^{\sA\sB}$ to $\ket{\sigma_{\rm c}}^{\sA\sB}$.
Then, we have that 
\begin{align}
    &\f{1}{2}\big\|\cJ^{\sB\sH\hat{\sA}\hat{\sB}}\circ\cP_{\ket{0}}^{\bC\rarr\sH}\circ\cP_{\til{\sigma}_{\rm c}}^{\bC\rarr\hat{\sA}\hat{\sB}} - \cU_{\rm ideal}^{\sB\sH\hat{\sA}\hat{\sB}}\circ\cP_{\ket{0}\ket{\sigma_{\rm c}}}^{\bC\rarr\sH\hat{\sA}\hat{\sB}}\big\|_\diamond \notag\\
    &\leq \f{1}{2}\big\|\cJ^{\sB\sH\hat{\sA}\hat{\sB}}\circ\cP_{\til{\sigma_{\rm c}}}^{\bC\rarr\sH\hat{\sA}\hat{\sB}} - \cU_1^{\sB\sH\hat{\sA}\hat{\sB}}\circ\cP_{\ket{\sigma_{\rm c}}}^{\bC\rarr\sH\hat{\sA}\hat{\sB}}\big\|_\diamond + \f{1}{2}\big\|\cU_1^{\sB\sH\hat{\sA}\hat{\sB}}\circ\cP_{\ket{0}\ket{\sigma_{\rm c}}}^{\bC\rarr\sH\hat{\sA}\hat{\sB}} - \cU_{\rm ideal}^{\sB\sH\hat{\sA}\hat{\sB}}\circ\cP_{\ket{0}\ket{\sigma_{\rm c}}}^{\bC\rarr\sH\hat{\sA}\hat{\sB}}\big\|_\diamond \\
    \label{inteq:131}
    &\leq  3\sqrt{\delta_3} + u\delta_2 + (4u + 1)\delta_1,
\end{align}
where $u=\cO\big(\log{(1/\delta_3)/\beta_\diamond}\big)$. 
We used Eqs.~\eqref{inteq:100} and~\eqref{inteq:149} in the last line.

Note that the first two terms on the right-hand side of Eq.~\eqref{inteq:131} correspond to the right-hand side of Eq.~\eqref{inteq:103} (with $\delta_1 \gets \delta_3$).
This suggests that, defining a quantum channel $\cJ_\diamond^{\sB\sH\hat{\sA}\hat{\sB}}$ by $\cJ_\diamond^{\sB\sH\hat{\sA}\hat{\sB}} = \mathtt{UhlmannPurifiedSample}(\til{\rho}_{\rm c}^{\hat{\sA}\hat{\sB}}, \til{\sigma}_{\rm c}^{\hat{\sA}\hat{\sB}}; \delta/2, \diamond)$, the parameters should be rescaled by $\delta_2 \gets \delta/(4u)$ and $\delta_3 \gets (\delta/12)^2$ in $\mathtt{UhlmannPurifiedSample}$ (see also Algorithm~\ref{alg:Uhl purif sample}).
Thus, rescaling $\delta_1$ as $\delta_1 = \delta/\big(2(4u+1)\big)$, the overall error is bounded as
\begin{align}
    &\f{1}{2}\big\|\cJ_\diamond^{\sB\sH\hat{\sA}\hat{\sB}}\circ\cP_{\ket{0}}^{\bC\rarr\sH}\circ\cP_{\til{\sigma}_{\rm c}}^{\bC\rarr\hat{\sA}\hat{\sB}} - \cU_{\rm ideal}^{\sB\sH\hat{\sA}\hat{\sB}}\circ\cP_{\ket{0}\ket{\sigma_{\rm c}}}^{\bC\rarr\sH\hat{\sA}\hat{\sB}}\big\|_\diamond \leq \delta.
\end{align}

The number of samples of $\rho^\sA$ and $\sigma^\sA$ is given by
\begin{align}
\label{inteq:45}
    \zeta_\diamond 
    &= umf \\
    &= \cO\Big(\f{1}{\beta_\diamond}\log{(1/\delta_3)}\Big)\til{\cO}\Big(\f{d_\sA}{\delta_1\delta_2}\min\Big\{\f{1}{\rho_{\rm min}^2} + \f{1}{\sigma_{\rm min}^2}, \f{d_\sA^2}{\delta_1^4}\Big\}\Big)    \\
    \label{inteq:44}
    &= \til{\cO}\Big(\f{d_\sA}{\delta^2 s_{\rm min}^3} \min\Big\{\f{1}{\rho_{\rm min}^2} + \f{1}{\sigma_{\rm min}^2}, \f{d_\sA^2}{\delta^4 s_{\rm min}^4}\Big\}\Big).
\end{align}

We use $f\log{d_\sA}$ swap gates to prepare one pair of $\til{\rho}_{\rm c}^{\hat{\sA}\hat{\sB}}$ and $\til{\sigma}_{\rm c}^{\hat{\sA}\hat{\sB}}$. 
Once $w_\diamond = \cO\big(\big(\log{(1/\delta)}\big)^2(\delta s_{\rm min}^2)\big)$ pairs are prepared by $w_\diamond f \log{d_\sA}$ gates, $\mathtt{UhlmannPurifiedSample}$ is performed by $\cO(w_\diamond\log{d_\sA})$ gates. 
Hence, the total number of one- and two-qubit gates is 
\begin{equation}
    \cO\Big(\big(w_\diamond f + w_\diamond\big)\log{d_\sA}\big)\Big) = \cO\big(\zeta_\diamond\log{d_\sA}\big),
\end{equation}
where we used $w_\diamond f = \zeta_\diamond$.
Note that $\cO\big(\log{(d_\sA d_\sB)}\big) = \cO(\log{d_\sA})$, as $d_\sA = d_\sB$.
It suffices to use $\cO(\log{d_\sA})$ qubits at any one time, since this algorithm is performed sequentially.


\subsubsection{Evaluation in the fidelity difference}
\label{sec:uhlfidsamp}

Evaluation of the performance of the algorithm in the fidelity difference is almost identical to those provided in Secs.~\ref{sec:uhlfidquery} and~\ref{sec:uhlfidpurifsamp}.

We set $\beta$ to $\beta_\tF = \f{1}{\pi}\max\{2s_{\rm min}, \delta_3/r\}$. When we define $\sqrt{\rF'}$ by 
\begin{align}
    \sqrt{\rF'} = \sqrt{\rF}(U_1^{\sB\sH\hat{\sA}\hat{\sB}}\ket{\rho_{\rm c}}^{\sA\sB}\ket{\sigma_{\rm c}}^{\hat{\sA}\hat{\sB}}\ket{0}^{\sH}, \ket{\sigma_{\rm c}}^{\sA\sB}\ket{\rho_{\rm c}}^{\hat{\sA}\hat{\sB}}\ket{0}^{\sH}),
\end{align}
we have that
\begin{align}
\label{inteq:77}
    &\big|\sqrt{\rF}(\rho^\sA, \sigma^\sA) - \sqrt{\rF'}\big| \leq 2\delta_3.
\end{align}
This follows from the similar calculation from Eq.~\eqref{inteq:104} to Eq.~\eqref{inteq:13} in \Cref{sec:uhlfidquery}.

Let $\cT^{\sB}= \tr_{\sH\hat{\sA}\hat{\sB}}\circ\cJ^{\sB\sH\hat{\sA}\hat{\sB}}\circ\cP_{\ket{0}}^{\bC\rarr\sH}\circ\cP_{\til{\sigma}_{\rm c}}^{\bC\rarr\hat{\sA}\hat{\sB}}$. We obtain that
\begin{align}
    \big|\rF(\rho^\sA, \sigma^\sA) - \rF\big(\cT^{\sB}(\ketbra{\rho_{\rm c}}{\rho_{\rm c}}^{\sA\sB}), \ket{\sigma_{\rm c}}^{\sA\sB}\big)\big| 
    &\leq \f{u}{2}\big\|\til{\cF}^{\sC\hat{\sA}\hat{\sB}\sB'} - \cU_2^{\sC\hat{\sA}\hat{\sB}\sB'}\big\|_\diamond \notag\\
    &\hspace{3pc}+ \f{1}{2}\big\|\cP_{\til{\sigma}_{\rm c}}^{\bC\rarr\hat{\sA}\hat{\sB}} - \cP_{\ket{\sigma_{\rm c}}}^{\bC\rarr\hat{\sA}\hat{\sB}}\big\|_\diamond 
    + \big|\rF(\rho^\sA, \sigma^\sA) - \rF'\big| \\
    \label{inteq:105}
    &\leq u\delta_2 + 4\delta_3 +  (4u+1)\delta_1,
\end{align}
where $u = \cO\big(\log(1/\delta_3)/\beta_\tF\big)$.
Here, we used the fact that $\cJ^{\sB\sH\hat{\sA}\hat{\sB}}$ consists of $u$ use of the channel $\til{\cF}^{\sC\sH\hat{\sA}\hat{\sB}\sB}$ (and $\til{\cF}_{\rm inv}^{\sC\sH\hat{\sA}\hat{\sB}\sB}$) and that $|x-y|\leq 2 |\sqrt{x} - \sqrt{y}|$ for $0 \leq x, y \leq 1$, and also used Eqs.~\eqref{inteq:98},~$\eqref{inteq:97}$, and~\eqref{inteq:77}.

We define a quantum channel $\cJ_\tF^{\sB\sH\hat{\sA}\hat{\sB}}$ by $\cJ_\tF^{\sB\sH\hat{\sA}\hat{\sB}} = \mathtt{UhlmannPurifiedSample}(\til{\rho}_{\rm c}^{\hat{\sA}\hat{\sB}}, \til{\sigma}_{\rm c}^{\hat{\sA}\hat{\sB}}; \delta/2, \tF)$ and note that the parameters are rescaled as $\delta_2 \gets \delta/(4u)$ and $\delta_3 \gets \delta/16$ in $\mathtt{UhlmannPurifiedSample}$ in Algorithm~\ref{alg:Uhl purif sample}, as the first two terms on the right-hand side of Eq.~\eqref{inteq:105} correspond to the right-hand side of Eq.~\eqref{inteq:106}.
Thus, rescaling $\delta_1$ as $\delta_1 = \delta/\big(2(4u+1)\big)$, the overall error is bounded as
\begin{align}
     \rF\big(\cT^{\sB}(\ketbra{\rho_{\rm c}}{\rho_{\rm c}}^{\sA\sB}), \ket{\sigma_{\rm c}}^{\sA\sB}\big) \geq \rF(\rho^\sA, \sigma^\sA) - \delta.
\end{align}

Finally, we evaluate the number of samples of $\rho^{\hat{\sA}}$ and $\sigma^{\hat{\sA}}$ as 
\begin{align}
    \zeta_\tF 
    &= umq \\
    &=\til{\cO}\Big(\f{d_\sA}{\delta^2\beta_\tF^3}\min\Big\{\f{1}{\rho_{\rm min}^2} + \f{1}{\sigma_{\rm min}^2}, \f{d_\sA^2}{\beta_\tF^4\delta^4}\Big\}\Big),
\end{align}
where $mq$ is in Eq.~\eqref{inteq:99}.
The number of one- and two-qubit gates is
$\cO\big(\zeta_\tF \log{d_\sA}\big)$, and $\cO(\log{d_\sA})$ qubits suffice at any one time.


\section{Application 1: Square root fidelity estimation with the Uhlmann transformation algorithm}
\label{sec:application}

\label{sec:fidelity estimation}

We provide a quantum algorithm for estimating the square root fidelity $\sqrt{\rF}(\rho^\sA, \sigma^\sA)$ and evaluate the query and sample complexities in the three models.
In the sample access models, the sample cost depends on whether collective operations over multiple samples are allowed or not. We also provide a lower bound on the query and sample complexities for fidelity estimation, which in turn implies a corresponding lower bound for the Uhlmann transformation. The results are organized as follows:
\begin{itemize}
    \item The purified query access model in \Cref{sec:fidelity estimation PQA}.
    \item The purified and mixed sample access models in \Cref{sec:fidelity estimation purif and mixed sample}: collective case in \Cref{sec:fidelity estimation collective} and non-collective case in Secs.~\ref{sec:fidelity estimation PSS} and~\ref{sec:fidelity estimation SS}.
    \item A lower bound on the query and sample complexities in \Cref{sec:lower bound}.
\end{itemize}

The key idea of our approach is first to apply the Uhlmann transformation $V^\sB$ to the state $\ket{\rho}^{\sA\sB}$, and then, to estimate the square root fidelity between two pure states $V^\sB\ket{\rho}^{\sA\sB}$ and $\ket{\sigma}^{\sA\sB}$.
Due to the Uhlmann's theorem (Theorem~\ref{thm:Uhlmann theorem}), $\sqrt{\rF}(V^\sB\ket{\rho}^{\sA\sB}, \ket{\sigma}^{\sA\sB}) = \big|\bra{\sigma}V^\sB\ket{\rho}^{\sA\sB}\big|$ coincides with $\sqrt{\rF}(\rho^\sA, \sigma^\sA)$.
It is important to note that, when we implement the Uhlmann transformation using the quantum algorithms, $V^\sB$ can be realized approximately.
We need to evaluate how the approximation error of $V^\sB$ affects the fidelity estimation.

Comparing our approach with previous ones, a notable distinction is that previous approaches mainly aimed to directly prepare the square root of the states and discarded the ``purifying system'' midway, even when purified access was available.
In contrast, our approach keeps the purifying system throughout the process, enabling us to employ a square root fidelity estimation algorithm for pure states.
While fidelity estimation between mixed states is generally difficult~\cite{gilyen2022improvedfidelity, Rethinasamy2023estimatingdist}, fidelity estimation between pure states is more feasible, and optimal algorithms are known~\cite{Wang2024OptimalFidelityPure, wang2024sampleoptimal,Fang2025OptimalFidelityToPure}.
This results in a substantial speedup for our algorithm in the purified access model.


\subsection{In the purified query access model}
\label{sec:fidelity estimation PQA}

Our goal is to estimate the fidelity $\sqrt{\rF}(\rho^\sA, \sigma^\sA)$ using the Uhlmann transformation algorithm in the purified access model.
For simplicity, we sometimes abbreviate $\sqrt{\rF}(\rho^\sA, \sigma^\sA)$ as $\sqrt{\rF}$.
We propose a quantum query algorithm $\mathtt{UhlFidelityEstPurifQuery}$, whose performance is 
characterized by the following theorem.

\begin{theorem}[Square root fidelity estimation in the purified query access model]
\label{thm:fid est quel}
Let $\delta \in (0, 1)$. Then, the quantum query algorithm $\mathtt{UhlFidelityEstPurifQuery}(U_\rho^{\sA\sB}, U_\sigma^{\sA\sB}; \delta)$,
which is given by \Cref{alg:fidelestquery}, outputs $\sqrt{\til{\rF}}$ such that $\big|\sqrt{\til{\rF}} - \sqrt{\rF}\big| \leq \delta$ with probability at least $2/3$, using $q = \cO\big(\f{1}{\delta}{\rm min}\big\{\f{1}{s_{\rm min}}, \f{r}{\delta}\big\}\log{\big(\f{1}{\delta}\big)}\big)$ queries to $U_\rho^{\sA\sB}$, $U_\sigma^{\sA\sB}$, and their inverses. 
The quantum circuit of this algorithm consists of $\cO\big(q\log{(d_\sA d_\sB)} + (\log{(1/\delta}))^2\big)$ one- and two-qubit gates, and $\cO(\log{(d_\sA d_\sB)} + \log{(1/\delta}))$ qubits suffice at any one time.
\end{theorem}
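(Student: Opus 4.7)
The plan is to reduce the square root fidelity estimation problem between the mixed states $\rho^{\sA}$ and $\sigma^{\sA}$ to a square root fidelity estimation problem between two pure states, using the Uhlmann transformation as a bridge. By Uhlmann's theorem, there is a unitary $V^{\sB}$ (the Uhlmann partial isometry, extended to a unitary in the appropriate way) such that $\sqrt{\rF}(\rho^{\sA},\sigma^{\sA})=|\bra{\sigma}^{\sA\sB} V^{\sB}\ket{\rho}^{\sA\sB}|$. Hence, if we can approximately realize $V^{\sB}$ as a unitary on $\sB\sD$ with an ancilla $\sD$, we can then invoke a known optimal pure-state fidelity estimation algorithm (e.g., from Refs.~\cite{Wang2024OptimalFidelityPure,wang2024sampleoptimal}) to estimate the absolute inner product of the resulting two pure states.

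Concretely, first I would invoke the fidelity-difference version of \Cref{thm:Uhlmann alg purif query model} with error parameter set to $\delta/2$; this yields a unitary $\til{W}^{\sB\sD}$ acting on $\sB$ and an ancilla $\sD$ initialized to $\ket{0}^{\sD}$, whose output state $\til{W}^{\sB\sD}(\ket{\rho}^{\sA\sB}\ket{0}^{\sD})$ satisfies
\begin{equation}
\big|\sqrt{\rF}(\rho^{\sA},\sigma^{\sA})-|\bra{\sigma}^{\sA\sB}\bra{0}^{\sD}\til{W}^{\sB\sD}\ket{\rho}^{\sA\sB}\ket{0}^{\sD}|\big|\le \delta/2,
\end{equation}
using $\cO\bigl(\min\{1/s_{\min},r/\delta\}\log(1/\delta)\bigr)$ queries to $U_{\rho}^{\sA\sB}$, $U_{\sigma}^{\sA\sB}$, and their inverses. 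Then I would build two composite state-preparation unitaries: $\til{U}_{1}=\til{W}^{\sB\sD}\cdot(U_{\rho}^{\sA\sB}\otimes\bI^{\sD})$, which prepares a pure state close to $V^{\sB}\ket{\rho}^{\sA\sB}\ket{0}^{\sD}$, and $\til{U}_{2}=U_{\sigma}^{\sA\sB}\otimes\bI^{\sD}$, which prepares $\ket{\sigma}^{\sA\sB}\ket{0}^{\sD}$. Feeding these two unitaries (and their inverses) into the optimal pure-state fidelity estimation subroutine of Refs.~\cite{Wang2024OptimalFidelityPure,wang2024sampleoptimal} with accuracy $\delta/2$ and success probability $\geq 2/3$ produces an estimate $\sqrt{\til{\rF}}$ of $|\bra{\sigma}\bra{0}\til{W}\ket{\rho}\ket{0}|$.

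The final error bound follows by the triangle inequality: the Uhlmann step contributes at most $\delta/2$ and the pure-state estimation step contributes at most $\delta/2$, so $|\sqrt{\til{\rF}}-\sqrt{\rF}(\rho^{\sA},\sigma^{\sA})|\le \delta$ with probability at least $2/3$. For the query complexity, the pure-state fidelity estimation uses $\cO(1/\delta)$ queries to $\til{U}_{1}$, $\til{U}_{2}$, and their inverses; each call to $\til{U}_{1}$ (or its inverse) internally costs $\cO\bigl(\min\{1/s_{\min},r/\delta\}\log(1/\delta)\bigr)$ queries to $U_{\rho}^{\sA\sB}$, $U_{\sigma}^{\sA\sB}$, and their inverses, while each call to $\til{U}_{2}$ costs only $\cO(1)$ such queries. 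Multiplying these two factors gives the claimed total $\cO\bigl(\tfrac{1}{\delta}\min\{\tfrac{1}{s_{\min}},\tfrac{r}{\delta}\}\log(1/\delta)\bigr)$.

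The main subtlety that I expect to require care is the book-keeping of the ancilla $\sD$ and the fact that the QSVT output $\til{W}^{\sB\sD}$ is only an approximate block-encoding of the Uhlmann partial isometry $V^{\sB}$, not an exact one. The pure-state fidelity estimation algorithm requires the amplitude under consideration to be exactly $|\bra{\phi_{1}}\phi_{2}\rangle|$ for two genuinely prepared pure states; here, the state produced by $\til{U}_{1}$ only approximates $V^{\sB}\ket{\rho}^{\sA\sB}\ket{0}^{\sD}$, but the analysis of Eqs.~\eqref{inteq:104}--\eqref{inteq:13} already shows that the deviation of the relevant inner product from $\sqrt{\rF}(\rho^{\sA},\sigma^{\sA})$ is bounded by $2\delta_{1}$ under the appropriate choice of $\delta_{1}$, so setting $\delta_{1}=\delta/4$ suffices. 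A minor additional point is that the inverses of $\til{U}_{1}$ must be implementable in the same query cost, which follows from the fact that $\til{W}^{\sB\sD}$ is constructed via QSVT and hence admits an explicit inverse with the same query budget (cf.~\Cref{prop:general QSVT}).
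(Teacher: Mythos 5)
Your proposal is correct and follows essentially the same route as the paper: run the fidelity-difference version of the Uhlmann transformation algorithm (\Cref{alg:Uhl purif query}) to obtain $\til{W}^{\sB\sD}$, compose it with $U_\rho^{\sA\sB}$ to form a pure-state preparation unitary, and feed that together with $U_\sigma^{\sA\sB}$ into the optimal square root amplitude estimation of Ref.~\cite{Wang2024OptimalFidelityPure}, splitting the error budget $\delta/2 + \delta/2$ via the triangle inequality and multiplying the two query counts. Your handling of the subtlety that $\til{W}^{\sB\sD}$ only approximates the Uhlmann partial isometry — namely, that the estimator targets the actual inner product $|\bra{\sigma}\bra{0}\til{W}\ket{\rho}\ket{0}|$, which is itself $\delta/2$-close to $\sqrt{\rF}$ — matches the paper's argument exactly.
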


\begin{algorithm}[h]
\caption{Square root fidelity estimation algorithm in the purified query access model \\ \parbox{\linewidth}{\centering $\mathtt{UhlFidelityEstPurifQuery}(U_\rho^{\sA\sB}, U_\sigma^{\sA\sB}; \delta)$ (In Theorem~\ref{thm:fid est quel})}}
\label{alg:fidelestquery}
\SetKwInput{KwInput}{Input}
\SetKwInput{KwOutput}{Output}
\SetKwInput{KwParameters}{Parameters}

\SetAlgoNoEnd
\SetAlgoNoLine
\KwInput{Unitary oracles $U_\rho^{\sA\sB}$, $U_\sigma^{\sA\sB}$, and their inverses.}
\KwParameters{$\delta \in (0, 1)$.}
\KwOutput{Real number $\sqrt{\til{\rF}}$.}
\SetAlgoLined

Set $\til{W}_\tF^{\sB\sD} \gets \mathtt{UhlmannPurifiedQuery}(U_\rho^{\sA\sB}$, $U_\sigma^{\sA\sB}; \delta, \tF)$ (Algorithm~\ref{alg:Uhl purif query}). \\
Set $\sqrt{\til{\rF}} \gets \mathtt{SqrtAmpEstQuery}(\til{W}_\tF^{\sB\sD}U_\rho^{\sA\sB}, U_\sigma^{\sA\sB}; \delta/2)$ (Lemma~\ref{lem:square root amp est query}). \\
Return $\sqrt{\til{\rF}}$.
\end{algorithm}

In \Cref{alg:fidelestquery}, we use the \emph{square root amplitude estimation} algorithm $\mathtt{SqrtAmpEstQuery}$~\cite{Wang2024OptimalFidelityPure}, which is known to be optimal in terms of the number of queries in the case where both states are pure.
Here, we use a slightly adjusted version of the result from Ref.~\cite{Wang2024OptimalFidelityPure}.

\begin{theorem}[Square root amplitude estimation in the purified query access model~{\cite[Theorem III.4]{Wang2024OptimalFidelityPure}}]
\label{lem:square root amp est query}
Suppose that $W$ and $U$ are unitaries such that $W\ket{0} = \sqrt{c}U\ket{0} + \sqrt{1- c}\ket{\hspace{-1mm}\perp}$,
where $\ket{\hspace{-2mm}\perp}$ is a state satisfying $\bra{\perp\hspace{-2mm}}U\ket{0} = 0$.
Then, for any $\delta \in (0, 1)$, there is a quantum query algorithm $\mathtt{SqrtAmpEstQuery}(W, U; \delta)$ that outputs $\sqrt{\til{c}}$ such that $\big|\sqrt{\til{c}} - \sqrt{c}\big| \leq \delta$ holds with probability at least $2/3$, with $\cO(1/\delta)$ queries to $W$, $U$, and their inverses.
The quantum circuit of this algorithm consists of $\cO\big((\log{(1/\delta)})^2\big)$ one- and two-qubit gates, and $\cO(\log{(1/\delta)})$ qubits suffice at any one time.
\end{theorem}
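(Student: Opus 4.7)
The plan is to reduce the problem to a standard amplitude estimation instance and then invoke quantum phase estimation on the Grover operator. First, define $V := U^{\dagger}W$, which we can implement with one query each to $W$ and $U^{\dagger}$. Using the hypothesis $W\ket{0} = \sqrt{c}\,U\ket{0} + \sqrt{1-c}\,\ket{\perp}$ with $\bra{\perp}U\ket{0}=0$, we have
\begin{align}
V\ket{0} = \sqrt{c}\,\ket{0} + \sqrt{1-c}\,\ket{\perp'},
\end{align}
where $\ket{\perp'} = U^{\dagger}\ket{\perp}$ satisfies $\braket{0}{\perp'} = \overline{\bra{\perp}U\ket{0}} = 0$. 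Thus estimating $\sqrt{c}$ is equivalent to estimating the amplitude of $\ket{0}$ in $V\ket{0}$, and writing $\sqrt{c} = \sin\theta$ for some $\theta \in [0,\pi/2]$ brings us into the standard form addressed by Brassard--H\o yer--Mosca--Tapp amplitude estimation.

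Next, I would run the BHMT algorithm with $M = \lceil c_0 / \delta \rceil$ Grover iterations (for a suitable constant $c_0$), where each Grover iteration is built from $V$, $V^{\dagger}$, and the reflection $2\ketbra{0}{0}-\mathbb{I}$. The QPE step returns an integer $y \in \{0,\dots,M-1\}$, and I would take the output to be $\sqrt{\tilde{c}} := \sin(\pi y / M)$. The standard QPE analysis guarantees that $|\pi y / M - \theta| \leq \pi/M$ with probability at least $8/\pi^2 > 2/3$. Since $|\sin x - \sin y| \leq |x-y|$, this directly gives
\begin{align}
\bigl|\sqrt{\tilde{c}} - \sqrt{c}\bigr| = \bigl|\sin(\pi y/M) - \sin\theta\bigr| \leq \pi/M \leq \delta,
\end{align}
with the same probability, provided $c_0$ is chosen so that $\pi/M \leq \delta$. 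The total query count is $O(M) = O(1/\delta)$ queries to each of $W$, $U$, and their inverses, yielding the claimed complexity.

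The main subtlety, and the reason one must estimate the \emph{angle} rather than the squared amplitude, is the square-root conversion loss. If one instead estimated $c$ itself to precision $\epsilon$ (which BHMT also provides), passing to $\sqrt{c}$ would in the worst case degrade the accuracy to $\sqrt{\epsilon}$, forcing $O(1/\delta^{2})$ queries. Working directly with $\sin(\pi y/M)$ bypasses this because the $1$-Lipschitz property of the sine translates the additive error on the angle into an additive error on the amplitude with no quadratic loss. A further minor point is that the success probability $8/\pi^{2}$ of BHMT already exceeds $2/3$, so no median-of-trials boosting is required; if one wanted a stronger guarantee, running $O(1)$ independent copies and taking the median would suffice without affecting the asymptotic query complexity.
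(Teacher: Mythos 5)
Your proposal is correct and is essentially the algorithm the paper sketches: your Grover iterate $V S_0 V^{\dagger} S_0$ with $V = U^{\dagger}W$ is exactly the paper's product of reflections $Q = W e^{i\pi\ketbra{0}{0}}W^{\dagger}\,U e^{i\pi\ketbra{0}{0}}U^{\dagger}$ conjugated by $U$, and your input state $V\ket{0}$ is $U^{\dagger}$ applied to the paper's input $W\ket{0}$, so the QPE-on-the-rotation-angle argument and the $\sin(\pi y/M)$ post-processing coincide with the cited construction. Your remark on avoiding the quadratic loss by estimating the angle rather than the squared amplitude is exactly the point that makes the $\cO(1/\delta)$ bound work.
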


The algorithm $\mathtt{SqrtAmpEstQuery}$ is based on the \emph{quantum phase estimation}.
To estimate a phase of a target unitary $Q$ with accuracy $\delta$ and success probability $p_{\rm succ} \geq 1 - \eta$, the quantum phase estimation uses a total of $g = \cO(1/(\eta\delta))$ applications of $Q$. The quantum circuit includes $\cO\big((\log(1/\delta))^2\big)$ one- and two-qubit gates, which come from the quantum Fourier transform part, and uses $\cO(\log(1/\delta))$ auxiliary qubits.
Here, we took the overlap between the input state to the quantum phase estimation and the eigenstate of $Q$ corresponding to the estimated phase to be of constant order. In fact, while the success probability depends on the inverse of the overlap, the probability can be amplified with only a constant number of repetitions of $Q$ when the overlap is of constant order.

In the fidelity estimation, the target unitary is given by a product of reflection unitaries: 
\begin{align}
    Q&= e^{i\pi W\ketbra{0}{0}W^\dag}e^{i\pi U\ketbra{0}{0}U^\dag} \\
    &= (\bI - 2W\ketbra{0}{0}W^\dag)(\bI - 2U\ketbra{0}{0}U^\dag)  \\
    \label{inteq:151}
    &= We^{i\pi\ketbra{0}{0}}W^\dag U e^{i\pi\ketbra{0}{0}}U^\dag.
\end{align}
The input state is taken as $W\ket{0}$, since this state has a constant overlap with eigenstates of $Q$.
Then, it was shown that performing quantum phase estimation on $Q$ and a simple post-processing calculation yields an outcome that sufficiently approximates the square root fidelity between two pure states with high probability~\cite{Wang2024OptimalFidelityPure}.

To perform the quantum phase estimation for $Q$ in Eq.~\eqref{inteq:151}, we need the control version of $e^{i\pi\ketbra{0}{0}} = \bI - 2\ketbra{0}{0}$, in addition to $W$ and $U$. 
It can be easily constructed using multi-controlled NOT gates that are implemented with a polynomial number of one- and two-qubit gates in the number of qubits~\cite{nielsen2010quantum, saeedi2013NoAncilla, yoder2014fixed}.

With the above discussion, we now prove Theorem~\ref{thm:fid est quel}.

\begin{proof}[Proof of Theorem~\ref{thm:fid est quel}]

As discussed in \Cref{sec:uhlfidquery} (in particular, from just above Eq.~\eqref{inteq:14} to Eq.~\eqref{inteq:13}), a unitary $\til{W}_\tF^{\sB\sD} = \mathtt{UhlmannPurifiedQuery}(U_\rho^{\sA\sB}, U_\sigma^{\sA\sB}; \delta, \tF)$, which is an exact block-encoding of $P_\sgn(M^\sB)$, satisfies that 
\begin{align}
\label{inteq:53}
    \big|\sqrt{\rF}\big(\til{W}_\tF^{\sB\sD}\ket{\rho}^{\sA\sB}\ket{0}^\sD, \ket{\sigma}^{\sA\sB}\ket{0}^\sD\big) - \sqrt{\rF}\big| \leq \delta/2, 
\end{align}
where $M^\sB = \tr_{\hat{\sA}}\big[\ketbra{\sigma}{\rho}^{\hat{\sA}\sB}\big]$.
The number of queries for implementing $\til{W}_\tF^{\sB\sD}$ is given by $\cO\big(\log{(1/\delta)}/\beta_\tF\big)$, where $\beta_\tF = \max\{s_{\rm min}, \delta/(8r)\}$.
Remember that $s_{\rm min}$ and $r$ are the minimum non-zero singular value and the rank of $\sqrt{\sigma^\sA}\sqrt{\rho^\sA}$, respectively.

Next, we consider estimating the value of $\sqrt{\rF'} = \sqrt{\rF}\big(\til{W}_\tF^{\sB\sD}\ket{\rho}^{\sA\sB}\ket{0}^\sD, \ket{\sigma}^{\sA\sB}\ket{0}^\sD\big)$. 
Using the Uhlmann transformation algorithm, we can obtain the states $\til{W}_\tF^{\sB\sD}\ket{\rho}^{\sA\sB}\ket{0}^{\sD}$ such that
\begin{align}
    \label{inteq:170}
    \til{W}_\tF^{\sB\sD}\ket{\rho}^{\sA\sB}\ket{0}^{\sD}
    &= \til{W}_\tF^{\sB\sD}U_\rho^{\sA\sB}\ket{0}^{\sA\sB\sD}    \\
    &= \sqrt{\rF'}e^{i\theta}\ket{\sigma}^{\sA\sB}\ket{0}^{\sD} + \ket{\perp~}^{\sA\sB\sD}, \\
    &= \sqrt{\rF'}U_\sigma^{\sA\sB}\ket{0}^{\sA\sB\sD} + \ket{\perp~}^{\sA\sB\sD},
\end{align}
where $\ket{\perp~}^{\sA\sB\sD}$ is a state satisfying $\bra{\perp\!}U_\sigma^{\sA\sB}\ket{0}^{\sA\sB\sD} = 0$.
A certain phase $\theta$ does not affect the result and is hence absorbed into $U_\sigma^{\sA\sB}$.
Thus, from \Cref{lem:square root amp est query}, when we perform the quantum phase estimation for a target unitary
\begin{align}
\label{inteq:48}
    Q^{\sA\sB\sD} = \til{W}_\tF^{\sB\sD}U_\rho^{\sA\sB}e^{i\pi\ketbra{0}{0}^{\sA\sB\sD}}(\til{W}_\tF^{\sB\sD}U_\rho^{\sA\sB})^\dag U_\sigma^{\sA\sB}e^{i\pi\ketbra{0}{0}^{\sA\sB\sD}}(U_\sigma^{\sA\sB})^\dag,
\end{align}
with the input state $\til{W}_\tF^{\sB\sD}U_\rho^{\sA\sB} \ket{0}^{\sA\sB\sD}$, we can estimate $\sqrt{\rF'}$ within the error $\delta$ using $\cO(1/\delta)$ applications of $\til{W}_\tF^{\sB\sD}U_\rho^{\sA\sB}$, $U_\sigma^{\sA\sB}$, and their inverses.
That is, the output $\sqrt{\til{\rF}} = \mathtt{SqrtAmpEstQuery}(\til{W}_\tF^{\sB\sD}U_\rho^{\sA\sB}, U_\sigma^{\sA\sB}; \delta/2)$ satisfies that 
\begin{align}
\label{inteq:54}
    \Big|\sqrt{\til{\rF}} - \sqrt{\rF'}\Big| \leq \delta/2,
\end{align}
with probability at least $2/3$.

Therefore, we conclude that $\Big|\sqrt{\til{\rF}} - \sqrt{\rF}\Big| \leq \delta$, where we used the triangle inequality and Eqs.~\eqref{inteq:53} and~\eqref{inteq:54}.
The number of query $q$ to $U_\rho^{\sA\sB}$, $U_\sigma^{\sA\sB}$, and their inverses is determined as
\begin{align}
\label{inteq:58}
    q = \cO\big(\log{(1/\delta)}/\beta_\tF\big) \cO(1/\delta) =
    \cO\Big(\f{1}{\delta}{\rm min}\Big\{\f{1}{s_{\rm min}}, \f{r}{\delta}\Big\}\log{\Big(\f{1}{\delta}\Big)}\Big).
\end{align}
In the quantum circuit of this algorithm, the unitary $Q^{\sA\sB\sD}$ in Eq.~\eqref{inteq:48} is repeated $\cO(1/\delta)$ times, where $Q^{\sA\sB\sD}$ includes the Uhlmann transformation consisting of $\cO(\log{(1/\delta)\log{(d_\sA d_\sB)}/\beta_\tF})$ gates (\Cref{thm:Uhlmann alg purif query model}).
Including additional $\cO\big((\log{(1/\delta)})^2\big)$ gates in the square root amplitude estimation, i.e., the quantum phase estimation (\Cref{lem:square root amp est query}), the total number of one- and two-qubit gates is determined by
\begin{align}
    \cO(1/\delta)\cO(\log{(1/\delta)\log{(d_\sA d_\sB)}/\beta_\tF}) + \cO\big((\log{(1/\delta)})^2\big)
    = \cO\big(q\log{(d_\sA d_\sB)} + (\log{(1/\delta)})^2\big).
\end{align}
The number of qubits used at any one time is given by $\cO(\log{(d_\sA d_\sB)} + \log{(1/\delta)})$,
where the first term comes from the Uhlmann transformation and the second term from the square root amplitude estimation.

We complete a proof of Theorem~\ref{thm:fid est quel}.

\end{proof}


\subsection{In the purified and mixed sample access model}
\label{sec:fidelity estimation purif and mixed sample}

We consider the estimation of the fidelity $\sqrt{\rF} = \sqrt{\rF}(\rho^\sA, \sigma^\sA)$ in the purified and mixed sample access models. In \Cref{sec:fidelity estimation collective}, we discuss algorithms which employ collective operations. In Secs.~\ref{sec:fidelity estimation PSS} and~\ref{sec:fidelity estimation SS}, we construct algorithms with non-collective operations in the purified and mixed sample access models, respectively.

\subsubsection{In the purified and mixed sample access models: collective case}
\label{sec:fidelity estimation collective}

We now provide the square root fidelity estimation algorithms in the purified and mixed sample access models when collective operations over multiple samples of the states are allowed.
Our statement is as follows.

\begin{theorem}[Square root fidelity estimation in the purified and mixed sample access models with collective operations]
\label{thm:fidelity est local sample collective}
Let $\delta \in (0, 1)$. Then, in each of the purified and mixed sample access models, there is a quantum sample algorithm that outputs $\sqrt{\til{\rF}}$ such that $\big|\sqrt{\til{\rF}} - \sqrt{\rF}\big| \leq \delta$ with probability at least $2/3$. The algorithm uses $n = \cO\big(\f{1}{\delta^2}{\rm min}\big\{\f{1}{s_{\rm min}^2}, \f{r^2}{\delta^2}\big\}\big(\log{\big(\f{1}{\delta}\big)}\big)^2\big)$ samples of $\ket{\rho}^{\sA\sB}$ and $\ket{\sigma}^{\sA\sB}$ in the purified sample access model, or $\rho^\sA$ and $\sigma^\sA$ in the mixed sample access model.
The quantum circuit of this algorithm consists of $\til{\cO}(n^4)$ one- and two-qubit gates, and $\til{\cO}(n^2)$ qubits suffice at any one time.

\end{theorem}

The key idea in the construction of the algorithms in \Cref{thm:fidelity est local sample collective} is to use a technique recently described in Ref.~\cite{tang2025conjugatequerieshelp}, which aims to simulate a query algorithm by a sample algorithm. This is achieved by using sampled states to approximately implement unitaries that prepare their purified states.
If the original query algorithm attains its goal with probability at least $1-\eta$ using $q$ queries, the corresponding sample algorithm can simulate the same goal with probability at least $1-\eta-\epsilon$ using $\cO(q^2/\epsilon)$ samples.
With this technique, all $q = \cO\big(\f{1}{\delta}{\rm min}\big\{\f{1}{s_{\rm min}}, \f{r}{\delta}\big\}\log{\big(\f{1}{\delta}\big)}\big)$ queries to the purified state preparation unitaries $U_\rho^{\sA\sB}$, $U_\sigma^{\sA\sB}$, and their inverses in the query algorithm $\mathtt{UhlFidelityEstPurifQuery}$ (\Cref{thm:fid est quel}) are replaced by approximate operations constructed from multiple copies of $\ket{\rho}^{\sA\sB}$ and $\ket{\sigma}^{\sA\sB}$, or $\rho^\sA$ and $\sigma^\sA$. Consequently, we then obtain the result in \Cref{thm:fidelity est local sample collective}.

We formally introduce the technique in Ref.~\cite{tang2025conjugatequerieshelp}, slightly adapted for our purposes.
In the following, we focus on the discussion of the mixed sample access model.
For the purified sample access model, the step of random purification described below can simply be omitted.

\begin{theorem}[Simulating queries to state preparation unitary given copies of the state~{\cite[Theorem 1.5]{tang2025conjugatequerieshelp}}]
\label{thm:sample to query lift}
Let $\epsilon, \eta \in (0, 1)$ and let $U_\omega^{\sA\sB}$ be a unitary such that $U_\omega^{\sA\sB}\ket{0}^{\sA\sB} = \ket{\omega}^{\sA\sB}$. Suppose that there is a quantum query algorithm that makes $q$ queries to $U_\omega^{\sA\sB}$ and $(U_\omega^{\sA\sB})^\dag$, and produces an output with probability $p^{\rm qry}$.
If the query algorithm satisfies $p^{\rm qry} \geq 1 - \eta$ for any choice of the purifying system $\sB$, then there exists a quantum sample algorithm that simulates the output of the query algorithm with probability $p^{\rm samp} \geq 1 - \eta - \epsilon$, using $n = \cO(q^2 / \epsilon)$ samples of the state $\omega^\sA$.
The quantum circuit of the sample algorithm consists of $\til{\cO}(n^4)$ one- and two-qubit gates, and $\til{\cO}(n^2)$ qubits at any one time, in addition to those of the query algorithm.

\end{theorem}

At the heart of this algorithm are the following three primitives.
The first primitive is called random purification, which is regarded as an algorithmic strengthening of the results in Refs.~\cite{Soleimanifar2022testingmatrixproduct, chen2024localtestUniInv}. 
The term \emph{random} refers to the following: let $\ket{\omega_0}^{\sA\sB}$ be a fixed purified state of $\omega^\sA$. Then, all purified states of $\omega^\sA$ on $\sA\sB$ are connected by a unitary acting on the purifying system $\sB$; that is, for any purified state, there is a unitary $U^\sB$ such that it can be written as $U^\sB\ket{\omega_0}^{\sA\sB}$.
The random means that the unitary $U^\sB$ is drawn uniformly at random according to the Haar measure. 
The random purification subroutine then allows us to approximately map $(\omega^\sA)^{\otimes n}$ to a state $\bE_U[(U^\sB\ketbra{\omega_0}{\omega_0}^{\sA\sB}(U^{\sB})^\dag)^{\otimes n}]$, i.e., to a state uniformly averaged over the choice of purification.
Remarkably, although the state is averaged, the algorithm prepares $n$ purified states from $n$ samples of $\omega^\sA$ without reducing the number of samples.
The primary component of this subroutine is the Schur transform~\cite{harrow2005applicationSchurtrans, Bacon2006efficientcircuitSchur, Krovi2019efficienthigh, Nguyen2024theorynuralnet, burchardt2025highdimensionalschur}, which mainly determines the number of additional gates and qubits to the circuit of the query algorithm as $\til{\cO}(n^4)$ and $\til{\cO}(n^2)$~\cite{burchardt2025highdimensionalschur}, respectively.

The second primitive, which appeared earlier in Refs.~\cite{kretschmer2021pseudorandomnessclassical, Goldin2025translatingcommonhaar}, relates to a state with a phase: $\ket{\omega_\theta}^{\sA\sB\sS} = \f{1}{\sqrt{2}}(e^{i\theta}\ket{0}^{\sA\sB}\ket{0}^{\sS} + \ket{\omega}^{\sA\sB}\ket{1}^{\sS})$, where $\sS$ is a one-qubit auxiliary system. 
Using this primitive, we can map $(\ketbra{\omega}{\omega}^{\sA\sB})^{\otimes n}$ to $\bE_\theta[(\ketbra{\omega_\theta}{\omega_\theta}^{\sA\sB\sS})^{\otimes n}]$, where the expectation $\bE_\theta[\cdot]$ is taken over $\theta$ chosen uniformly at random from $[0, 2\pi)$. 
The quantum circuit implementing this primitive consists of $\cO(n^2\log{(d_\sA d_\sB)})$ one- and two-qubit gates and uses $\cO(n\log{(d_\sA d_\sB)})$ qubits at any one time \cite{tang2025conjugatequerieshelp}.

The last primitive is the density matrix exponentiation (\Cref{prevthm:cntrl-DME} with $t=\pi$), which maps $(\ketbra{\omega_\theta}{\omega_\theta}^{\sA\sB\sS})^{\otimes m}$ to $e^{i\pi \ketbra{\omega_\theta}{\omega_\theta}^{\sA\sB\sS}}$ within a diamond norm error $\epsilon'$, where $m = \cO(1/\epsilon')$. 
Note that the unitary $e^{i\pi \ketbra{\omega_\theta}{\omega_\theta}^{\sA\sB\sS}}$ is a purified state preparation unitary because $e^{i\pi \ketbra{\omega_\theta}{\omega_\theta}^{\sA\sB\sS}} \ket{0}^{\sA\sB\sS} = -e^{-i\theta} \ket{\omega}^{\sA\sB} \ket{1}^\sS$. Here, the state $-e^{-i\theta} \ket{\omega}^{\sA\sB} \ket{1}^\sS$ is one of the purified states of $\omega^\sA$ with purifying system $\sB\sS$ of dimension at least twice the rank of $\omega^\sA$.
To apply the unitary for $q$ times with total error at most $\epsilon$, we set $\epsilon' = \epsilon/q$ and use $n = qm = \cO(q^2/\epsilon)$ copies of $\ket{\omega_\theta}^{\sA\sB\sS}$.
Naturally, the inverse of the purified state preparation unitary is obtained in a similar manner.

Now, consider a query algorithm that outputs a certain result with probability $p^{\rm qry}$, using $q$ queries to a unitary preparing a purified state of $\omega^\sA$.
We suppose that this query algorithm is independent of the choice of purification, i.e., it satisfies $p^{\rm qry} \geq 1 - \eta$ for any given purified state preparation unitary.
Without loss of generality, we then consider a sufficiently large purifying system, at least twice the rank of $\omega^\sA$.
By replacing all $q$ queries in the query algorithm with the three primitives above, we obtain a sample algorithm.
Using $n = \cO(q^2/\epsilon)$ copies of the state $\omega^\sA$, the sample algorithm can simulate the original query algorithm with probability $p^{\rm samp}$, such that $\big|p^{\rm samp} - \bE_\theta\bE_U [p^{\rm qry}]\big| \leq \epsilon$.
Since $p^{\rm qry} \geq 1 - \eta$ for any $U^\sB$ and $\theta$, it holds that $p^{\rm samp} \geq \bE_\theta \bE_U [p^{\rm qry}] - \epsilon \ge 1 - \eta - \epsilon$.
Hence, \Cref{thm:sample to query lift} follows.

As mentioned in Ref.~\cite{tang2025conjugatequerieshelp}, we should note a caveat: the query algorithm to which we apply \Cref{thm:sample to query lift} must work on any choice of purified state preparation unitary. Since the first and second primitives involve the averages $\bE_\theta[\cdot]$ and $\bE_U[\cdot]$ over possible purifications, the result cannot depend on a specific purification. 
While this condition is typically satisfied in common query algorithms, depending on tasks, a particular purification may be more advantageous than others. For instance, this is the case for the Uhlmann transformation itself. Specifically, the Uhlmann transformation works only for the specific purifications determined by the given initial and final states, $\ket{\rho}^{\sA\sB}$ and $\ket{\sigma}^{\sA\sB}$, and not for arbitrary purifications. On the other hand, when the goal is to estimate the fidelity, this issue can be circumvented by applying the purified state preparation unitaries $U_\rho^{\sA\sB}$ and $U_\sigma^{\sA\sB}$ for the initial and final states, along with the Uhlmann transformation (see also Eqs.~\eqref{inteq:170} to~\eqref{inteq:48}). For this reason, the sample cost of fidelity estimation in the mixed sample access model can be improved by this approach, whereas that of the Uhlmann transformation itself cannot.

Given the above results, \Cref{thm:sample to query lift}, we can straightforwardly prove \Cref{thm:fidelity est local sample collective}.
Note that, unlike other algorithms discussed in this paper, the algorithm in \Cref{thm:sample to query lift}---and thus our algorithms in \Cref{thm:fidelity est local sample collective}---need to handle $n$ samples of the state collectively.

\begin{proof}[Proof of Theorem~\ref{thm:fidelity est local sample collective}]

Recalling that for $\delta \in (0, 1)$, and $\eta \in (0, 1)$ of constant order, the algorithm
\begin{align}
    \mathtt{UhlFidelityEstPurifQuery}(U_\rho^{\sA\sB}, U_\sigma^{\sA\sB}; \delta)
\end{align}
in \Cref{alg:fidelestquery} outputs an estimate $\sqrt{\til{\rF}}$ that satisfies $\big|\sqrt{\til{\rF}} - \sqrt{\rF}\big| \leq \delta$ with success probability $p^{\rm qry} \ge 1-\eta$, using $q = \cO\big(\f{1}{\delta}{\rm min}\big\{\f{1}{s_{\rm min}}, \f{r}{\delta}\big\}\log{\big(\f{1}{\delta}\big)}\big)$ queries to $U_\rho^{\sA\sB}$, $U_\sigma^{\sA\sB}$, and their inverses.
Although the success probability was set to at least $2/3$ in \Cref{thm:fid est quel}, it can be amplified to $1-\eta$ with only a constant increase of the query cost as long as $\eta$ is of constant order (see Sec.~\ref{sec:fidelity estimation PQA}).
Note that the query algorithm $\mathtt{UhlFidelityEstPurifQuery}$ works on \emph{any} choice of purification, and hence, outputs the desired estimate with probability at least $1 - \eta$ independent of which purification is selected.

Following Theorem~\ref{thm:sample to query lift}, we implement the query algorithm with $q$ queries by the corresponding sample algorithm by approximately constructing the purified state preparation unitaries.
Using $n = \cO(q^2/\epsilon)$ samples of the states $\rho^\sA$ and $\sigma^\sA$, this sample algorithm outputs $\sqrt{\til{\rF}}$ satisfying $\big|\sqrt{\til{\rF}} - \sqrt{\rF}\big| \leq \delta$ with probability $p^{\rm samp}$ such that $p^{\rm samp} \geq 1 - \eta - \epsilon$.
By setting $\eta = 1/6$ and $\epsilon = 1/6$, we conclude that, for any $\delta \in (0, 1)$, the sample algorithm outputs $\sqrt{\til{\rF}}$ that satisfies $\big|\sqrt{\til{\rF}} - \sqrt{\rF}\big| \leq \delta$ with probability $p^{\rm samp} \geq 2/3$, using $n$ samples of the states $\rho^\sA$ and $\sigma^\sA$, where
\begin{align}
    n 
    &= \cO(q^2/\epsilon) \\
    &=\Big(\f{1}{\delta^2}{\rm min}\Big\{\f{1}{s_{\rm min}^2}, \f{r^2}{\delta^2}\Big\}\Big(\log{\Big(\f{1}{\delta}\Big)}\Big)^2\Big).
\end{align}

We now evaluate the number of gates and qubits in the quantum circuit of this sample algorithm.
The original query algorithm uses $\cO\big(q\log(d_\sA d_\sB) + (\log(1/\delta))^2\big)$ gates and $\cO(\log(d_\sA d_\sB) + \log(1/\delta))$ qubits at once. (\Cref{thm:fid est quel}). 
In addition, $n\log(d_\sA d_\sB)$ qubits are required to store the $n$ samples, so the number of qubits becomes $n\log(d_\sA d_\sB) + \cO(\log(d_\sA d_\sB) + \log(1/\delta)) = \cO(n\log(d_\sA d_\sB))$.
From \Cref{thm:sample to query lift}, converting the query algorithm to the sample one takes $\til{\cO}(n^4)$ gates and $\til{\cO}(n^2)$ qubits simultaneously, in the part of the Schur transform~\cite{burchardt2025highdimensionalschur}.
Therefore, the total number of one- and two-qubit gates in the whole sample algorithm is determined by their sum: $\cO\big(q\log(d_\sA d_\sB) + (\log(1/\delta))^2\big) + \til{\cO}(n^4) = \til{\cO}(n^4)$.
The maximum number of qubits used at one time is given by $\max\{\cO(n\log(d_\sA d_\sB)), \til{\cO}(n^2)\} = \til{\cO}(n^2)$.

\end{proof}

\subsubsection{In the purified sample access model: non-collective case}
\label{sec:fidelity estimation PSS}

Given multiple copies of $\ket{\rho}^{\sA\sB}$ and $\ket{\sigma}^{\sA\sB}$, we aim to estimate the fidelity $\sqrt{\rF}=\sqrt{\rF}(\rho^\sA, \sigma^\sA)$ without using collective operations.
We propose a quantum sample algorithm $\mathtt{UhlFidelityEstPurifSample}$, for which the following theorem guarantees its performance.
The algorithm is described in \Cref{alg:fidelestsample}, where $\sF=\sH\hat{\sA}\hat{\sB}$.

\begin{theorem}[Square root fidelity estimation in the purified sample access model without collective operations]
\label{thm:sqrtfidelestpufifsample}
Let $\delta \in (0, 1)$. Then, the quantum sample algorithm $\mathtt{UhlFidelityEstPurifSample}(\ket{\rho}^{\sA\sB}, \ket{\sigma}^{\sA\sB}; \delta)$,
which is given by \Cref{alg:fidelestsample}, outputs $\sqrt{\til{\rF}}$ such that $\big|\sqrt{\til{\rF}} - \sqrt{\rF}\big| \leq \delta$ with probability at least $2/3$, using $n = \cO\big(\f{1}{\delta^3}{\rm min}\big\{\f{1}{s_{\rm min}^2}, \f{r^2}{\delta^2}\big\}\big(\log{\big(\f{1}{\delta}\big)}\big)^2\big)$ samples of $\ket{\rho}^{\sA\sB}$ and $\ket{\sigma}^{\sA\sB}$.
The quantum circuit of this algorithm consists of $\cO(n\log{(d_\sA d_\sB)})$ one- and two-qubit gates, and $\cO(\log{(d_\sA d_\sB)} + \log{(1/\delta)})$ qubits suffice at any one time.
\end{theorem}

\begin{algorithm}[h]
\caption{Square root fidelity estimation algorithm in the purified sample access model \\ \parbox{\linewidth}{\centering $\mathtt{UhlFidelityEstPurifSample}(\ket{\rho}^{\sA\sB}, \ket{\sigma}^{\sA\sB}; \delta)$ (In Theorem~\ref{thm:sqrtfidelestpufifsample})}}
\label{alg:fidelestsample}
\SetKwInput{KwInput}{Input}
\SetKwInput{KwOutput}{Output}
\SetKwInput{KwParameters}{Parameters}
\SetKwComment{Comment}{$\triangleright$\ }{}
\SetCommentSty{textnormal}
\SetKwProg{Fn}{Subroutine}{}{end}
\SetKwFunction{UPS}{UhlmannPurifiedSample}
\SetKwFunction{SAES}{SqrtAmpEstSample}

\SetAlgoNoEnd
\SetAlgoNoLine
\KwInput{Two pure quantum state $\ket{\rho}^{\sA\sB}$ and $\ket{\sigma}^{\sA\sB}$.}
\KwParameters{$\delta \in (0, 1)$.}
\KwOutput{Real number $\sqrt{\til{\rF}}$.}
\SetAlgoLined

Set $\delta_2 \gets \delta/2$ and $\delta_1 \gets \delta_2/(120\pi)$. \\
Set $\cJ_\tF^{\sB\sF} \gets \mathtt{UhlmannPurifSample}(\ket{\rho}^{\sA\sB}, \ket{\sigma}^{\sA\sB}; \delta_1, \tF)$ (Algorithm~\ref{alg:Uhl purif sample}). \\
Set $\omega^{\sA\sB\sF} \gets \cJ_\tF^{\sB\sF}\circ\cP_{\ket{0}\ket{\sigma}}^{\bC\rarr\sF}(\ketbra{\rho}{\rho}^{\sA\sB})$ and $\ket{\psi}^{\sA\sB\sF} \gets \ket{\sigma}^{\sA\sB}\ket{0}^\sH\ket{\rho}^{\hat{\sA}\hat{\sB}}$. \\
Set $\sqrt{\til{\rF}} \gets \mathtt{SqrtAmpEstSample}(\omega^{\sA\sB\sF}, \ket{\psi}^{\sA\sB\sF}; \delta_2)$ (Lemma~\ref{lem:modif square root est sample}). \\
Return $\sqrt{\til{\rF}}$.
\end{algorithm}

The estimation strategy is basically the same as in the last section.
The difference is that, unlike in the purified query access model, we cannot exactly prepare the target unitary $Q$, which is a product of reflections, for the quantum phase estimation.
In this case, a sample version of $\mathtt{SqrtAmpEstQuery}$, i.e., the square root amplitude estimation algorithm for pure state samples $\mathtt{SqrtAmpEstSample}$~\cite{wang2024sampleoptimal}, is helpful.

\begin{theorem}[Square root amplitude estimation algorithm for pure state samples~{\cite[Theorem 7.1]{wang2024sampleoptimal}}]
\label{lem:square root est sample}
For two pure states $\ket{\psi}$ and $\ket{\phi}$ in a $d$-dimensional Hilbert space, and any $\delta \in (0, 1/2)$, there exists a quantum sample algorithm $\mathtt{SqrtAmpEstSample}(\ket{\phi}, \ket{\psi}; \delta)$ that outputs $\sqrt{\til{c}}$ such that $\big|\sqrt{\til{c}} - |\braket{\phi}{\psi}|\big| \leq \delta$ holds with probability at least $2/3$, using $m = \cO\big(1/\delta^2\big)$ samples of $\ket{\psi}$ and $\ket{\phi}$.
The quantum circuit of this algorithm consists of $\cO(m\log{d})$ one- and two-qubit gates, and $\cO(\log{d} + \log{(1/\delta)})$ qubits suffice at any one time.
\end{theorem}

The key idea of this algorithm is as follows. As mentioned in the last section, the quantum phase estimation on the product of reflections $Q = e^{i\pi\ketbra{\psi}{\psi}}e^{i\pi\ketbra{\phi}{\phi}}$ with the input $\ket{\psi}$ and a simple post-processing calculation yield a good estimate of $|\braket{\phi}{\psi}|$.
Since we cannot exactly implement the unitaries $e^{i\pi\ketbra{\psi}{\psi}}$ and $e^{i\pi\ketbra{\phi}{\phi}}$ in the sample model, we instead approximate them with error $\epsilon$ via the density matrix exponentiation, using $\cO(1/\epsilon)$ samples of $\ket{\psi}$ and $\ket{\phi}$.

Typically, when $\ket{\psi}$ is an eigenstate of $Q$, the quantum phase estimation applies $Q$ a total of $g = \sum_{j=1}^l 2^{j-1} = 2^l - 1 = \cO(1/(\eta\delta))$ times, and uses $\cO\big((\log{(1/\delta)})^2\big)$ gates and $\cO(\log{(1/\delta)})$ auxiliary qubits, to estimate the corresponding phase of $Q$ with error at most $\delta$ and success probability at least $1 - \eta$, where $l = \big\lceil\log{\big(\f{1}{\delta}(2+\f{1}{2\eta})\big)}\big\rceil$~\cite{kitaev1995quantummeasurementsabelian, cleve1997quantumalgorithmrevisit, nielsen2010quantum}.
However, if $Q$ is only approximately prepared, the accumulated error reduces the success probability to $p_{\rm succ} \geq 1 - \eta - 2g\epsilon$.
To ensure that $p_{\rm succ}$ remains a constant order, such as $2/3$, it suffices to set $\eta = 1/6$ and $\epsilon \leq 1/(12g)$, resulting in a total sample complexity of $\ket{\psi}$ and $\ket{\phi}$ as $g\cO(1/\epsilon) = \cO(1/\delta^2)$.
Since the quantum phase estimation requires a controlled version of the target unitary $Q$, we can simply use the controlled implementation of the density matrix exponentiation~\cite{kimmel2017HamSimSampleComp}.

We should note that, in the original paper~\cite{wang2024sampleoptimal}, both inputs to $\mathtt{SqrtAmpEstSample}$ are limited to pure states, while, in \Cref{alg:fidelestsample}, one of them is rather a mixed state $\omega^{\sA\sB\sF}$, which may be close to pure.
This implies that every use of the pure state in $\mathtt{SqrtAmpEstSample}$ is replaced with $\omega^{\sA\sB\sF}$.
We need to evaluate how the error propagates under this modification.

\begin{lemma}[Modified version of Theorem~\ref{lem:square root est sample}]
\label{lem:modif square root est sample}
Suppose that a quantum state $\omega$ satisfies $\f{1}{2}\big\|\omega - \ketbra{\phi}{\phi}\big\|_1 \leq \epsilon'$. Then, for $\delta \in (120\pi\epsilon', 1/2)$ and $\epsilon' \in (0, 1/(240\pi))$, the quantum sample algorithm 
\begin{align}
    \mathtt{SqrtAmpEstSample}(\omega, \ket{\psi}; \delta)
\end{align}
outputs $\sqrt{\til{c}}$ such that $\big|\sqrt{\til{c}} - |\braket{\phi}{\psi}|\big| \leq \delta$ with probability at least $2/3$, using $m = \cO\big(1/\big(\delta(\delta/120 - \pi\epsilon')\big)\big)$ samples of $\omega$ and $\ket{\psi}$.
The quantum circuit of this algorithm consists of $\cO(m\log{d})$ one- and two-qubit gates, and $\cO(\log{d} + \log{(1/\delta)})$ qubits suffice at any one time.
\end{lemma}

\begin{proof}[Proof of \Cref{lem:modif square root est sample}]
We observe that 
\begin{align}
    \f{1}{2}\|e^{i\pi\omega}(\cdot)e^{-i\pi\omega} - e^{i\pi\ketbra{\phi}{\phi}}(\cdot)e^{-i\pi\ketbra{\phi}{\phi}}\|_\diamond
    &\leq \big\|e^{i\pi\omega} - e^{i\pi\ketbra{\phi}{\phi}}\big\|_\infty \\
    &\leq \pi \big\|\omega - \ketbra{\phi}{\phi}\big\|_\infty \\
    &\leq \pi \big\|\omega - \ketbra{\phi}{\phi}\big\|_1 \\
    &\leq 2\pi \epsilon',
\end{align}
where we used \Cref{lem:state error not accumulate} in the second inequality, and the last inequality follows from the assumption.
Using the density matrix exponentiation with $\cO(1/\epsilon)$ samples of $\omega$, we can implement the quantum channel $\cL$ that satisfies $\f{1}{2}\big\|\cL - e^{i\pi\omega}(\cdot)e^{-i\omega}\big\|_\diamond \leq \epsilon$.
Thus, we have that
\begin{align}
    \f{1}{2}\big\|\cL - e^{i\pi\ketbra{\phi}{\phi}}(\cdot)e^{-i\pi\ketbra{\phi}{\phi}}\big\|_\diamond 
    &\leq \f{1}{2}\|\cL - e^{i\pi\omega}(\cdot)e^{-i\pi\omega}\|_\diamond  \notag\\
    &\hspace{2pc}+\f{1}{2}\|e^{i\pi\omega}(\cdot)e^{-i\pi\omega} - e^{i\pi\ketbra{\phi}{\phi}}(\cdot)e^{-i\pi\ketbra{\phi}{\phi}}\|_\diamond \\
    &\leq \epsilon + 2\pi \epsilon'.
\end{align}

We recall that in the quantum phase estimation on the product of reflections $Q = e^{i\pi\ketbra{\psi}{\psi}} e^{i\pi\ketbra{\phi}{\phi}}$, the target unitary $Q$ is used $g = 2^l -1 $ times, where $l = \big\lceil\log{\big(\f{1}{\delta}(2+\f{1}{2\eta})\big)}\big\rceil$.
When we replace all $g$ uses of $e^{i\pi\ketbra{\phi}{\phi}}$ with the channel $\cL$, the resulting error is $g(\epsilon + 2\pi \epsilon')$.
Combining this with the error $g\epsilon$ from approximating $g$ uses of $e^{i\pi\ketbra{\psi}{\psi}}$ via the density matrix exponentiation with $\cO(1/\epsilon)$ samples of $\ket{\psi}$, a total error becomes $2g(\epsilon + \pi \epsilon')$. 

Hence, the quantum phase estimation outputs $\sqrt{\til{c}}$ that satisfies $\big|\sqrt{\til{c}} - |\braket{\psi}{\phi}|\big| \leq \delta$ with probability $p_{\rm succ} \geq 1 - \eta - 2g(\epsilon + \pi \epsilon')$.
Setting $\eta = 1/6$ and $\epsilon = 1/(12g) - \pi\epsilon'$, the success probability is at least $2/3$.
To ensure that $\epsilon > 0$, it suffices to take $\delta > 120\pi\epsilon'$, since $g = 2^{\lceil \log{(5/\delta)}\rceil}-1 < 10/\delta$.
The algorithm uses $g\cO(1/\epsilon) = \cO\Big(1/\big(\delta(\delta/120 - \pi\epsilon')\big)\Big)$ samples of $\omega$ and $\ket{\psi}$.
\end{proof}

With the above discussion, we now prove Theorem~\ref{thm:sqrtfidelestpufifsample}.

\begin{proof}[Proof of Theorem~\ref{thm:sqrtfidelestpufifsample}]

Let $\sF=\sH\hat{\sA}\hat{\sB}$.
Through the Uhlmann transformation algorithm, we obtain a quantum channel $\cJ_\tF^{\sB\sF} = \mathtt{UhlmannPurifiedSample}(\ket{\rho}^{\sA\sB}, \ket{\sigma}^{\sA\sB}; \delta_1, \tF)$, using $\cO\big(\big(\log{(1/\delta_1)}\big)^2/(\delta_1\beta_\tF^2)\big)$ samples.
We here chose $\beta_\tF$, which is given by $\beta_\tF = \f{2}{\pi}\max\Big\{s_{\rm min}, \f{2\delta_1}{r}\Big\}$. Recall that $s_{\rm min}$ and $r$ are the minimum non-zero singular value and the rank of $\sqrt{\sigma^\sA}\sqrt{\rho^\sA}$, respectively.

As discussed in \Cref{sec:purif sample}, the channel $\cJ_\tF^{\sB\sF}$ satisfies that
\begin{align}
\label{inteq:69}
    \f{1}{2}\big\|\cJ_\tF^{\sB\sF} - \cU_1^{\sB\sF}\big\|_\diamond \leq \delta_1/2,
\end{align}
where $U_1^{\sB\sF}$ is a block-encoding unitary of $\ketbra{\rho}{\sigma}^{\hat{\sA}\hat{\sB}}\otimes P_\sgn^{(\rm SV)}\big(\sin^{(\rm SV)}(M^\sB)\big)$.
(See Eq.~\eqref{inteq:16}. Note that in $\mathtt{UhlmannPurifiedSample}$ the parameter has already been rescaled as $\delta_2 \gets \delta_1 / (2u)$.)  
As explained in \Cref{sec:uhlfidpurifsamp}, when we consider the difference between $\sqrt{\rF}=\sqrt{\rF}(\rho^\sA, \sigma^\sA)$ and 
\begin{align}
    \sqrt{\rF'} = \sqrt{\rF}(U_1^{\sB\sF}\ket{\rho}^{\sA\sB}\ket{0}^\sH\ket{\sigma}^{\hat{\sA}\hat{\sB}}, \ket{\sigma}^{\sA\sB}\ket{0}^\sH\ket{\rho}^{\hat{\sA}\hat{\sB}}),
\end{align}
we see that $\big|\sqrt{\rF} - \sqrt{\rF'}| \leq \delta_1/4$ (see Eq.~\eqref{inteq:51} with $\delta_1$ rescaled as $\delta_1 \gets \delta_1/8$).

Next, we consider estimating $\sqrt{\rF'}$. When we define $\omega^{\sA\sB\sF} = \cJ_\tF^{\sB\sF}\circ\cP_{\ket{0}\ket{\sigma}}^{\bC\rarr\sF}(\ketbra{\rho}{\rho}^{\sA\sB})$ and $\ket{\phi}^{\sA\sB\sF} = U_1^{\sB\sH}\ket{\rho}^{\sA\sB}\ket{0}^\sH\ket{\sigma}^{\hat{\sA}\hat{\sB}}$, Eq.~\eqref{inteq:69} implies that 
\begin{equation}
\label{eq:state error in FE}
    \f{1}{2}\big\|\omega^{\sA\sB\sF} - \ketbra{\phi}{\phi}^{\sA\sB\sF}\big\|_1 \leq \delta_1/2. 
\end{equation} 
Let $\sqrt{\til{\rF}} = \mathtt{SqrtAmpEstSample}(\omega^{\sA\sB\sF}, \ket{\psi}^{\sA\sB\sF}; \delta_2)$ for $\delta_2 \in (60\pi \delta_1, 1/2)$, where $\ket{\psi}^{\sA\sB\sF} = \ket{\sigma}^{\sA\sB}\ket{0}^\sH\ket{\rho}^{\hat{\sA}\hat{\sB}}$.
From \Cref{lem:modif square root est sample}, we observe that $\big|\sqrt{\til{\rF}} - \sqrt{\rF'}\big| \leq \delta_2$ with probability at least $2/3$, using $\cO\Big(1/\big(\delta_2(\delta_2/120 - \pi \delta_1/2)\big)\Big)$ samples of $\omega^{\sA\sB\sF}$ and $\ket{\phi}^{\sA\sB\sF}$.
Using the triangle inequality, we can obtain that
\begin{align}
\label{inteq:72}
    \Big|\sqrt{\til{\rF}} - \sqrt{\rF}\Big| 
    &\leq \Big|\sqrt{\til{\rF}} - \sqrt{\rF'}\Big| + \big|\sqrt{\rF'} - \sqrt{\rF}\big| \\
    &\leq \delta_2 + \delta_1/4.
\end{align}
By choosing $\delta_1 = \delta_2/(120\pi)$ and $\delta_2 = \delta/2$, we have 
\begin{align}
    \Big|\sqrt{\til{\rF}} - \sqrt{\rF}\Big| 
    &\leq \f{1 + 480\pi}{960\pi}\delta \\
    &\leq \delta.
\end{align}
We conclude that $\sqrt{\rF}$ is estimated with accuracy $\delta$ with probability at least $2/3$.

The number of samples $n$ of $\ket{\rho}^{\sA\sB}$ and $\ket{\sigma}^{\sA\sB}$ is given by
\begin{align}
    n = \cO\big(\big(\log{(1/\delta_1)}\big)^2/(\delta_1\beta_\tF^2)\big)\cO\big(1/\big(\delta_2(\delta_2/120 - \pi \delta_1/2)\big)\big)   =\cO\Big(\f{1}{\delta^3}\min\Big\{\f{1}{s_{\rm min}^2}, \f{r^2}{\delta^2}\Big\}\Big(\log{\Big(\f{1}{\delta}\Big)}\Big)^2\Big).
\end{align}
In the quantum circuit of this algorithm, the Uhlmann transformation, which consists of $\cO\big((\log{(1/\delta)})^2\allowbreak\log{(d_\sA d_\sB)}/(\delta\beta_\tF^2)\big)$ gates (\Cref{thm:algorithm Uhlmann}) is used $m = \cO(1/\delta^2)$ times, and additional $\cO\big(m\log{(d_\sA d_\sB)}\big)$ gates are used for the square root amplitude estimation (\Cref{lem:modif square root est sample}). Thus, the total number of one- and two-qubit gates is given by $m\cO\big((\log{(1/\delta)})^2\log{(d_\sA d_\sB)}/(\delta\beta_\tF^2)\big) + \cO(m\log{(d_\sA d_\sB)}) = \cO(n\log{(d_\sA d_\sB)})$.
At any one time in this algorithm, $\cO(\log{(d_\sA d_\sB)} + \log{(1/\delta)})$ qubits suffice, considering the Uhlmann transformation and the square root amplitude estimation, i.e., the quantum phase estimation.

We complete a proof of Theorem~\ref{thm:sqrtfidelestpufifsample}.

\end{proof}


\subsubsection{In the mixed sample access model: non-collective case}
\label{sec:fidelity estimation SS}

We consider the estimation of the fidelity $\sqrt{\rF} = \sqrt{\rF}(\rho^\sA, \sigma^\sA)$, where we are given many identical copies of $\rho^\sA$ and $\sigma^\sA$. Collective operations are not allowed here.
The algorithm $\mathtt{UhlFidelityEstMixedSample}$ is provided in \Cref{alg:fidelestsamplemixed}, where $\sF = \sH\hat{\sA}\hat{\sB}$ with $d_\sA = d_\sB$, and $\sH$ is a two-qubit auxiliary system.

\begin{theorem}[Square root fidelity estimation in the mixed sample access model without collective operations]
\label{thm:fidelity est local sample}
Let $\delta \in (0, 1)$. Then, the quantum sample algorithm $\mathtt{UhlFidelityEstMixedSample}(\rho^\sA, \sigma^\sA; \delta)$,
which is given by \Cref{alg:fidelestsamplemixed}, outputs $\sqrt{\til{\rF}}$ such that $\big|\sqrt{\til{\rF}} - \sqrt{\rF}\big| \leq \delta$ with probability at least $2/3$, using $n$ samples of $\rho^\sA$ and $\sigma^\sA$, where
\begin{align}
    n = \til{\cO}\Big(\f{d_\sA}{\delta^4}\min\Big\{\f{\kappa_\rho^2 + \kappa_\sigma^2}{s_{\rm min}^3}, \f{d_\sA^2 r^7}{\delta^{11}}\Big\}\Big).
\end{align}
The quantum circuit of this algorithm consists of $\cO(n\log{d_\sA})$ one- and two-qubit gates, and $\cO(\log{d_\sA} + \log{(1/\delta)})$ qubits suffice at any one time.
\end{theorem}

\begin{algorithm}[h]
\caption{Square root fidelity estimation algorithm in the mixed sample access model \\ \parbox{\linewidth}{\centering $\mathtt{UhlFidelityEstMixedSample}(\rho^{\sA}, \sigma^{\sA}; \delta)$ (In Theorem~\ref{thm:fidelity est local sample})}}
\label{alg:fidelestsamplemixed}
\SetKwInput{KwInput}{Input}
\SetKwInput{KwOutput}{Output}
\SetKwInput{KwParameters}{Parameters}
\SetKwComment{Comment}{$\triangleright$\ }{}
\SetCommentSty{textnormal}
\SetKwProg{Fn}{Subroutine}{}{end}
\SetKwFunction{UPS}{UhlmannPurifiedSample}
\SetKwFunction{SAES}{SqrtAmpEstSample}

\SetAlgoNoEnd
\SetAlgoNoLine
\KwInput{Two quantum states $\rho^{\sA}$ and $\sigma^{\sA}$.}
\KwParameters{$\delta \in (0, 1)$.}
\KwOutput{Real number $\sqrt{\til{\rF}}$.}
\SetAlgoLined

Set $\delta_2 \gets \delta/2$ and $\delta_1 \gets \delta_2/(960\pi)$. \\
Set $\cJ_\tF^{\sB\sF}\circ\cP_{\til{\sigma}_{\rm c}}^{\bC\rarr\hat{\sA}\hat{\sB}} \gets \mathtt{UhlmannMixedSample}(\rho^{\sA}, \sigma^{\sA}; \delta_1, \tF)$ (Algorithm~\ref{alg:Uhl mixed sample}). \\
Set $\til{\rho}_{\rm c}^{\sA\sB} \gets \mathtt{CanonicalPurification}(\rho^\sA; \delta_1/4)$ and  $\til{\sigma}_{\rm c}^{\sA\sB} \gets \mathtt{CanonicalPurification}(\sigma^\sA; 3\delta_1/4)$ (Algorithm~\ref{alg:canonical purification}). \\
Set $\omega^{\sA\sB\sF} \gets \cJ_\tF^{\sB\sF}\circ\cP_{\til{\sigma}_{\rm c}}^{\bC\rarr\hat{\sA}\hat{\sB}}\circ\cP_{\ket{0}}^{\bC\rarr\sH}(\til{\rho}_{\rm c}^{\sA\sB})$ and $\mu^{\sA\sB\sF} \gets \til{\sigma}_{\rm c}^{\sA\sB}\otimes\til{\rho}_{\rm c}^{\hat{\sA}\hat{\sB}}\otimes\ketbra{0}{0}^{\sH}$. \\
Set $\sqrt{\til{\rF}} \gets \mathtt{SqrtAmpEstSample}(\omega^{\sA\sB\sF}, \mu^{\sA\sB\sF}; \delta_2)$ (Lemma~\ref{lem:modif square root est sample}). \\
Return $\sqrt{\til{\rF}}$.

\end{algorithm}

The approach is to first use the canonical purification algorithm $\mathtt{CanonicalPurification}$ and the Uhlmann transformation algorithm in the mixed sample access model $\mathtt{UhlmannMixedSample}$. We then apply the square root amplitude estimation algorithm $\mathtt{SqrtAmpEstSample}$.

\begin{proof}[Proof of Theorem~\ref{thm:fidelity est local sample}]

From the discussions in \Cref{sec:mixed sample}, we can obtain a quantum channel
\begin{align}
    \cJ_\tF^{\sB\sF}\circ\cP_{\til{\sigma}_{\rm c}}^{\bC\rarr\hat{\sA}\hat{\sB}} = \mathtt{UhlmannMixedSample}(\rho^\sA, \sigma^\sA; \delta_1, \tF),
\end{align}
which satisfies that $\f{1}{2}\big\|\cJ_\tF^{\sB\sF}\circ\cP_{\til{\sigma}_{\rm c}}^{\bC\rarr\hat{\sA}\hat{\sB}} - \cU_1^{\sB\sF}\circ\cP_{\ket{\sigma_{\rm c}}}^{\bC\rarr\hat{\sA}\hat{\sB}}\big\|_\diamond \leq 3\delta_1/4$, using 
\begin{align}
\label{inteq:75}
    \zeta &= \til{\cO}\Big(\f{d_\sA}{\delta_1^2\beta_\tF^3}\min\Big\{\kappa_\rho^2 + \kappa_\sigma^2, \f{d_\sA^2}{\delta_1^4\beta_\tF^4}\Big\}\Big),
\end{align}
samples of $\rho^\sA$ and $\sigma^\sA$. (See Eq.~\eqref{inteq:100}. Note that in $\mathtt{UhlmannMixedSample}$ the parameters have already been rescaled as $\delta_1 \gets \delta_1/\big(2(4u+1)\big)$ and $\delta_2 \gets \delta_1/(4u)$.)
Here, $U_1^{\sB\sF}$ is a block-encoding unitary of $\ketbra{\rho_{\rm c}}{\sigma_{\rm c}}^{\hat{\sA}\hat{\sB}} \otimes P_\sgn(\tr_{\sA}\big[\ketbra{\sigma_{\rm c}}{\rho_{\rm c}}^{\sA\sB}\big])$, and $\beta_\tF = \cO\big(\max\{s_{\rm min}, \delta_1/r\}\big)$. Moreover, when we define $\sqrt{\rF'}$ by $\sqrt{\rF'} = \sqrt{\rF}(U_1^{\sB\sF}\ket{\rho_{\rm c}}^{\sA\sB}\ket{\sigma_{\rm c}}^{\hat{\sA}\hat{\sB}}\ket{0}^{\sH}, \ket{\sigma_{\rm c}}^{\sA\sB}\ket{\rho_{\rm c}}^{\hat{\sA}\hat{\sB}}\ket{0}^{\sH})$, it holds that $\big|\sqrt{\rF} - \sqrt{\rF'}\big| \leq \delta_1/8$. (See Eq.~\eqref{inteq:77} with $\delta_3$ rescaled as $\delta_3 \gets \delta_1/16$.)

Let $\til{\rho}_{\rm c}^{\sA\sB} = \mathtt{CannonicalPurification}(\rho^\sA; \delta_1/4)$.
Then, we denote by $\omega^{\sA\sB\sF}$ and $\ket{\phi}^{\sA\sB\sF}$ the states given by $\omega^{\sA\sB\sF} = \cJ_\tF^{\sB\sF}\circ\cP_{\til{\sigma}_{\rm c}}^{\bC\rarr\hat{\sA}\hat{\sB}}\circ\cP_{\ket{0}}^{\bC\rarr\sH}(\til{\rho}_{\rm c}^{\sA\sB})$,
and $\ket{\phi}^{\sA\sB\sF} = U_1^{\sB\sF}\ket{\rho_{\rm c}}^{\sA\sB}\ket{\sigma_{\rm c}}^{\hat{\sA}\hat{\sB}}\ket{0}^{\sH}$, respectively.
We see that
\begin{align}
    \f{1}{2}\big\|\omega^{\sA\sB\sF} - \ketbra{\phi}{\phi}^{\sA\sB\sF}\big\|_1
    &\leq \f{1}{2}\big\|\cJ_\tF^{\sB\sF}\circ\cP_{\til{\sigma}_{\rm c}}^{\bC\rarr\hat{\sA}\hat{\sB}} - \cU_1^{\sB\sF}\circ\cP_{\ket{\sigma_{\rm c}}}^{\bC\rarr\hat{\sA}\hat{\sB}}\big\|_\diamond + \f{1}{2}\big\|\til{\rho_{\rm c}}^{\sA\sB} - \ketbra{\rho_{\rm c}}{\rho_{\rm c}}^{\sA\sB}\big\|_1 \\
    &\leq 3\delta_1/4 + \delta_1/4 \\
    \label{inteq:73}
    &=\delta_1.
\end{align}

Let $\mu^{\sA\sB\sF} = \til{\sigma}_{\rm c}^{\sA\sB}\otimes\til{\rho}_{\rm c}^{\hat{\sA}\hat{\sB}}\otimes\ketbra{0}{0}^{\sH}$, where $\til{\sigma}_{\rm c}^{\sA\sB} = \mathtt{CannonicalPurification}(\sigma^\sA; 3\delta_1/4)$.
We evaluate the distance between $\mu^{\sA\sB\sF}$ and $\ket{\psi} = \ket{\sigma_{\rm c}}^{\sA\sB}\ket{\rho_{\rm c}}^{\hat{\sA}\hat{\sB}}\ket{0}^{\sH}$ as
\begin{align}
    \f{1}{2}\big\|\mu^{\sA\sB\sF} - \ketbra{\psi}{\psi}^{\sA\sB\sF}\big\|_1
    &\leq \f{1}{2}\big\|\til{\sigma}_{\rm c}^{\sA\sB} - \ketbra{\sigma_{\rm c}}{\sigma_{\rm c}}^{\sA\sB}\big\|_1 + \f{1}{2}\big\|\til{\rho}_{\rm c}^{\hat{\sA}\hat{\sB}} - \ketbra{\rho_{\rm c}}{\rho_{\rm c}}^{\hat{\sA}\hat{\sB}}\big\|_1 \\
    &\leq 3\delta_1/4 + \delta_1/4 \\
    \label{inteq:74}
    &= \delta_1.
\end{align}

We run $\mathtt{SqrtAmpEstSample}$ with $\omega^{\sA\sB\sF}$ and $\mu^{\sA\sB\sF}$ as inputs, instead of $\ket{\phi}^{\sA\sB\sF}$ and $\ket{\psi}^{\sA\sB\sF}$,  where both inputs are mixed states that are close to pure within error $\delta_1$ (see Eqs.~\eqref{inteq:73} and~\eqref{inteq:74}).
Even in this case, the number of samples can be straightforwardly evaluated in a similar way to the previous section.

Using the density matrix exponentiation, we can implement a quantum channel that approximates $e^{i\pi\ketbra{\phi}{\phi}^{\sA\sB\sF}}$ with error at most $\epsilon + 2\pi\delta_1$, using $\cO(1/\epsilon)$ samples of $\omega^{\sA\sB\sF}$.
Similarly, using $\cO(1/\epsilon)$ samples of $\mu^{\sA\sB\sF}$, we can construct a quantum channel that approximates $e^{i\pi\ketbra{\psi}{\psi}^{\sA\sB\sF}}$ with the same error.
Hence, we apply the quantum phase estimation using these channels instead of
$e^{i\pi\ketbra{\phi}{\phi}^{\sA\sB\sF}}$ and $e^{i\pi\ketbra{\psi}{\psi}^{\sA\sB\sF}}$,
for $g = 2^l - 1$ times, where
$l = \big\lceil\log{\big(\f{1}{\delta_2}\big(2 + \f{1}{2\eta}\big)}\big)\big\rceil$.
This allows us to estimate $|\braket{\psi}{\phi}|$ within the error $\delta_2$ with success probability at least
$1 - \eta - 2g(\epsilon + 2\pi\delta_1) - \delta_1$,
where the last term $-\delta_1$ arises from the fact that the input to the quantum phase estimation is $\mu^{\sA\sB\sF}$ rather than $\ket{\psi}^{\sA\sB\sF}$.

Setting $\eta = 1/6$ and $\epsilon = 1/(24g) - 2\pi \delta_1$, the success probability is at least $3/4 - \delta_1 \geq 2/3$, where we took $\delta_1 \in (0, 1/(960\pi))$, to ensure $\epsilon >0$.
As a result, for $\delta_2 \in (480\pi\delta_1, 1/2)$ and $\delta_1 \in (0, 1/(960\pi))$, $\mathtt{SqrtAmpEstSample}(\omega^{\sA\sB\sF}, \mu^{\sA\sB\sF}; \delta_2)$ outputs $\sqrt{\til{\rF}}$ such that $\big|\sqrt{\til{\rF}} - |\braket{\psi}{\phi}|\big| \leq \delta$ with probability at least $2/3$, using $g\cO(1/\epsilon) = \cO\Big(1/\big(\delta_2(\delta_2/240 - 2\pi \delta_1)\big)\Big)$ samples of $\omega^{\sA\sB\sF}$ and $\mu^{\sA\sB\sF}$.

Noting that $|\braket{\psi}{\phi}| = \sqrt{\rF'}$, we have that
\begin{align}
    \big|\sqrt{\til{\rF}} - \sqrt{\rF}\big| &\leq \big|\sqrt{\rF'} - \sqrt{\rF}\big| + \big|\sqrt{\til{\rF}} - \sqrt{\rF'}\big| \\
    & \leq\delta_1/8 + \delta_2.
\end{align}
Setting $\delta_1 = \delta_2/(960\pi)$ and $\delta_2 = \delta/2$, we obtain that
\begin{align}
    \big|\sqrt{\til{\rF}} - \sqrt{\rF}\big| 
    &\leq \f{1+7680\pi}{15360\pi}\delta \\
    &\leq \delta.
\end{align}
The number of samples $n$ of $\rho^\sA$ and $\sigma^\sA$ is evaluated as
\begin{align}
    n &= \zeta \cO\Big(1/\big(\delta_2(\delta_2/240 - 2\pi \delta_1)\big)\Big) \\
    &=\til{\cO}\Big(\f{d_\sA}{\delta_1^2\beta_\tF^3}\min\Big\{\kappa_\rho^2 + \kappa_\sigma^2, \f{d_\sA^2}{\delta_1^4\beta_\tF^4}\Big\}\Big)\cO\Big(1/\big(\delta_2(\delta_2/240 - 2\pi \delta_1)\big)\Big) \\
    &= \til{\cO}\Big(\f{d_\sA}{\delta^4}
    \min\Big\{\f{\kappa_\rho^2 + \kappa_\sigma^2}{s_{\rm min}^3}, \f{d_\sA^2 r^7}{\delta^{11}}\Big\}\Big).
\end{align}

In the quantum circuit of this algorithm, the most costly component, the Uhlmann transformation, which consists of $\cO\big(\zeta \log(d_\sA d_\sB)/(\delta\beta_\tF^2)\big)$ gates (\Cref{thm:Uhlmann mixed state sample}), is applied $m = \cO(1/\delta^2)$ times. Hence, the total number of one- and two-qubit gates is $m \cO(\zeta \log d_\sA) = \cO(n \log d_\sA)$.
Considering both the Uhlmann transformation and the square root amplitude estimation (i.e., quantum phase estimation), $\cO(\log d_\sA + \log(1/\delta))$ qubits suffice at any one time during this algorithm.

We complete a proof of Theorem~\ref{thm:fidelity est local sample}.

\end{proof}


\subsection{Lower bound on the query and sample complexities}
\label{sec:lower bound}

We here provide a lower bound on the query and sample complexities for estimation of the square root fidelity. Note that, since the purified query access model is a more powerful computational model than the purified and mixed sample access models, a lower bound on the query complexity immediately implies corresponding lower bounds on the sample complexities. This lower bound directly leads to a lower bound on the query and sample complexities required to realize the Uhlmann transformation.

Regarding a lower bound on the query complexity for the square root fidelity estimation in the purified query access model, we provide the following proposition.
\begin{proposition}[Lower bound on the query complexity for square root fidelity estimation]
\label{prop:lower fidelity est}
Suppose we are given purified query access to unitaries $U_\rho$, $U_\sigma$, and their inverses.
Then, for $\delta \in (0, 1/16)$, there exists a pair of rank-$k$ states $\rho$ and $\sigma$ such that any quantum query algorithm that estimates $\sqrt{\rF}(\rho, \sigma)$ within error $\delta$ requires a total of $\Omega\big(\max\big\{k^{1/3}, 1/\delta\big\}\big)$ queries to $U_\rho$, $U_\sigma$, and their inverses.
\end{proposition}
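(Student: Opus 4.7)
The plan is to establish the two terms of the maximum by independent reductions from known hard problems. The $\Omega(k^{1/3})$ term will follow from a reduction from the mixedness-testing lower bound of~\cite{Liu2025EstimationTrace} in the purified query access model, while the $\Omega(1/\delta)$ term will follow from the $\Theta(1/\delta)$ optimal query complexity for pure-state inner-product estimation~\cite{Wang2024OptimalFidelityPure, Fang2025OptimalFidelityToPure}. Since the proposition asserts only the existence of a worst-case pair of rank-$k$ states, it suffices to exhibit one construction per term and combine by choosing whichever yields the larger bound given $k$ and $\delta$.

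For the $\Omega(k^{1/3})$ bound, I would fix $\sigma = \bI/k$ on a $k$-dimensional subspace. Since $\sigma$ is known, a purifying unitary $U_\sigma$ can be built from $\cO(\log k)$ gates, so all queries that count come from $U_\rho$ and $U_\rho^\dag$. The mixedness-testing problem asks to decide whether $\rho = \bI/k$ or $\|\rho - \bI/k\|_1 \geq \epsilon_0$ for a suitable constant $\epsilon_0$, and requires $\Omega(k^{1/3})$ queries by~\cite{Liu2025EstimationTrace}. By the Fuchs--van de Graaf inequalities~\eqref{eq:fuchs van de graaf}, the far case yields $\sqrt{\rF}(\rho, \sigma) \leq \sqrt{1 - \epsilon_0^2/4}$ while the equal case yields $\sqrt{\rF}(\sigma, \sigma) = 1$. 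Taking $\epsilon_0 = 1$ gives a gap $\gamma_0 = 1 - \sqrt{3}/2 \approx 0.134$, so any algorithm estimating $\sqrt{\rF}$ within $\delta < \gamma_0/2$ distinguishes the two cases; a direct numerical check confirms $1/16 < \gamma_0/2$, so this holds throughout $\delta \in (0, 1/16)$, transferring the $\Omega(k^{1/3})$ lower bound.

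For the $\Omega(1/\delta)$ bound, I would reduce pure-state inner-product estimation to the rank-$k$ problem. Given unitaries $U_\psi, U_\phi$ preparing pure states $\ket{\psi}, \ket{\phi}$ on a single-qubit register, define rank-$k$ states $\rho_k = \ketbra{\psi}{\psi} \otimes (\bI/k)$ and $\sigma_k = \ketbra{\phi}{\phi} \otimes (\bI/k)$ on a composite system. Purifying unitaries are then $U_{\rho_k} = U_\psi \otimes V_\Phi$ and $U_{\sigma_k} = U_\phi \otimes V_\Phi$, where $V_\Phi$ is an $\cO(\log k)$-gate circuit preparing the maximally entangled state on an auxiliary $k \times k$ register, so each call to $U_{\rho_k}$ (or its inverse) consumes exactly one call to $U_\psi$ (or its inverse), and likewise for $\sigma_k$. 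A direct computation yields $\sqrt{\rF}(\rho_k, \sigma_k) = |\braket{\psi}{\phi}|$. Hence a $Q$-query algorithm estimating $\sqrt{\rF}(\rho_k, \sigma_k)$ to accuracy $\delta$ simulates a $\cO(Q)$-query algorithm estimating $|\braket{\psi}{\phi}|$ to the same accuracy, and the $\Omega(1/\delta)$ lower bound for the latter~\cite{Wang2024OptimalFidelityPure, Fang2025OptimalFidelityToPure} forces $Q = \Omega(1/\delta)$.

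Combining the two constructions yields the $\Omega(\max\{k^{1/3}, 1/\delta\})$ bound: for any $k, \delta$, selecting whichever of the two yields the larger lower bound provides the desired worst-case pair. The hard part here is just the invocation of the mixedness-testing lower bound of~\cite{Liu2025EstimationTrace} as a black box; the rest consists of elementary reductions whose only subtlety is confirming that the constant $\epsilon_0$ in the mixedness-testing instance yields a fidelity gap covering the full range $\delta \in (0, 1/16)$ — verified above for $\epsilon_0 = 1$.
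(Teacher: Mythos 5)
Your proof is correct and follows essentially the same route as the paper: the $\Omega(k^{1/3})$ term comes from reducing mixedness testing (Theorem~\ref{lem:query lower}) to fidelity estimation against the maximally mixed state $\pi_k$ via the Fuchs--van de Graaf inequalities, with the same constant check $1/16 < \frac{1}{2}(1-\sqrt{3}/2)$, and the $\Omega(1/\delta)$ term comes from the pure-state lower bound of Refs.~\cite{Wang2024OptimalFidelityPure,Fang2025OptimalFidelityToPure}. Your explicit padding construction $\rho_k = \ketbra{\psi}{\psi}\otimes(\bI/k)$, which upgrades the $\Omega(1/\delta)$ hard instance to genuinely rank-$k$ states, is a small but worthwhile refinement of the paper's remark that the pure-state bound "also holds for $k=1$".
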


The lower bound $\Omega(1/\delta)$ is a direct consequence of the results in Refs.~\cite{belovs2019ClassicalDist, Wang2024OptimalFidelityPure, luo2024SuccinctTest, Liu2024geometricmean}. Note that this also holds for the estimation of fidelity between two pure states, i.e., the case where $k=1$, which implies that the algorithm in \Cref{lem:square root amp est query} is optimal.
To prove \Cref{prop:lower fidelity est}, we use the result in Ref.~\cite{chakraborty2010NewResultQPT}, a lower bound on the query complexity of testing uniformity of a classical probability distribution~\cite{bravi2011PropTest}.
Here, we use the version in Ref.~\cite{Liu2025EstimationTrace}, which applies it to the mixedness testing for a quantum state.

\begin{theorem}[Lower bound for the mixedness testing in the purified query access model~{\cite[Lemma 2.11]{Liu2025EstimationTrace}}]
\label{lem:query lower}
Let $\omega$ be a state of rank-$k$, and let $\pi_k$ be the state with uniform eigenvalues on the support of $\omega$ and zero elsewhere.
Suppose that $U_\omega$ is a unitary that prepares a purified state of $\omega$, i.e., $U_\omega\ket{0} = \ket{\omega}$.
Then, for $\delta \in (0, 1/2]$, any quantum query algorithm that determines whether $\omega = \pi_k$ or $\f{1}{2}\|\omega - \pi_k\|_1 \geq \epsilon$, requires a total of $\Omega(k^{1/3})$ queries to $U_\omega$ and its inverse.
\end{theorem}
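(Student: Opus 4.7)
The plan is to establish the lower bound via a black-box reduction from quantum uniformity testing of classical probability distributions in the purified query model. The known lower bound for this classical-testing problem---that distinguishing a distribution $p$ on $[k]$ from the uniform distribution $u_k$ under constant total-variation separation requires $\Omega(k^{1/3})$ queries to a purification oracle $U_p$ satisfying $U_p\ket{0}\ket{0} = \sum_{i=1}^{k}\sqrt{p_i}\ket{i}\ket{i}$---is due to Bravyi, Harrow, and Hassidim~\cite{bravi2011PropTest} and Chakraborty, Fischer, Matsliah, and de Wolf~\cite{chakraborty2010NewResultQPT}, and ultimately rests on the $\Omega(N^{1/3})$ collision lower bound of Aaronson--Shi.

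The reduction itself is short. Given any distribution $p$ on $[k]$ (thought of as a hard instance of uniformity testing), I would set $\omega = \omega_p := \sum_{i=1}^{k} p_i\, \ketbra{i}{i}$, which has rank $k$ whenever $p$ has full support. The state produced by $U_p$ is then a valid purification of $\omega_p$, so the mixedness tester may be invoked with $U_\omega := U_p$. Two elementary observations close the loop: first, $\omega_p = \pi_k$ (the maximally mixed state on the span of $\ket{1},\dots,\ket{k}$) if and only if $p = u_k$; second, because $\omega_p$ and $\pi_k$ are simultaneously diagonal, $\f{1}{2}\|\omega_p - \pi_k\|_1 = d_{\mathrm{TV}}(p, u_k)$. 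Consequently, any algorithm making $q$ queries to $U_\omega$ and $U_\omega^\dagger$ that distinguishes $\omega = \pi_k$ from $\f{1}{2}\|\omega - \pi_k\|_1 \geq \epsilon$ immediately solves uniformity testing with $q$ queries, so $q = \Omega(k^{1/3})$.

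The step that requires the most care is justifying that a lower bound proved for this ``canonical'' purification $\sum_i \sqrt{p_i}\ket{i}\ket{i}$ also applies against the adversary considered in the mixedness-testing setting, where $U_\omega$ is any unitary whose action on $\ket{0}$ yields some purification of $\omega$. If the adversary were free to embed useful side information in the off-$\ket{0}$ columns of $U_\omega$, the reduction would be incorrect. Standard hybrid/adversary arguments resolve this concern: only one column of $U_\omega$ is reachable from $\ket{0}$ by forward queries, and inverse queries do not give the algorithm more power than if $U_\omega$ were chosen to be the canonical purification unitary (any other unitary agreeing on $\ket{0}$ differs from it by a postmultiplication on the orthogonal complement, which the algorithm can simulate internally). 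Once this point is checked, the remainder of the argument is a direct invocation of the cited quantum uniformity-testing lower bound, and I would organize the write-up around this single reduction.
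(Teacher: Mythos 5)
The paper does not prove this statement: it is imported verbatim as Lemma~2.11 of Ref.~\cite{Liu2025EstimationTrace}, which in turn rests on the $\Omega(k^{1/3})$ collision/uniformity-testing lower bound of Refs.~\cite{chakraborty2010NewResultQPT, bravi2011PropTest}. Your reduction is exactly the route the cited source takes, so at the level of strategy there is nothing to contrast; the approach is the standard one and it works.

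Two details in your write-up need repair, though neither is fatal. First, your resolution of the ``adversarial purification'' worry is both unnecessary and, as stated, incorrect: a unitary $V$ with $V\ket{0}=U\ket{0}$ satisfies $V=UR$ where $R$ is a unitary fixing $\ket{0}$ but otherwise arbitrary and unknown to the algorithm, so queries to $V$ cannot in general be simulated from queries to $U$. Fortunately no such simulation is needed. The theorem is a worst-case lower bound over instances $(\omega,U_\omega)$, so it suffices to exhibit a single hard family, namely the collision-derived unitaries with $U_f\ket{0}=\frac{1}{\sqrt{k}}\sum_{x}\ket{x}\ket{f(x)}$ and $\omega=\sum_y\frac{|f^{-1}(y)|}{k}\ketbra{y}{y}$ the reduced state on the second register; queries to $U_f$ and $U_f^\dagger$ are simulable from the standard oracle for $f$, so the collision bound transfers directly. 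For the same reason you need not insist on the canonical purification $\sum_i\sqrt{p_i}\ket{i}\ket{i}$ --- the reduction works with whatever purification the hard uniformity instances supply. Second, the collision hard instances are not full-support in the far case: a 2-to-1 function induces the uniform distribution on a size-$k/2$ subset, so there $\omega$ has rank $k/2$, and read literally the theorem's ``$\pi_k$ uniform on the support of $\omega$'' would turn that instance into a yes-instance. One must either read $\pi_k$ as the maximally mixed state on a fixed, known $k$-dimensional subspace (the reading the paper adopts when invoking this lemma in the proof of Proposition~\ref{prop:lower fidelity est}) or smooth the hard distributions to full support, e.g. $q_y=\frac{1}{2k}+\frac{|f^{-1}(y)|}{2k}$, which preserves a constant trace distance ($1/4$) and whose purified oracle is simulable with one query to $U_f$. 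With these corrections your argument is complete and coincides with the cited proof.
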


Using Theorem~\ref{lem:query lower}, we prove \Cref{prop:lower fidelity est}.

\begin{proof}[Proof of \Cref{prop:lower fidelity est}]

Let $U_\rho^{\sA\sB}$ and $U_{\Phi_k}^{\sA\sB}$ be unitaries such that
\begin{align}
    \ket{\rho}^{\sA\sB} = U_\rho^{\sA\sB}\ket{0}^{\sA\sB} = \sum_{j=1}^k \sqrt{p_j} \ket{j}^\sA\ket{\psi_j}^\sB, \ \ \text{and} \ \ \ 
    \ket{\Phi_k}^{\sA\sB} = U_{\Phi_k}^{\sA\sB}\ket{0}^{\sA\sB} = \frac{1}{\sqrt{k}} \sum_{j=1}^k \ket{j}^\sA\ket{j}^\sB,
\end{align}
where each $\{\ket{j}\}_j$ and $\{\ket{\psi_j}\}_j$ forms an orthonormal basis.
These states are purified states of 
\begin{align}
\label{inteq:164}
    \rho^\sA = \sum_{j=1}^k p_j\ketbra{j}{j}^\sA, \ \ \text{and} \ \ \  \pi_k^\sA = \f{1}{k}\sum_{j=1}^k\ketbra{j}{j}^\sA, 
\end{align}
respectively.

We assume that $ \f{1}{2}\|\rho^\sA - \pi_k^{\sA}\|_1 \geq 1/2$.
By the Fuchs–van de Graaf inequalities, we have $\sqrt{\rF}(\rho^\sA, \pi_k^\sA) \leq \f{\sqrt{3}}{2}$.
Note that $\pi_k^\sA$ is known and $\sqrt{\rF}(\pi_k^\sA, \pi_k^\sA) = 1$.
Hence, any quantum algorithm that can estimate the fidelity $\sqrt{\rF}(\cdot, \pi_k^\sA)$ within error $\delta < \f{1}{16} \ ( < \f{1}{2}(1-\f{\sqrt{3}}{2}))$ can be used to distinguish which unitary of $U_\rho^{\sA\sB}$ and $U_{\Phi_k}^{\sA\sB}$ we applied. 

On the other hand, Theorem~\ref{lem:query lower} states that, to distinguish these unitaries, $\Omega(k^{1/3})$ queries are required, even when $\delta$ is constant.
These imply that there exists a pair of rank-$k$ quantum states such that any fidelity estimation algorithm requires at least $\Omega(k^{1/3})$ queries.

Combining this with the bound of $\Omega(1/\delta)$ yields \Cref{prop:lower fidelity est}.
\end{proof}

We again mention a lower bound on the sample complexity for square root fidelity estimation in the purified and mixed sample access models. 
Since the purified query access model is the most powerful of the three computational models, \Cref{prop:lower fidelity est} implies a corresponding lower bound on the sample complexity.
Hence, there is a pair of rank-$k$ states such that any quantum algorithm for estimating the square root fidelity with error at most $\delta$ requires $\Omega(\max\{k^{1/3}, 1/\delta\})$ samples in the purified and mixed sample access models.
For the mixed sample access model, however, a tighter bound is known~\cite{gilyen2022improvedfidelity}: any quantum algorithm within error $\delta$ requires $\Omega(k/\delta)$ samples for some pair of rank-$k$ states.

As seen in the previous sections, the Uhlmann transformation algorithm can be used to estimate the fidelity. While \Cref{prop:lower fidelity est} is stated for the square root fidelity, its derivation can be straightforwardly adapted to the squared fidelity. 
This leads to \Cref{cor:lower Uhlmann query}, which we restate below.

\corloweruhlquery*

\begin{proof}[Proof of \Cref{cor:lower Uhlmann query}]

Suppose that we can implement a quantum channel $\cT^\sB$ satisfying
\begin{align}
    \label{inteq:152}
    \big|\rF\big(\cT^\sB(\ketbra{\rho}{\rho}^{\sA\sB}), \ket{\sigma}^{\sA\sB}\big) - \rF(\rho^\sA, \sigma^\sA)\big| \leq \delta,
\end{align}
using $q$ queries to $U_\rho^{\sA\sB}$, $U_\sigma^{\sA\sB}$, and their inverses.
Then, by performing the swap test~\cite{Barenco1997stabQCsymm, Buhrman2001fingerprint, Nishimura2025ASurveySWAPtest} on $\cT^\sB(\ketbra{\rho}{\rho}^{\sA\sB})$ and $\ket{\sigma}^{\sA\sB}$ a constant number of times, we can obtain an estimate $\til{\rF}$ such that $\big|\til{\rF} - \rF\big(\cT^\sB(\ketbra{\rho}{\rho}^{\sA\sB}), \ket{\sigma}^{\sA\sB}\big)\big| \leq 1/72$.
Note that since $\ket{\sigma}^{\sA\sB}$ is pure, $\rF\big(\cT^\sB(\ketbra{\rho}{\rho}^{\sA\sB}), \ket{\sigma}^{\sA\sB}\big) = \big|\bra{\sigma}\cT^\sB(\ketbra{\rho}{\rho}^{\sA\sB})\ket{\sigma}^{\sA\sB}\big|$.
Using the triangle inequality, we have $\big|\til{\rF} - \rF(\rho^\sA, \sigma^\sA)\big| \leq \delta + 1/72$.
Hence, we can estimate the fidelity $\rF(\rho^\sA, \sigma^\sA)$ with additive error $\delta + 1/72$, using a constant multiple of $q$ queries.

On the other hand, for certain rank-$k$ states $\rho^\sA$ and $\sigma^\sA$, $\Omega(k^{1/3})$ queries are required to estimate $\rF(\rho^\sA, \sigma^\sA)$ within additive error $1/8$.
This follows by adapting the proof of \Cref{prop:lower fidelity est} to the squared fidelity instead of the square root fidelity. 
In particular, for the states $\rho^\sA$ and $\pi_k^\sA$ given by Eq.~\eqref{inteq:164} such that $\f{1}{2}\|\rho^\sA - \pi_k^\sA\|_1 \geq 1/2$, we have $\rF(\rho^\sA, \pi_k^\sA) < 3/4$. 
Thus, estimating the squared fidelity within additive error of less than $1/8 \ (= (1 - 3/4)/2)$ requires $\Omega(k^{1/3})$ queries to the unitaries $U_\rho^{\sA\sB}$ and $U_{\Phi_k}^{\sA\sB}$ that prepare these purified states, as well as to their inverses.

As a result, for $\delta$ such that $\delta + 1/72 < 1/8$, i.e., for $\delta \in (0, 1/9)$, there exists a pair of rank-$k$ states $\rho^\sA$ and $\sigma^\sA$ for which implementing a quantum channel $\cT^\sB$ satisfying Eq.~\eqref{inteq:152} requires $q = \Omega(k^{1/3})$ queries to $U_\rho^{\sA\sB}$, $U_\sigma^{\sA\sB}$, and their inverses.
Noting that $\rF(\rho^\sA, \sigma^\sA) \geq \rF\big(\cT^\sB(\ketbra{\rho}{\rho}^{\sA\sB}), \ket{\sigma}^{\sA\sB}\big)$, we complete the derivation.

\end{proof}

\section{Application 2: Decoupling approach with the Uhlmann transformation algorithm}
\label{sec:decoupling and uhlmann}

The decoupling approach~\cite{hayden2008decoupling, dupuis2010decoupling, Szehr2013decouple2design, dupuis2014one, preskill2016quantumshannon} and the Uhlmann transformation are combined and applied to various important tasks in quantum information processing.
We here consider two examples: entanglement transmission~\cite{Schumacher1996sendingentanglement, schumacher2001approximateerrorcorrection, hayden2007black, datta2013oneshotentassistQCcom, beigi2016decoding, khatri2024principlemodern} and quantum state merging~\cite{Horodecki2005Partialquantinfo, berta2008singleshot, dupuis2014one, yamasaki2019statemergesmalldimension}.

We employ our algorithms to perform the Uhlmann transformation in these tasks and evaluate the computational cost.
Since these tasks typically involve tripartite or multipartite states, some adjustments are often required beyond the direct application of our Uhlmann transformation algorithms.
For discussions on the optimality of the Uhlmann transformation in the cases involving tripartite or multipartite states, see \Cref{sec:opt local and Uhlmann}.

In \Cref{sec:general decoup and Uhl}, we overview the decoupling and the Uhlmann transformation. In \Cref{sec:qi transmission}, we discuss one-shot entanglement transmission and analyze the computational costs when we use our Uhlmann transformation algorithms. 
In \Cref{sec:entdis and uhl}, we address one-shot quantum state merging and also evaluate the computational costs of our algorithms.

In this section and subsequent sections, we use the following notation.
We denote the completely mixed state by $\pi$, such as $\pi^\sA = \mathbb{I}^\sA/d_\sA$ for system $\sA$, and denote by $U_\Phi$ a unitary that prepares a maximally entangled state: $U_\Phi^{\sA\sA'}\ket{0}^{\sA\sA'} = \ket{\Phi}^{\sA\sA'}$.
While we usually denote a pure state as $\ket{\omega}$, the corresponding density matrix is sometimes written as $\omega = \ketbra{\omega}{\omega}$. For instance, the density matrix of a pure state $\ket{\omega}^{\sA\sB}$ is denoted by $\omega^{\sA\sB}$.


\subsection{Decoupling and Uhlmann's theorem}
\label{sec:general decoup and Uhl}

The decoupling is a standard approach, often combined with the Uhlmann transformation, that characterizes information-theoretic limits of information tasks.
The key idea of the decoupling is to estimate how much quantum information is leaked to an ``environment'' of a quantum channel. This is specifically quantified by the degree of decoupling.

For a quantum channel $\cF^{\sA\rarr\sC}$, let $V_\cF^{\sA \rarr \sC\sE}$ be a Stinespring isometry~\cite{stinespring1955} of $\cF^{\sA \rarr \sC}$ by an environment $\sE$.
That is, $V_\cF^{\sA \rarr \sC\sE}$ is such that $\cF^{\sA\rarr \sC}$ is represented as $\cF^{\sA \rarr \sC}(\cdot) = \tr_{\sE}\big[V_\cF^{\sA\rarr \sC\sE}(\cdot)(V_\cF^{\sA \rarr \sC\sE})^\dag\big]$.
A complementary channel $\bar{\cF}^{\sA\rarr\sE}$ is defined by $\bar{\cF}^{\sA\rarr\sE}(\cdot) = \tr_\sC\big[V_\cF^{\sA\rarr\sC\sE}(\cdot)(V_\cF^{\sA\rarr\sC\sE})^\dag\big]$.
Note that the Stinespring isometry $V_\cF^{\sA \rarr \sC\sE}$ and the complementary channel $\bar{\cF}^{\sA\rarr\sE}$ are not uniquely determined by $\cF^{\sA\rarr \sC}$; there is freedom to apply an additional isometry on the environment $\sE$.

The decoupling approach is described as follows: suppose that for a pure state $\ket{\omega}^{\sR\sA\sB}$ with a reference system $\sR$ and $\epsilon \in [0, 1]$, there exists a state $\tau^\sE$ such that 
\begin{align}
\label{eq:decoup cond}
    \rF\big(\bar{\cF}^{\sA\rarr\sE}(\omega^{\sR\sA}), \omega^\sR \otimes \tau^\sE\big) \geq 1 - \epsilon.
\end{align}
Then, there is the Uhlmann unitary $V^\sM$ on $\sM = \sB\sC\sD = \hat{\sA}\hat{\sB}\hat{\sE}$
that satisfies
\begin{align}
    \rF\big(V^\sM V_\cF^{\sA\rarr\sC\sE}\ket{\omega}^{\sR\sA\sB}\ket{0}^\sD, \ket{\omega}^{\sR\hat{\sA}\hat{\sB}}\ket{\tau}^{\sE\hat{\sE}}\big)
    &= \rF\big(\bar{\cF}^{\sA\rarr\sE}(\omega^{\sR\sA}), \omega^\sR \otimes \tau^\sE\big) \\
    \label{inteq:80}
    &\geq 1 - \epsilon,
\end{align}
where $\ket{\tau}^{\sE\hat{\sE}}$ is a purified state of $\tau^\sE$.
This implies that, if $\epsilon$ is sufficiently small, one can approximately obtain the state  $\ket{\omega}^{\sR\hat{\sA}\hat{\sB}}\ket{\tau}^{\sE\hat{\sE}}$, by acting only on $\sB\sC\sD$ of the state 
$V_\cF^{\sA\rarr\sC\sE}\ket{\omega}^{\sR\sA\sB}\ket{0}^\sD$,
without manipulating $\sR\sE$.
The inequality in Eq.~\eqref{eq:decoup cond} is known as the \emph{decoupling condition}, which evaluates how well the reference system $\sR$ and the environment $\sE$ are decoupled, with the parameter $\epsilon$ quantifying the degree of decoupling.

The decoupling approach is applicable to a wide range of information-processing tasks.
In particular, the case where $\cF^{\sA\rarr\sC}$ is given by $\cF^{\sA\rarr\sC} = \cG^{\sA\rarr\sC} \circ \cU^\sA$ with a \emph{Haar random unitary} $U^\sA$ and some quantum channel $\cG^{\sA\rarr\sC}$ is often considered.
In this case, \emph{decoupling theorem}~\cite{dupuis2014one} states that $\epsilon$ in the decoupling condition can be taken sufficiently small with high probability for $\tau^\sE = \bar{\cG}^{\sA\rarr\sE}(\pi^\sA)$, depending on a certain entropy condition specified by the initial state $\ket{\omega}^{\sR\sA\sB}$ and the channel $\cG^{\sA\rarr\sC}$.
Here, $\bar{\cG}^{\sA\rarr\sE}(\pi^\sA)$ is a reduced state on $\sE$ of the \emph{Choi–Jamio\l kowski state} $\bar{\cG}^{\sA\rarr\sE}(\Phi^{\sR\sA})$ of $\bar{\cG}^{\sA\rarr\sE}$.
The dependence of $\epsilon$ on $\ket{\omega}^{\sR\sA\sB}$ and $\cG^{\sA\rarr\sC}$ via the entropy condition plays a key role in deriving the theoretical limits of some tasks, such as the quantum capacity~\cite{hayden2008decoupling}.
Yet we do not address this dependence explicitly here, as it is not crucial in our discussion (for details, see, e.g., Refs.~\cite{dupuis2010decoupling, dupuis2014one, khatri2024principlemodern}).

Henceforth, following this convention, we use a Haar random unitary $U^\sA$ and fix $\tau^\sE$ as $\tau^\sE = \bar{\cG}^{\sA\rarr\sE}(\pi^\sA)$.
We remark that even if the Haar random unitary in these tasks is replaced by a unitary 2-design for practical purposes, a similar discussion holds~\cite{Szehr2013decouple2design}.
Unlike the Haar random unitary, the unitary 2-design can be realized efficiently as quantum circuits, e.g., using the Clifford group~\cite{DiVincenzoQdatahiding}, near-linear construction~\cite{cleve2016nearlinearexact2design}, or random unitaries diagonal in the Pauli-$Z$ and -$X$ bases~\cite{Nakata2017randomdiagonal2design}.


\subsection{One-shot entanglement transmission}
\label{sec:qi transmission}

An important application of the decoupling and the Uhlmann's theorem is entanglement transmission.
We describe the setting in \Cref{seq:setting of ent transmission}, and then, in \Cref{sec:analyze of uhl alg in ent trans}, we investigate the implementation of the Uhlmann transformation for entanglement transmission, based on our algorithm.

\subsubsection{Setting}
\label{seq:setting of ent transmission}

Alice aims to send a system $\sA$ of a maximally entangled state $\ket{\Phi}^{\sR\sA}$ to Bob through a noisy quantum channel, where $d_\sR = d_\sA$.
They may share $(\log{d_\sG})$-ebit entanglement $\ket{\Phi}^{\sG\hat{\sG}}$ in advance, which can be used in the encoding and decoding processes, where $\sG$ is held by Alice and $\hat{\sG}$ by Bob.
When $d_\sG=1$, it is the entanglement-non-assisted setting; otherwise, the entanglement-assisted setting.
Alice encodes the system $\sA$ together with $\sG$ using a Haar random unitary $U^{\sA\sG}$, then sends $\sA\sG$ through a noisy quantum channel $\cN^{\sA\sG\rarr\sB}$ to Bob.
After receiving the system $\sB$, Bob applies a decoding map $\cD^{\sB\hat{\sG}\rarr\hat{\sA}}$.
See also \Cref{fig:diagram of QI transmitt}.

We denote by $\bar{\cN}^{\sA\sG\rarr\sE}$ a complementary channel of $\cN^{\sA\sG\rarr\sB}$, where $\sE$ is an environment.
The decoupling theorem~\cite{dupuis2014one} guarantees that an encoding Haar random unitary $U^{\sA\sG}$ satisfies with high probability that 
\begin{align}
\label{inteq:102}
    \rF\big(\bar{\cN}^{\sA\sG\rarr\sE}\circ\cU^{\sA\sG}(\Phi^{\sR\sA} \otimes \pi^\sG), \pi^\sR \otimes \tau^\sE\big) \geq 1 - \epsilon,
\end{align}
where $\tau^\sE = \bar{\cN}^{\sA\sG\rarr\sE}(\pi^{\sA\sG})$.
The parameter $\epsilon$ depends on the channel $\cN^{\sA\sG\rarr\sB}$ and the number of pre-shared ebits.
Note that Eq.~\eqref{inteq:102} holds independently of the choice of the complementary channel of $\cN^{\sA\sG\rarr\sB}$, as the complementary channel has the freedom to apply an additional isometry on $\sE$ and the fidelity is invariant under such an isometry.

Due to Eq.~\eqref{inteq:80}, there is a quantum channel $\cD^{\sB\hat{\sG} \rarr \hat{\sA}} = \tr_{\hat{\sE}} \circ \cV^\sM \circ \cP_{\ket{0}}^{\bC \rarr \sD}$, where $\sM = \sB\hat{\sG}\sD = \hat{\sA}\hat{\sE}$ and $V^\sM$ is the Uhlmann unitary, which satisfies
\begin{align}
\label{inteq:101}
    \rF\big(\cD^{\sB\hat{\sG} \rarr \hat{\sA}}\circ \cN^{\sA\sG \rarr \sB} \circ \cU^{\sA\sG} (\Phi^{\sR\sA} \otimes \Phi^{\sG\hat{\sG}}), \ket{\Phi}^{\sR\hat{\sA}}\big) \geq 1 - \epsilon,
\end{align}
where we used Eq.~\eqref{inteq:102}.
This implies that, under the condition that $\sR$ and $\sE$ are sufficiently decoupled in the sense of Eq.~\eqref{inteq:102} with small $\epsilon$, Bob can recover the original maximally entangled state $\ket{\Phi}^{\sR\sA}$ from the output of the noisy channel $\cN^{\sA\sG\rarr\sB}$ by applying $\cD^{\sB\hat{\sG}\rarr\hat{\sA}}$, succeeding in transmitting entanglement from Alice to Bob.
We call this quantum channel $\cD^{\sB\hat{\sG}\rarr\hat{\sA}}$ the \emph{Uhlmann decoder}.

\begin{figure}
    \centering
    \includegraphics[width=100mm]{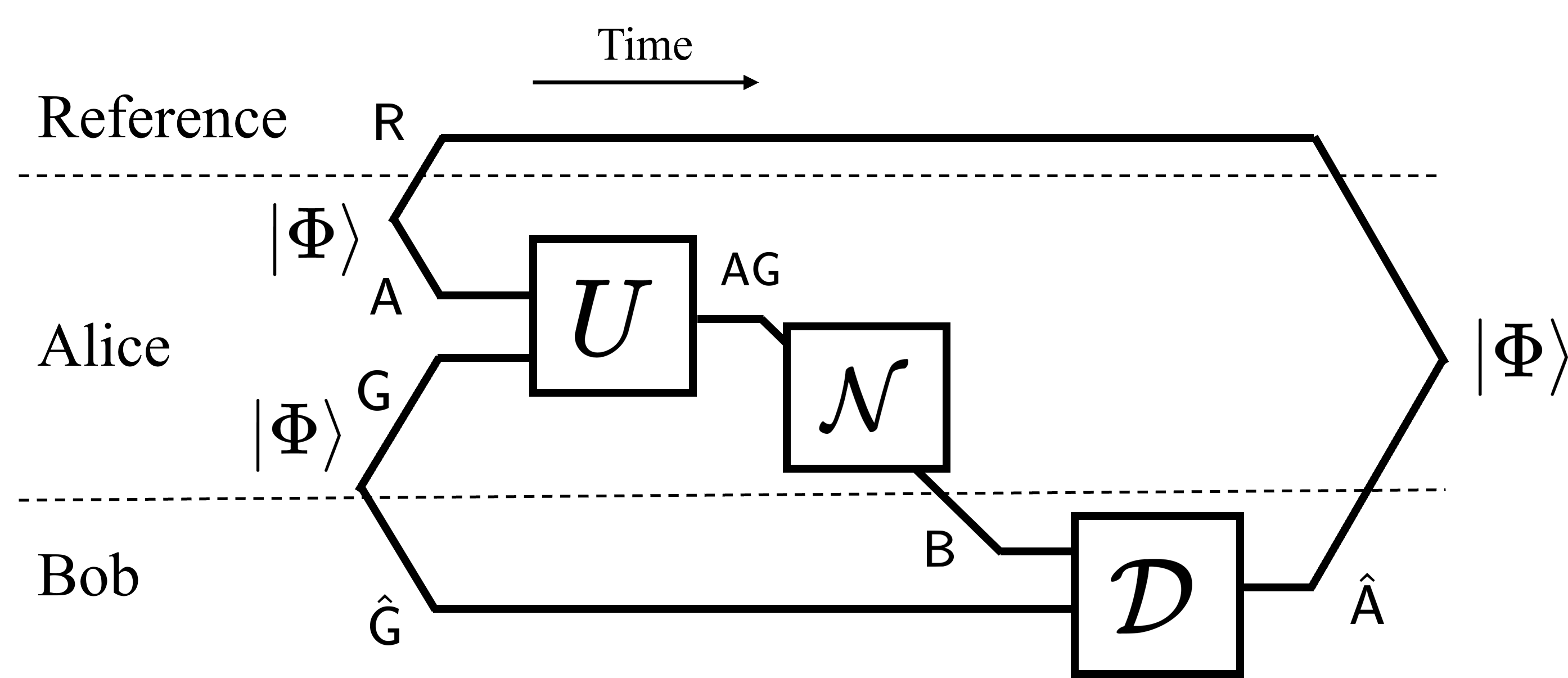}
    \caption{A diagram of entanglement transmission. 
    Alice applies a Haar random unitary $U^{\sA\sG}$ to encode $\sA$ with $\sG$, which is part of a possibly pre-shared entangled state $\ket{\Phi}^{\sG\hat{\sG}}$. Alice then transmits $\sA\sG$ to Bob through a noisy quantum channel $\cN^{\sA\sG\rarr\sB}$.
    After receiving $\sB$, Bob applies a decoding map $\cD^{\sB\hat{\sG}\rarr\hat{\sA}}$ on $\sB\hat{\sG}$ to obtain a final state that approximates the maximally entangled state $\ket{\Phi}^{\sR\hat{\sA}}$.}
    \label{fig:diagram of QI transmitt}
\end{figure}


\subsubsection{Algorithmic implementation of the Uhlmann decoder}
\label{sec:analyze of uhl alg in ent trans}

We consider that Bob implements the Uhlmann decoder using our algorithm.
We assume that Bob knows Alice's operation $U^{\sA\sG}$ and is capable of applying it.
It is standard in decoding to assume that Bob knows how Alice encoded the system.
Even if $U^{\sA\sG}$ is chosen uniformly at random, Bob can determine which unitary was applied, for example, using shared randomness with Alice.
Let $U_\cN^{\sS}$ be a Stinespring unitary on $\sS = \sA\sG\sC = \sB\sE$, i.e., $U_\cN^{\sS}\ket{0}^{\sC} = V_\cN^{\sA\sG\rarr\sB\sE}$, where $V_\cN^{\sA \to \sB\sE}$ is a Stinespring isometry of $\cN^{\sA \to \sB}$.
For algorithmic construction, we consider two different models: 
\begin{enumerate}
    \item A model in which multiple uses of a Stinespring unitary $U_\cN^{\sS'}$ are available.
    \item A model in which multiple uses of the quantum channel $\cN^{\sA'\sG' \to \sB'}$ are available.
\end{enumerate}
We evaluate the number of uses of $U_\cN^{\sS'}$ or $\cN^{\sA'\sG'\rarr\sB'}$ by Bob to implement the Uhlmann decoder in each model.
Note that we should distinguish between the existing systems, such as $\sR$ and $\sS$, and the systems subsequently prepared and simulated by Bob, e.g., $\sS'$.

We define two pure states $\ket{\Psi}^{\sR\sB\hat{\sG}\sE}$ and $\ket{\tau}^{\sR\sB\hat{\sG}\sE}$ as 
\begin{align}
    \ket{\Psi}^{\sR\sB\hat{\sG}\sE} = V_\cN^{\sA\sG\rarr\sB\sE} U^{\sA\sG} \ket{\Phi}^{\sR\sA}\ket{\Phi}^{\sG\hat{\sG}}, \ \ \ \text{and} \ \ \
    \ket{\tau}^{\sR\sB\hat{\sG}\sE} = V_\cN^{\sA\sG\rarr\sB\sE}\ket{\Phi}^{\sR\sA}\ket{\Phi}^{\sG\hat{\sG}}.
\end{align}
For a matrix $A$, we use $\lambda_{\rm min}(A)$ and $r(A)$ to denote the minimum non-zero eigenvalue and the rank of $A$, respectively.
Our statement is as follows, where $\mu_{\rm min}$ is given by $\mu_{\rm min} = \sqrt{\lambda_{\rm min}(\Psi^{\sB\hat{\sG}})\lambda_{\rm min}(\tau^{\sB\hat{\sG}})/d_\sA}$.

\begin{theorem}[Algorithmic implementation of the Uhlmann decoder]
\label{thm:Uhlmann decoder ent trans}
    In the setting of entanglement transmission, suppose that an encoding unitary $U^{\sA\sG}$ satisfies
    \begin{align}
    \label{inteq:136}
        \rF\big(\bar{\cN}^{\sA\sG\rarr\sE}\circ\cU^{\sA\sG}(\Phi^{\sR\sA} \otimes \pi^\sG), \pi^\sR \otimes \tau^\sE\big) \geq 1 - \epsilon,
    \end{align}
    for some $\epsilon \in [0, 1]$ and $\tau^\sE = \bar{\cN}^{\sA\sG\rarr\sE}(\pi^{\sA\sG})$.
    Then, for $\delta \in (0, 1)$, there exists a quantum algorithm that implements a quantum channel $\til{\cD}^{\sB\hat{\sG}\rarr\hat{\sA}}$, which satisfies
    \begin{align}
    \label{inteq:157}
        \rF\big(\til{\cD}^{\sB\hat{\sG} \rarr \hat{\sA}}(\Psi^{\sR\sB\hat{\sG}}), \ket{\Phi}^{\sR\hat{\sA}}\big) \geq 1 - \epsilon -\delta,
    \end{align}
    using, in Model~1: 
    \begin{align}
    \label{inteq:121}
        \cO\Big(\min\Big\{\f{1}{\mu_{\rm min}}, \f{r(\Psi^{\sB\hat{\sG}})}{\delta}\Big\}\log{\Big(\f{1}{\delta}\Big)}\Big)
    \end{align}
    of a Stinespring unitary $U_\cN^{\sS'}$ or, 
    in Model~2: 
    \begin{align}
    \label{inteq:137}
        \til{\cO}\Big(\f{d_\sA d_\sB d_\sG}{\delta^2 \mu_{\rm min}^3} \min\Big\{\f{1}{\lambda_{\rm min}^2(\Psi^{\sR\sB\hat{\sG}})} + \f{1}{\lambda_{\rm min}^2(\tau^{\sR\sB\hat{\sG}})}, \f{d_\sA^2 d_\sB^2 d_\sG^2}{\delta^4 \mu_{\rm min}^4}\Big\}\Big)
    \end{align}
    of the quantum channel $\cN^{\sA'\sG'\rarr\sB'}$.
    In both models, $\cO\big(\log{(d_\sA d_\sB d_\sG)}\big)$ qubits suffice at any one time.
\end{theorem}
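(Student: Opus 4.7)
The plan is to recognize the ideal Uhlmann decoder as a specific instance of the Uhlmann transformation between two explicitly chosen purifications, and then invoke the algorithms from \Cref{thm:Uhlmann alg purif query model,thm:Uhlmann mixed state sample} to approximately realize it. First, I would apply Uhlmann's theorem to the decoupling condition~\eqref{inteq:136}: this yields a purification $\ket{\tau^{\rm pure}}^{\sE\hat{\sE}}$ of $\tau^\sE$ (with $\hat{\sE}$ padded so that $d_\sB d_{\hat{\sG}} d_\sD = d_{\hat{\sA}} d_{\hat{\sE}}$) and a partial isometry $V\colon \sB\hat{\sG}\sD \to \hat{\sA}\hat{\sE}$ satisfying
\begin{equation*}
\rF\bigl(V\ket{\Psi}^{\sR\sB\hat{\sG}\sE}\ket{0}^\sD,\ \ket{\Phi}^{\sR\hat{\sA}}\ket{\tau^{\rm pure}}^{\sE\hat{\sE}}\bigr)\geq 1-\epsilon.
\end{equation*}
Setting $\cD^{\sB\hat{\sG}\to\hat{\sA}} = \tr_{\hat{\sE}}\circ\cV\circ\cP_{\ket{0}}^{\bC\to\sD}$ recovers the ideal Uhlmann decoder of Eq.~\eqref{inteq:101}, so by monotonicity of the fidelity under the partial trace it suffices to approximate $V$ to additive fidelity error $\delta$.

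For Model~1, I would construct preparation unitaries $U_\rho$ and $U_\sigma$ for the two purifications using one query each to $U_\cN^{\sS'}$: $U_\rho$ prepares $\ket{\Phi}^{\sR\sA}\ket{\Phi}^{\sG\hat{\sG}}\ket{0}^{\sC\sD}$, applies the known encoder $U^{\sA\sG}$, then $U_\cN^{\sS'}$ on $\sA\sG\sC$; $U_\sigma$ prepares the auxiliary Bell pair $\ket{\Phi}^{\sR\hat{\sA}}$ together with the Bell pairs feeding into $V_\cN$, and applies $U_\cN^{\sS'}$ without the encoder to generate $\ket{\tau^{\rm pure}}^{\sE\hat{\sE}}$ on the remaining systems. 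Feeding these into $\mathtt{UhlmannPurifiedQuery}$ in fidelity-difference mode produces $\til{V}$ with fidelity error $\delta$; since each query consumes exactly one use of $U_\cN^{\sS'}$, the total number of channel queries matches the algorithm's query count $\cO\bigl(\min\{1/s_{\min},\, r/\delta\}\log(1/\delta)\bigr)$. Here $s_{\min}$ and $r$ refer to $\sqrt{\pi^\sR\otimes \tau^\sE}\sqrt{\Psi^{\sR\sE}}$, and I would bound $r\leq r(\Psi^{\sR\sE})=r(\Psi^{\sB\hat{\sG}})$ and $s_{\min}\geq \mu_{\min}$ by combining $\lambda_{\min}(\Psi^{\sR\sE})=\lambda_{\min}(\Psi^{\sB\hat{\sG}})$ with a careful comparison of $\lambda_{\min}(\tau^\sE)$ and $\lambda_{\min}(\tau^{\sB\hat{\sG}})$, yielding Eq.~\eqref{inteq:121}.

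For Model~2, Bob instead prepares samples of the mixed states $\Psi^{\sR\sB\hat{\sG}}$ and $\tau^{\sR\sB\hat{\sG}}$ on a local register of dimension $d_\sA d_\sB d_\sG$ by running $\cN^{\sA'\sG'\to\sB'}$ once on Bell-pair inputs, with and without the encoder $U^{\sA'\sG'}$ respectively. I would then feed these samples into $\mathtt{UhlmannMixedSample}$ in fidelity-difference mode; identifying its parameters via $d_\sA \to d_\sA d_\sB d_\sG$, $\rho_{\min} \to \lambda_{\min}(\Psi^{\sR\sB\hat{\sG}})$, $\sigma_{\min} \to \lambda_{\min}(\tau^{\sR\sB\hat{\sG}})$, and $\beta_\tF \to \mu_{\min}$ gives the sample complexity Eq.~\eqref{inteq:137}. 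After tracing out the portion of the canonical purifying register corresponding to $\hat{\sE}$, the resulting channel serves as $\til{\cD}$, and the decoder guarantee~\eqref{inteq:157} follows by the same monotonicity argument as above.

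The hard part will be the spectral bound $s_{\min}\geq\mu_{\min}$, since the required comparison between $\lambda_{\min}(\tau^\sE)$ and $\lambda_{\min}(\tau^{\sB\hat{\sG}})$---these marginals of $\ket{\tau}^{\sR\sB\hat{\sG}\sE}$ across different bipartitions---has no universal sign, so the argument must exploit the product structure of $\pi^\sR\otimes\tau^\sE$ and absorb the resulting $1/d_\sA$ factor into the definition of $\mu_{\min}$. A secondary technicality is that the canonical purifications underlying $\mathtt{UhlmannMixedSample}$ differ from the natural purifications $\ket{\Psi}^{\sR\sB\hat{\sG}\sE}$ and $\ket{\tau}^{\sR\sB\hat{\sG}\sE}$, and I will need to verify that the resulting channel still couples to the decoder on $\sB\hat{\sG}$ in the intended way. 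As noted in \Cref{sec:purif query}, the bottom-left block of the block-encoding does not affect the action on the specific input $\Psi^{\sR\sB\hat{\sG}}$, so no post-selection is required.
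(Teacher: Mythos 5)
Your Model~1 argument follows the paper's proof essentially step for step: the same two preparation unitaries (encoder followed by the Stinespring unitary for $\ket{\Psi}$; the Stinespring unitary alone tensored with an extra Bell pair $\ket{\Phi}^{\sR\hat{\sA}}$ for the target), the same invocation of $\mathtt{UhlmannPurifiedQuery}$ in fidelity-difference mode on $\sM=\sB\hat{\sG}\hat{\sR}\hat{\sA}$, and the same reduction to the spectral parameters of $\sqrt{\Psi^{\sR\sE}}\big(\sqrt{\pi^\sR}\otimes\sqrt{\tau^\sE}\big)$. The bound $s_{\min}\geq\mu_{\min}$ that you defer as ``the hard part'' is handled in the paper by two short ingredients you do not name: the support lemma $\supp[\Psi^{\sR\sE}]\subset\supp[\pi^\sR\otimes\tau^\sE]$ (which holds because $\tr_\sE[\Psi^{\sR\sE}]=\pi^\sR$ and $\tr_\sR[\Psi^{\sR\sE}]=\tau^\sE$, and which licenses $s_{\min}(AB)\geq s_{\min}(A)s_{\min}(B)$ via \Cref{prop:min singular inequality}), together with the fact that non-zero spectra of complementary marginals of the pure states $\ket{\Psi}^{\sR\sB\hat{\sG}\sE}$ and $\ket{\tau}^{\sR\sB\hat{\sG}\sE}$ coincide. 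This is a routine computation rather than a delicate comparison.

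The genuine gap is in Model~2. Feeding samples of $\Psi^{\sR\sB\hat{\sG}}$ and $\tau^{\sR\sB\hat{\sG}}$ directly into $\mathtt{UhlmannMixedSample}$ yields (an approximation of) the Uhlmann transformation between the two \emph{canonical} purifications $\ket{\Psi_{\rm c}}$ and $\ket{\tau_{\rm c}}$, and this is the \emph{wrong} transformation: the Stinespring isometries of $\cN^{\sA\sG\rarr\sB}$ induced by these two purifications differ by the extra unitary $(U^{\sR''\hat{\sG}''})^\dag$ on the environment, so they define two \emph{different} complementary channels, and the fidelity achieved by that Uhlmann transformation equals a fidelity between the outputs of these two different complementary channels --- a quantity the decoupling condition~\eqref{inteq:136} does not control. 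Your proposal calls this ``a secondary technicality\ldots to verify,'' but no verification rescues the unmodified construction; the paper must actively repair it by having Bob apply the encoding unitary $U^{\sR''\hat{\sG}''}$ to the copy register of the canonically purified $\Psi$ before running the purified-sample Uhlmann step, so that both purifications induce the \emph{same} complementary channel and the decoupling condition becomes applicable. Relatedly, the transformation must be arranged to act on $\sM=\sB\hat{\sG}\hat{\sR}\hat{\sA}$ (Bob's systems plus fresh ancillas) while leaving $\sR$ and the purifying register $\sR''\sB''\hat{\sG}''$ --- which plays the role of the untouched environment --- alone; ``tracing out the portion corresponding to $\hat{\sE}$'' does not by itself place the decoder on the correct systems. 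As written, your Model~2 construction would not satisfy Eq.~\eqref{inteq:157}.
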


In each model, if one is interested in the circuit complexity, it can be estimated to the leading order by multiplying the number of one- and two-qubit gates for implementing $U_\cN^{\sS'}$ or $\cN^{\sA'\sG'\rarr\sB'}$ by the respective number of times they are used, given in Eq.~\eqref{inteq:121} and Eq.~\eqref{inteq:137}.

Before proceeding to the proof of Theorem~\ref{thm:Uhlmann decoder ent trans}, we briefly compare the number of uses of $U_\cN^{\sS'}$ in Eq.~\eqref{inteq:121} with those of existing decoders, the generalized Yoshida-Kitaev decoder, and the Petz-like decoder~\cite{utsumi2024explicitdecoderQSVTt}.
The number of uses of $U_\cN^{\sS'}$ in these decoders is given by
\begin{align}
\label{inteq:82}
    \cO\Big(\sqrt{d_\sB/\big(d_\sG\lambda_{\rm min}(\Psi^{\sB\hat{\sG}})\big)}\log{(1/\delta)}\Big),
\end{align}
for the generalized Yoshida-Kitaev decoder, and
\begin{align}
\label{inteq:83}
    \cO\Big(\sqrt{d_\sE/\big(d_\sA\lambda_{\rm min}(\Psi^{\sB\hat{\sG}})\big)} \log{(1/\delta)}\Big),
\end{align}
for the Petz-like decoder.
From the comparison of Eqs.~\eqref{inteq:121} and~\eqref{inteq:82}, we see that if an inequality 
\begin{align}
    \f{d_\sG}{d_\sB}\min\Big\{\f{d_\sA}{\lambda_{\rm min}(\tau^{\sB\hat{\sG}})}, \f{1}{\delta^2}\Big\} \leq 1,
\end{align}
is satisfied, the Uhlmann decoder can be implemented more efficiently than the generalized Yoshida-Kitaev one, where we used the relation $r(\Psi^{\sB\hat{\sG}}) \leq 1/\lambda_{\rm min}(\Psi^{\sB\hat{\sG}})$.
Similarly, comparing Eqs.~\eqref{inteq:121} and~\eqref{inteq:83}, we observe that when $\f{d_\sE}{d_\sA}\min\Big\{\f{d_\sA}{\lambda_{\rm min}(\tau^{\sB\hat{\sG}})}, \f{1}{\delta^2}\Big\} \leq 1$, the Uhlmann decoder can be implemented more efficiently than the Petz-like decoder.

\begin{proof}[Proof of Theorem~\ref{thm:Uhlmann decoder ent trans}]

We first consider Model~1.

We note that the states $\ket{\Psi}^{\sR\sB\hat{\sG}\sE}\ket{0}^{\hat{\sR}\hat{\sA}}$ and $\ket{\Phi}^{\sR\hat{\sA}}\ket{\tau}^{\hat{\sR}\sB\hat{\sG}\sE}$ are rephrased as 
$\ket{\Psi}^{\sR\sB\hat{\sG}\sE}\ket{0}^{\hat{\sR}\hat{\sA}} = W_\cN^{\sR\sS\hat{\sG}\hat{\sR}\hat{\sA}}\ket{0}^{\sR\sS\hat{\sG}\hat{\sR}\hat{\sA}}$ and $\ket{\Phi}^{\sR\hat{\sA}}\ket{\tau}^{\hat{\sR}\sB\hat{\sG}\sE} = \til{W}_\cN^{\sR\sS\hat{\sG}\hat{\sR}\hat{\sA}}\ket{0}^{\sR\sS\hat{\sG}\hat{\sR}\hat{\sA}}$, where
\begin{align}
    W_\cN^{\sR\sS\hat{\sG}\hat{\sR}\hat{\sA}} = U_\cN^{\sS}U^{\sA\sG}(U_\Phi^{\sR\sA}\otimes U_\Phi^{\sG\hat{\sG}})\otimes \bI^{\hat{\sR}\hat{\sA}}, \ \ \text{and} \ \ \ 
    \til{W}_\cN^{\sR\sS\hat{\sG}\hat{\sR}\hat{\sA}} = U_\Phi^{\sR\hat{\sA}} \otimes U_\cN^{\sS}(U_\Phi^{\hat{\sR}\sA}\otimes U_\Phi^{\sG\hat{\sG}}),
\end{align}
$U_\Phi\ket{0} = \ket{\Phi}$, and $\sS = \sA\sG\sC = \sB\sE$.
Using
\begin{align}
    \mathtt{UhlmannPurifiedQuery}(W_\cN^{\sR'\sS'\hat{\sG}'\hat{\sR}'\hat{\sA}'}, \til{W}_\cN^{\sR'\sS'\hat{\sG}'\hat{\sR}'\hat{\sA}'}; \delta, \tF),
\end{align}
described in \Cref{alg:Uhl purif query}, we can implement a quantum channel $\cT^\sM$ on $\sM = \sB\hat{\sG}\hat{\sR}\hat{\sA}$, which approximates the Uhlmann transformation and satisfies
\begin{align}
    \rF\big(\cT^\sM(\ketbra{\Psi}{\Psi}^{\sR\sB\hat{\sG}\sE}\otimes\ketbra{0}{0}^{\hat{\sR}\hat{\sA}}), \ket{\Phi}^{\sR\hat{\sA}}\ket{\tau}^{\hat{\sR}\sB\hat{\sG}\sE}\big)
    &\geq \rF(\Psi^{\sR\sE}, \pi^\sR\otimes\tau^\sE) -\delta \\
    &\geq 1-\epsilon - \delta,
\end{align}
where we used Theorem~\ref{thm:Uhlmann alg purif query model} and the decoupling condition Eq.~\eqref{inteq:136}.
Note that $\Psi^{\sR\sE} = \bar{\cN}^{\sA\sG\rarr\sE}\circ\cU^{\sA\sG}(\Phi^{\sR\sA} \otimes \pi^\sG)$.
From the monotonicity of the fidelity, a quantum channel $\til{\cD}^{\sB\hat{\sG}\rarr\hat{\sA}} = \tr_{\sB\hat{\sG}\hat{\sR}}\circ\cT^\sM \circ \cP_{\ket{0}}^{\bC\rarr\hat{\sR}\hat{\sA}}$ serves as a decoder such that
\begin{align}
    &\rF(\til{\cD}^{\sB\hat{\sG}\rarr\hat{\sA}}(\Psi^{\sR\sB\sG}), \ket{\Phi}^{\sR\hat{\sA}}) 
    \geq 1 - \epsilon - \delta.
\end{align}

The number of uses of $U_\cN^{\sS'}$ for applying the channel $\til{\cD}^{\sB\hat{\sG}\rarr\hat{\sA}}$, especially $\cT^\sM$, can be evaluated by Theorem~\ref{thm:Uhlmann alg purif query model} as
\begin{align}
\label{inteq:81}
    \cO\big(\min\{{1/s_{\rm min}}, r/\delta\}\log{(1/\delta)}\big), 
\end{align}
where $s_{\rm min}$ and $r$ are the minimum non-zero singular value and the rank of $\sqrt{\Psi^{\sR\sE}}\big(\sqrt{\pi^\sR}\otimes\sqrt{\tau^\sE}\big)$, respectively.

By applying the following two inequalities:
\begin{align}
    \label{inteq:138}
    &1/s_{\rm min} \leq \sqrt{d_\sA/\big(\lambda_{\rm min}(\Psi^{\sB\hat{\sG}})\lambda_{\rm min}(\tau^{\sB\hat{\sG}})\big)} = 1/\mu_{\rm min},  \\
    \label{inteq:139}
    &r \leq r(\Psi^{\sB\hat{\sG}}),
\end{align}
to Eq.~\eqref{inteq:81}, we finally obtain Eq.~\eqref{inteq:121}.
The inequalities follow from the so-called support lemma~\cite{Renner05securityQKD, wilde2013QItheory}: for any positive-semidefinite matrix $S^{\sA\sB}$, $\supp[S^{\sA\sB}] \subset \supp[S^\sA\otimes S^\sB]$.
By combining this fact with \Cref{prop:min singular inequality} in \Cref{sec:appendix product of singvalue}, Eq.~\eqref{inteq:138} is obtained.
Here, we note that $\tr_\sE[\Psi^{\sR\sE}] = \pi^\sR$ and $\tr_\sR[\Psi^{\sR\sE}] = \tau^\sE$.
The inequality in Eq.~\eqref{inteq:139} results from the rank of a product of matrices being at most the rank of each matrix.
Since $\supp[\Psi^{\sR\sE}] \subset \supp[\pi^\sR \otimes \tau^\sE]$, we have $\min\{r(\Psi^{\sR\sE}), r(\pi^\sR\otimes \tau^\sE)\} = r(\Psi^{\sR\sE})$.
Note that $\ket{\Psi}^{\sR\sB\hat{\sG}\sE}$ is a pure state, which implies $r(\Psi^{\sR\sE}) = r(\Psi^{\sB\hat{\sG}})$.

Next, we consider Model~2.
In this model, we need to obtain purified states, using a purification algorithm such as the canonical purification algorithm in \Cref{alg:canonical purification}.
Since the input state to the decoder does not depend on the choice of the environment $\sE$, for canonical purification, we set $\sE$ as $\sE = \sR''\sB''\hat{\sG}''$.

In contrast to Model~1, the construction of a decoder in Model~2 requires careful analysis, because considering a specific purification fixes the freedom in the environment of the channel $\cN^{\sA\sG\rarr\sB}$.
We explain the issue and its resolution below.

The canonical purified states of
\begin{align}
\label{inteq:11}
    \Psi^{\sR\sB\hat{\sG}} = \cN^{\sA\sG\rarr\sB}\circ\cU^{\sA\sG}(\Phi^{\sR\sA}\otimes\Phi^{\sG\hat{\sG}}), \ \ \ \text{and} \ \ \ \tau^{\sR\sB\hat{\sG}} = \cN^{\sA\sG\rarr\sB}(\Phi^{\sR\sA}\otimes\Phi^{\sG\hat{\sG}}),
\end{align}
are given by 
\begin{align}
    &\ket{\Psi_{\rm c}}^{\sR\sB\hat{\sG}\sR''\sB''\hat{\sG}''} = \sqrt{\Psi^{\sR\sB\hat{\sG}}}\ket{\Gamma}^{\sR\sB\hat{\sG}\sR''\sB''\hat{\sG}''}, \label{Eq:PsiCanonical} \\
    &\ket{\tau_{\rm c}}^{\sR\sB\hat{\sG}\sR''\sB''\hat{\sG}''} = \sqrt{\tau^{\sR\sB\hat{\sG}}}\ket{\Gamma}^{\sR\sB\hat{\sG}\sR''\sB''\hat{\sG}''},
\end{align}
respectively, where $\ket{\Gamma}^{\sR\sB\hat{\sG}\sR''\sB''\hat{\sG}''}$ is the unnormalized maximally entangled state between $\sR\sB\hat{\sG}$ and $\sR''\sB''\hat{\sG}''$.
The Stinespring isometry $V_{\cN, \tau_{\rm c}}^{\sA\sG\rarr\sB\sR''\sB''\hat{\sG}''}$ of $\cN^{\sA\sG\rarr\sB}$, determined from $\ket{\tau_{\rm c}}^{\sR\sB\hat{\sG}\sR''\sB''\hat{\sG}''}$, satisfies
\begin{align}
\label{inteq:66}
    \ket{\tau_{\rm c}}^{\sR\sB\hat{\sG}\sR''\sB''\hat{\sG}''} 
    = V_{\cN, \tau_{\rm c}}^{\sA\sG\rarr\sB\sR''\sB''\hat{\sG}''}\ket{\Phi}^{\sR\sA}\ket{\Phi}^{\sG\hat{\sG}}.
\end{align}

To see how $V_{\cN, \tau_{\rm c}}^{\sA\sG\rarr\sB\sR''\sB''\hat{\sG}''}$ is related to the Stinespring isometry determined from $\ket{\Psi_{\rm c}}^{\sR\sB\hat{\sG}\sR''\sB''\hat{\sG}''}$, we note that Eqs.~\eqref{inteq:11} imply $\Psi^{\sR\sB\hat{\sG}} = (U^{\sR\hat{\sG}})^\top\tau^{\sR\sB\hat{\sG}}(U^{\sR\hat{\sG}})^*$, where we used the property of the maximally entangled state: $U^\sA\ket{\Phi}^{\sA\sA'} = (U^{\sA'})^\top\ket{\Phi}^{\sA\sA'}$.
Substituting this into Eq.~\eqref{Eq:PsiCanonical} and, again, using the property of the maximally entangled state, the canonical purified state $\ket{\Psi_{\rm c}}^{\sR\sB\hat{\sG}\sR''\sB''\hat{\sG}''}$ can be written as
\begin{align}
    \ket{\Psi_{\rm c}}^{\sR\sB\hat{\sG}\sR''\sB''\hat{\sG}''}
    &= \big((U^{\sR\hat{\sG}})^\top\otimes(U^{\sR''\hat{\sG}''})^\dag\big)\ket{\tau_{\rm c}}^{\sR\sB\hat{\sG}\sR''\sB''\hat{\sG}''} \\
    \label{inteq:132}
    &=(U^{\sR''\hat{\sG}''})^\dag V_{\cN, \tau_{\rm c}}^{\sA\sG\rarr\sB\sR''\sB''\hat{\sG}''} U^{\sA\sG}\ket{\Phi}^{\sR\sA}\ket{\Phi}^{\sG\hat{\sG}}.
\end{align}
Hence, the Stinespring isometry $V_{\cN, \Psi_{\rm c}}^{\sA\sG\rarr\sB\sR''\sB''\hat{\sG}''}$ of $\cN^{\sA\sG\rarr\sB}$, determined from $\ket{\Psi_{\rm c}}^{\sR\sB\hat{\sG}\sR''\sB''\hat{\sG}''}$, is given by 
\begin{align}
\label{inteq:154}
    V_{\cN, \Psi_{\rm c}}^{\sA\sG\rarr\sB\sR''\sB''\hat{\sG}''} = (U^{\sR''\hat{\sG}''})^\dag V_{\cN, \tau_{\rm c}}^{\sA\sG\rarr\sB\sR''\sB''\hat{\sG}''}.
\end{align}
As a result, the complementary channels of $\cN^{\sA\sG\rarr\sB}$ induced by each of $V_{\cN, \tau_{\rm c}}^{\sA\sG\rarr\sB\sR''\sB''\hat{\sG}''}$ and $V_{\cN, \Psi_{\rm c}}^{\sA\sG\rarr\sB\sR''\sB''\hat{\sG}''}$ are generally not identical: $\bar{\cN}_{\tau_{\rm c}}^{\sA\sG\rarr\sR''\sB''\hat{\sG}''} \neq \bar{\cN}_{\Psi_{\rm c}}^{\sA\sG\rarr\sR''\sB''\hat{\sG}''}$, where $\bar{\cN}_\omega^{\sA\sG\rarr\sR''\sB''\hat{\sG}''}$ is given by $\bar{\cN}_\omega^{\sA\sG\rarr\sR''\sB''\hat{\sG}''}(\cdot) = \tr_\sB\big[V_{\cN, \omega}^{\sA\sG\rarr\sB\sR''\sB''\hat{\sG}''}(\cdot)\big(V_{\cN, \omega}^{\sA\sG\rarr\sB\sR''\sB''\hat{\sG}''}\big)^\dag\big]$
for $\omega = \tau_{\rm c}, \Psi_{\rm c}$.

This mismatch of the complementary channels prevents a straightforward application of the decoupling condition Eq.~\eqref{inteq:136} when we construct the Uhlmann transformation $V^\sM$ on $\sM = \sB\hat{\sG}\hat{\sR}\hat{\sA}$ from the canonical purifications $\ket{\Psi_{\rm c}}^{\sR\sB\hat{\sG}\sR''\sB''\hat{\sG}''}$ and $\ket{\tau_{\rm c}}^{\sR\sB\hat{\sG}\sR''\sB''\hat{\sG}''}$.
In fact, this Uhlmann transformation $V^\sM$ satisfies  
\begin{align}
    &\rF\big(V^\sM\ket{\Psi_{\rm c}}^{\sR\sB\hat{\sG}\sR''\sB''\hat{\sG}''}\ket{0}^{\hat{\sR}\hat{\sA}}, \ket{\Phi}^{\sR\hat{\sA}}\ket{\tau_{\rm c}}^{\hat{\sR}\sB\hat{\sG}\sR''\sB''\hat{\sG}''}\big) \notag \\
    \label{inteq:156}
    &= \rF\big(\bar{\cN}_{\Psi_{\rm c}}^{\sA\sG\rarr\sR''\sB''\hat{\sG}''}\circ\cU^{\sA\sG}(\Phi^{\sR\sA} \otimes \pi^\sG), \pi^\sR\otimes\bar{\cN}_{\tau_{\rm c}}^{\sA\sG\rarr\sR''\sB''\hat{\sG}''}(\pi^{\sA\sG})\big). 
\end{align}
However, this fidelity is not necessarily close to one even under the assumption of the decoupling condition Eq.~\eqref{inteq:136}.
This is simply because
the right-hand side of Eq.~\eqref{inteq:156} involves two different complementary channels $\bar{\cN}_{\Psi_{\rm c}}^{\sA\sG\rarr\sR''\sB''\hat{\sG}''}$ and $\bar{\cN}_{\tau_{\rm c}}^{\sA\sG\rarr\sR''\sB''\hat{\sG}''}$.

From the decoupling condition Eq.~\eqref{inteq:136}, we can conclude that the fidelity in the right-hand side in Eq.~\eqref{inteq:156} is $\epsilon$-close to one only if the two complementary channels coincide, which is in general not the case.
Therefore, the Uhlmann transformation constructed from $\ket{\Psi_{\rm c}}^{\sR\sB\hat{\sG}\sR''\sB''\hat{\sG}''}$ and $\ket{\tau_{\rm c}}^{\sR\sB\hat{\sG}\sR''\sB''\hat{\sG}''}$ does not serve as a proper decoder.
Below, we describe a procedure to resolve this issue.

The issue that the complementary channels from $\ket{\Psi_{\rm c}}^{\sR\sB\hat{\sG}\sR''\sB''\hat{\sG}''}$ and $\ket{\tau_{\rm c}}^{\sR\sB\hat{\sG}\sR''\sB''\hat{\sG}''}$ generally differ can be resolved if Bob applies the encoding unitary to the system $\sR''\hat{\sG}''$ of the copies of $\ket{\Psi_{\rm c}}^{\sR\sB\hat{\sG}\sR''\sB''\hat{\sG}''}$.
From Eq.~\eqref{inteq:154}, we observe that an extra unitary $(U^{\sR''\hat{\sG}''})^\dag$ appears, which obstructs the direct application of the decoupling condition.
By additionally applying $U^{\sR''\hat{\sG}''}$ to the copies of $\ket{\Psi_{\rm c}}^{\sR\sB\hat{\sG}\sR''\sB''\hat{\sG}''}$ held by Bob, the extra unitary is canceled.
See also Fig.~\ref{fig:dilation states}.

\begin{figure}
    \centering
    \includegraphics[width=120mm]{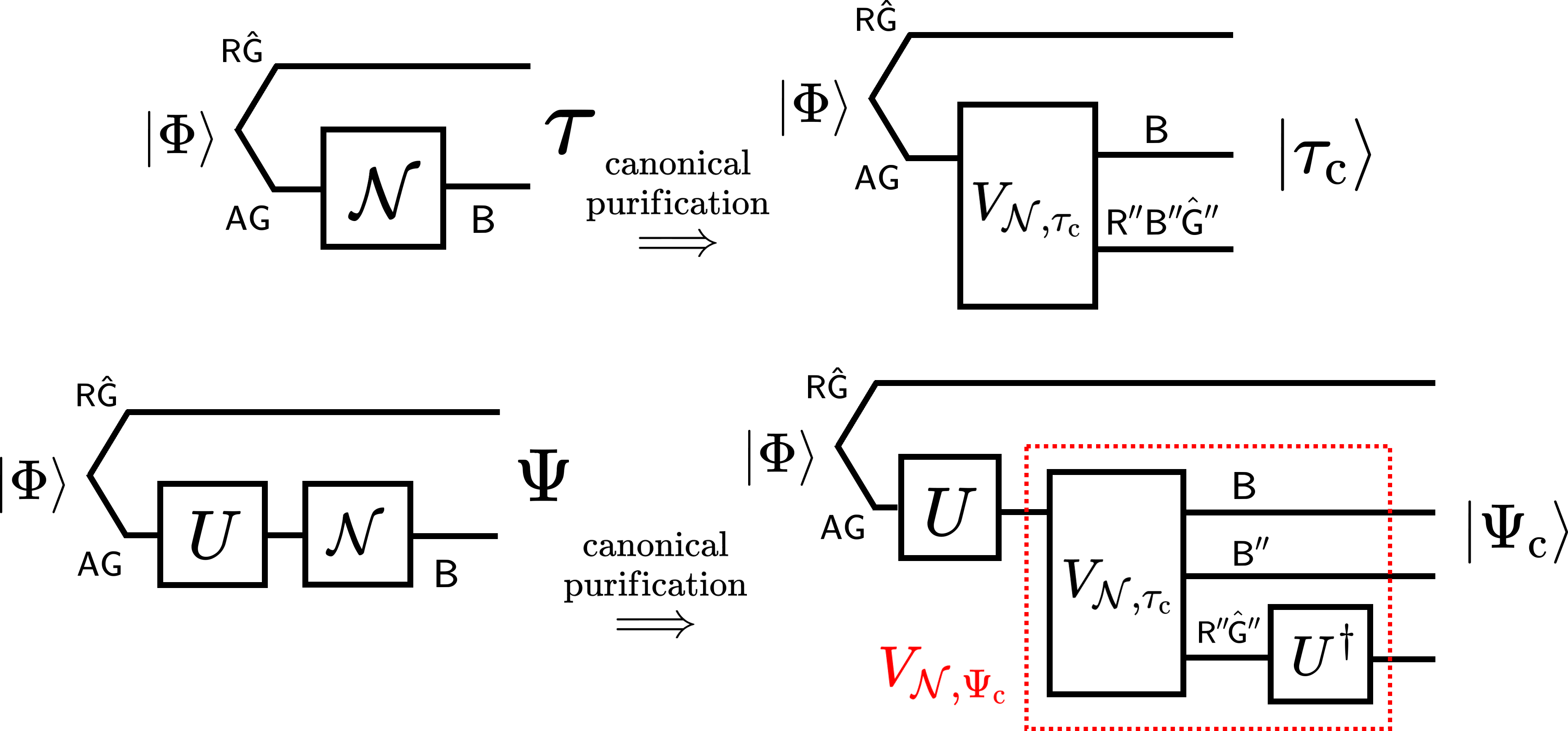}
    \caption{A diagram of the Stinespring dilation of $\cN^{\sA\sG\rarr\sB}$ induced by the canonical purification.
    We denote by $V_{\cN, \tau_{\rm c}}^{\sA\sG\rarr\sB\sR''\sB''\hat{\sG}''}$ the Stinespring isometry associated with $\ket{\tau_{\rm c}}^{\sR\sB\hat{\sG}\sR''\sB''\hat{\sG}''}$.
    Then, the Stinespring isometry corresponding to $\ket{\Psi_{\rm c}}^{\sR\sB\hat{\sG}\sR''\sB''\hat{\sG}''}$ is given by $(U^{\sR''\hat{\sG}''})^\dag V_{\cN, \tau_{\rm c}}^{\sA\sG\rarr\sB\sR''\sB''\hat{\sG}''}$, which is denoted as $V_{\cN, \Psi_{\rm c}}^{\sA\sG\rarr\sB\sR''\sB''\hat{\sG}''}$ and shown in a red dotted box.
    The complementary channels determined from these isometries are generally not identical, so Eq.~\eqref{inteq:136} cannot be applied directly.
    To resolve this issue, it suffices for Bob to apply $U^{\sR''\hat{\sG}''}$ to copies of $\ket{\Psi_{\rm c}}^{\sR\sB\hat{\sG}\sR''\sB''\hat{\sG}''}$ at hand in the algorithm, thereby canceling the extra $(U^{\sR''\hat{\sG}''})^\dag$.}
    \label{fig:dilation states}
\end{figure}

Let $\ket{\til{\Psi}_{\rm c}}^{\sR\sB\hat{\sG}\sR''\sB''\hat{\sG}''}$ be a state given by $\ket{\til{\Psi}_{\rm c}}^{\sR\sB\hat{\sG}\sR''\sB''\hat{\sG}''} = U^{\sR''\hat{\sG}''}\ket{\Psi_{\rm c}}^{\sR\sB\hat{\sG}\sR''\sB''\hat{\sG}''}$. One can check that the complementary channels associated with $\ket{\til{\Psi}_{\rm c}}^{\sR\sB\hat{\sG}\sR''\sB''\hat{\sG}''}$ and $\ket{\tau_{\rm c}}^{\sR\sB\hat{\sG}\sR''\sB''\hat{\sG}''}$ are identical: $\bar{\cN}_{\tau_{\rm c}}^{\sA\sG\rarr\sR''\sB''\hat{\sG}''} = \bar{\cN}_{\til{\Psi}_{\rm c}}^{\sA\sG\rarr\sR''\sB''\hat{\sG}''}$. We denote these channel by $\bar{\cN}_{\rm c}^{\sA\sG\rarr\sR''\sB''\hat{\sG}''}$ for simplicity.
When we construct the Uhlmann transformation $\til{V}^\sM$ from $\ket{\til{\Psi}_{\rm c}}^{\sR\sB\hat{\sG}\sR''\sB''\hat{\sG}''}$ and $\ket{\tau_{\rm c}}^{\sR\sB\hat{\sG}\sR''\sB''\hat{\sG}''}$, it holds that
\begin{align}
\label{inteq:68}
    \rF\big(\til{V}^\sM\ket{\til{\Psi}_{\rm c}}^{\sR\sB\hat{\sG}\sR''\sB''\hat{\sG}''}\ket{0}^{\hat{\sR}\hat{\sA}}, \ket{\Phi}^{\sR\hat{\sA}}\ket{\tau_{\rm c}}^{\hat{\sR}\sB\hat{\sG}\sR''\sB''\hat{\sG}''}\big)
    &= \rF\big(\bar{\cN}_{\rm c}^{\sA\sG\rarr\sR''\sB''\hat{\sG}''}\circ\cU^{\sA\sG}(\Phi^{\sR\sA} \otimes \pi^\sG), \pi^\sR\otimes \tau_{\rm c}^{\sR''\sB''\hat{\sG}''}\big) \\
    &\geq 1 -\epsilon,
\end{align}
where $\tau_{\rm c}^{\sR''\sB''\hat{\sG}''} = \bar{\cN}_{\rm c}^{\sA\sG\rarr\sR''\sB''\hat{\sG}''}(\pi^{\sA\sG})$.
The inequality is due to the decoupling condition Eq.~\eqref{inteq:136}, where $\sE = \sR''\sB''\hat{\sG}''$.
While Bob cannot act on $\sR''\hat{\sG}''$ of the \emph{actual} environment, rather than on its copies, this does not cause any issue because the environment can be arbitrarily chosen.
Note that Eq.~\eqref{inteq:136} does not depend on the particular choice of the complementary channel of $\cN^{\sA\sG\rarr\sB}$, as long as the complementary channels appearing in the first and second arguments of the fidelity on the left-hand side are identical.

Based on the above discussion, to implement the Uhlmann decoder in this model, we introduce the following two simple modifications to
Algorithm~\ref{alg:Uhl mixed sample}, $\mathtt{UhlmannMixedSample}(\Psi^{\sR'\sB'\hat{\sG}'}, \tau^{\sR'\sB'\hat{\sG}'}; \delta, \tF)$.
The first modification is to apply an additional unitary $U^{\sR''\hat{\sG}''}$ after $\mathtt{CanonicalPurification}$ of $\Psi^{\sR'\sB'\hat{\sG}'}$.
The second modification is to perform the Uhlmann transformation on $\sM = \sB\hat{\sG}\hat{\sR}\hat{\sA}$ in the step of $\mathtt{UhlmannPurifiedSample}$, which approximately maps $\ket{\til{\Psi}_{\rm c}}^{\sR\sB\hat{\sG}\sR''\sB''\hat{\sG}''}\ket{0}^{\hat{\sR}\hat{\sA}}$ to $\ket{\Phi}^{\sR\hat{\sA}}\ket{\tau_{\rm c}}^{\hat{\sR}\sB\hat{\sG}\sR''\sB''\hat{\sG}''}$.
From Theorem~\ref{thm:Uhlmann mixed state sample}, we can implement a transformation $\cT^\sM$ that satisfies 
\begin{align}
    &\rF\big(\cT^\sM(\ketbra{\til{\Psi}_{\rm c}}{\til{\Psi}_{\rm c}}^{\sR\sB\hat{\sG}\sR''\sB''\hat{\sG}''}\otimes\ketbra{0}{0}^{\hat{\sR}\hat{\sA}}), \ket{\Phi}^{\sR\hat{\sA}}\ket{\tau_{\rm c}}^{\hat{\sR}\sB\hat{\sG}\sR''\sB''\hat{\sG}''}\big) \notag\\
    &\geq \rF\big(\bar{\cN}_{\rm c}^{\sA\sG\rarr\sR''\sB''\hat{\sG}''}\circ\cU^{\sA\sG}(\Phi^{\sR\sA} \otimes \pi^\sG), \pi^\sR\otimes \tau_{\rm c}^{\sR''\sB''\hat{\sG}''}\big) - \delta\\
    \label{inteq:161}
    &\geq 1 -\epsilon -\delta,
\end{align}
using multiple copies of $\Psi^{\sR'\sB'\hat{\sG}'}$ and $\tau^{\sR'\sB'\hat{\sG}'}$, where $\tau_{\rm c}^{\sR''\sB''\hat{\sG}''} = \bar{\cN}_{\rm c}^{\sA\sG\rarr\sR''\sB''\hat{\sG}''}(\pi^{\sA\sG})$ and we used the decoupling condition Eq.~\eqref{inteq:136}.
Hence, from Eq.~\eqref{inteq:161} and the monotonicity of the fidelity under the partial trace, we obtain that a quantum channel $\til{\cD}^{\sB\hat{\sG}\rarr\hat{\sA}}$ defined by $\til{\cD}^{\sB\hat{\sG} \rarr \hat{\sA}} = \tr_{\sB\hat{\sG}\hat{\sR}} \circ \cT^\sM \circ \cP_{\ket{0}}^{\bC\rarr\hat{\sR}\hat{\sA}}$ satisfies Eq.~\eqref{inteq:157}.

The number of uses of $\cN^{\sA'\sG'\rarr\sB'}$ for implementing $\til{\cD}^{\sB\hat{\sG}\rarr\hat{\sA}}$ is evaluated as
\begin{align}
\label{inteq:140}
    &\til{\cO}\Big(\f{d_\sA d_\sB d_\sG}{\delta^2 s_{\rm min}^3} \min\Big\{\f{1}{\lambda_{\rm min}^2(\Psi^{\sR\sB\hat{\sG}})} + \f{1}{\lambda_{\rm min}^2(\tau^{\sR\sB\hat{\sG}})}, \f{d_\sA^2 d_\sB^2 d_\sG^2}{\delta^4 s_{\rm min}^4}\Big\}\Big),
\end{align}
where $s_{\rm min}$ is the minimum non-zero singular value of $\sqrt{\til{\Psi}_{\rm c}^{\sR\sR''\sB''\hat{\sG}''}}\Big(\sqrt{\pi^\sR}\otimes\sqrt{\tau_{\rm c}^{\sR''\sB''\hat{\sG}''}}\Big)$.
Finally, applying Eq.~\eqref{inteq:138} to Eq.~\eqref{inteq:140}, we obtain the result of Eq.~\eqref{inteq:137}.
Since this algorithm runs sequentially in both Model~1 and Model~2, $\cO\big(\log(d_\sA d_\sB d_\sG)\big)$ qubits suffice at any one time.

\end{proof}


\subsection{One-shot quantum state merging}
\label{sec:entdis and uhl}

As another application of the decoupling and the Uhlmann transformation, we consider quantum state merging, which is often regarded as a generalization of entanglement distillation.
In \Cref{sec:setting of state merging}, we introduce the setting, and in \Cref{sec:implement uhl in state merging}, we discuss the implementation of the Uhlmann transformation for quantum state merging using our algorithm.

\subsubsection{Setting}
\label{sec:setting of state merging}

Suppose that a state $\ket{\omega}^{\sR\sA\sB}$ is shared between Alice, Bob, and the reference $\sR$, where $\sA$ and $\sB$ are held by Alice and Bob, respectively.
Their goal is to transfer Alice's system $\sA$ of $\ket{\omega}^{\sR\sA\sB}$ to Bob using only local operations and noiseless classical communication, while simultaneously distilling as much entanglement as possible between Alice and Bob.
That is, the task is to implement the following transformation, with the size of systems $\sS$ and $\hat{\sS}$ taken as large as possible:
\begin{align}
    \cB^{\sB\sX\rarr\hat{\sA}\sB\hat{\sS}}\circ\cA^{\sA\rarr\sS\sX}(\omega^{\sR\sA\sB}) \approx \omega^{\sR\hat{\sA}\sB} \otimes \Phi^{\sS\hat{\sS}},    
\end{align}
where $\cA^{\sA\rarr\sS\sX}$ and $\cB^{\sB\sX\rarr\hat{\sA}\sB\hat{\sS}}$ are performed by Alice and Bob, respectively, and $X$ is a classical register transmitted from Alice to Bob.
In the final state, $\sS$ and $\hat{\sA}\sB\hat{\sS}$ are held by Alice and Bob, respectively.

We here follow the protocol in Refs.~\cite{horodecki2007quantum, berta2008singleshot, dupuis2014one, khatri2024principlemodern}, which we describe below.
See also \Cref{fig:diagram of ent distill}.
Alice applies $\cA^{\sA\rarr\sS\sX} = \cE^{\sA \rarr \sS\sX} \circ \cU^\sA$, where $U^\sA$ is a Haar random unitary, and 
\begin{align}
\label{eq:distill E channel}
    \cE^{\sA\rarr\sS\sX}(\cdot) = \sum_x U_x^{\sA\rarr\sS}\Pi_x^\sA(\cdot)\Pi_x^\sA(U_x^{\sA\rarr\sS})^\dag \otimes \ketbra{x}{x}^\sX.
\end{align}
The projection $\Pi_x^\sA$ is onto a $d_\sS$-dimensional subspace $\cH_x^\sA$ of $\cH^\sA$ and $U_x^{\sA\rarr \sS}$ is a partial isometry which maps the subspace $\cH_x^\sA$ to $\cH^\sS$ spanned by a fixed orthonormal basis. For simplicity, we assume $d_\sX = d_\sA/d_\sS$.

\begin{figure}
    \centering
    \includegraphics[width=100mm]{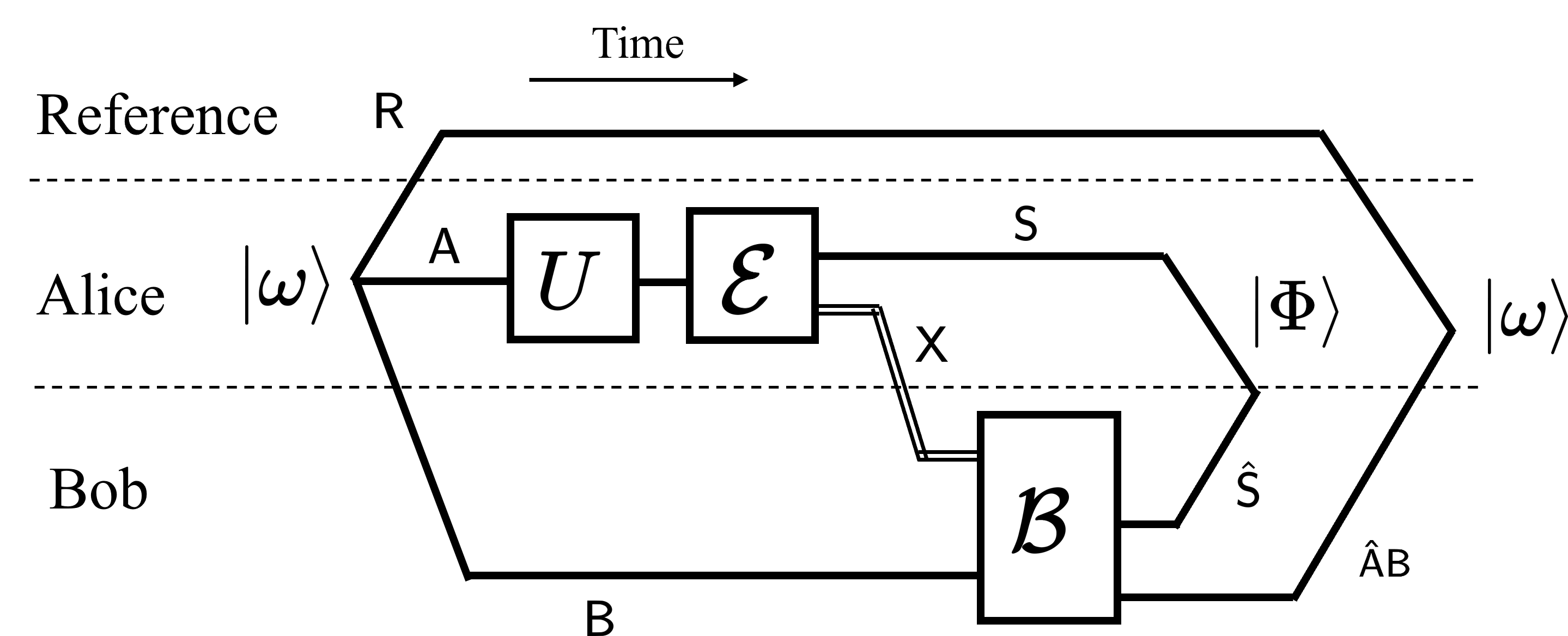}
    \caption{A diagram of quantum state merging.
    The goal is to transfer Alice's subsystem, which is part of a joint quantum state $\ket{\omega}^{\sR\sA\sB}$, to Bob, while simultaneously distilling as much entanglement as possible between $\sS$ and $\hat{\sS}$, using only local operations and classical communication.
    The double line connecting Alice and Bob represents a noiseless classical communication channel.
    }
    \label{fig:diagram of ent distill}
\end{figure}

After receiving the register $\sX$ by classical communication, which stores the measurement outcome $x$, Bob performs the operation $\cB^{\sB\sX \to \hat{\sA}\sB\hat{\sS}}$ specified as follows.
Let $\sqrt{p_x}\ket{\psi_x}^{\sR\sS\sB} = U_x^{\sA \to \sS} \Pi_x^\sA U^\sA\ket{\omega}^{\sR\sA\sB}$, where $p_x$ is the probability that the outcome $x$ occurs, and $\ket{\psi_x}^{\sR\sS\sB}$ is the resulting state.
Then, the operation $\cB^{\sB\sX \to \hat{\sA}\sB\hat{\sS}}$ is constructed as $\cB^{\sB\sX\rarr\hat{\sA}\sB\hat{\sS}} = \tr_\sX \circ \cW^{\hat{\sA}\sB\hat{\sS}\sX} \circ \cP_{\ket{0}}^{\bC\rarr\hat{\sA}\hat{\sS}}$, where $W^{\hat{\sA}\sB\hat{\sS}\sX}$ is a unitary such that $W^{\hat{\sA}\sB\hat{\sS}\sX} = \sum_x V_x^{\hat{\sA}\sB\hat{\sS}} \otimes \ketbra{x}{x}^\sX$, and $ V_x^{\hat{\sA}\sB\hat{\sS}}$ is the Uhlmann unitary satisfying 
\begin{align}
\label{inteq:111}
    \rF\big(V_x^{\hat{\sA}\sB\hat{\sS}}\ket{\psi_x}^{\sR\sS\sB}\ket{0}^{\hat{\sA}\hat{\sS}}, \ket{\omega}^{\sR\hat{\sA}\sB}\ket{\Phi}^{\sS\hat{\sS}}\big) = \rF(\psi_x^{\sR\sS}, \omega^\sR\otimes\pi^\sS).
\end{align}

We remark on the performance of the above protocol.
Let $\bar{\cE}^{\sA\rarr\sS\sE}$ be a complementary channel of $\cE^{\sA\rarr\sX} = \tr_\sS\circ\cE^{\sA\rarr\sS\sX}$, where $\sE$ is an environment of the classical channel from Alice to Bob.
We assume $d_\sE = d_\sX$ without loss of generality.
The decoupling theorem~\cite{dupuis2014one} ensures that the Haar random unitary $U^\sA$ satisfies
\begin{align}
\label{eq:OneShotDecouplingExpectDistill}
    \rF\big(\bar{\cE}^{\sA\rarr\sS\sE}\circ\cU^\sA(\omega^{\sR\sA}), \omega^\sR \otimes \pi^\sS \otimes \pi^\sE \big) \geq 1 - \epsilon,
\end{align}
with high probability.
Here, $\epsilon$ can be taken sufficiently small, depending on the properties of the initial state $\ket{\omega}^{\sR\sA\sB}$ and the channel $\cE^{\sA\rarr\sS\sX}$, such as the dimension of the system $\sS$ and the amount of initial entanglement between the systems $\sA$ and $\sB$~\cite{dupuis2014one}.
In Eq.~\eqref{eq:OneShotDecouplingExpectDistill}, the reduced Choi--Jamio\l kowski state on $\sS\sE$ is given by $\bar{\cE}^{\sA\rarr\sS\sE}(\pi^\sA) = \pi^\sS \otimes \pi^\sE$.
This is because, according to the outcome $x$, each $U_x^{\sA\rarr\sS}$ maps from $\cH_x^\sA$ to $\cH^\sS$, which is spanned by a fixed basis, and thus, $U_x^{\sA\rarr \sS}\Pi_x^\sA(U_x^{\sA\rarr\sS})^\dag$ is independent of $x$ and equals $\bI^\sS$.
Then, under the decoupling condition Eq.~\eqref{eq:OneShotDecouplingExpectDistill}, it is known that the operation $\cB^{\sB\sX\rarr\hat{\sA}\sB\hat{\sS}}$ achieves the task of quantum state merging with fidelity at least $1-\epsilon$. For details, see, e.g., Ref.~\cite{khatri2024principlemodern}.


\subsubsection{Algorithmic implementation of Bob's operation based on the Uhlmann transformation}
\label{sec:implement uhl in state merging}

We consider explicitly implementing the unitary $W^{\hat{\sA}\sB\hat{\sS}\sX} = \sum_x V_x^{\hat{\sA}\sB\hat{\sS}} \otimes \ketbra{x}{x}^\sX$, using our Uhlmann transformation algorithm.
In this case, the Uhlmann unitary $V_x^{\hat{\sA}\sB\hat{\sS}}$ is conditioned by $x$ and should be realized for all $x$.
For algorithmic construction, we assume that Bob has full knowledge of Alice's operation $\cA^{\sA\rarr\sS\sX} = \cE^{\sA\rarr\sS\sX}\circ\cU^\sA$ and is capable of implementing it. 
While $U^\sA$ is chosen uniformly at random by Alice, Bob can identify the unitary Alice chose, for instance, using shared randomness.

We address the following model: 
\begin{itemize}
    \item A model in which multiple independent and identical copies of the initial state, $\ket{\omega}^{\sR'\sA'\sB'}$, are available.
\end{itemize}
We then evaluate the number of samples of $\ket{\omega}^{\sR'\sA'\sB'}$ required for Bob's operation.
We should distinguish the systems used by Bob, e.g., $\sR'$ and $\sA'$, from those that already exist as the reference $\sR$ and Alice's systems $\sA$.
The state $\ket{\omega}^{\sR'\sA'\sB'}$ is sampled and used \emph{locally} by Bob.
It is noteworthy that, in this model, Bob does not need to know a complete description of the initial state $\ket{\omega}^{\sR\sA\sB}$; rather, Bob is provided with copies of the state, whose description remains unknown to Bob.

We define a state $\ket{\Psi}^{\sR\sS\sB\sX\sE} = V_\cA^{\sA\rarr\sS\sX\sE}\ket{\omega}^{\sR\sA\sB}$, where $V_\cA^{\sA\rarr\sS\sX\sE}$ is a Stinespring isometry of $\cA^{\sA\rarr\sS\sX} = \cE^{\sA\rarr\sS\sX}\circ \cU^\sA$. 
We introduce $m_{\rm min}$ and $r_{\rm max}$ as  
\begin{align}
    \label{inteq:142}
    m_{\rm min} = \min_x\Big[\f{p_x}{d_\sX}s_{\rm min}\Big(\sqrt{\psi_x^{\sR\sS}}\big(\sqrt{\omega^\sR}\otimes\sqrt{\pi^\sS}\big)\Big)\Big], \ \ \ \text{and} \ \ \ r_{\rm max} = \max_x\Big[r\Big(\sqrt{\psi_x^{\sR\sS}}\big(\sqrt{\omega^\sR}\otimes\sqrt{\pi^\sS}\big)\Big)\Big],
\end{align}
where $s_{\rm min}(\cdot)$ and $r(\cdot)$ are the minimum non-zero singular value and the rank of an input matrix, respectively.
The maximization and minimization are taken over all outcomes $x$.

Our statement is as follows. 

\begin{theorem}[Uhlmann transformation algorithm in quantum state merging]
\label{thm:Uhlmann state merging}
    In the setting of quantum state merging, suppose that for some $\epsilon \in [0, 1]$,  
    \begin{align}
    \label{inteq:141}
            \rF\big(\Psi^{\sR\sS\sE}, \omega^\sR \otimes \pi^\sS \otimes \pi^\sE \big) \geq 1 - \epsilon.
    \end{align}
    Then, for $\delta \in (0, 1)$, there exists a quantum algorithm that implements a quantum channel $\til{\cB}^{\sB\sX\rarr\hat{\sA}\sB\hat{\sS}}$, which satisfies
    \begin{align}
        \rF\big(\til{\cB}^{\sB\sX\rarr\hat{\sA}\sB\hat{\sS}}(\Psi^{\sR\sS\sB\sX}), \ket{\omega}^{\sR\hat{\sA}\sB}\ket{\Phi}^{\sS\hat{\sS}}\big) \geq 1 - \epsilon - \delta,
    \end{align}
    using $\zeta$ samples of $\ket{\omega}^{\sR'\sA'\sB'}$, where $\zeta = \cO\Big(\f{1}{\delta^2}\min\Big\{\f{1}{m_{\rm min}^2}, \f{r_{\rm max}^2}{\delta^4}\Big\}\Big(\log{\Big(\f{1}{\delta}\Big)}\Big)^2\Big)$, and $m_{\rm min}$ and $r_{\rm max}$ are given by Eqs.~\eqref{inteq:142}. The quantum circuit for the algorithm consists of $\cO\big(\zeta\log{(d_\sR d_\sA d_\sB)}\big)$ one- and two-qubit gates, and $\cO\big(\log{(d_\sR d_\sA d_\sB)}\big)$ qubits suffice at any one time.
\end{theorem}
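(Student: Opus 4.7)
The strategy is to approximately implement, for each classical outcome $x$, the Uhlmann unitary $V_x^{\hat{\sA}\sB\hat{\sS}}$ by invoking $\mathtt{UhlmannPurifiedSample}$ (Theorem~\ref{thm:algorithm Uhlmann}), and then to assemble the branch-wise unitaries into a controlled operation $\til{W}^{\hat{\sA}\sB\hat{\sS}\sX}=\sum_x \til{V}_x^{\hat{\sA}\sB\hat{\sS}}\otimes\ketbra{x}{x}^\sX$, setting $\til{\cB}^{\sB\sX\rarr\hat{\sA}\sB\hat{\sS}} = \tr_\sX\circ\til{\cW}\circ\cP_{\ket{0}}^{\bC\rarr\hat{\sA}\hat{\sS}}$. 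For each fixed $x$, $V_x$ maps the purification $\ket{\psi_x}^{\sR\sS\sB}\ket{0}^{\hat{\sA}\hat{\sS}}$ of $\psi_x^{\sR\sS}$ to $\ket{\omega}^{\sR\hat{\sA}\sB}\ket{\Phi}^{\sS\hat{\sS}}$, so the per-branch parameters are precisely the minimum non-zero singular value and rank of $\sqrt{\psi_x^{\sR\sS}}\sqrt{\omega^\sR\otimes\pi^\sS}$; taking the worst case over $x$ yields the quantities $m_{\mathrm{min}}$ and $r_{\mathrm{max}}$ in the statement.

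Bob prepares the two purified samples required per branch from his copies of $\ket{\omega}^{\sR'\sA'\sB'}$: the target side $\ket{\omega}^{\sR'\hat{\sA}'\sB'}\ket{\Phi}^{\sS'\hat{\sS}'}$ uses one such copy (with $\sA'\to\hat{\sA}'$ relabeling) together with a locally preparable maximally entangled state; the source side $\ket{\psi_x}^{\sR'\sS'\sB'}$ is obtained by applying Alice's known operation $\cE^{\sA'\rarr\sS'\sX'}\circ\cU^{\sA'}$ coherently to a copy of $\ket{\omega}^{\sR'\sA'\sB'}$ and then post-selecting the measurement outcome $x$ in $\sX'$, which succeeds with probability $p_x$ and hence consumes $\cO(1/p_x)$ copies of $\ket{\omega}^{\sR'\sA'\sB'}$ per accepted sample. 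Multiplying the per-branch sample complexity of $\mathtt{UhlmannPurifiedSample}$ by this overhead, and uniformizing over the up to $d_\sX$ possible outcomes, yields the stated bound on $\zeta$; the one- and two-qubit gate count and the auxiliary qubit count inherit from the corresponding counts in $\mathtt{UhlmannPurifiedSample}$ together with the linear overhead of implementing $\cA^{\sA'\rarr\sS'\sX'}$ and the classically controlled assembly.

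For the fidelity estimate I would combine two inputs. By Theorem~\ref{thm:algorithm Uhlmann}, the branch-wise approximation satisfies
\begin{align}
\rF\big(\til{\cV}_x(\psi_x^{\sR\sS\sB}\otimes\ketbra{0}{0}^{\hat{\sA}\hat{\sS}}),\,\ket{\omega}^{\sR\hat{\sA}\sB}\ket{\Phi}^{\sS\hat{\sS}}\big)\ \geq\ \rF(\psi_x^{\sR\sS},\omega^\sR\otimes\pi^\sS) - \delta
\end{align}
for every $x$. On the other hand, because the complementary-channel marginal decomposes as $\Psi^{\sR\sS\sE}=\sum_x p_x\,\psi_x^{\sR\sS}\otimes\ketbra{e_x}{e_x}^\sE$ with $\sE$ a classical record of $x$, the decoupling hypothesis~\eqref{inteq:141} combined with a Cauchy--Schwarz step on the block-diagonal fidelity yields $\sum_x p_x\,\rF(\psi_x^{\sR\sS},\omega^\sR\otimes\pi^\sS)\geq 1-\epsilon$. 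Averaging the per-branch bound against the weights $\{p_x\}$ and using that $\til{\cB}^{\sB\sX\rarr\hat{\sA}\sB\hat{\sS}}$ acts blockwise on the classical register produces the overall bound $1-\epsilon-\delta$.

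The main technical obstacle will be the worst-case aggregation over $x$. Because outcomes with small $p_x$ inflate the per-branch cost by $1/p_x$, the parameter $m_{\mathrm{min}}$ is tuned to absorb $p_x$, the ``pigeonhole'' factor $d_\sX$, and the branch singular value $\til{s}_{\mathrm{min},x}$ into a single effective minimum; verifying that this substitution is tight, and that the analogous rank-regime substitution produces $r_{\mathrm{max}}$ with the stated $\delta$-dependence, requires a careful uniformization of the $\mathtt{UhlmannPurifiedSample}$ guarantee across branches, together with a check that the classically controlled assembly introduces no cross-branch interference in either the fidelity accounting or the sample-complexity bookkeeping.
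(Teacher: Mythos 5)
Your route differs from the paper's in a structural way: you run $\mathtt{UhlmannPurifiedSample}$ once per branch $x$ on the pure pair $(\ket{\psi_x},\ket{\omega}\ket{\Phi})$, preparing $\ket{\psi_x}$ by post-selecting Alice's measurement outcome, and then assemble the branches under classical control. The paper instead makes a \emph{single} call to $\mathtt{UhlmannPurifiedSample}$ with the block-diagonal mixed inputs $\Psi^{\sL'\sM'\sX'}=\sum_x p_x\ketbra{\rho_x}{\rho_x}\otimes\ketbra{x}{x}$ and $\sigma^{\sL'\sM'}\otimes\pi^{\sX'}$, each prepared from one copy of $\ket{\omega}^{\sR'\sA'\sB'}$ with no post-selection; the controlled structure $\sum_x\til V_x\otimes\ketbra{x}{x}$ then emerges automatically from the block-encoding, and the relevant operators are $M_x^\sM=\frac{p_x}{d_\sX}\tr_\sL[\ket{\sigma}\bra{\rho_x}]$, which is why $m_{\rm min}$ carries the factor $p_x/d_\sX$. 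Your fidelity accounting, on the other hand, is correct and in fact cleaner than the paper's: linearity of fidelity against the pure target reduces everything to the weighted average $\sum_x p_x\rF(\psi_x^{\sR\sS},\omega^\sR\otimes\pi^\sS)$, and your Cauchy--Schwarz step on the block-diagonal square-root fidelity, $1-\epsilon\le\big(\sum_x\sqrt{p_x/d_\sX}\sqrt{\rF_x}\big)^2\le\sum_x p_x\rF_x$, is valid (the paper reaches the same conclusion via a reference state, H\"older, and Powers--St\o rmer).

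The genuine gap is in the sample-complexity bookkeeping, precisely where you flag it. In the singular-value regime your accounting can be closed: the per-branch cost $\cO\big(\tfrac{1}{\delta\,\tilde s_{\min,x}^2}\big)$ times the $1/p_x$ post-selection overhead is dominated by $\tfrac{d_\sX^2}{p_x^2\tilde s_{\min,x}^2}=\tfrac{1}{m_{\rm min}^2}$ since $p_x\le 1\le d_\sX$, so the absorption you hope for does occur. But in the rank regime it does not: the rank of $\sqrt{\psi_x^{\sR\sS}}\sqrt{\omega^\sR\otimes\pi^\sS}$ is invariant under the scaling by $p_x/d_\sX$, so $r_{\rm max}$ contains no factor that can absorb your $1/p_x$ overhead. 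Your worst-branch cost becomes $\cO\big(\tfrac{r_{\rm max}^2}{p_{\min}\,\delta^3}\big)$ copies of $\ket{\omega}^{\sR'\sA'\sB'}$, which exceeds the claimed $\cO\big(\tfrac{r_{\rm max}^2}{\delta^6}\big)$ whenever $\min_x p_x\ll\delta^3$ --- and small $p_x$ is exactly the situation in which the rank branch of the $\min$ is supposed to help. A second, smaller issue is that post-selected preparation of $\ket{\psi_x}$ is probabilistic: to turn your procedure into a bona fide channel with the stated deterministic fidelity guarantee you must cap the number of attempts and fold the failure probability into the error budget, which you do not do. Both problems disappear in the paper's single block-diagonal invocation, which is why that formulation is the one that actually delivers the stated $\zeta$.
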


This theorem states that when $\Psi^{\sR\sS\sX}$ is sufficiently decoupled as $\Psi^{\sR\sS\sX} \approx \omega^\sR \otimes \pi^\sS \otimes \pi^\sX$ by Alice's operation, then by taking small $\delta$ and applying $\til{\cB}^{\sB\sX\rarr\hat{\sA}\sB\hat{\sS}}$---which is explicitly constructed using our algorithm---Bob can accomplish the task of quantum state merging.

\begin{proof}[Proof of Theorem~\ref{thm:Uhlmann state merging}]

Let $\sL = \sR\sS$ and $\sM = \hat{\sA}\sB\hat{\sS}$. 
We define $\ket{\sigma}^{\sL\sM}$ and $\Psi^{\sL\sM\sX}$ by $\ket{\sigma}^{\sL\sM} = \ket{\omega}^{\sR\hat{\sA}\sB}\ket{\Phi}^{\sS\hat{\sS}}$ and 
\begin{align}
    \Psi^{\sL\sM\sX} 
    &= \cP_{\ket{0}}^{\bC\rarr\hat{\sA}\hat{\sS}} \circ \cE^{\sA\rarr\sS\sX}\circ\cU^\sA(\ketbra{\omega}{\omega}^{\sR\sA\sB}) \\
    &= \sum_x  p_x\ketbra{\psi_x}{\psi_x}^{\sR\sS\sB} \otimes \ketbra{x}{x}^{\sX} \otimes \ketbra{0}{0}^{\hat{\sA}\hat{\sS}} \\
    &= \sum_x  p_x\ketbra{\rho_x}{\rho_x}^{\sL\sM} \otimes \ketbra{x}{x}^{\sX},
\end{align}
where $\ket{\rho_x}^{\sL\sM} = \ket{\psi_x}^{\sR\sS\sB}\ket{0}^{\hat{\sA}\hat{\sS}}$.
We then consider 
\begin{align}
    \cJ^{\sM\sH\hat{\sL}\hat{\sM}\sX} 
    = \mathtt{UhlmannPurifiedSample}(\Psi^{\sL'\sM'\sX'}, \sigma^{\sL'\sM'}\otimes\pi^{\sX'}; \delta_1, \tF),
\end{align}
where $\hat{\sL}$ and $\hat{\sM}$ are auxiliary systems of the same number of qubits as $\sL$ and $\sM$, respectively. 
Here, the inputs to $\mathtt{UhlmannPurifiedSample}$ are mixed states. This aims to transform $\ket{\rho_x}^{\sL\sM}$ to $\ket{\sigma}^{\sL\sM}$ for all $x$ by acting on $\sM\sX$.
The procedure and performance analysis of this channel is a direct extension of the one in \Cref{alg:Uhl purif sample} and Theorem~\ref{thm:algorithm Uhlmann}.

The quantum channel $\cJ^{\sM\sH\hat{\sL}\hat{\sM}\sX}$ satisfies 
\begin{align}
\label{inteq:133}
    \f{1}{2}\big\|\cJ^{\sM\sH\hat{\sL}\hat{\sM}\sX} - \cU_1^{\sM\sH\hat{\sL}\hat{\sM}\sX}\big\|_\diamond \leq \delta_1/2,
\end{align}
where $U_1^{\sM\sH\hat{\sL}\hat{\sM}\sX}$ is a block-encoding unitary of $\sum_x \ketbra{\rho_x}{\sigma}^{\hat{\sL}\hat{\sM}} \otimes \til{V}_x^\sM \otimes \ketbra{x}{x}^\sX$, $\til{V}_x^\sM = P_\sgn^{(\rm SV)}\big(\sin^{(\rm SV)}(M_x^\sM)\big)$, and $M_x^\sM = \f{p_x}{d_\sX} \tr_\sL[\ket{\sigma}\bra{\rho_x}^{\sL\sM}]$. (See also Eq.~\eqref{inteq:16} with rescaling by $\delta_2 \gets \delta_1/(2u)$).

Let $m_{\rm min}$ and $r_{\rm max}$ be defined by 
\begin{align}
\label{inteq:118}
    m_{\rm min}
    &= \min_x \big[s_{\rm min}(M_x^\sM)\big] \\
    &= \min_x \Big[\sqrt{\f{p_x}{d_\sX}}s_{\rm min}\Big(\sqrt{\Psi^{\sR\sS\sX}}\big(\sqrt{\omega^\sR}\otimes\sqrt{\pi^\sS}\otimes\pi^\sX\big)\Big)\Big] \\
    &= \min_x\Big[\f{p_x}{d_\sX}s_{\rm min}\Big(\sqrt{\psi_x^{\sR\sS}}\big(\sqrt{\omega^\sR}\otimes\sqrt{\pi^\sS}\big)\Big)\Big],
\end{align}
where $\Psi^{\sR\sS\sX} = \cE^{\sA\rarr\sS\sX}\circ\cU^\sA(\omega^{\sR\sA})$, and
\begin{align}
\label{inteq:119}
    r_{\rm max} 
    &= \max_x\big[r(M_x^\sM)\big] \\
    &= \max_x\Big[r\Big(\sqrt{\psi_x^{\sR\sS}}\big(\sqrt{\omega^\sR}\otimes\sqrt{\pi^\sS}\big)\Big)\Big],
\end{align}
where $s_{\rm min}(\cdot)$ and $r(\cdot)$ denote the minimum non-zero singular value and the rank of an input matrix, respectively.
The maximization and minimization over $x$ are crucial to ensure that the algorithm covers all $x$.

From the discussion in \Cref{sec:uhlfidpurifsamp} (especially, from Eq.~\eqref{inteq:120} to Eq.~\eqref{inteq:51} with rescaling by $\delta_1 \gets \delta_1/8$), when we set $\beta = \cO(\max\{m_{\rm min}, \delta_1/r_{\rm max}\})$, it holds that, for all $x$, 
\begin{align}
\label{inteq:113}
    \big|\sqrt{\rF}(\til{V}_x^\sM\ket{\rho_x}^{\sL\sM}, \ket{\sigma}^{\sL\sM}) - \sqrt{\rF}(V_x^\sM\ket{\rho_x}^{\sL\sM}, \ket{\sigma}^{\sL\sM})\big| \leq \delta_1/4.
\end{align}
Note that the Uhlmann partial isometry $V_x^\sM$ that satisfies Eq.~\eqref{inteq:111} is given by $V_x^\sM = \sgn^{(\rm SV)}\big(\sin^{(\rm SV)}(M_x^\sM)\big)$.

Let Bob's operation $\til{\cB}^{\sB\sX\rarr\hat{\sA}\sB\hat{\sS}}$ be given by  
\begin{align}
    \til{\cB}^{\sB\sX\rarr\hat{\sA}\sB\hat{\sS}} = \tr_{\sH\hat{\sL}\hat{\sM}\sX}\circ\cJ^{\sM\sH\hat{\sL}\hat{\sM}\sX}\circ\cP_{\ket{\sigma}}^{\bC\rarr\hat{\sL}\hat{\sM}}\circ\cP_{\ket{0}}^{\bC\rarr\sH\hat{\sA}\hat{\sS}}.
\end{align}
We then evaluate the fidelity between the state obtained by applying Alice's and Bob's operations to the initial state $\ket{\omega}^{\sR\sA\sB}$ and the target state $\ket{\sigma}^{\sL\sM} = \ket{\omega}^{\sR\hat{\sA}\sB}\ket{\Phi}^{\sS\hat{\sS}}$.
To this end, we define the following states:
\begin{align}
    &\xi^{\sL\sM\sH\hat{\sL}\hat{\sM}\sX} = \cJ^{\sM\sH\hat{\sL}\hat{\sM}\sX}\circ\cP_{\ket{\sigma}}^{\bC\rarr\hat{\sL}\hat{\sM}}\circ\cP_{\ket{0}}^{\bC\rarr\sH} (\Psi^{\sL\sM\sX}), \\
    &\xi_1^{\sL\sM\sH\hat{\sL}\hat{\sM}\sX} = \cU_1^{\sM\sH\hat{\sL}\hat{\sM}\sX}\circ\cP_{\ket{\sigma}}^{\bC\rarr\hat{\sL}\hat{\sM}}\circ\cP_{\ket{0}}^{\bC\rarr\sH} (\Psi^{\sL\sM\sX}), \\
    &\xi_{\rm ideal}^{\sL\sM\sH\hat{\sL}\hat{\sM}\sX} = \cU_{\rm ideal}^{\sM\sH\hat{\sL}\hat{\sM}\sX}\circ\cP_{\ket{\sigma}}^{\bC\rarr\hat{\sL}\hat{\sM}}\circ\cP_{\ket{0}}^{\bC\rarr\sH} (\Psi^{\sL\sM\sX}),
\end{align}
where $U_{\rm ideal}^{\sM\sH\hat{\sL}\hat{\sM}\sX}$ is the exact block-encoding unitary of $\sum_x \ketbra{\rho_x}{\sigma}^{\hat{\sL}\hat{\sM}}\otimes V_x^\sM \otimes\ketbra{x}{x}^\sX$.
We further define a reference state by
\begin{align}
    \phi_{\rm ref}^{\sL\sM\sH\hat{\sL}\hat{\sM}\sX} = \sum_x \f{1}{d_\sX}\ketbra{\rho_x}{\rho_x}^{\hat{\sL}\hat{\sM}} \otimes \ketbra{\sigma}{\sigma}^{\sL\sM} \otimes \ketbra{x}{x}^\sX \otimes \ketbra{0}{0}^\sH.
\end{align}

Let $\sN = \sL\sM\sH\hat{\sL}\hat{\sM}\sX$.
We first see that
\begin{align}
    \big|\sqrt{\rF}\big(\xi_1^\sN, \phi_{\rm ref}^\sN\big) - \sqrt{\rF}\big(\xi_{\rm ideal}^\sN, \phi_{\rm ref}^\sN\big)\big| 
    &= \Big|\sum_x\sqrt{\f{p_x}{d_\sX}}\sqrt{\rF}(\til{V}_x^\sM\ket{\rho_x}^{\sL\sM}, \ket{\sigma}^{\sL\sM}) - \sum_x\sqrt{\f{p_x}{d_\sX}}\sqrt{\rF}(V_x^\sM\ket{\rho_x}^{\sL\sM}, \ket{\sigma}^{\sL\sM})\Big| \\
    &\leq \sum_x\sqrt{\f{p_x}{d_\sX}}\big|\sqrt{\rF}(\til{V}_x^\sM\ket{\rho_x}^{\sL\sM}, \ket{\sigma}^{\sL\sM}) - \sqrt{\rF}(V_x^\sM\ket{\rho_x}^{\sL\sM}, \ket{\sigma}^{\sL\sM})\big| \\
    &\leq \f{\delta_1}{4}\sum_x\sqrt{\f{p_x}{d_\sX}} \\
    \label{inteq:114}
    &\leq \f{\delta_1}{4},
\end{align}
where we used Eq.~\eqref{inteq:113} in the second inequality and $\sum_x\sqrt{\f{p_x}{d_\sX}} \leq \sqrt{\sum_x p_x} = 1$ in the lase inequality.

Next, we observe that
\begin{align}
    \big|\sqrt{\rF}\big(\xi^\sN, \phi_{\rm ref}^\sN\big) - \sqrt{\rF}\big(\xi_1^\sN, \phi_{\rm ref}^\sN\big)\big|
    &= \Big| \Big\|\sqrt{\xi^\sN}\sqrt{\phi_{\rm ref}^\sN}\Big\|_1 -  \Big\|\sqrt{\xi_1^\sN}\sqrt{\phi_{\rm ref}^\sN}\Big\|_1\Big| \\
    &\leq \Big\| \sqrt{\xi^\sN}\sqrt{\phi_{\rm ref}^\sN} - \sqrt{\xi_1^\sN}\sqrt{\phi_{\rm ref}^\sN} \Big\|_1 \\
    &\leq \Big\|\sqrt{\xi^\sN} - \sqrt{\xi_1^\sN}\Big\|_2 \Big\|\sqrt{\phi_{\rm ref}^\sN}\Big\|_2 \\
    &\leq \big\|\xi^\sN - \xi_1^\sN\big\|_1^{1/2} \\
    &\leq \big\|\cJ^\sN - \cU_1^\sN \big\|_\diamond^{1/2} \\
    \label{inteq:115}
    &\leq \sqrt{\delta_1},
\end{align}
where we used the H\"{o}lder's inequality in Eq.~~\eqref{eq:Holder ineq} in the second inequality, and Eq.~\eqref{inteq:133} in the last inequality. The third inequality follows from $\Big\|\sqrt{\phi_{\rm ref}^\sN}\Big\|_2 = 1$ and the Powers--St\o rmer inequality in Eq.~\eqref{eq:powers stormer ineq}.
From Eqs.~\eqref{inteq:114} and~\eqref{inteq:115}, we obtain that
\begin{align}
\label{inteq:116}
    \big|\sqrt{\rF}\big(\xi^\sN, \phi_{\rm ref}^\sN\big) - \sqrt{\rF}\big(\xi_{\rm ideal}^\sN, \phi_{\rm ref}^\sN\big)\big| \leq \delta_1/4 + \sqrt{\delta_1}.
\end{align}

Using the above results, we can bound the fidelity between the final state $\Psi_{\rm fin}^{\sR\hat{\sA}\sB\sS\hat{\sS}} = \til{\cB}^{\sB\sX\rarr\hat{\sA}\sB\hat{\sS}}\circ\cA^{\sA\rarr\sS\sX}(\omega^{\sR\sA\sB})$ and the target state $\ket{\omega}^{\sR\hat{\sA}\sB}\ket{\Phi}^{\sS\hat{\sS}}$ as
\begin{align}
    \sqrt{\rF}\big(\Psi_{\rm fin}^{\sR\hat{\sA}\sB\sS\hat{\sS}}, \ket{\omega}^{\sR\hat{\sA}\sB}\ket{\Phi}^{\sS\hat{\sS}}) 
    &\geq \sqrt{\rF}\big(\xi^\sN, \phi_{\rm ref}^{\sN}\big) \\
    &\geq \sqrt{\rF}\big(\xi_{\rm ideal}^\sN, \phi_{\rm ref}^\sN\big) - \delta_1/4 - \sqrt{\delta_1} \\
    &= \sum_x \sqrt{\f{p_x}{d_\sX}}\sqrt{\rF}\big(V_x^\sM\ket{\rho_x}^{\sL\sM}, \ket{\sigma}^{\sL\sM}\big) - \delta_1/4 - \sqrt{\delta_1} \\
    &= \sum_x \sqrt{\f{p_x}{d_\sX}}\sqrt{\rF}(\psi_x^{\sR\sS}, \omega^\sR \otimes \pi^\sS) - \delta_1/4 - \sqrt{\delta_1} \\
    &= \sqrt{\rF}(\Psi^{\sR\sS\sE}, \omega^\sR \otimes \pi^\sS \otimes \pi^\sE) - \delta_1/4 - \sqrt{\delta_1},
\end{align}
where we used the fact that $\tr_{\sH\hat{\sL}\hat{\sM}\sX}[\xi^\sN] = \Psi_{\rm fin}^{\sR\hat{\sA}\sB\sS\hat{\sS}}$ and $\tr_{\sH\hat{\sL}\hat{\sM}\sX}[\phi_{\rm ref}^\sN] = \sigma^{\sL\sM} = \omega^{\sR\hat{\sA}\sB}\otimes\Phi^{\sS\hat{\sS}}$ in the first inequality.
We further applied Eq.~\eqref{inteq:116} in the second inequality and Eq.~\eqref{inteq:111} in the second equation.
Thus, noting that fidelity is always less than or equal to one, we have
\begin{align}
    \rF\big(\Psi_{\rm fin}^{\sR\hat{\sA}\sB\sS\hat{\sS}}, \ket{\omega}^{\sR\hat{\sA}\sB} \ket{\Phi}^{\sS\hat{\sS}}\big)
    &\geq \rF(\Psi^{\sR\sS\sE}, \omega^\sR \otimes \pi^\sS \otimes \pi^\sE) - 2(\delta_1/4 + \sqrt{\delta_1}) \\
    \label{inteq:112}
    &\geq 1 - \epsilon - 2(\delta_1/4 + \sqrt{\delta_1}),
\end{align}
where the last inequality follows from the condition Eq.~\eqref{inteq:141}.
Rescaling $\delta_1$ as $\delta_1 = (2\delta/5)^2$, we have $2(\delta_1/4 + \sqrt{\delta_1}) \leq \delta$.
Hence, since $\Psi_{\rm fin}^{\sR\hat{\sA}\sB\sS\hat{\sS}} = \til{\cB}^{\sB\sX\rarr\hat{\sA}\sB\hat{\sS}}\circ\cA^{\sA\rarr\sS\sX}(\omega^{\sR\sA\sB})$, we finally obtain 
\begin{align}
    \rF\big(\til{\cB}^{\sB\sX\rarr\hat{\sA}\sB\hat{\sS}}\circ\cA^{\sA\rarr\sS\sX}(\omega^{\sR\sA\sB}), \ket{\omega}^{\sR\hat{\sA}\sB} \ket{\Phi}^{\sS\hat{\sS}}\big) \geq 1 - \epsilon - \delta.
\end{align}

From Theorem~\ref{thm:algorithm Uhlmann}, the number of samples of $\Psi^{\sL'\sM'\sX'}$ and $\sigma^{\sL'\sM'}\otimes\pi^{\sX'}$ for implementing 
\begin{align}
    \mathtt{UhlmannPurifSample}(\Psi^{\sL'\sM'\sX'}, \sigma^{\sL'\sM'}\otimes\pi^{\sX'}; \delta_1, \tF) 
\end{align}
is given by $\cO\Big(\big(\log{(1/\delta_1)}\big)^2/(\delta_1\beta^2)\Big)$. 
Both $\Psi^{\sL'\sM'\sX'}$ and $\sigma^{\sL'\sM'}$ are prepared from a single copy of $\ket{\omega}^{\sR'\sA'\sB'}$. 
Recalling that $\beta = \cO(\max\{m_{\rm min}, \delta_1/r_{\rm max}\})$ and $\delta_1 \leq (2\delta/5)^2$, we evaluate the number of samples of $\ket{\omega}^{\sR'\sA'\sB'}$ for Bob's operation $\til{\cB}^{\sB\sX \rarr\hat{\sA}\sB\hat{\sS}}$ as
\begin{align}
    \zeta 
    &= \cO\Big(\f{1}{\delta_1\beta^2}\Big(\log{\Big(\f{1}{\delta_1}\Big)}\Big)^2\Big) \notag\\
    &= \cO\Big(\f{1}{\delta^2}\min\Big\{\f{1}{m_{\rm min}^2}, \f{r_{\rm max}^2}{\delta^4}\Big\}\Big(\log{\Big(\f{1}{\delta}\Big)}\Big)^2\Big),
\end{align}
where $m_{\rm min}$ and $r_{\rm max}$ are given by Eqs.~\eqref{inteq:118} and~\eqref{inteq:119}, respectively.
The quantum circuit of overall this algorithm includes $\cO\big(\zeta\log{(d_\sR d_\sA d_\sB)}\big)$ one- and two-qubit gates, and $\cO\big(\log{(d_\sR d_\sA d_\sB)}\big)$ qubits at any one time.

\end{proof}


\section{Application 3: Petz recovery map with the Uhlmann transformation algorithm}
\label{sec:sample petz}

The Petz recovery map~\cite{petz1986sufficient, petz1988sufficiency} is a powerful tool with broad applications in quantum science~\cite{barnum2002reversing, ng2010simpleapproach, beigi2016decoding, chen2020entanglement, lauten2022approx, nakayama2023petz}.
In particular, it is known that the map demonstrates nearly optimal performance when it reverses the effect of noise~\cite{barnum2002reversing, beigi2016decoding}.
The Petz recovery map $\cR_{\sigma, \cF}^{\sB \rarr \sA}$ is defined for a reference state $\sigma^\sA$ and a quantum channel $\cF^{\sA\rarr\sB}$ as $\cR_{\sigma, \cF}^{\sB \rarr \sA} = \cQ_{\sigma^{1/2}}^\sA\circ(\cF^{\sA\rarr\sB})^\dag\circ\cQ_{\cF(\sigma)^{-1/2}}^\sB$, where $(\cF^{\sA\rarr\sB})^\dag$ is an adjoint map of $\cF^{\sA\rarr\sB}$ and $\cQ_{\mu}^\sC = \mu^\sC(\cdot)\mu^\sC$ for an operator $\mu^\sC$.
The adjoint map $(\cF^{\sA\rarr\sB})^\dag$ of $\cF^{\sA\rarr\sB}$ is defined as the map satisfying $\tr[L^\sB\cF^{\sA\rarr\sB}(M^\sA)]=\tr[(\cF^{\sA\rarr\sB})^\dag(L^\sB)M^\sA]$ for any operators $L^\sB$ and $M^\sA$.

Quantum algorithms for the Petz recovery map have been studied in Refs.~\cite{gilyen2022petzmap, biswas2023noiseadapted}.
These works represent significant progress toward implementation of the map, but their approaches require, in addition to the use of $\sigma^\sA$, access to a Stinespring unitary of the channel $\cF^{\sA\rarr\sB}$ and its inverse, which is challenging in practice.

We here introduce a method to realize the Petz recovery map using only $\sigma^{\sA}$ and $\cF^{\sA\rarr\sB}$; that is, access to a Stinespring unitary is not required.
The implementation of $(\sigma^{\sA})^{1/2}$ and $\big(\cF^{\sA \rarr \sB}(\sigma^\sA)\big)^{-1/2}$ via block-encodings, directly using the density matrix exponentiation and the QSVT with multiple uses of $\sigma^{\sA}$ and $\cF^{\sA \rarr \sB}$, is straightforward (see, e.g., Refs.~\cite{gilyen2022petzmap, gilyen2022improvedfidelity, wang2023SampletoQuary, wang2024TimeEfficientEntEstSamp} for references).
Hence, we focus our investigation only on the non-trivial part $(\cF^{\sA\rarr\sB})^\dag$.
To this end, we make use of the following proposition, in which $(\cP_{\ket{0}}^{\bC\rarr\sG})^\dag$ denotes a linear map (not a quantum channel), defined as $(\cP_{\ket{0}}^{\bC\rarr\sG})^\dag(\cdot) = \bra{0}^\sG \cdot \ket{0}^\sG$.

\begin{proposition}[Implementation of a Stinespring unitary via the Uhlmann transformation algorithm]
\label{prop:stinespring unitary via uhlmann}
Let $\delta \in (0, 1)$. For any quantum channel $\cF^{\sA\rarr\sB}$, there exists a quantum algorithm that implements a quantum channel $\cG^\sS$ satisfying 
\begin{align}
    \f{1}{2}\big\|\cG^\sS \circ \cP_{\ket{0}}^{\bC\rarr\sG} - \cU_\cF^\sS \circ \cP_{\ket{0}}^{\bC\rarr\sG}\big\|_\diamond \leq \delta, 
\end{align}
where $\cU_\cF^\sS$ is a Stinespring unitary of $\cF^{\sA\rarr\sB}$ on $\sS = \sA\sG = \sB\sE$, with the dimension of $\sE$ taken as $d_\sE = d_\sA d_\sB$.
The algorithm uses the channel $\cF^{\sA \rarr \sB}$ a total of $\nu$ times, where $\nu = \til{\cO}\Big(\f{d_\sA^4 d_\sB}{\delta^2}\min\Big\{\f{1}{\tau_{\rm min}^2}, \f{d_\sA^6d_\sB^2}{\delta^4}\Big\}\Big)$, and $\tau_{\rm min}$ is the minimum non-zero eigenvalue of the Choi--Jamio\l kowski state $\tau^{\sR\sB} = \cF^{\sA\rarr\sB}(\Phi^{\sR\sA})$.

Moreover, there exists a quantum algorithm that implements a quantum channel $\cG_{\rm inv}^\sS$ satisfying 
\begin{align}
    \f{1}{2}\big\|(\cP_{\ket{0}}^{\bC\rarr\sG})^\dag\circ\cG_{\rm inv}^\sS - (\cP_{\ket{0}}^{\bC\rarr\sG})^\dag\circ(\cU_\cF^\sS)^\dag\big\|_\diamond \leq \delta,
\end{align}
also using $\cF^{\sA\rarr\sB}$ a total of $\nu$ times.

In both algorithms, $\cO\big(\log{(d_\sA d_\sB)}\big)$ qubits suffice at any one time.

\end{proposition}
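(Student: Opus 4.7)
The plan is to realize a Stinespring unitary of $\cF^{\sA\rarr\sB}$ as an Uhlmann unitary between two purifications, and then approximately implement that Uhlmann unitary using the mixed sample access algorithm in \Cref{thm:Uhlmann mixed state sample} together with the canonical purification subroutine \Cref{prop:canonical purification}. Let $\ket{\Phi}^{\sR\sA}$ be the maximally entangled state ($d_\sR=d_\sA$), let $\tau^{\sR\sB}=\cF^{\sA\rarr\sB}(\Phi^{\sR\sA})$ be the Choi--Jamio\l kowski state of $\cF^{\sA\rarr\sB}$, and let $\ket{\tau_{\rm c}}^{\sR\sB\hat{\sR}\hat{\sB}}$ be its canonical purification. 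Both $\ket{\Phi}^{\sR\sA}\ket{0}^{\sG}$ and $\ket{\tau_{\rm c}}^{\sR\sB\hat{\sR}\hat{\sB}}$ purify the same marginal $\pi^\sR=\bI^\sR/d_\sR$ on $\sR$ and sit in a purifying space of the same total dimension $d_\sA d_\sB^2$, so \Cref{thm:Uhlmann theorem} yields a unitary $U^\sS$ on $\sS=\sA\sG=\sB\hat{\sR}\hat{\sB}$ mapping one to the other exactly. Setting $V^{\sA\rarr\sB\sE}:=U^\sS\ket{0}^\sG$ with $\sE=\hat{\sR}\hat{\sB}$ (so $d_\sE=d_\sA d_\sB$), one has $(\bI^\sR\otimes V)\ket{\Phi}^{\sR\sA}=\ket{\tau_{\rm c}}^{\sR\sB\hat{\sR}\hat{\sB}}$; partial-tracing $\sE$ and invoking the Choi--Jamio\l kowski isomorphism yields $\cF(\rho^\sA)=\tr_\sE[V\rho^\sA V^\dag]$ for every $\rho^\sA$, so $U^\sS$ is a Stinespring unitary of $\cF^{\sA\rarr\sB}$ of the required form.

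The algorithmic implementation prepares copies of $\ket{\Phi}^{\sR\sA}\ket{0}^\sG$ directly at unit cost, and produces copies of the target purification $\ket{\tau_{\rm c}}^{\sR\sB\hat{\sR}\hat{\sB}}$ by first generating samples of $\tau^{\sR\sB}$---each costing one invocation of $\cF^{\sA\rarr\sB}$ on one half of a freshly prepared $\ket{\Phi}^{\sR\sA}$---and then feeding them into the canonical purification algorithm of \Cref{prop:canonical purification}. These two (approximate) purified streams are plugged into the diamond-norm branch of the Uhlmann transformation algorithm of \Cref{thm:Uhlmann mixed state sample} applied to the common base system $\sR$; the sign-function polynomial from $\mathtt{QSVTSIGN}$ acting on the block-encoded Choi overlap $\tr_\sR[\ket{\tau_{\rm c}}^{\sR\sB\hat{\sR}\hat{\sB}}\bra{\Phi}^{\sR\sA}\bra{0}^\sG]$ then produces the Uhlmann partial isometry on $\sS$, which, because the source marginal $\pi^\sR$ has full support, lifts without loss to a unitary on $\sS$ realizing $U^\sS$.

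The error analysis combines three contributions: the sign-polynomial error controlled by \Cref{lem:FPAA diamond}; the density matrix exponentiation error, which by the robustness lemma \Cref{lem:state error not accumulate} is linear in the input-state error; and the per-copy canonical-purification error, which accumulates multiplicatively over the $u=\tilde{\cO}(1/\beta)$ QSVT iterations. Rescaling exactly as in the proof of \Cref{thm:Uhlmann mixed state sample} (cf. Eq.~\eqref{inteq:131}) and combining the per-copy purification cost $\tilde{\cO}((d_\sA d_\sB)/\delta_1\cdot\min\{1/\tau_{\min}^2,(d_\sA d_\sB)^2/\delta_1^4\})$ from \Cref{prop:canonical purification} with the Uhlmann sample complexity yields the claimed $\nu=\tilde{\cO}(d_\sA^4 d_\sB/\delta^2\cdot\min\{1/\tau_{\min}^2,d_\sA^6 d_\sB^2/\delta^4\})$. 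The inverse channel $\cG_{\rm inv}^\sS$ is obtained by running the same construction with the QSVT circuit reversed (cf. Refs.~\cite{gilyen2019qsvt,Gilyn2019thesis,martyn2021grand}) and $\mathtt{DMESUB}^\dag$ in place of $\mathtt{DMESUB}$; since the guarantee is stated after projection onto $\ket{0}^\sG$, only the restriction of $(U_\cF^\sS)^\dag$ to the range of $V$ must be approximated, which is precisely the reverse Uhlmann partial isometry.

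The main obstacle I anticipate is the careful book-keeping of the effective Uhlmann parameters in this setting: both marginals coincide with $\pi^\sR$ so that $\sqrt{\sigma}\sqrt{\rho}\propto\bI^\sR$ and $s_{\min}=1/d_\sR=1/d_\sA$, which makes the advertised dimensional scaling very sensitive to how the algorithms in \Cref{thm:Uhlmann mixed state sample} and \Cref{prop:canonical purification} are composed. Matching the claimed $\nu$ requires a slightly streamlined variant in which one side of the input is prepared exactly (incurring no canonical-purification overhead), and a verification that the resulting partial isometry on $\sS$ indeed extends to a bona fide unitary realizing a valid Stinespring dilation with environment dimension $d_\sE=d_\sA d_\sB$.
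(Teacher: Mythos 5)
Your proposal is correct and follows essentially the same route as the paper: the Stinespring unitary is realized as the Uhlmann unitary from $\ket{\Phi}^{\sR\sA}\ket{0}^{\sG}$ to the canonical purification of the Choi state (using that both marginals on $\sR$ equal $\pi^\sR$, so $s_{\min}=1/d_\sA$), with the total cost being the product of the canonical-purification sample cost and the Uhlmann sample complexity, and the partial-isometry-versus-unitary issue resolved by restricting to $\ket{0}^\sG$ inputs. The ``streamlined variant'' you anticipate at the end is exactly what the paper does — it calls $\mathtt{UhlmannPurifiedSample}$ directly with $\ket{\Phi}^{\sR\sA}\ket{0}^\sG$ prepared exactly and only the Choi-state side passed through $\mathtt{CanonicalPurification}$ (and obtains the inverse by swapping the two arguments rather than reversing the QSVT circuit), so no separate invocation of the full mixed-sample theorem is needed.
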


Proposition~\ref{prop:stinespring unitary via uhlmann} will be proven at the end of this section.

For an algorithmic implementation of the Petz recovery map, we follow the protocol in Ref.~\cite{gilyen2022petzmap}.
The Petz recovery map can be rephrased as
\begin{align}
\label{eq:petz exp form rephrase}
    \cR_{\sigma, \cF}^{\sB \rarr \sA}
    &= d_\sE\cQ_{\sigma^{1/2}}^\sA\circ(\cP_{\ket{0}}^{\bC\rarr\sG})^\dag \circ(\cU_\cF^{\sS})^\dag \circ\cP_\pi^{\bC\rarr\sE}\circ\cQ_{\cF(\sigma)^{-1/2}}^\sB,
\end{align}
where $\sS = \sA\sG = \sB\sE$.
Since the size of the system $\sE$ is arbitrarily chosen, we set $\sE = \sR'\sB'$ with dimension $d_\sE = d_\sA d_\sB$. Note that $d_\sR = d_\sA$.
While post-selection by $\ket{0}^\sG$ may seem necessary, it can in fact be removed using the robust oblivious amplitude amplification~\cite{gilyen2019qsvt, Gilyn2019thesis} by iteratively applying $U_\cF^\sS$ and $(U_\cF^\sS)^\dag$ a total of $v = \cO\big(\sqrt{d_\sA d_\sB/\lambda_{\rm min}}\big)$ times~\cite{gilyen2022petzmap}.
Here, $\lambda_{\rm min}$ denotes the minimum non-zero eigenvalue of $\cF^{\sA\rarr\sB}(\sigma^\sA)$.
Hence, an approximate Petz recovery map can be deterministically implemented.

In the $v$ repetitions of $U_\cF^\sS$ and $(U_\cF^\sS)^\dag$ in the oblivious amplitude amplification, when we approximate these unitaries by $\cG^\sS$ and $\cG_{\rm inv}^\sS$ within an error $\delta$, respectively, the total error accumulates to $v\delta$.
By rescaling $\delta$ as $\delta = \epsilon/v$, we implement the approximation of the Petz recovery map with error at most $\epsilon$ in the diamond norm, using $\cF^{\sA \rarr \sB}$ the following total number of times:
\begin{align}
\label{eq:sampcomp Petz}
    v\til{\cO}\Big(\f{d_\sA^4 d_\sB}{(\epsilon/v)^2 }\min\Big\{\f{1}{\tau_{\rm min}^2}, \f{d_\sA^6d_\sB^2}{(\epsilon/v)^4}\Big\}\Big) =\til{\cO}\Big(\f{d_\sA^{11/2} d_\sB^{5/2}}{\epsilon^2  \lambda_{\rm min}^{3/2}}\min\Big\{\f{1}{\tau_{\rm min}^2}, \f{d_\sA^8 d_\sB^4}{\epsilon^4\lambda_{\rm min}^2}\Big\}\Big).
\end{align}

To be precise, implementing $\cQ_{\sigma^{1/2}}^\sA$ and $\cQ_{\cF(\sigma)^{-1/2}}^\sB$ additionally requires access to $\sigma^\sA$ and $\cF^{\sA\rarr\sB}$.
However, as mentioned, since these can be straightforwardly realized using the density matrix exponentiation and the QSVT, we do not provide a detailed discussion here.

Other approaches to implementing the Petz recovery map $\cR_{\sigma, \cF}^{\sB\rarr\sA}$ using only the channel $\cF^{\sA\rarr\sB}$ and the state $\sigma^\sA$ may be possible. 
In fact, since the map is independent of the choice of the system $\sE$ (see Eq.~\eqref{eq:petz exp form rephrase}), the recent results in Ref.~\cite{tang2025conjugatequerieshelp} based on the random purification techniques appear to be applicable. 
This approach would achieve computational costs of roughly the square of those in Ref.~\cite{gilyen2022petzmap} divided by the accuracy, although it would require collective measurements over an exponential number of Choi--Jamio\l kowski states of $\cF^{\sA\rarr\sB}$.
We leave a detailed investigation of this approach for future work.

We now prove \Cref{prop:stinespring unitary via uhlmann}.

\begin{proof}[Proof of \Cref{prop:stinespring unitary via uhlmann}]

We consider a purified state of the Choi--Jamio\l kowski state $\tau^{\sR\sB} = \cF^{\sA\rarr\sB}(\Phi^{\sR\sA})$, which is given by $\ket{\tau}^{\sR\sB\sE} = U_\cF^{\sS}\ket{\Phi}^{\sR\sA}\ket{0}^{\sG}$, where $U_\cF^{\sS}$ is a Stinespring unitary of $\cF^{\sA\rarr\sB}$ and $\sS = \sA\sG = \sB\sE$. Without loss of generality, we take $d_\sR = d_\sA$. 
Since $\tau^{\sR} = \pi^{\sR}$, we observe that $U_\cF^\sS$ corresponds to the Uhlmann unitary transforming $\ket{\Phi}^{\sR\sA}\ket{0}^{\sG}$ to $\ket{\tau}^{\sR\sB\sE}$.
Thus, using our Uhlmann transformation algorithm for $\ket{\Phi}^{\sR\sA}\ket{0}^{\sG}$ and $\ket{\tau}^{\sR\sB\sE}$, the unitary $U_\cF^{\sS}$ can be implemented.

To generate a purified state of $\tau^{\sR\sB}$, we use the canonical purification: $\ket{\tau_{\rm c}}^{\sR\sB\sR'\sB'} = (\sqrt{\tau^{\sR\sB}} \otimes \bI^{\sR'\sB'})\ket{\Gamma}^{\sR\sB\sR'\sB'}$. Note that the environment $\sE$ is chosen to be $\sR'\sB'$ in the canonical purification. 
Thus, $\sS = \sA\sG = \sB\sR'\sB'$.
From \Cref{prop:canonical purification}, to obtain a state $\til{\tau}_{\rm c}^{\sR\sS} = \mathtt{CanonicalPurification}(\tau^{\sR\sB}; \delta_1)$ that approximates $\ket{\tau_{\rm c}}^{\sR\sS}$ with the error $\delta_1$ in the trace distance: $\f{1}{2}\big\|\til{\tau}_{\rm c}^{\sR\sS} - \ketbra{\tau_{\rm c}}{\tau_{\rm c}}^{\sR\sS}\big\|_1 \leq \delta_1$, we use $f$ samples of $\tau^{\sR\sB}$, where $f = \til{\cO}\Big(\f{d_\sA d_\sB}{\delta_1}\min\Big\{ \f{1}{\tau_{\rm min}^2}, \f{d_\sA^2 d_\sB^2}{\delta_1^4}\Big\}\Big)$, and $\tau_{\rm min}$ is the non- zero minimum eigenvalue of $\tau^{\sR\sB}$.

Next, we apply the Uhlmann transformation algorithm with $\ket{\sigma}^{\sR\sS} = \ket{\Phi}^{\sR\sA}\ket{0}^{\sG}$ and $\til{\tau}_{\rm c}^{\sR\sS}$ to obtain an approximation of the unitary $U_\cF^\sS$.
At this point, a careful analysis is needed, since the Uhlmann transformation algorithm realizes a partial isometry $W_\cF^\sS$, not the full unitary $U_\cF^\sS$.
However, this does not cause any critical issue, as we will show below.

The relation between the partial isometry $W_\cF^\sS$ and the Stinespring unitary $U_\cF^\sS$ is that $W_\cF^\sS$ is embedded in the part of $U_\cF^\sS$ specified by projections $\ketbra{0}{0}^\sG\otimes\bI^\sA$ and $\Pi_\cF^{\sB\sR'\sB'} = V_\cF^{\sA\rarr\sB\sR'\sB'}(V_\cF^{\sA\rarr\sB\sR'\sB'})^\dag$, where $V_\cF^{\sA\rarr\sB\sR'\sB'}$ is a Stinespring isometry of $\cF$. That is, 
\begin{align}
        U_\cF^\sS
        = \hspace{1mm}
\begin{blockarray}{ccc}
 & \ketbra{0}{0}^\sG\otimes\bI^\sA &  & \vspace{1mm}\\
\begin{block}{c(cc)}
  \Pi_\cF^{\sB\sR'\sB'} \hspace*{1mm} & W_\cF^\sS & \hspace{0mm} * \hspace{3mm}\\
   \hspace*{1mm} & * & \hspace{0mm} * \hspace{3mm}\\
\end{block}
\end{blockarray}\hspace{2.5mm}.
\end{align}
Then, the following relation holds:
\begin{align}
    U_\cF^\sS\ket{0}^\sG 
    &= V_\cF^{\sA\rarr\sB\sR'\sB'} \\
    &= V_\cF^{\sA\rarr\sB\sR'\sB'}(V_\cF^{\sA\rarr\sB\sR'\sB'})^\dag V_\cF^{\sA\rarr\sB\sR'\sB'} \\
    &= \Pi_\cF^{\sB\sR'\sB'} U_\cF^{\sS} (\ketbra{0}{0}^\sG\otimes\bI^\sA) \ket{0}^\sG \\
    & = W_\cF^\sS \ket{0}^\sG.
\end{align}
Hence, when we fix an input on the system $\sG$ as $\ket{0}^\sG$, the actions of them are identical.

Let $U_{\rm ideal}^{\sS\sH\hat{\sR}\hat{\sS}}$ be an exact block-encoding unitary of $\ketbra{\sigma}{\tau_{\rm c}}^{\hat{\sR}\hat{\sS}} \otimes W_\cF^\sS$.
Then, a quantum channel $\cJ^{\sS\sH\hat{\sR}\hat{\sS}} = \mathtt{UhlmannPurifiedSample}(\ket{\sigma}^{\sR\sS}, \til{\tau}_{\rm c}^{\sR\sS}; \delta_2, \diamond)$ satisfies
\begin{align}
\label{inteq:122}
    \f{1}{2}\big\|\cJ^{\sS\sH\hat{\sR}\hat{\sS}}\circ\cP_{\ket{0}}^{\bC\rarr\sH}\circ\cP_{\til{\tau}_{\rm c}}^{\bC\rarr\hat{\sR}\hat{\sS}} - \cU_{\rm ideal}^{\sS\sH\hat{\sR}\hat{\sS}} \circ \cP_{\ket{0}\ket{\tau_{\rm c}}}^{\bC\rarr\sH\hat{\sR}\hat{\sS}}\big\|_\diamond \leq \delta_2 + (2u+1)\delta_1,
\end{align}
where $u=\cO\big(d_\sA\log{(1/\delta_2)}\big)$.
The number of samples of $\til{\tau}_{\rm c}^{\sR\sB\sR'\sB'}$ and $\ket{\sigma}^{\sR\sS}$ is given by 
\begin{align}
    w =\cO\big(d_\sA^2\big(\log{(1/\delta_2)}\big)^2/\delta_2\big).
\end{align}
We here used the fact that the minimum non-zero singular value of $\sqrt{\pi^{\sR}}\sqrt{\tau_{\rm c}^{\sR}} = \pi^\sR$ is $1/d_\sA$.
The case where one input to $\mathtt{UhlmannPurifiedSample}$ is a mixed state $\til{\tau}_{\rm c}^{\sR\sS}$ can be handled by a similar argument to that in \Cref{sec:proof eval sample mixed}.

Note that $U_{\rm ideal}^{\sS\sH\hat{\sR}\hat{\sS}}\ket{0}^{\sH}\ket{\tau_{\rm c}}^{\hat{\sR}\hat{\sS}}\ket{0}^\sG = \ket{0}^\sH\ket{\sigma}^{\hat{\sR}\hat{\sS}}W_\cF^\sS\ket{0}^\sG = \ket{0}^\sH\ket{\sigma}^{\hat{\sR}\hat{\sS}}U_\cF^\sS\ket{0}^\sG$.
When we define a quantum channel $\cG^\sS$ as $\cG^\sS = \tr_{\sH\hat{\sR}\hat{\sS}} \circ \cJ^{\sS\sH\hat{\sR}\hat{\sS}} \circ \cP_{\ket{0}}^{\bC\rarr\sH}\circ\cP_{\til{\tau}_{\rm c}}^{\bC\rarr\hat{\sR}\hat{\sS}}$, we have that
\begin{align}
    &\f{1}{2}\big\|\cG^\sS\circ\cP_{\ket{0}}^{\bC\rarr\sG} - \cU_\cF^\sS\circ\cP_{\ket{0}}^{\bC\rarr\sG}\big\|_\diamond \notag \\
    &\leq \f{1}{2}\big\|\cJ^{\sS\sH\hat{\sR}\hat{\sS}}\circ\cP_{\ket{0}}^{\bC\rarr\sH}\circ\cP_{\til{\tau}_{\rm c}}^{\bC\rarr\hat{\sR}\hat{\sS}}\circ\cP_{\ket{0}}^{\bC\rarr\sG} - \cU_{\rm ideal}^{\sS\sH\hat{\sR}\hat{\sS}}\circ\cP_{\ket{0}}^{\bC\rarr\sH}\circ\cP_{\ket{\tau_{\rm c}}}^{\bC\rarr\hat{\sR}\hat{\sS}}\circ\cP_{\ket{0}}^{\bC\rarr\sG}\big\|_\diamond \\
    &\leq \f{1}{2}\big\|\cJ^{\sS\sH\hat{\sR}\hat{\sS}}\circ\cP_{\ket{0}}^{\bC\rarr\sH}\circ\cP_{\til{\tau}_{\rm c}}^{\bC\rarr\hat{\sR}\hat{\sS}} - \cU_{\rm ideal}^{\sS\sH\hat{\sR}\hat{\sS}} \circ \cP_{\ket{0}\ket{\tau_{\rm c}}}^{\bC\rarr\sH\hat{\sR}\hat{\sS}}\big\|_\diamond \\
    &\leq \delta_2 + (2u+1)\delta_1,
\end{align}
where we used Eq.~\eqref{inteq:122} in the third inequality.
Rescaling $\delta_1$ and $\delta_2$ as $\delta_1 = \delta/\big(2(2u+1)\big)$ and $\delta_2 = \delta/2$, the over all error is bounded from above by $\delta$.

In this algorithm, the number of uses of the quantum channel $\cF^{\sA\rarr\sB}$ is evaluated as
\begin{align}
    \nu 
    &= fw \\
    \label{inteq:19}
    &= \til{\cO}\Big(\f{d_\sA^4 d_\sB}{\delta^2}\min\Big\{\f{1}{\tau_{\rm min}^2}, \f{d_\sA^6d_\sB^2}{\delta^4}\Big\}\Big).
\end{align}

Constructing a quantum algorithm for approximating $(U_\cF^\sS)^\dag$ can be achieved straightforwardly using the fact that the Uhlmann unitary from $\ket{\tau_{\rm c}}^{\sR\sS}$ to $\ket{\sigma}^{\sR\sS} = \ket{\Phi}^{\sR\sA}\ket{0}^\sG$ coincides with $(U_\cF^\sS)^\dag$.
Hence, a quantum channel $\cJ_{\rm inv}^{\sS\sH\hat{\sR}\hat{\sS}} = \mathtt{UhlmannPurifiedSample}(\til{\tau}_{\rm c}^{\sR\sS}, \ket{\sigma}^{\sR\sS}; \delta_2, \diamond)$ satisfies that
\begin{align}
\label{inteq:123}
    &\f{1}{2}\big\|\cJ_{\rm inv}^{\sS\sH\hat{\sR}\hat{\sS}}\circ\cP_{\ket{0}\ket{\sigma}}^{\bC\rarr\sH\hat{\sR}\hat{\sS}} - (\cU_{\rm ideal}^{\sS\sH\hat{\sR}\hat{\sS}})^\dag \circ \cP_{\ket{0}\ket{\sigma}}^{\bC\rarr\sH\hat{\sR}\hat{\sS}}\big\|_\diamond 
    \leq \delta_2 + 2u\delta_1.
\end{align}
Note that, compared to the previous case for $U_\cF^\sS$, the first and second arguments of $\mathtt{UhlmannPurifiedSample}$ are swapped.

We observe that  $\bra{0}^\sG(U_{\rm ideal}^{\sS\sH\hat{\sR}\hat{\sS}})^\dag\ket{0}^{\sH}\ket{\sigma}^{\hat{\sR}\hat{\sS}} = \ket{0}^\sH\ket{\tau_{\rm c}}^{\hat{\sR}\hat{\sS}}\bra{0}^\sG(W_\cF^\sS)^\dag = \ket{0}^\sH\ket{\tau_{\rm c}}^{\hat{\sR}\hat{\sS}}\bra{0}^\sG (U_\cF^\sS)^\dag$.
Defining a quantum channel $\cG_{\rm inv}^\sS$ as $\cG_{\rm inv}^\sS = \tr_{\sH\hat{\sR}\hat{\sS}} \circ \cJ_{\rm inv}^{\sS\sH\hat{\sR}\hat{\sS}} \circ \cP_{\ket{0}\ket{\sigma}}^{\bC\rarr\sH\hat{\sR}\hat{\sS}}$, we see that it satisfies
\begin{align}
    &\f{1}{2}\big\|(\cP_{\ket{0}}^{\cC\rarr\sG})^\dag\circ\cG_{\rm inv}^\sS
    - (\cP_{\ket{0}}^{\cC\rarr\sG})^\dag\circ(\cU_\cF^\sS)^\dag\big\|_\diamond \\
    &\leq \f{1}{2}\big\|(\cP_{\ket{0}}^{\cC\rarr\sG})^\dag\circ \cJ_{\rm inv}^{\sS\sH\hat{\sR}\hat{\sS}}\circ\cP_{\ket{0}\ket{\sigma}}^{\bC\rarr\sH\hat{\sR}\hat{\sS}} - (\cP_{\ket{0}}^{\cC\rarr\sG})^\dag\circ(\cU_{\rm ideal}^{\sS\sH\hat{\sR}\hat{\sS}})^\dag\circ\cP_{\ket{0}\ket{\sigma}}^{\bC\rarr\sH\hat{\sR}\hat{\sS}}\big\|_\diamond \\
    &\leq \f{1}{2}\big\|(\cP_{\ket{0}}^{\cC\rarr\sG})^\dag\|_\diamond \big\|\cJ_{\rm inv}^{\sS\sH\hat{\sR}\hat{\sS}}\circ\cP_{\ket{0}\ket{\sigma}}^{\bC\rarr\sH\hat{\sR}\hat{\sS}} - (\cU_{\rm ideal}^{\sS\sH\hat{\sR}\hat{\sS}})^\dag\circ\cP_{\ket{0}\ket{\sigma}}^{\bC\rarr\sH\hat{\sR}\hat{\sS}}\big\|_\diamond \\
    &\leq \delta_2 + 2u\delta_1,
\end{align}
where in the second inequality, we used the fact that, for any linear map $\cL$ and $\cM$, $\|\cL\circ\cM\|_\diamond \leq \|\cL\|_\diamond\|\cM\|_\diamond$ holds~\cite{watrous2018TheoryQI}, and we used Eq.~\eqref{inteq:123} and $\|(\cP_{\ket{0}}^{\bC\rarr\sG})^\dag\|_\diamond = 1$ in the third inequality. 
Rescaling $\delta_1$ and $\delta_2$ as $\delta_1 = \delta/4u$ and $\delta_2 = \delta/2$, the over all error is bounded from above by $\delta$. 
The number of uses of $\cF^{\sA\rarr\sB}$ is given in the same order as Eq.~\eqref{inteq:19}.

In both cases of approximating $U_\cF$ and $(U_\cF)^\dag$, the procedures are conducted sequentially, and thus, $\cO\big(\log{(d_\sA d_\sB)}\big)$ qubits suffice at any one time.

\end{proof}


\section*{Acknowledgments}
\addcontentsline{toc}{section}{Acknowledgments}
We thank Kosuke Mitarai, Keisuke Fujii, Takaya Matsuura, Zhaoyi Li, and Seth Lloyd for helpful discussions. We are also grateful to Hayato Arai for facilitating our collaboration.

T.U.\ acknowledges the support of JST CREST Grant Number JPMJCR23I3 and JST SPRING Grant Number JPMJSP2108.
Y.N.\ is supported by MEXT KAKENHI Grant-in-Aid for Transformative Research Areas (A) ``Extreme Universe'' Grant Numbers JP21H05182 and JP21H05183, JST CREST Grant Number JPMJCR23I3, and JST PRESTO Grant Number JPMJPR2456.
Q.W.\ acknowledges the support of the Engineering and Physical Sciences Research Council under Grant \mbox{EP/X026167/1}.
R.T.\ acknowledges the support of JST CREST Grant Number JPMJCR23I3, JSPS KAKENHI Grant Number JP24K16975, JP25K00924, and MEXT KAKENHI Grant-in-Aid for Transformative
Research Areas A ``Extreme Universe” Grant Number JP24H00943.

\bibliographystyle{alpha_mod}
\addcontentsline{toc}{section}{References}
\bibliography{ref}


\appendix
\part*{\LARGE Appendix}
\addcontentsline{toc}{section}{Appendix}

\crefname{appendix}{appendix}{appendices}
\Crefname{appendix}{Appendix}{Appendices}

\crefalias{section}{appendix}
\crefalias{subsection}{appendix}
\crefalias{subsubsection}{appendix}

\section{Algorithm for a variant of the Uhlmann transformation in the mixed sample access model}
\label{sec:altanative mixed sample uhlmann}

We provide a quantum algorithm for implementing a variant of the Uhlmann transformation, given mixed sample access to $\rho^\sA$ and $\sigma^\sA$.
The variant transformation corresponds to a partial isometry $V^\sA$ such that
\begin{align}
\label{inteq:144}
    \rF(\rho^\sA, \sigma^\sA) = \rF(V^\sA\ket{\rho_{\rm c}}^{\sA\sB}, \ket{\sigma_{\rm c}}^{\sA\sB}),
\end{align}
where $\ket{\rho_{\rm c}}^{\sA\sB}$ and $\ket{\sigma_{\rm c}}^{\sA\sB}$ are the canonical purified states of $\rho^\sA$ and $\sigma^\sA$, respectively.
Recall that this partial isometry $V^\sA$ is equivalent to the Uhlmann partial isometry constructed from states $\rho^\top$ and $\sigma^\top$ on the system $\sB$.
As shown by the inequality:
\begin{align}
    \rF(\rho^\sA, \sigma^\sA) 
    &= \rF(V^\sA\ket{\rho_{\rm c}}^{\sA\sB}, \ket{\sigma_{\rm c}}^{\sA\sB}) \\
    &\leq \rF\big(V^\sA\rho^\sA(V^\sA)^\dag, \sigma^\sA\big),
\end{align}
the partial isometry $V^\sA$ has the property that it always brings $\rho^\sA$ closer to $\sigma^\sA$.

Our statement is as follows, where we denote by $\rho_{\rm min}$ and $\sigma_{\rm min}$ the minimum non-zero eigenvalues of $\rho^\sA$ and $\sigma^{\sA}$, respectively, and by $s_{\rm min}$ and $r$ the minimum non-zero singular value and the rank of $\sqrt{\sigma^\sA}\sqrt{\rho^\sA}$, respectively.
It should be noted that $d_\sA = d_\sB$.

\begin{theorem}[Algorithm for a variant of the Uhlmann transformation in the mixed sample access model]
\label{thm:variant of uhlmann mixed sample}
    
Let $\delta \in (0, 1)$ and $\chi \in \{\diamond, \tF\}$. 
Then, the quantum sample algorithm $\mathtt{VariantUhlmannMixedSample}$ given by \Cref{alg:Uthm:variant of uhlmann mixed sample} satisfies the following.

A quantum channel $\cJ_\diamond^{\sA\sH}$ given by $\cJ_\diamond^{\sA\sH}  
    =\mathtt{VariantUhlmannMixedSample}(\rho^{\hat{\sA}}, \sigma^{\hat{\sA}}; \delta, \diamond)$, satisfies that
\begin{equation}
    \f{1}{2}\big\|\cJ_\diamond^{\sA\sH}\circ\cP_{\ket{0}}^{\bC\rarr\sH} - \cU_{\rm ideal}^{\sA\sH}\circ\cP_{\ket{0}}^{\bC\rarr\sH}\big\|_\diamond \leq \delta,
\end{equation}
where $\til{U}_{\rm ideal}^{\sA\sH}$ is an exact block-encoding unitary of $V^\sA$, and $V^\sA$ is a partial isometry such that
\begin{align}
    \rF(V^\sA\ket{\rho_{\rm c}}^{\sA\sB}, \ket{\sigma_{\rm c}}^{\sA\sB}) = \rF(\rho^\sA, \sigma^\sA).
\end{align}
The algorithm uses $\zeta_\diamond$ samples of $\rho^{\hat{\sA}}$ and $\sigma^{\hat{\sA}}$, where 
\begin{align}
\label{inteq:146}
    \zeta_\diamond = \til{\cO}\Big(\f{1}{\delta s_{\rm min}^2} \min\Big\{\f{1}{\rho_{\rm min}^2} + \f{1}{\sigma_{\rm min}^2}, \f{1}{\delta^4 s_{\rm min}^4}\Big\}\Big).
\end{align}

A quantum channel $\cJ_\tF^{\sA\sH}$ given by $\cJ_\tF^{\sA\sH} 
    = \mathtt{VariantUhlmannMixedSample}(\rho^{\hat{\sA}}, \sigma^{\hat{\sA}}; \delta, \tF)$, satisfies that
\begin{align}
    \rF\big(\cT^{\sA}(\ketbra{\rho_{\rm c}}{\rho_{\rm c}}^{\sA\sB}), \ket{\sigma_{\rm c}}^{\sA\sB}\big) \geq \rF(\rho^\sA, \sigma^\sA) - \delta,
\end{align}
where $\cT^{\sA}= \tr_\sH\circ\cJ_\tF^{\sA\sH}\circ\cP_{\ket{0}}^{\bC\rarr\sH}$.
The algorithm uses $\zeta_{\tF}$ samples of $\rho^{\hat{\sA}}$ and $\sigma^{\hat{\sA}}$, where 
\begin{align}
\label{inteq:145}
    \zeta_{\tF} = \til{\cO}\Big(\f{1}{\delta\beta_\tF}\min\Big\{\f{1}{\rho_{\rm min}^2} + \f{1}{\sigma_{\rm min}^2}, \f{1}{\beta_\tF^4\delta^4}\Big\}\Big),
\end{align}
and $\beta_\tF = \f{1}{8}\max\{s_{\rm min}, \delta/(2r)\}$.

In both cases, the quantum circuit for implementing the algorithm consists of $\cO\big(\zeta_\chi \log{d_\sA}\big)$ one- and two-qubit gates, and $\cO\big(\log{d_\sA}\big)$ qubits suffice at any one time.

\end{theorem}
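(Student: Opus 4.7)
The plan rests on the observation that, unlike the generic Uhlmann partial isometry which acts on the purifying system $\sB$, the variant $V^\sA$ associated with the canonical purification can be written as a function of a matrix that lives on $\sA$ alone. A direct calculation using $\ket{\omega_{\rm c}}^{\sA\sB}=(\sqrt{\omega^\sA}\otimes\bI^\sB)\ket{\Gamma}^{\sA\sB}$ and $\tr_\sB[\ket{\Gamma}\!\bra{\Gamma}^{\sA\sB}]=\bI^\sA$ gives $\tr_\sB\big[\ketbra{\sigma_{\rm c}}{\rho_{\rm c}}^{\sA\sB}\big] = \sqrt{\sigma^\sA}\sqrt{\rho^\sA}$, so by Proposition~\ref{prop:explicit form of Uhl} with the roles of $\sA$ and $\sB$ exchanged (the canonical purification being symmetric up to transpose on the reduced states) we obtain $V^\sA = \sgn^{(\rm SV)}\!\big(\sqrt{\sigma^\sA}\sqrt{\rho^\sA}\big)$. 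This representation is exactly where the gain over Theorem~\ref{thm:Uhlmann mixed state sample} comes from: the target matrix already sits on $\sA$, so one can bypass the canonical purification procedure and with it the costly $\sqrt{d_\sA}$-amplification that introduces the dimension factors in Eq.~\eqref{inteq:47}.

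The algorithm then has three stages. First, run $\mathtt{BlockEncSqrtState}$ (Proposition~\ref{prop:block enc sqrt state}) once on $\rho^\sA$ and once on $\sigma^\sA$ to obtain channels $\cF_\rho,\cF_\sigma$ that approximate $(1,5,0)$-block-encoding unitaries of $\sqrt{\rho^\sA}/(2\sqrt{2})$ and $\sqrt{\sigma^\sA}/(2\sqrt{2})$ within diamond-norm error $\delta_1$, using $\til{\cO}(\min\{1/\rho_{\min}^2,1/\delta_1^4\}/\delta_1)$ and $\til{\cO}(\min\{1/\sigma_{\min}^2,1/\delta_1^4\}/\delta_1)$ samples, respectively. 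Second, compose them on disjoint ancilla registers; the resulting unitary block-encodes $\sqrt{\sigma^\sA}\sqrt{\rho^\sA}/8$, and by subadditivity of the diamond norm the composite channel is $2\delta_1$-close to the ideal block-encoding, with an analogous statement for its inverse. Third, feed the composite channel and its inverse into $\mathtt{QSVTSIGN}$ (Proposition~\ref{prop:FPAA general}) with threshold $\beta$; since the non-zero singular values of $\sqrt{\sigma^\sA}\sqrt{\rho^\sA}/8$ lie in $[s_{\min}/8,\,1/8]$, choosing $\beta = \Theta(s_{\min})$ for the diamond-norm statement, or $\beta_\tF = \Theta(\max\{s_{\min},\delta/r\})$ for the fidelity-difference statement, yields a polynomial approximation of $\sgn^{(\rm SV)}(\sqrt{\sigma^\sA}\sqrt{\rho^\sA}/8) = V^\sA$ of degree $u=O(\log(1/\delta)/\beta)$.

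The error analysis parallels that of Theorem~\ref{thm:algorithm Uhlmann}. Each of the $u$ uses of the composite channel contributes $O(\delta_1)$ to the diamond-norm error, so the output of $\mathtt{QSVTSIGN}$ is $O(u\delta_1)$-close to an exact block-encoding of the polynomial; Lemma~\ref{lem:FPAA diamond} (with $d_\sE=1$) then converts the pointwise sign-approximation error $\delta_2$ into an $O(\sqrt{\delta_2})$ diamond-norm error after $\ket{0}$-preparation on the QSVT ancilla. Balancing $\delta_1=\Theta(\delta/u)$ and $\delta_2=\Theta(\delta^2)$ keeps the overall error at $\delta$, and the total sample count is
\begin{equation}
u\cdot\til{\cO}\!\Big(\tfrac{1}{\delta_1}\min\{1/\rho_{\min}^2+1/\sigma_{\min}^2,\,1/\delta_1^4\}\Big) = \til{\cO}\!\Big(\tfrac{u^2}{\delta}\min\{1/\rho_{\min}^2+1/\sigma_{\min}^2,\,u^4/\delta^4\}\Big),
\end{equation}
which with $u=\Theta(\log(1/\delta)/s_{\min})$ reproduces Eq.~\eqref{inteq:146} and with $u=\Theta(\log(1/\delta)/\beta_\tF)$ reproduces Eq.~\eqref{inteq:145}. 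The gate count and $\cO(\log d_\sA)$ ancilla bound follow from the sequential structure together with the $\cO(\log d_\sA)$ cost of each $\mathtt{BlockEncSqrtState}$ call.

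The main obstacle is the fidelity-difference case. The telescoping-in-singular-values argument used in Section~\ref{sec:uhlfidpurifsamp} to produce the $\min\{1/s_{\min},r/\delta\}$ bound relied on the $\sin^{(\rm SV)}$ structure introduced by the density-matrix-exponentiation step, and here the analogous inequality becomes cleaner because the sign function is applied directly to the block-encoded matrix without the intermediate sine. One still needs to carefully verify that splitting the singular-value range at $\beta$ yields an error $\leq \delta_1 + 2\beta r$ (by bounding the singular values below $\beta$ using their rank), thereby justifying the $\min\{\cdot,\cdot\}$ structure in the statement of $\beta_\tF$ and, ultimately, in Eq.~\eqref{inteq:145}. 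Once this adaptation is in place, the remaining pieces of the proof are essentially identical to those of Theorem~\ref{thm:algorithm Uhlmann}.
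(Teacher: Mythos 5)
Your proposal is correct and follows essentially the same route as the paper's proof: identifying $V^\sA = \sgn^{(\rm SV)}\big(\sqrt{\sigma^\sA}\sqrt{\rho^\sA}\big)$ via the canonical purification, block-encoding each square root with $\mathtt{BlockEncSqrtState}$, composing the two block-encodings to obtain $\sqrt{\sigma^\sA}\sqrt{\rho^\sA}/8$, and applying $\mathtt{QSVTSIGN}$ with $\beta=\Theta(s_{\min})$ (diamond norm) or $\beta_\tF=\Theta(\max\{s_{\min},\delta/r\})$ (fidelity), with the same error bookkeeping $\delta_1=\Theta(\delta/u)$ and the same $\zeta = hu$ sample count. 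Your observation that the fidelity-difference telescoping argument is cleaner here than in Theorem~\ref{thm:algorithm Uhlmann} (no intermediate $\sin^{(\rm SV)}$) is also consistent with the paper, which invokes the calculation of \Cref{sec:uhlfidquery} directly.
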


Unlike \Cref{alg:Uhl mixed sample} in Sec.~\ref{sec:mixed sample}, \Cref{alg:Uthm:variant of uhlmann mixed sample} can be implemented with substantially fewer samples, since Eqs.~\eqref{inteq:146} and~\eqref{inteq:145} do not include the dimensional factor $d_\sA$, which appears in the sample complexity in Theorem~\ref{thm:Uhlmann mixed state sample}.
This is because \Cref{alg:Uthm:variant of uhlmann mixed sample} avoids the use of the canonical purification algorithm $\mathtt{CanonicalPurification}$ in \Cref{alg:canonical purification}, which takes a high sample complexity.
Instead, it employs $\mathtt{BrockEncSqrtState}$ in \Cref{prop:block enc sqrt state} to directly block-encode the square root of given mixed states.

\begin{algorithm}[h]
\caption{Algorithm for a variant of the Uhlmann transformation in the mixed sample access model \parbox{\linewidth}{\centering $\mathtt{VariantUhlmannMixedSample}(\rho^{\hat{\sA}}, \sigma^{\hat{\sA}}; \delta, \chi)$ (In Theorem~\ref{thm:variant of uhlmann mixed sample})}}
\label{alg:Uthm:variant of uhlmann mixed sample}
\SetKwInput{KwInput}{Input}
\SetKwInput{KwOutput}{Output}
\SetKwInput{KwParameters}{Parameters}
\SetKwComment{Comment}{$\triangleright$\ }{}
\SetCommentSty{textnormal}
\SetKwProg{Fn}{Subroutine}{}{end}
\SetKwFunction{UPS}{UhlmannPurifiedSample}

\SetAlgoNoEnd
\SetAlgoNoLine
\KwInput{Two quantum states $\rho^{\hat{\sA}}$ and $\sigma^{\hat{\sA}}$.}
\KwParameters{$\delta \in (0, 1)$ and $\chi \in \{\diamond, \tF\}$.}
\KwOutput{Quantum channel $\cJ^{\sA\sH}$.}
\SetAlgoLined

\If{$\chi = \diamond$}{
  Set $\delta_2 \gets (\delta/6)^2$ and $\beta \gets s_{\rm min}/8$. \\
}
\ElseIf{$\chi = \tF$}{
  Set $\delta_2 \gets \delta/8$ and $\beta \gets \max\{s_{\rm min}/8, \delta_2/(2r)\}$. \\
}
Set $u$ to be the minimum odd integer satisfying $u \geq \big\lceil\f{8e}{\beta}\log{(2/\delta_2)}\big\rceil$. \\
Set $\delta_1 \gets \delta/(4u)$. \\
Set $\cF^{\hat{\sA}\sC_1} \gets \mathtt{BlockEncSqrtState}(\rho^{\hat{\sA}}; \delta_1)$ 
and $\cF_{\rm inv}^{\hat{\sA}\sC_1} \gets \mathtt{BlockEncSqrtState}^\dag(\rho^{\hat{\sA}}; \delta_1)$. \\
Set $\cG^{\hat{\sA}\sC_2} \gets \mathtt{BlockEncSqrtState}(\sigma^{\hat{\sA}}; \delta_1)$ 
and $\cG_{\rm inv}^{\hat{\sA}\sC_2} \gets \mathtt{BlockEncSqrtState}^\dag(\sigma^{\hat{\sA}}; \delta_1)$ (Proposition~\ref{prop:block enc sqrt state}). \\
Set $\cL^{\hat{\sA}\sC_1\sC_2} \gets \cG^{\hat{\sA}\sC_2} \circ \cF^{\hat{\sA}\sC_1}$
and $\cL_{\rm inv}^{\hat{\sA}\sC_1\sC_2} \gets \cF_{\rm inv}^{\hat{\sA}\sC_1} \circ \cG_{\rm inv}^{\hat{\sA}\sC_2}$.
Set $\cJ^{\sA\sH} \gets \mathtt{QSVTSIGN}(\cL^{\hat{\sA}\sC_1\sC_2}, \cL_{\rm inv}^{\hat{\sA}\sC_1\sC_2}; \beta, \delta_2)$ (Proposition~\ref{prop:FPAA general}). \\
Return $\cJ^{\sA\sH}$.
\end{algorithm}

\begin{proof}[Proof of Theorem~\ref{thm:variant of uhlmann mixed sample}]
    
From the Uhlmann's theorem,  there exists a partial isometry $V^\sA$ which satisfies
\begin{align}
\label{inteq:49}
    \rF(V^\sA\ket{\rho_{\rm c}}^{\sA\sB}, \ket{\sigma_{\rm c}}^{\sA\sB}) 
    &= \rF(\rho_{\rm c}^{\sB}, \sigma_{\rm c}^{\sB}) \\
    &= \rF\big((\rho^\sA)^\top, (\sigma^\sA)^\top\big) \\
    \label{inteq:129}
    &= \rF(\rho^\sA, \sigma^\sA).
\end{align}
An explicit form of this partial isometry $V^\sA$ is given by 
\begin{align}
    V^\sA 
    &= \sgn^{(\rm SV)}\big(\tr_\sB\big[\ketbra{\sigma_{\rm c}}{\rho_{\rm c}}^{\sA\sB}\big]\big) \\
    &= \sgn^{(\rm SV)}\big(\tr_\sB\big[\sqrt{\sigma^\sA}\ketbra{\Gamma}{\Gamma}^{\sA\sB}\sqrt{\rho^\sA}\big]\big) \\
    &= \sgn^{(\rm SV)}\big(\sqrt{\sigma^\sA}\sqrt{\rho^\sA}\big),
\end{align}
where $\ket{\Gamma}^{\sA\sB} = \sum_i\ket{i}^\sA\ket{i}^\sB$.
Thus, it suffices to directly construct a unitary that block-encodes $\sqrt{\sigma^\sA}\sqrt{\rho^\sA}$, and then lift up all singular values to unity using the QSVT for sign function $\mathtt{QSVTSIGN}$ (Proposition~\ref{prop:FPAA general}).

From a block-encoding of the square root of a quantum state (Proposition~\ref{prop:block enc sqrt state}), we obtain quantum channels $\cF^{\hat{\sA}\sC_1} = \mathtt{BlockEncSqrtState}(\rho^{\hat{\sA}}; \delta_1)$ and $\cG^{\hat{\sA}\sC_2} = \mathtt{BlockEncSqrtState}(\sigma^{\hat{\sA}}; \delta_1)$, such that
\begin{align}
\label{inteq:127}
    \f{1}{2}\big\|\cF^{\hat{\sA}\sC_1} - \cW_{\sqrt{\rho}}^{\hat{\sA}\sC_1}\big\|_\diamond \leq \delta_1, \ \ \ \text{and} \ \ \
    \f{1}{2}\big\|\cG^{\hat{\sA}\sC_2} - \cW_{\sqrt{\sigma}}^{\hat{\sA}\sC_2}\big\|_\diamond \leq \delta_1,
\end{align}
where $W_{\sqrt{\rho}}^{\hat{\sA}\sC_1}$ and $W_{\sqrt{\sigma}}^{\hat{\sA}\sC_2}$ are $(1, 5, 0)$-block-encoding unitary of $\sqrt{\rho^{\hat{\sA}}}/(2\sqrt{2})$ and $\sqrt{\sigma^{\hat{\sA}}}/(2\sqrt{2})$, respectively, using $h = \til{\cO}\Big(\f{1}{\delta_1}\min\Big\{\f{1}{\rho_{\rm min}^2} + \f{1}{\sigma_{\rm min}^2}, \f{1}{\delta_1^4}\Big\}\Big)$, samples of $\rho^{\hat{\sA}}$ and $\sigma^{\hat{\sA}}$.

We here use the following lemma regarding the product of block-encoded matrices.
\begin{lemma}[Product of block-encoded matrices~{\cite[Lemma 30]{gilyen2019qsvt}}; see also Ref.~{\cite[Lemma 3.3.10]{Gilyn2019thesis}}]
\label{lem:product of BE}
Suppose that $U_1^{\sA\sB}$ is an $(\alpha_1, a_1, \epsilon_1)$-block-encoding unitary of a matrix $M_1^\sA$, and $U_2^{\sA\sC}$ is an $(\alpha_2, a_2, \epsilon_2)$-block-encoding unitary of a matrix $M_2^\sA$. Then, $(\bI^\sC\otimes U_1^{\sA\sB})(\bI^\sB \otimes U_2^{\sA\sC})$ is an $(\alpha_1\alpha_2, a_1+a_2, \alpha_1\epsilon_2 + \alpha_2\epsilon_1)$-block-encoding of $M_1^\sA M_2^\sA$.
\end{lemma}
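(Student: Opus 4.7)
The plan is to unpack the block-encoding definition, compute what matrix the composite unitary actually encodes, and then use a one-line triangle-inequality argument to get the error bound. Let me set up notation: write $\til{M}_1 = \alpha_1 (\bra{0^{a_1}}^\sB\otimes \bI^\sA) U_1^{\sA\sB} (\ket{0^{a_1}}^\sB\otimes \bI^\sA)$ and similarly $\til{M}_2 = \alpha_2 (\bra{0^{a_2}}^\sC\otimes \bI^\sA) U_2^{\sA\sC} (\ket{0^{a_2}}^\sC\otimes \bI^\sA)$ for the ``actual'' encoded matrices, so that by hypothesis $\|M_i - \til{M}_i\|_\infty \leq \epsilon_i$. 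The goal reduces to showing two things: first, that the product unitary $V = (\bI^\sC\otimes U_1^{\sA\sB})(\bI^\sB \otimes U_2^{\sA\sC})$ exactly block-encodes $\til{M}_1 \til{M}_2$ with normalization $\alpha_1\alpha_2$ on the combined ancilla register $\sB\sC$; second, that $\|M_1 M_2 - \til{M}_1\til{M}_2\|_\infty \leq \alpha_1 \epsilon_2 + \alpha_2 \epsilon_1$.

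For the first step, I would trace what $V$ does to an input $\ket{\psi}^\sA \ket{0^{a_1}}^\sB \ket{0^{a_2}}^\sC$. Applying $\bI^\sB \otimes U_2^{\sA\sC}$ first gives $[\til{M}_2\ket{\psi}^\sA/\alpha_2 \otimes \ket{0^{a_2}}^\sC + \ket{\mathrm{rest}_2}^{\sA\sC}] \otimes \ket{0^{a_1}}^\sB$, where $\ket{\mathrm{rest}_2}^{\sA\sC}$ is orthogonal to $\ket{0^{a_2}}^\sC$ on $\sC$. Now applying $\bI^\sC\otimes U_1^{\sA\sB}$ does not touch $\sC$, so the $\ket{\mathrm{rest}_2}$ piece stays orthogonal to $\ket{0^{a_2}}^\sC$ throughout, and the first piece expands to $[\til{M}_1 \til{M}_2\ket{\psi}^\sA/(\alpha_1\alpha_2) \otimes \ket{0^{a_1}}^\sB + \ket{\mathrm{rest}_1}^{\sA\sB}] \otimes \ket{0^{a_2}}^\sC$ with $\ket{\mathrm{rest}_1}$ orthogonal to $\ket{0^{a_1}}^\sB$. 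Projecting the output on $\ket{0^{a_1}}^\sB\ket{0^{a_2}}^\sC$ therefore isolates $\til{M}_1\til{M}_2\ket{\psi}/(\alpha_1\alpha_2)$, which establishes the first claim.

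The second step is standard: split the error as
\begin{equation*}
 \|M_1M_2 - \til{M}_1\til{M}_2\|_\infty \leq \|(M_1-\til{M}_1)M_2\|_\infty + \|\til{M}_1(M_2-\til{M}_2)\|_\infty,
\end{equation*}
then apply submultiplicativity of $\|\cdot\|_\infty$ with $\|M_2\|_\infty \leq \alpha_2$ (from the block-encoding hypothesis on $U_2$) and $\|\til{M}_1\|_\infty \leq \alpha_1$ (immediate from its definition, since $U_1$ is unitary). This gives exactly $\alpha_2\epsilon_1 + \alpha_1\epsilon_2$, as desired.

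There is no real obstacle, but the one place to be careful is the tensor-product bookkeeping in the first step: the two unitaries act on overlapping systems (both touch $\sA$), so one must verify that the ancilla projector on $\sC$ commutes through $U_1^{\sA\sB}$ and the ancilla projector on $\sB$ commutes through $U_2^{\sA\sC}$. That is precisely why $V$ is written with the identities $\bI^\sC$ and $\bI^\sB$ padded in the given positions, and it is what makes the block of the product equal the product of the blocks.
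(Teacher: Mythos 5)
Your proof is correct and follows the standard argument for this result, which the paper imports from Ref.~\cite{gilyen2019qsvt} without reproving: the composite unitary exactly block-encodes $\til{M}_1\til{M}_2$ because the two ancilla projectors live on disjoint registers and slide past the respective unitaries, and the error bound follows from the telescoping split plus submultiplicativity of the operator norm. The only point worth flagging is your use of $\|M_2\|_\infty \le \alpha_2$: for an approximate block-encoding the defining inequality alone only yields $\|M_2\|_\infty \le \alpha_2 + \epsilon_2$ (which would add an $\epsilon_1\epsilon_2$ cross term), but the paper's definition explicitly takes $\alpha \ge \|A\|_\infty$ as part of the normalization convention, so your step is licensed and matches the convention the cited proof relies on.
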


Let $W^{\sA\sC_1\sC_2} = W_{\sqrt{\sigma}}^{\hat{\sA}\sC_2}W_{\sqrt{\rho}}^{\hat{\sA}\sC_1}$.
From \Cref{lem:product of BE}, we see that
\begin{align}
    W^{\sA\sC_1\sC_2} 
    = 
\begin{blockarray}{ccc}
 & \bra{0}^{\sC_1\sC_2} &  & \vspace{1mm}\\
\begin{block}{c(cc)}
  \ket{0}^{\sC_1\sC_2} \hspace*{1mm} & \f{1}{8}\sqrt{\sigma^\sA}\sqrt{\rho^\sA} & \hspace{0mm} * \hspace{3mm}\\
   \hspace*{1mm} & * & \hspace{0mm} * \hspace{3mm}\\
\end{block}
\end{blockarray}\hspace{2.5mm},
\end{align}
and when we set $\cL^{\hat{\sA}\sC_1\sC_2} = \cG^{\hat{\sA}\sC_2}\circ\cF^{\hat{\sA}\sC_1}$, we have 
\begin{align}
\label{inteq:130}
    \f{1}{2}\big\|\cL^{\hat{\sA}\sC_1\sC_2} - \cW^{\hat{\sA}\sC_1\sC_2}\big\|_\diamond \leq 2\delta_1,
\end{align}
where we used Eq.~\eqref{inteq:127}.
Thus, $(1,10,0)$-block-encoding unitary $W^{\hat{\sA}\sC_1\sC_2}$ of $\sqrt{\sigma^{\hat{\sA}}}\sqrt{\rho^{\hat{\sA}}}/8$ is approximately obtained with error $2\delta_1$.
Similarly, a quantum channel $\cL_{\rm inv}^{\hat{\sA}\sC_1\sC_2}$ that satisfies $\f{1}{2}\|\cL_{\rm inv}^{\hat{\sA}\sC_1\sC_2} - (\cW^{\hat{\sA}\sC_1\sC_2})^\dag\|_\diamond \leq 2\delta_1$ is obtained from the same number of samples of $\rho^{\hat{\sA}}$ and $\sigma^{\hat{\sA}}$.

The rest is similar to the discussion in Sec.~\ref{sec:proofUhlmann}.
Let $\cJ^{\sA\sH} = \mathtt{QSVTSIGN}(\cL^{\hat{\sA}\sC_1\sC_2}, \cL_{\rm inv}^{\hat{\sA}\sC_1\sC_2}; \beta, \delta_2)$ and $u = \cO\big(\log(1/\delta_2)/\beta\big)$.
First, regarding the evaluation in the diamond norm, we set $\beta = \beta_\diamond = s_{\rm min}/8$, and then, it holds that $\big\|P_\sgn^{(\rm SV)}(\til{M}^\sA) - V^\sA\big\|_\infty \leq \delta_2$, where $\til{M}^\sA = \sqrt{\sigma^{\hat{\sA}}}\sqrt{\rho^{\hat{\sA}}}/8$ and $V^\sA = \sgn^{(\rm SV)}(\til{M}^\sA)$ that correspond to the partial isometry.
From \Cref{lem:FPAA diamond}, we have
\begin{align}
\label{inteq:128}
    \f{1}{2}\big\|\cJ^{\sA\sH}\circ\cP_{\ket{0}}^{\bC\rarr\sH} - \cU_{\rm ideal}^{\sA\sH} \circ \cP_{\ket{0}}^{\bC\rarr\sH}\big\|_\diamond \leq 3\sqrt{\delta_2} + 2u\delta_1.
\end{align}
Rescaling the parameters as $\delta_1 = \delta/(4u)$ and $\delta_2 = (\delta/6)^2$, Eq.~\eqref{inteq:128} is bounded by $\delta$. In this case, the number of samples of $\rho^{\hat{\sA}}$ and $\sigma^{\hat{\sA}}$ are evaluated as $\zeta_\diamond = hu$, that is, 
\begin{align}
    \zeta_\diamond 
    &= \cO\Big(\f{1}{\beta_\diamond}\log{\Big(\f{1}{\delta_2}\Big)}\Big)\til{\cO}\Big(\f{1}{\delta_1}\min\Big\{\f{1}{\rho_{\rm min}^2} + \f{1}{\sigma_{\rm min}^2}, \f{1}{\delta_1^4}\Big\}\Big) \\
    &= \til{\cO}\Big(\f{1}{\delta s_{\rm min}^2}\min\Big\{\f{1}{\rho^2}+\f{1}{\sigma_{\rm min}^2}, \f{1}{\delta^4s_{\rm min}^4}\Big\}\Big).
\end{align}

Next, we evaluate the error in fidelity difference.
Let $\cT^\sA = \tr_\sH\circ \cJ^{\sA\sH} \circ \cP_{\ket{0}}^{\bC\rarr\sH}$.
From an almost identical calculation to that in Sec.~\ref{sec:uhlfidpurifsamp}, by setting $\beta = \beta_\tF = \max\{s_{\rm min}/8, \delta_2/(2r)\}$, we have that
\begin{align}
    \rF\big(\cT^\sA(\ketbra{\rho_{\rm c}}{\rho_{\rm c}}^{\sA\sB}), \ket{\sigma_{\rm c}}^{\sA\sB}\big)
    &\geq \rF\big(V^\sA\ket{\rho_{\rm c}}^{\sA\sB}, \ket{\sigma_{\rm c}}^{\sA\sB}\big) - \f{u}{2}\big\|\cL^{\hat{\sA}\sC_1\sC_2} - \cW^{\hat{\sA}\sC_1\sC_2}\big\|_\diamond -4\delta_2 \\
    &\geq \rF(\rho^\sA, \sigma^\sA) - 2u\delta_1 - 4\delta_2,
\end{align}
where we used Eqs.~\eqref{inteq:129} and~\eqref{inteq:130}.
Hence, rescaling the parameters as $\delta_1 = \delta/(4u)$ and $\delta_2 = \delta/8$ yields
the result that $\rF\big(\cT^\sA(\ketbra{\rho_{\rm c}}{\rho_{\rm c}}^{\sA\sB}), \ket{\sigma_{\rm c}}^{\sA\sB}\big) \geq \rF(\rho^\sA, \sigma^\sB) -\delta$.
The number of samples $\zeta_\tF = hu$ of $\rho^{\hat{\sA}}$ and $\sigma^{\hat{\sA}}$ is
\begin{align}
    \zeta_\tF 
    &= \cO\Big(\f{1}{\beta_\tF}\log{\Big(\f{1}{\delta_2}\Big)}\Big)\til{\cO}\Big(\f{1}{\delta_1}\min\Big\{\f{1}{\rho_{\rm min}^2} + \f{1}{\sigma_{\rm min}^2}, \f{1}{\delta_1^4}\Big\}\Big) \\
    &=\til{\cO}\Big(\f{1}{\delta\beta_\tF^2}\min\Big\{\f{1}{\rho_{\rm min}^2} + \f{1}{\sigma_{\rm min}^2}, \f{1}{\delta^4\beta_\tF^4}\Big\}\Big),
\end{align}
where $\beta_\tF = \f{1}{8}\max\{s_{\rm min}, \delta/(2r)\}$.

The algorithm $\mathtt{BlockEncSqrtState}$ uses $\cO(h\log{d_\sA})$ one- and two-qubit gates and is repeated $\cO(u)$ times in $\mathtt{QSVTSIGN}$.
Thus, the total number of gates in the quantum circuit for the entire algorithm is $\cO(hu\log{d_\sA}) = \cO(\zeta_\chi\log{d_\sA})$, where $\chi \in \{\diamond, \tF\}$.
Due to the sequential property of the algorithm, $\cO(\log{d_\sA})$ qubits suffice at any one time.

\end{proof}


\section{Naive state tomography-based approach for the Uhlmann transformation}
\label{sec:Comparing with a naive tomography-based}

The most straightforward strategy for implementing the Uhlmann transformation is to use quantum state tomography~\cite{ODonnell2016EffTomo, Haah2017samploptTomo, Guta2020FastTomo, Apeldoorn2023TomoPurifQuery, Chen2023WhenAdapTomo, hu2024sampleoptimalmemoryefficient}.
We first obtain approximate classical descriptions of $\ket{\rho}^{\sA\sB}$ and $\ket{\sigma}^{\sA\sB}$ via state tomography, and then directly compute the Uhlmann partial isometry $V^\sB = \sgn^{(\rm SV)}\big(\tr_{\sA}\big[\ketbra{\sigma}{\rho}^{\sA\sB}\big]\big)$ on a classical computer.
This partial isometry coincides with that determined by the Schmidt bases of $\ket{\rho}^{\sA\sB}$ and $\ket{\sigma}^{\sA\sB}$.
To implement the partial isometry by quantum circuits, we need to extend it to a unitary and, for instance, decompose it into quantum circuits by a brute-force method. This clearly results in an exponential number of gates in general~\cite{nielsen2010quantum}.
Moreover, in a classical computation step, this approach generally requires exponential time and space in terms of classical bits.

Below, we provide a statement about the query and sample complexities of the Uhlmann transformation based on the state tomography approach.
Here, the results in the purified and mixed sample access models can be obtained by allowing collective measurements over multiple copies, potentially even an exponential number of them.
The derivations are discussed in Appendices~\ref{sec:tomography based purif query},~\ref{sec:tomography based purif sample}, and~\ref{sec:tomography based mixed sample}, corresponding to each of the three computational models.

\begin{proposition}[Uhlman transformation by a state tomography-based approach]
    In each of the three query/sample models, for $\delta \in (0, 1)$, there exists a quantum query/sample algorithm based on state tomography that outputs a partial isometry $\til{V}^{\sB}$ satisfying $\big\|\til{V}^\sB - V^\sB\big\|_\infty \leq \delta$, where $V^\sB$ is the Uhlmann partial isometry.
    The query/sample complexity of the algorithm in each model is given as follows:
\begin{itemize}
\item Purified query access model: $\til{\cO}\Big(\f{d_\sA d_\sB}{\delta s_{\rm min}}\Big)$ queries to $U_\rho^{\sA\sB}$, $U_\sigma^{\sA\sB}$, and their inverses.
\item Purified sample access model: $\til{\cO}\Big(\f{d_\sA d_\sB}{\delta^2 s_{\rm min}^2}\Big)$ samples of $\ket{\rho}^{\sA\sB}$ and $\ket{\sigma}^{\sA\sB}$.
\item Mixed sample access model: $\til{\cO}\Big(\f{d_\sA (r_\rho + r_\sigma)}{\delta^4 s_{\rm min}^4}\Big)$ samples of $\rho^\sA$ and $\sigma^\sA$.
\end{itemize}
The algorithms generally require exponential time and space in classical computation, as well as an exponential number of one- and two-qubit gates.
\end{proposition}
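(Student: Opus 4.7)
The plan is to establish the claim in three conceptual steps. First, in each model we invoke the state-of-the-art quantum state tomography subroutine on the available inputs, producing classical descriptions $\ket{\til{\rho}}, \ket{\til{\sigma}}$ (in the purified models) or $\til{\rho}^\sA, \til{\sigma}^\sA$ (in the mixed sample model) within some target trace-distance error $\epsilon$, using a union bound to ensure both reconstructions succeed simultaneously. Second, on a classical computer we form the matrix $\til{M}^\sB = \tr_{\sA}\big[\ketbra{\til{\sigma}}{\til{\rho}}^{\sA\sB}\big]$ for the purified cases, and the analogous contraction of the canonical purifications $(\sqrt{\til{\sigma}^\sA}\otimes\bI^\sB)\ket{\Gamma}^{\sA\sB}$ and $(\sqrt{\til{\rho}^\sA}\otimes\bI^\sB)\ket{\Gamma}^{\sA\sB}$ for the mixed case; in either case we then compute $\til{V}^\sB = \sgn^{(\rm SV)}(\til{M}^\sB)$ via a singular-value decomposition. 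Third, the resulting partial isometry is extended to a unitary and decomposed into one- and two-qubit gates by brute force, which requires an exponential number of gates but is tolerated by the statement.

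The analytical heart of the proof is translating the tomographic accuracy $\epsilon$ into the target operator-norm accuracy $\delta$ for $\til{V}^\sB - V^\sB$. I would use a Wedin-type singular-vector perturbation argument: provided $\|\til{M}^\sB - M^\sB\|_\infty \leq c\, s_{\rm min}$ for a sufficiently small constant $c$, the singular subspaces corresponding to the nonzero singular values of $\til{M}^\sB$ are $\cO(\|\til{M}^\sB - M^\sB\|_\infty / s_{\rm min})$-close to those of $M^\sB$, yielding $\|\til{V}^\sB - V^\sB\|_\infty = \cO(\|\til{M}^\sB - M^\sB\|_\infty / s_{\rm min})$. Hence it suffices to ensure $\|\til{M}^\sB - M^\sB\|_\infty = \cO(\delta\, s_{\rm min})$. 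For the purified models, $M^\sB$ depends linearly on the pure states, so the triangle inequality gives $\|\til{M}^\sB - M^\sB\|_\infty = \cO(\epsilon)$, and it suffices to take $\epsilon = \cO(\delta s_{\rm min})$. For the mixed sample model, one first applies the Powers--St\o rmer inequality $\|\sqrt{\til{\rho}^\sA} - \sqrt{\rho^\sA}\|_2 \leq \|\til{\rho}^\sA - \rho^\sA\|_1^{1/2}$ (and similarly for $\sigma$), together with a triangle inequality on the product and the bound $\|\sqrt{\omega^\sA}\|_\infty \leq 1$, to obtain $\|\til{M}^\sB - M^\sB\|_\infty = \cO(\sqrt{\epsilon})$; this forces the stricter requirement $\epsilon = \cO(\delta^2 s_{\rm min}^2)$.

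Combining these accuracy requirements with the state-of-the-art tomography bounds completes the calculation. In the purified query model, the algorithm of Ref.~\cite{Apeldoorn2023TomoPurifQuery} achieves $\tilde{\cO}(d/\epsilon)$ queries to recover a pure state on a $d$-dimensional system, so substituting $d = d_\sA d_\sB$ and $\epsilon = \cO(\delta s_{\rm min})$ gives $\tilde{\cO}(d_\sA d_\sB/(\delta s_{\rm min}))$. In the purified sample model, Refs.~\cite{O'Donnell2016EffTomo, Haah2017samploptTomo} give $\tilde{\cO}(d/\epsilon^2)$ samples for a pure state, yielding $\tilde{\cO}(d_\sA d_\sB/(\delta^2 s_{\rm min}^2))$. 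In the mixed sample model, the same references give $\tilde{\cO}(k d/\epsilon^2)$ samples for a rank-$k$ state, and summing over the two tomographies with $d = d_\sA$ produces $\tilde{\cO}((r_\rho + r_\sigma) d_\sA/\epsilon^2) = \tilde{\cO}((r_\rho + r_\sigma) d_\sA / (\delta^4 s_{\rm min}^4))$, as claimed.

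The main conceptual obstacle is controlling the sign-function perturbation near the cutoff $s_{\rm min}$: if the error in $\til{M}^\sB$ were large enough that a truly nonzero singular value of $M^\sB$ slipped below its perturbed counterpart (or vice versa), the reconstructed partial isometry could differ from $V^\sB$ by an $\cO(1)$ amount on an entire subspace. The fix is to insist from the outset that $\|\til{M}^\sB - M^\sB\|_\infty < s_{\rm min}/2$, which by Weyl's inequality preserves the spectral gap separating the range of $M^\sB$ from its kernel and is automatically implied by the choice of $\epsilon$ made above; the rest is standard bookkeeping. Finally, I note that the classical post-processing (singular-value decomposition of an exponentially large matrix and evaluation of the sign function on it) and the subsequent brute-force circuit synthesis both require exponential time and space in the number of qubits, which is precisely the weakness of this naive strategy and motivates the quantum algorithms developed in the main text.
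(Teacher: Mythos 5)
Your proposal is correct and follows essentially the same route as the paper's own proof: tomography to accuracy $\epsilon$, classical computation of $\til{M}^\sB$ and its sign function, a polar-factor perturbation bound of the form $\cO\big(\|\til{M}^\sB-M^\sB\|_\infty/s_{\rm min}\big)$ (the paper cites an explicit polar-decomposition perturbation lemma rather than your Wedin-type subspace argument, but the scaling and role are identical), and the Powers--St\o rmer square-root loss in the mixed-sample case that forces $\epsilon = \cO(\delta^2 s_{\rm min}^2)$ and hence the $\epsilon^{-4}$ scaling. The only points the paper treats more carefully and that you gloss over are that tomography returns a density matrix whose top eigenvector must be extracted (controlled via Mirsky's inequality) and that the global phases of the reconstructed pure states must be fixed before bounding $\|\til{M}^\sB - M^\sB\|_\infty$; both are routine and do not change the stated complexities.
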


\subsection{In the purified query access model}
\label{sec:tomography based purif query}

In the purified query access model, the best-known query complexity for state tomography of a pure state in a $d$-dimensional Hilbert space, up to trace distance error $\epsilon$, is given by $\til{\cO}(d/\epsilon)$~\cite{Apeldoorn2023TomoPurifQuery}.
Thus, to obtain a classical description of $\til{\rho}^{\sA\sB}$ and $\til{\sigma}^{\sA\sB}$ that satisfy
\begin{align}
\label{eq:tomography descp. error}
    \f{1}{2}\|\til{\rho}^{\sA\sB} - \ketbra{\rho}{\rho}^{\sA\sB}\|_1 \leq \epsilon, 
    \ \ \text{and} \ \ \  
    \f{1}{2}\|\til{\sigma}^{\sA\sB} - \ketbra{\sigma}{\sigma}^{\sA\sB}\|_1 \leq \epsilon,
\end{align}
it suffice to make $\til{\cO}\big(d_\sA d_\sB/\epsilon\big)$ queries to the unitaries $U_\rho^{\sA\sB}$ and $U_\sigma^{\sA\sB}$, which prepare $\ket{\rho}^{\sA\sB}$ and $\ket{\sigma}^{\sA\sB}$, respectively.
We then compute an approximation $\til{V}^\sB$ of the Uhlmann partial isometry $V^\sB$ from the classical description of $\til{\rho}^{\sA\sB}$ and $\til{\sigma}^{\sA\sB}$.

While there is some choice in how to construct $\til{V}^\sB$ from $\til{\rho}^{\sA\sB}$ and $\til{\sigma}^{\sA\sB}$, we specifically consider $\til{V}^\sB = \sgn^{(\rm SV)}\big(\tr_\sA[\ketbra{\til{\sigma}_1}{\til{\rho}_1}^{\sA\sB}]\big)$, where $\ket{\til{\rho}_1}^{\sA\sB}$ and $\ket{\til{\sigma}_1}^{\sA\sB}$ are eigenstates corresponding to the largest eigenvalues of $\til{\rho}^{\sA\sB}$ and $\til{\sigma}^{\sA\sB}$, respectively.
As we will see below, this construction yields a pretty good approximation of $V^\sB$, supported by results from matrix perturbation theory in mathematical physics.

To evaluate how well $\til{V}^\sB$ approximates the desired partial isometry $V^\sB$ up to a global phase, we analyze the spectral properties. Let the eigenvalue decomposition of $\til{\rho}^{\sA\sB}$ be given by $\til{\rho}^{\sA\sB} = \sum_{j=1}^{r_{\til{\rho}}} \til{\rho}_j \ketbra{\til{\rho}_j}{\til{\rho}_j}^{\sA\sB}$, where $\til{\rho}_1 \geq \til{\rho}_2 \geq \ldots \geq \til{\rho}_{r_{\til{\rho}}}$.
First, to evaluate the distance between $\ket{\til{\rho}_1}^{\sA\sB}$ and $\ket{\rho}^{\sA\sB}$, we use a special case of the Weyl-type matrix perturbation theorem~\cite{Weyl1912DasAV, mirsky1960symmetric, Li1998perturvationEigSing, bhatia2013matrix}.

\begin{lemma}[Matrix spectrum perturbation on the Schatten-$p$ norm~{\cite[Theorem 5]{mirsky1960symmetric}}]
\label{lem:weyl tipe perturb}
Let $A$ and $\til{A}$ be $n_1 \times n_2$ matrices with singular values $s_1(A) \geq \ldots \geq s_{\min\{n_1, n_2\}}(A)$ and $s_1(\til{A}) \geq \ldots \geq s_{\min\{n_1, n_2\}}(\til{A})$, respectively. Then, for any $p \in [1, \infty]$, we have $\big\|{\rm diag}\big(s_j(A) - s_j(\til{A})\big)\big\|_p \leq \big\|A-\til{A}\big\|_p$ where ${\rm diag}(a_j)$ is a diagonal matrix whose diagonal elements are $a_j$.
\end{lemma}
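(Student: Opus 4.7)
The plan is to establish the inequality in two main stages: first reducing rectangular matrices to the Hermitian case via a dilation, and then proving the Hermitian version through a majorization inequality on eigenvalue differences.

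First, I would apply the Jordan--Wielandt dilation. For any $n_1 \times n_2$ matrix $M$, define the Hermitian block matrix
\[
H_M = \begin{pmatrix} 0 & M \\ M^\dagger & 0 \end{pmatrix}
\]
of size $(n_1+n_2) \times (n_1+n_2)$. A direct characteristic-polynomial calculation shows that the nonzero eigenvalues of $H_M$ are exactly $\pm s_j(M)$ for each singular value of $M$, with all remaining eigenvalues equal to zero. Consequently $\|H_M\|_p = 2^{1/p}\|M\|_p$, and by linearity $\|H_A - H_{\til A}\|_p = \|H_{A-\til A}\|_p = 2^{1/p}\|A-\til A\|_p$. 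On the other hand, the diagonal matrix $\mathrm{diag}\bigl(s_j(A) - s_j(\til A)\bigr)$ corresponds, under the same dilation, to a diagonal Hermitian matrix whose nonzero spectrum is obtained from the differences of matched eigenvalues of $H_A$ and $H_{\til A}$, again producing the same factor $2^{1/p}$ in the Schatten $p$-norm. Hence it suffices to prove the analogous statement for Hermitian matrices.

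For Hermitian matrices, the core ingredient is Lidskii's theorem: if $X,Y$ are Hermitian with eigenvalues sorted in decreasing order, then the vector $\bigl(\lambda_j(X) - \lambda_j(Y)\bigr)_j$, rearranged in decreasing order of absolute value, is weakly majorized by the vector of singular values of $X-Y$. My preferred route is through Ky Fan's extremal characterization $\sum_{j=1}^k \lambda_j(X) = \max\{\tr[PX] : P=P^2=P^\dagger,\ \tr[P]=k\}$, from which one extracts the Ky Fan $k$-norm inequality $\sum_{j=1}^k |\lambda_j(X)-\lambda_j(Y)| \leq \sum_{j=1}^k s_j(X-Y)$ by a careful pairing of the positive and negative parts of the eigenvalue-difference vector against the optimal projectors on each side. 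Once this weak majorization is in hand, the Schatten $p$-norm bound follows from the classical fact that each Schatten $p$-norm ($p \in [1,\infty]$) is a symmetric gauge function, and every symmetric gauge function is monotone with respect to weak majorization---this is the Ky Fan dominance theorem. Dividing through by the common factor $2^{1/p}$ recovers the desired inequality for rectangular matrices.

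The main obstacle will be Lidskii's theorem itself. Its statement is clean, but the proof is delicate because the natural index-by-index pairing of eigenvalues need not match the one that realizes the norm of $X-Y$; controlling the sign structure of the eigenvalue differences, and correctly pairing positive with positive and negative with negative parts, is the technical heart of the argument. A workable shortcut is to deduce the Ky Fan $k$-norm inequality directly from Wielandt's minimax principle applied to $k$-dimensional subspaces, which bypasses the need for a full majorization argument on signed vectors and reduces the problem to a sequence of elementary trace inequalities against optimal projectors.
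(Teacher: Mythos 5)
The paper does not prove this lemma at all---it is imported verbatim as Mirsky's theorem from the cited reference---so there is no in-paper argument to compare against. Your proposal is the standard textbook proof of that result and is essentially correct: the Jordan--Wielandt dilation step is sound (the nonzero spectrum of $H_M$ is $\{\pm s_j(M)\}$, the matched eigenvalue differences of $H_A$ and $H_{\tilde{A}}$ after decreasing-order sorting are exactly $\pm\bigl(s_j(A)-s_j(\tilde{A})\bigr)$ together with zeros, and the factor $2^{1/p}$ appears identically on both sides for every $p\in[1,\infty]$, including $p=\infty$ where it degenerates to $1$), and the Hermitian case does follow from the weak-majorization form of Lidskii's theorem combined with Ky Fan dominance. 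The one place where your write-up is thinner than it should be is exactly the one you flag: Ky Fan's extremal characterization of $\sum_{j\le k}\lambda_j$ by itself does not yield the inequality $\sum_{j=1}^k\bigl|\lambda_j(X)-\lambda_j(Y)\bigr|^{\downarrow}\le\sum_{j=1}^k s_j(X-Y)$ without additional work; "careful pairing of the positive and negative parts" is doing real work there, and the honest dependency is on Lidskii--Wielandt (via Wielandt's minimax over flags of subspaces, or via the Jordan-decomposition trick $X\le Y+(X-Y)_+$). Since you name that dependency explicitly and it is a classical theorem, the argument stands; for a result the paper itself treats as a black box, this level of rigor is appropriate.
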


From this lemma of $p=1$ and Eq.~\eqref{eq:tomography descp. error}, we see that
\begin{align}
\label{inteq:85}
    |1 - \til{\rho}_1| + \sum_{j=2}^{r_{\til{\rho}}} |\rho_j| \leq 2\epsilon.
\end{align}
Thus, we have 
\begin{align}
    \big\|\ketbra{\til{\rho}_1}{\til{\rho}_1}^{\sA\sB} - \ketbra{\rho}{\rho}^{\sA\sB}\big\|_1 
    &\leq \big\|\ketbra{\til{\rho}_1}{\til{\rho}_1}^{\sA\sB} - \til{\rho}^{\sA\sB}\big\|_1 + \big\|\til{\rho}^{\sA\sB} - \ketbra{\rho}{\rho}^{\sA\sB}\big\|_1 \\
    &= |1 - \til{\rho}_1| + \sum_{j=2}^{r_{\til{\rho}}} |\rho_j| + \big\|\til{\rho}^{\sA\sB} - \ketbra{\rho}{\rho}^{\sA\sB}\big\|_1 \\
    \label{inteq:86}
    &\leq 4 \epsilon,
\end{align}
where we used Eqs.~\eqref{eq:tomography descp. error} and~\eqref{inteq:85}.
Hence, for the state $\til{\rho}^{\sA\sB}$ obtained by state tomography, its eigenvector $\ket{\til{\rho}_1}^{\sA\sB}$ corresponding to the largest eigenvalue sufficiently approximates $\ket{\rho}^{\sA\sB}$.
The same holds for the states $\ket{\til{\sigma}_1}^{\sA\sB}$ and $\ket{\sigma}^{\sA\sB}$:
\begin{align}
    \label{inteq:107}
    \big\|\ketbra{\til{\sigma}_1}{\til{\sigma}_1}^{\sA\sB} - \ketbra{\sigma}{\sigma}^{\sA\sB}\big\|_1
    &\leq 4 \epsilon.
\end{align}

Next, we evaluate the distance between the matrices before applying the sign function, i.e., $\tr_\sA\big[\ketbra{\til{\sigma}_1}{\til{\rho}_1}^{\sA\sB}\big]$ and $\tr_\sA\big[\ketbra{\sigma}{\rho}^{\sA\sB}\big]$.
It is important to note that there is a degree of freedom associated with the global phase $e^{i\theta}$ on the matrix, which does not affect the final result. 
We here fix the phase as $\theta = -\theta_\rho + \theta_\sigma$, where $e^{-i\theta_\rho} = \braket{\til{\rho}_1}{\rho} / |\braket{\til{\rho}_1}{\rho}|$ and $e^{-i\theta_\sigma} = \braket{\til{\sigma}_1}{\sigma}/|\braket{\til{\sigma}_1}{\sigma}|$.
These phases achieve the minimum in the relation between the trace norm and the Euclidean norm (see also Eq.~\eqref{eq:relation of Euclidean and trace norm}).

Let $\til{M}^\sB = \tr_{\sA}\big[\ketbra{\til{\sigma}_1}{\til{\rho}_1}^{\sA\sB}\big]$ and $M^\sB = e^{i\theta}\tr_{\sA}\big[\ketbra{\sigma}{\rho}^{\sA\sB}\big]$.
We see that
\begin{align}
    \big\|\til{M}^\sB - M^\sB\big\|_\infty
    &\leq \big\|\til{M}^\sB - M^\sB\big\|_1 \\
    &= \big\|\tr_{\sA}\big[\ketbra{\til{\sigma}_1}{\til{\rho}_1}^{\sA\sB}\big] - e^{i\theta}\tr_{\sA}\big[\ketbra{\sigma}{\rho}^{\sA\sB}\big]\big\|_1 \\
    &\leq \big\|\ketbra{\til{\sigma}_1}{\til{\rho}_1}^{\sA\sB}
    - e^{i\theta}\ketbra{\sigma}{\rho}^{\sA\sB}\big\|_1 \\
    &\leq \big\|\ket{\til{\sigma}_1}^{\sA\sB}(\bra{\til{\rho}_1}^{\sA\sB} - e^{-i\theta_\rho}\bra{\rho}^{\sA\sB}) \big\|_1 +  \big\|(\ket{\til{\sigma}_1}^{\sA\sB} - e^{i\theta_\sigma}\ket{\sigma}^{\sA\sB})
    \bra{\rho}^{\sA\sB}\big\|_1 \\
    &\leq \big\|\ket{\til{\rho}_1}^{\sA\sB} - e^{i\theta_\rho}\ket{\rho}^{\sA\sB}\big\|
    + \big\|\ket{\til{\sigma}_1}^{\sA\sB} - e^{i\theta_\sigma}\ket{\sigma}^{\sA\sB}\big\| \\
    &\leq \f{1}{\sqrt{2}}\big\|\ketbra{\til{\rho}_1}{\til{\rho}_1}^{\sA\sB} - \ketbra{\rho}{\rho}^{\sA\sB}\big\|_1 + \f{1}{\sqrt{2}}\big\|\ketbra{\til{\sigma}_1}{\til{\sigma}_1}^{\sA\sB} - \ketbra{\sigma}{\sigma}^{\sA\sB}\big\|_1    \\
    \label{inteq:37}
    &\leq 4\sqrt{2}\epsilon,
\end{align}
where we used Eq.~\eqref{eq:relation of Euclidean and trace norm} in the fifth inequality, and used Eqs.~\eqref{inteq:86} and~\eqref{inteq:107} in the last inequality.
Thus, we can well approximate the matrix $M^\sB$ within the error $4\sqrt{2}\epsilon$.

Finally, to evaluate the distance between $\til{V}^\sB$ and $V^\sB$, we again refer to a matrix perturbation technique. 
By the polar decomposition, any matrices $A$ and $\til{A}$ can be decomposed as $A = U\sqrt{AA^\dag}$ and $\til{A} = \til{U}\sqrt{\til{A}\til{A}^\dag}$, where $U$ and $\til{U}$ are partial isometry polar factors of $A$ and $\til{A}$, respectively. 
The distance between the partial isometry polar factors has been studied in Refs.~\cite{Li1993AperturbGen, Li2002PertBound, Li2005SomeNewPerturb, XiaoShan2008VariationsQHfactor, Liu2008PerturbApprox, Li2008SubunitaryPolar, Zhang2013NewPerturb, Hong2014SomePertuPolarDecomp, Duong2016EffectOfPerturb, zhu2018NewPerturbationUniInv, Fu2020AnOptimallPerturb}.
To the best of our knowledge, the tightest bound for the operator norm of these partial isometries without any rank constraint is given in Ref.~\cite{Zhang2013NewPerturb}:
\begin{align}
\label{inteq:39}
    \big\|\til{U} - U\big\|_\infty 
    \leq \sqrt{\Big(\f{2}{a_{\rm min} + \til{a}_{\rm min}}\Big)^2 + \f{1}{a_{\rm min}^2} + \f{1}{\til{a}_{\rm min}^2}} \big\|A-\til{A}\big\|_\infty,
\end{align}
where $a_{\rm min}$ and $\til{a}_{\rm min}$ are the minimum non-zero singular values of $A$ and $\til{A}$, respectively.

We apply this to evaluating $\big\|\til{V}^\sB -V^\sB\big\|_\infty$.
The matrix $M^\sB$ and $\til{M}^\sB$ are decomposed as $M^\sB = (V^\sB)^\dag\sqrt{M^\sB(M^\sB)^\dag}$ and $\til{M}^\sB = (\til{V}^\sB)^\dag\sqrt{\til{M}^\sB(\til{M}^\sB)^\dag}$, since $V^\sB=\sgn^{(\rm SV)}(M^\sB)$ and $\til{V}^\sB=\sgn^{(\rm SV)}(\til{M}^\sB)$.
We then obtain 
\begin{align}
\label{inteq:38}
    \big\|\til{V}^\sB - V^\sB\big\|_\infty
    &= \big\|(\til{V}^\sB)^\dag - (V^\sB)^\dag\big\|_\infty \\
    &\leq \f{\sqrt{3}\big\|\til{M}^\sB - M^\sB\big\|_\infty}{\min\big\{s_{\rm min}(M), s_{\rm min}(\til{M})\big\}} \\
    \label{inteq:87}
    &\leq \f{4\sqrt{6}\epsilon}{s_{\rm min}(M) - 4\sqrt{2}\epsilon},
\end{align}
where we used Eqs.~\eqref{inteq:37} and~\eqref{inteq:39}, and used $|s_{\rm min}(M) - s_{\rm min}(\til{M})| \leq 4\sqrt{2}\epsilon$ which is a result from \Cref{lem:weyl tipe perturb} of $p=\infty$ and Eq.~\eqref{inteq:37}.
Thus, in order to approximate $V^\sB$ within error $\delta$, it is sufficient to choose the error $\epsilon$ in state tomography such that $\epsilon = \f{\delta s_{\rm min}(M)}{4\sqrt{2}(\sqrt{3} + \delta)}$.
Note that with this choice, $s_{\rm min}(M) - 4\sqrt{2}\epsilon > 0$ holds.

Consequently, we conclude that to satisfy $\big\|\til{V}^\sB - V^\sB\big\|_\infty \leq \delta$, the required query complexity is given by $\til{\cO}\Big(\f{d_\sA d_\sB}{\epsilon}\Big) 
    = \til{\cO}\Big(\f{d_\sA d_\sB}{\delta s_{\rm min}}\Big)$, where we used the fact that $s_{\rm min}(M)$ is the same as the minimum non-zero singular value of $\sqrt{\sigma^\sA}\sqrt{\rho^\sA}$, i.e., $s_{\rm min}$.
This completes the derivation of the query complexity for the Uhlmann transformation using the naive state tomography-based approach.

While we derived only an upper bound, there is little hope for a substantial improvement in this approach, because both the state tomography and the bound in Eq.~\eqref{inteq:39} are known to be nearly optimal.
Moreover, the minimum non-zero singular values that appear in Eq.~\eqref{inteq:39} are not the result of $U$ and $\til{U}$ being partial isometries. A similar bound is obtained even in the case that $U$ and $\til{U}$ are full-rank, i.e., unitaries~\cite{chun1989perturbation, Barrlund1990PerturbOnPolar, Mathias1993perturbPDecomp, Li1995NewPerturbUnitary, bhatia2013matrix}.


\subsection{In the purified sample model}
\label{sec:tomography based purif sample}

The quantum state tomography-based approach in the purified sample access model is mostly the same as the approach in the purified query access model described in \Cref{sec:tomography based purif query}.
The difference from the discussion in \Cref{sec:tomography based purif query} lies only in the number of samples required in state tomography.
The optimal sample complexity for a pure state in a $d$-dimensional Hilbert space, up to trace distance error $\epsilon$, is given by $\til{\cO}(d/\epsilon^2)$~\cite{ODonnell2016EffTomo, Haah2017samploptTomo}, where collective measurements over multiple copies, potentially even exponentially many, are allowed.
Thus, to obtain a classical description of $\til{\rho}^{\sA\sB}$ and $\til{\sigma}^{\sA\sB}$ that satisfy 
\begin{align}
\label{eq:tomography descp. error 2}
    \f{1}{2}\|\til{\rho}^{\sA\sB} - \ketbra{\rho}{\rho}^{\sA\sB}\|_1 \leq \epsilon, \ \ \text{and} \ \ \ 
    \f{1}{2}\|\til{\sigma}^{\sA\sB} - \ketbra{\sigma}{\sigma}^{\sA\sB}\|_1 \leq \epsilon,
\end{align}
we use $\til{\cO}\big(d_\sA d_\sB/\epsilon^2\big)$ samples of $\ket{\rho}^{\sA\sB}$ and~$\ket{\sigma}^{\sA\sB}$.

Following the same strategy in \Cref{sec:tomography based purif query}, it follows that for implementing the Uhlmann transformation within the error $\delta$, the total number of samples of $\ket{\rho}^{\sA\sB}$ and $\ket{\sigma}^{\sA\sB}$ is $\til{\cO}\Big(\f{d_\sA d_\sB}{\epsilon^2}\Big) 
    = \til{\cO}\Big(\f{d_\sA d_\sB}{\delta^2 s_{\rm min}^2}\Big)$.


\subsection{In the mixed sample access model}
\label{sec:tomography based mixed sample}

In the mixed sample access model, from multiple copies of $\rho^\sA$ and $\sigma^\sA$, we obtain approximations $\til{\rho}^\sA$ and $\til{\sigma}^\sA$ of the original states, respectively.
The optimal sample complexity of state tomography for a rank-$k$ quantum state in a $d$-dimensional Hilbert space, up to a trace distance error of $\epsilon$, is given by $\til{\cO}(kd/\epsilon^2)$~\cite{ODonnell2016EffTomo, Haah2017samploptTomo},
where one is allowed to perform collective measurements on multiple copies, possibly even an exponential number of them.

Unlike the other two models, the mixed sample access model requires more careful analysis.
Specifically, even though $\omega^\sA$ can be estimated with a small error, it is necessary to evaluate whether the purified state can also be estimated with a small error.
In the evaluation, we use the following inequality: for any state $\til{\omega}^\sA$ and $\omega^\sA$,
\begin{align}
\label{inteq:26}
    \f{1}{2}\big\|\ketbra{\til{\omega}_{\rm c}}{\til{\omega}_{\rm c}}^{\sA\sB} - \ketbra{\omega_{\rm c}}{\omega_{\rm c}}^{\sA\sB}\big\|_1 
    \leq \big\|\til{\omega}^\sA - \omega^\sA\big\|_1^{1/2},
\end{align}
where $\ket{\omega_{\rm c}}^{\sA\sB}$ and $\ket{\til{\omega}_{\rm c}}^{\sA\sB}$ are the canonical purified states of $\omega^\sA$ and $\til{\omega}^\sA$, respectively, and $d_\sA = d_\sB$.
We derive this inequality at the end of this section.

Since the optimal sample complexity to obtain $\til{\rho}^\sA$ and $\til{\sigma}^\sA$ which satisfy
\begin{align}
    \f{1}{2}\big\|\til{\rho}^\sA - \rho^\sA \big\|_1 \leq \epsilon',
    \ \  \text{and} \ \ \  
    \f{1}{2}\big\|\til{\sigma}^\sA - \sigma^\sA \big\|_1 \leq \epsilon',
\end{align}
are given by $\til{\cO}\big(d_\sA r_\rho/\epsilon'^2\big)$ and $\til{\cO}\big(d_\sA r_\sigma /\epsilon'^2\big)$, respectively, where $r_\rho$ and $r_\sigma$ are the ranks of $\rho$ and $\sigma$.

Once classical descriptions of $\til{\rho}^\sA$ and $\til{\sigma}^\sA$ are obtained, one can simply diagonalize them via classical computation, which immediately yields classical descriptions of $\ketbra{\til{\rho}_{\rm c}}{\til{\rho}_{\rm c}}^{\sA\sB}$ and $\ketbra{\til{\sigma}_{\rm c}}{\til{\sigma}_{\rm c}}^{\sA\sB}$.
By Eq.~\eqref{inteq:26}, to obtain classical descriptions of $\ketbra{\til{\rho}_{\rm c}}{\til{\rho}_{\rm c}}^{\sA\sB}$ and $\ketbra{\til{\sigma}_{\rm c}}{\til{\sigma}_{\rm c}}^{\sA\sB}$ such that 
\begin{align}
    \f{1}{2}\big\|\ketbra{\til{\rho}_{\rm c}}{\til{\rho}_{\rm c}}^{\sA\sB} - \ketbra{\rho_{\rm c}}{\rho_{\rm c}}^{\sA\sB}\big\|_1 \leq \epsilon, \ \  \text{and} \ \ \ 
    \f{1}{2}\big\|\ketbra{\til{\sigma}_{\rm c}}{\til{\sigma}_{\rm c}}^{\sA\sB} - \ketbra{\sigma_{\rm c}}{\sigma_{\rm c}}^{\sA\sB}\big\|_1 \leq \epsilon,
\end{align}
the number of samples of $\rho^\sA$ and $\sigma^\sA$ is $\til{\cO}(d_\sA (r_\rho + r_\sigma) / \epsilon^4)$, where $\epsilon'$ is chosen as $\epsilon' = \epsilon^2$.

The procedure hereafter follows similarly to that in \Cref{sec:tomography based purif query}.
We classically compute $\til{V}^{\sB} = \sgn^{(\rm SV)}\big(\tr_\sA\big[\ketbra{\til{\sigma}_{\rm c}}{\til{\rho}_{\rm c}}^{\sA\sB}\big]\big)$ from the description of $\ket{\til{\rho}_{\rm c}}^{\sA\sB}$ and $\ket{\til{\sigma}_{\rm c}}^{\sA\sB}$. This partial isometry $\til{V}^{\sB}$ can be a good approximation of the ideal Uhlmann partial isometry $V^{\sB} = \sgn^{(\rm SV)}\big(\tr_\sA\big[\ketbra{\sigma_{\rm c}}{\rho_{\rm c}}^{\sA\sB}\big]\big)$.
The total number of samples of $\rho^\sA$ and $\sigma^\sA$ for the Uhlmann transformation within the error $\delta$ using the state tomography-based approach, is given by 
\begin{align}
    \til{\cO}\Big(\f{d_\sA (r_\rho + r_\sigma)}{\epsilon^4}\Big) 
    \label{eq:tomobase sample mixed}
    = \til{\cO}\Big(\f{d_\sA (r_\rho + r_\sigma)}{\delta^4 s_{\rm min}^4}\Big).
\end{align}

As an alternative approach, one might compute $\big((\sqrt{\til{\rho}}\sqrt{\til{\sigma}})^{\sB}\big)^\top$ directly, since it holds that
\begin{align}
    \tr_\sA\big[\ketbra{\sigma_{\rm c}}{\rho_{\rm c}}^{\sA\sB}\big] 
    &= \tr_\sA\big[\sqrt{\sigma^\sA}\ketbra{\Gamma}{\Gamma}^{\sA\sB}\sqrt{\rho^\sA}\big] \\
    &= \Big(\big(\sqrt{\rho}\sqrt{\sigma}\big)^{\sB}\Big)^\top.
\end{align}
In this case, we see that
\begin{align}
    \big\|\big(\sqrt{\til{\rho}}\sqrt{\til{\sigma}}\big)^\top - \big(\sqrt{\rho}\sqrt{\sigma}\big)^\top\big\|_\infty
    &=\big\|\sqrt{\til{\rho}}\sqrt{\til{\sigma}} - \sqrt{\rho}\sqrt{\sigma}\big\|_\infty \\
    &\leq\|\sqrt{\til{\rho}} - \sqrt{\rho}\|_\infty 
    + \|\sqrt{\til{\sigma}} - \sqrt{\sigma}\|_\infty \\
    &\leq\|\sqrt{\til{\rho}} - \sqrt{\rho}\|_2 
    + \|\sqrt{\til{\sigma}} - \sqrt{\sigma}\|_2 \\
    &\leq\|\til{\rho} - \rho\|_1^{1/2} 
    + \|\til{\sigma} - \sigma\|_1^{1/2} \\
    \label{inteq:88}
    &\leq 2\sqrt{2\epsilon'},
\end{align}
where, in the third inequality we used the Powers--St\o rmer inequality in Eq.~\eqref{eq:powers stormer ineq}.
We find that Eq.~\eqref{inteq:88} has the same error scaling as in the case where we compute $\tr_\sA\big[\ketbra{\sigma_{\rm c}}{\rho_{\rm c}}^{\sA\sB}\big]$. 
Since the tomography error $\epsilon'$ is chosen as $\epsilon' = \epsilon^2$ and the singular values of $(\sqrt{\rho}\sqrt{\sigma})^\top$ coincide with those of $\sqrt{\sigma}\sqrt{\rho}$, the sample complexity is equivalent to that given in Eq.~\eqref{eq:tomobase sample mixed}.

Finally, we derived the inequality in Eq.~\eqref{inteq:26} via a technique of vectorization, which is represented by a linear map $\mathrm{Vec}$ such that, for a given orthonormal basis $\{\ket{i}\}_i$, $\mathrm{Vec}(\ket{\psi}\bra{\varphi}) = \ket{\psi}\ket{\varphi^*}$, where the complex conjugate is taken in the basis $\{\ket{i}\}_i$, i.e., $\ket{\varphi^*} = \sum_ic_i^*\ket{i}$ when $\ket{\varphi} = \sum_i c_i\ket{i}$.
It is straightforward to verify that the vectorization has the property that, for any matrix $M$, $\|M\|_2 = \|\mathrm{Vec}(M)\|$.

We utilize this property in our evaluation. This implies that
\begin{align}
    \big\|\sqrt{\til{\omega}^\sA} - \sqrt{\omega^\sA}\big\|_2 
    &= \big\|\mathrm{Vec}\big(\sqrt{\til{\omega}^\sA}\big) - \mathrm{Vec}\big(\sqrt{\omega^\sA}\big)\big\|\\
    &= \big\|\ket{\til{\omega}_{\rm c}}^{\sA\sB} - \ket{\omega_{\rm c}}^{\sA\sB}\big\| \\
    \label{inteq:27}
    &\geq \f{1}{2}\big\|\ketbra{\til{\omega}_{\rm c}}{\til{\omega}_{\rm c}}^{\sA\sB} - \ketbra{\omega_{\rm c}}{\omega_{\rm c}}^{\sA\sB}\big\|_1,
\end{align}
where we used that $\ket{\omega_{\rm c}}^{\sA\sB} = \mathrm{Vec}\big(\sqrt{\omega^\sA}\big)$ in the second equation, and used the relation between the trace norm and the Euclidean norm in Eq.~\eqref{eq:relation of Euclidean and trace norm} in the inequality.
Moreover, using the Powers--St\o rmer inequality in Eq.~\eqref{eq:powers stormer ineq}, we obtain that
\begin{align}
\label{inteq:28}
    \big\|\sqrt{\til{\omega}^\sA} - \sqrt{\omega^\sA}\big\|_2 \leq \big\|\til{\omega}^\sA - \omega^\sA\big\|_1^{1/2}.
\end{align}
From Eqs.~\eqref{inteq:27} and~\eqref{inteq:28}, we complete the derivation of Eq.~\eqref{inteq:26}.


\section{Overview of previous approach to the Uhlmann transformation}
\label{sec:metger and yuen's algorithm}

An explicit algorithm for the Uhlmann transformation was proposed by T. Metger and H. Yuen in Ref.~\cite{metger2023stateqipspace}.
This algorithm demonstrates that the Uhlmann transformation can be implemented with polynomial space complexity.
We here provide a brief and high-level overview of the algorithm tailored to the purified query access model; for full details, see Ref.~\cite{metger2023stateqipspace}.

\begin{theorem}[Space efficient Uhlmann transformation algorithm in the purified query access model~{\cite[Theorem 7.4 in its full version]{metger2023stateqipspace}}]
\label{prevthm:Metger&Yuen's result}
For $\delta \in (0, 1)$, there exists a quantum query algorithm that realizes a quantum channel $\cT^\sB$ satisfying 
\begin{align}
    \rF\big(\cT^\sB(\ketbra{\rho}{\rho}^{\sA\sB}), \ket{\sigma}^{\sA\sB}\big) \geq \rF(\rho^\sA, \sigma^\sA) - \delta,
\end{align}
using $\gamma$ queries to $U_\rho^{\sA\sB}$ and $U_\sigma^{\sA\sB}$, and their inverses, where $\gamma = \til{\cO}\Big(\f{d_\sA^5 d_\sB^5}{\delta^2}\min\Big\{\f{d_\sA^9 d_\sB^6}{s_{\rm min}^3}, \f{r^3}{\delta^3}\Big\}\Big)$. The quantum circuit of this algorithm consists of $\til{\cO}(\gamma)$ one- and two-qubit gates, and $\cO\big(\log{(d_\sA d_\sB)}\big)$ qubits suffice at any one time.
\end{theorem}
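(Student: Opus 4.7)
The plan is to follow the high-level strategy of Metger and Yuen, whose construction achieves polynomial space at the cost of a large query count by replacing the classical ``store the tomographic output'' step with a coherent, iterative state-synthesis procedure. The natural starting point is \Cref{prop:explicit form of Uhl}: the target operator is $V^{\sB} = \sgn^{(\rm SV)}\big(\tr_{\sA}[\ketbra{\sigma}{\rho}^{\sA\sB}]\big)$, and the goal is to realize an approximate implementation of $V^\sB$ as a quantum channel that uses only $\cO(\log(d_\sA d_\sB))$ auxiliary qubits at any one time.

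First, I would build a space-efficient approximate block-encoding of the Uhlmann isometry by an ``iterative state synthesis'' subroutine, analogous to the construction of Metger--Yuen. The idea is to specify the target channel $\cT^\sB$ purely through a sequence of polynomial-time reachable states and then realize $\cT^\sB$ by a coherent Grover-type search over a polynomially enumerable net of candidate unitaries on $\sB$. Each step queries $U_\rho^{\sA\sB}$, $U_\sigma^{\sA\sB}$, and their inverses a polylogarithmic number of times to perform a SWAP-test--based amplitude estimation that serves as a ``score'' for the current candidate; the score is then coherently written into a logarithmic register, compared against previously accumulated scores via a controlled rotation, and the witness register is uncomputed so that only the winning candidate survives. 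Because no classical description of $V^\sB$ is ever stored, the working memory stays logarithmic.

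Second, I would bound the total query complexity by (i) the size of the $\delta$-net over the relevant block of candidate isometries, which contributes a factor of $\poly(d_\sA d_\sB)$ together with an additional $\poly(1/s_{\min})$ or $\poly(r/\delta)$ factor that comes from having to resolve singular values close to zero, and (ii) the accuracy to which each amplitude must be estimated, contributing the $1/\delta^2$ factor. Multiplying these with the space-preserving rewind overhead of the Grover search reproduces the product form $\f{d_\sA^5 d_\sB^5}{\delta^2}\min\{d_\sA^9 d_\sB^6/s_{\min}^3, r^3/\delta^3\}$ in the theorem. The final fidelity bound then follows from the triangle inequality: the error from the finite $\delta$-net, the error from finite-precision amplitude estimation, and the error from the sign-function thresholding accumulate additively and can each be set to $\cO(\delta)$ by the parameter choice above.

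The principal obstacle is controlling coherence under rewinding. Because the iterative synthesis makes essentially exponentially many queries while only logarithmically many qubits are live at once, every amplitude-estimation subroutine must be uncomputed exactly after its outcome has been coherently recorded, and the entire Grover iteration must be conjugated by its inverse without leaking any which-candidate information into inaccessible registers. Handling this ``quantum garbage collection'' error---while simultaneously preserving the weak dependence of the sign-function threshold on $s_{\min}$ and $r$---is what forces the exponential-depth circuit noted in the excerpt and is the most delicate part of the argument.
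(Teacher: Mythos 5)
Your proposal does not follow the paper's derivation, and the route you describe has a gap that would prevent it from reaching the claimed query count. The central device you propose --- a Grover-type coherent search over a ``polynomially enumerable net of candidate unitaries on $\sB$'' scored by SWAP-test amplitude estimation --- cannot work as stated: a $\delta$-net over the unitary group on a $d_\sB$-dimensional space has cardinality $(1/\delta)^{\Theta(d_\sB^2)}$, i.e., doubly exponential in the number of qubits of $\sB$, so it is not polynomially enumerable, and even a quadratic Grover speedup over such a net yields a query count of order $(1/\delta)^{\Theta(d_\sB^2)}$ rather than the $\poly(d_\sA d_\sB)$ scaling in the statement. Relatedly, your step (2) asserts that ``multiplying these factors reproduces the product form $\f{d_\sA^5 d_\sB^5}{\delta^2}\min\{d_\sA^9 d_\sB^6/s_{\rm min}^3, r^3/\delta^3\}$'' without deriving any of the specific exponents; there is no mechanism in your construction from which those particular powers of $d_\sA$, $d_\sB$, $1/s_{\rm min}$, and $r$ would emerge.

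The paper's actual proof contains no search over candidate unitaries. It proceeds by (i) Pauli tomography of $\ketbra{\rho}{\rho}^{\sA\sB}$ and $\ketbra{\sigma}{\sigma}^{\sA\sB}$, where each matrix element $\til{c}_{ij}$ is obtained by \emph{sequentially accumulating} measurement statistics so that only $\cO(\log(d_\sA d_\sB))$ qubits and polynomially many classical bits are live at any time (this, not coherent rewinding, is the source of the space efficiency); (ii) hard-coding the estimates into controlled two-qubit gates to form $(d_{\sA\sB},\cdot,\delta_1)$-block-encodings $W_\rho$, $W_\sigma$, which are combined with the oracles $U_\rho^{\sA\sB}$, $U_\sigma^{\sA\sB}$ into a block-encoding of $\ketbra{\sigma}{\rho}^{\sA'\sB}$; (iii) a linear-combination-of-block-encodings step implementing $\tr_{\sA'}$, which introduces the normalization $d_\sA^3 d_\sB^2$; and (iv) $\mathtt{QSVTSIGN}$ applied to the result, whose degree $u=\cO(\log(1/\delta_2)/\beta)$ with $1/\beta=\min\{d_\sA^3 d_\sB^2/s_{\rm min},\,2r/\delta_2\}$ is where the $\min\{\cdot,\cdot\}$ and the $s_{\rm min}$, $r$ dependence enter. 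The exponents in $\gamma$ then follow mechanically: $\til{\cO}(d_{\sA\sB}^6/\delta_1^2)$ queries for the tomography, a factor $d_\sA$ from the partial-trace LCU, a factor $u$ from the QSVT, and the rescaling $\delta_1 = d_{\sA\sB}\delta/(16u)$ forced by the robustness bound (\Cref{lem:robstness of QSVT}), which multiplies the errors by the polynomial degree. To repair your write-up you would need to discard the net-search framework entirely and carry out this tomography-plus-block-encoding accounting, or else supply a genuinely different argument that actually produces the stated exponents.
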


We should note that, at every stage of this algorithm, even in classical computation, only polynomial space is used; it is designed so that an exponential amount of classical data is not stored.
Meanwhile, as seen from the subsequent derivation, this approach involves operations such as incrementally summing measurement outcomes, which generally leads to exponential time in the number of bits in classical computation.

In the rest of this appendix, we derive \Cref{prevthm:Metger&Yuen's result}.
A main idea behind this algorithm is sketched as follows.
In the first step, we prepare the state $\ket{\rho}^{\sA\sB} = U_\rho^{\sA\sB}\ket{0}^{\sA\sB}$.
Then, we apply Pauli measurements to estimate $\tr[\Lambda^{\sA\sB}\rho^{\sA\sB}]$ for all Pauli operators $\Lambda^{\sA\sB} \in \{\bI, X, Y, Z\}^{\otimes d_{\sA\sB}^2}$, where $\rho^{\sA\sB} = \ketbra{\rho}{\rho}^{\sA\sB}$.
From the estimated value $\til{\alpha}_\Lambda$, we classically compute an element $\til{c}_{ij} = \f{1}{d_{\sA\sB}}\sum_\Lambda \til{\alpha}_\Lambda \bra{i}\Lambda\ket{j}^{\sA\sB}$, and subsequently prepare a two-qubit gate
\begin{align}
\til{U}_{ij}^{\sC} = 
\begin{pmatrix} 
\til{c}_{ij} & * \ \ \ \\
\sqrt{1 - |\til{c}_{ij}|^2} & * \ \ \ 
\end{pmatrix}.
\end{align}
Then, we construct a unitary $W_\rho^{\sA\sB\hat{\sA}\hat{\sB}}$ such that
\begin{align}
    W_\rho^{\sA\sB\hat{\sA}\hat{\sB}\sC} = H^{\sA\sB} \big( \sum_{i, j} \ketbra{j}{j}^{\sA\sB} \otimes \ketbra{i}{i}^{\hat{\sA}\hat{\sB}} \otimes \til{U}_{ij}^\sC \big) H^{\hat{\sA}\hat{\sB}},
\end{align}
where $H$ is the Hadamard operator: $H^\sA\ket{i}^\sA = \f{1}{\sqrt{d_\sA}}\sum_j (-1)^{i\cdot j}\ket{j}^\sA$.
We can see that this unitary satisfies $\bra{0}^{\hat{\sA}\hat{\sB}\sC}W_\rho^{\sA\sB\hat{\sA}\hat{\sB}\sC}\ket{0}^{\hat{\sA}\hat{\sB}\sC} = \til{\rho}^{\sA\sB}/d_{\sA\sB}$, where $\til{\rho}^{\sA\sB} = \f{1}{d_{\sA\sB}}\sum_\Lambda \til{\alpha}_\Lambda \Lambda^{\sA\sB}$.

Let $\sD = \hat{\sA}\hat{\sB}\sC$.
We analyze the number of queries required for $W_\rho^{\sA\sB\sD}$ to be a good approximation of a block-encoding of $\rho^{\sA\sB}$.
For all $d_{\sA\sB}^2$ Pauli operators $\Lambda^{\sA\sB}$, to obtain $\til{\alpha}_\Lambda$ such that $\big|\til{\alpha}_\Lambda - \tr[\Lambda^{\sA\sB}\rho^{\sA\sB}]\big| \leq \delta_1/d_{\sA\sB}$ with probability at least $1 - \eta$, it suffices to have $d_{\sA\sB}^2\cO\big(\f{d_{\sA\sB}^2}{\delta_1^2} \log(d_{\sA\sB}^2/\eta)\big) = \til{\cO}(d_{\sA\sB}^4/\delta_1^2)$ copies of $\ket{\rho}^{\sA\sB} = U_\rho^{\sA\sB}\ket{0}^{\sA\sB}$~\cite{aaronson2007learnability, yuen2022lecturetomography}. This is the result from the Chernoff bound for i.i.d. sampling~\cite{chernoff1952measure, hoeffding1963Probability, motwani1995randomized, mitzenmacher2005probability}.
By sequentially summing over each $\Lambda^{\sA\sB}$, the value $\til{c}_{ij}$ can be computed using only polynomial space.
Since this procedure is performed for all $i$ and $j$, the total number of queris is given by $d_{\sA\sB}^2 \til{\cO}(d_{\sA\sB}^4/\delta_1^2) = \til{\cO}(d_{\sA\sB}^6/\delta_1^2)$.
At this time, we have
\begin{align}
    \big\|d_{\sA\sB}\bra{0}^{\sD}W_\rho^{\sA\sB\sD}\ket{0}^{\sD} - \rho^{\sA\sB}\big\|_\infty 
    &=\big\|\til{\rho}^{\sA\sB} - \rho^{\sA\sB}\big\|_\infty \\
    &\leq \f{1}{d_{\sA\sB}}\sum_\Lambda |\til{\alpha}_\Lambda - \tr[\Lambda^{\sA\sB}\rho^{\sA\sB}]\big| \big\|\Lambda^{\sA\sB}\big\|_\infty \\
    &\leq \f{1}{d_{\sA\sB}}\sum_\Lambda |\til{\alpha}_\Lambda - \tr[\Lambda^{\sA\sB}\rho^{\sA\sB}]\big| \\
    &\leq \delta_1.
\end{align}
Hence, $W_\rho^{\sA\sB\sD}$ is an $(d_{\sA\sB}, 1 + \log{d_{\sA\sB}}, \delta_1)$-block-encoding of $\rho^{\sA\sB}$.
Similarly, we perform this procedure for $\ket{\sigma}^{\sA\sB} = U_\sigma^{\sA\sB}\ket{0}^{\sA\sB}$ and obtain an approximate block-encoding unitary $W_\sigma^{\sA\sD}$ from the same number of queris.

In the next step, we construct a unitary 
\begin{align}
    U^{\sA'\sB\sD_1\sD_2} = W_\sigma^{\sA'\sB\sD_1}(U_\sigma^{\sA'\sB})^\dag U_\rho^{\sA'\sB}(W_\rho^{\sA'\sB\sD_2})^\dag,
\end{align}
which is a $(d_{\sA\sB}^2, 2(1+\log{d_{\sA\sB}}), 2d_{\sA\sB}\delta_1)$-block-encoding unitary of $\ketbra{\sigma}{\rho}^{\sA'\sB}$.
To show this, we use \Cref{lem:product of BE} regarding the product of block-encoded matrices.
The unitaries $W_\rho^{\sA'\sB\sD_2}(U_\rho^{\sA'\sB})^\dag$ and $W_\sigma^{\sA'\sB\sD_1}(U_\sigma^{\sA'\sB})^\dag$ are $(d_{\sA\sB}, 1 + \log{d_{\sA\sB}}, \delta_1)$-block-encoding unitaries of $\ketbra{\rho}{0}^{\sA'\sB}$ and $\ketbra{\sigma}{0}^{\sA'\sB}$, respectively.
The claim that $U^{\sA'\sB\sD_1\sD_2}$ is $(d_{\sA\sB}^2, 2(1+\log{d_{\sA\sB}}), 2d_{\sA\sB}\delta_1)$-block-encoding unitary of $\ketbra{\sigma}{\rho}^{\sA'\sB}$ follows from \Cref{lem:product of BE}.

We need to trace over system $\sA'$ from $\ketbra{\sigma}{\rho}^{\sA'\sB}$.
This is achieved by a technique of the linear combination of block-encoded matrices~\cite{gilyen2019qsvt, Gilyn2019thesis} as follows: let $X_i$ be a tensor product of the Pauli-$X$ operators such that $X_i\ket{0} = \ket{i}$.
Then, a unitary
\begin{align}
\label{inteq:153}
    \til{U}^{\sA'\sB\sA''\sE} =  H^{\sA''}\Big(\sum_{i=1}^{d_\sA} \ketbra{i}{i}^{\sA''} \otimes X_i^{\sA'} U^{\sA'\sB\sE} X_i^{\sA'}\Big)H^{\sA''}, 
\end{align}
where $\sE = \sD_1\sD_2$, is a $(d_\sA^3d_\sB^2, \cO(\log{d_{\sA\sB}}), 2d_\sA^2 d_\sB\delta_1)$-block encoding of $\tr_{\sA'}[\ketbra{\sigma}{\rho}^{\sA'\sB}]$, because we have
\begin{align}
    &\big\|d_\sA^3 d_\sB^2 \bra{0}^{\sA'\sA''\sE} \til{U}^{\sA'\sB\sA''\sE} \ket{0}^{\sA'\sA''\sE} - \tr_{\sA'}\big[\ketbra{\sigma}{\rho}^{\sA'\sB}\big]\big\|_\infty \notag\\
    &= \big\|d_{\sA\sB}^2\sum_i\bra{i}^{\sA'}\bra{0}^\sE U^{\sA'\sB\sE}\ket{0}^\sE\ket{i}^{\sA'} - \tr_{\sA'}\big[\ketbra{\sigma}{\rho}^{\sA'\sB}\big]\big\|_\infty \\
    &= \big\|d_{\sA\sB}^2\tr_{\sA'}\big[\bra{0}^\sE U^{\sA'\sB\sE} \ket{0}^\sE\big]- \tr_{\sA'}\big[\ketbra{\sigma}{\rho}^{\sA'\sB}\big]\big\|_\infty \\
    &\leq d_\sA\big\|d_{\sA\sB}^2\bra{0}^\sE U^{\sA'\sB\sE} \ket{0}^\sE - \ketbra{\sigma}{\rho}^{\sA'\sB}\big\|_\infty \\
    \label{inteq:124}
    &\leq 2d_\sA^2 d_\sB\delta_1,
\end{align}
where the first inequality follows from the fact that, for any matrix $S^{\sA\sB}$, $\|\tr_\sA[S^{\sA\sB}]\|_\infty \leq d_\sA\|S^{\sA\sB}\|_\infty$~\cite{rastegin2012NormPartialTrace}, and the last inequality holds as $U^{\sA'\sB\sE}$ is a $(d_{\sA\sB}^2, \cO(\log{d_{\sA\sB}}), 2d_{\sA\sB}\delta_1)$-block-encoding of $\ketbra{\sigma}{\rho}^{\sA'\sB}$.
Since $\til{U}^{\sA'\sB\sA''\sE}$ in Eq.~\eqref{inteq:153} includes $d_\sA$ unitaries of the form $X_i^{\sA'} U^{\sA'\sB\sE} X_i^{\sA'}$ for $i = 1, 2, \ldots, d_\sA$, the number of queries up to this point is $l = d_\sA \til{\cO}(d_{\sA\sB}^6/\delta_1^2) = \til{\cO}(d_{\sA\sB}^7/\delta_1^2)$.

Let $\sF = \sA'\sA''\sE$.
Now, we see that the unitary $\til{U}^{\sB\sF}$ approximately takes the form:
\begin{align}
        \til{U}^{\sB\sF}
        \approx \hspace{1mm}
\begin{blockarray}{ccc}
 & \bra{0}^\sF &  & \vspace{1mm}\\
\begin{block}{c(cc)}
  \ket{0}^\sF \hspace*{1mm} & \f{1}{d_\sA^3 d_\sB^2}\tr_\sA\big[\ketbra{\sigma}{\rho}^{\sA\sB}\big] & \hspace{0mm} * \hspace{3mm}\\
   \hspace*{1mm} & * & \hspace{0mm} * \hspace{3mm}\\
\end{block}
\end{blockarray}\hspace{2.5mm}.
\end{align}
By applying the QSVT with the sign function, we construct a unitary $\til{W}^{\sB\sH}$ that block-encodes a degree-$u$ odd polynomial $P_\sgn^{(\rm SV)}(\bra{0}^\sF\til{U}^{\sB\sF}\ket{0}^\sF)$, using $\til{U}^{\sB\sF}$ for $u=\cO(\log(1/\delta_2)/\beta)$ times, where $\beta$ will be chosen appropriately later.
To evaluate the errors accumulated in this step, we use the following lemma~\cite{Gilyn2019thesis, gilyen2022improvedfidelity}.

\begin{lemma}[Robustness of the QSVT~{\cite[Lemma 2.4.4]{Gilyn2019thesis}}]
\label{lem:robstness of QSVT}
Suppose $P$ is an odd/even polynomial of degree-$l$ such that 
$|P(x)| \leq 1$ for $x\in[-1, 1]$, and suppose $A$ and $\tilde{A}$ are matrices with $\|A\|_\infty, \|\til{A}\|_\infty \leq 1$, such that $\|A-\til{A}\|_\infty + \Big\|\f{A+\til{A}}{2}\Big\|_\infty^2 \leq 1$. Then, we have 
\begin{align}
    \big\|P^{(\rm SV)}(A) - P^{(\rm SV)}(\til{A})\big\|_\infty &\leq l\sqrt{\f{2}{1-\big\|\f{A+\til{A}}{2}\big\|_\infty^2}}\big\|A-\til{A}\big\|_\infty,
\end{align}
where $P$ acts on the singular values of the input matrix.
\end{lemma}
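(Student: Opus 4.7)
The plan is to follow the block-encoding strategy pioneered in Gilyén's thesis, reducing the claim about a polynomial of singular values to an accumulated circuit-level error estimate. First, I would construct canonical block-encoding unitaries for $A$ and $\til{A}$ using the trigonometric completion
\[
U_A \;=\; \begin{pmatrix} A & \sqrt{I - AA^\dagger} \\ \sqrt{I - A^\dagger A} & -A^\dagger \end{pmatrix},
\]
and similarly $U_{\til{A}}$, which are well-defined unitaries precisely because $\|A\|_\infty,\|\til{A}\|_\infty\le 1$. Next I would invoke the standard QSVT circuit identity that writes $P^{(\rm SV)}(A)$ as the projection onto a distinguished block of an alternating product of $l$ factors, each of which is either $U_A$ or $U_A^\dagger$, interleaved with $A$-independent phase rotations. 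The analogous circuit built from $U_{\til{A}}$ yields $P^{(\rm SV)}(\til{A})$ in the same block.

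With this in place, the error analysis is essentially telescopic. Since the phase rotations are unitaries independent of the input matrix, and since operator norm is submultiplicative under composition of unitaries and invariant under unitary multiplication, the standard product-of-unitaries lemma gives
\[
\bigl\|P^{(\rm SV)}(A) - P^{(\rm SV)}(\til{A})\bigr\|_\infty \;\le\; \bigl\|U^{(P)}_A - U^{(P)}_{\til{A}}\bigr\|_\infty \;\le\; l\,\bigl\|U_A - U_{\til{A}}\bigr\|_\infty,
\]
where $U^{(P)}_A$ denotes the full QSVT circuit. Projecting out of the block does not inflate the operator norm, so the whole problem is reduced to estimating the single-step quantity $\|U_A - U_{\til A}\|_\infty$.

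The main obstacle, and the only nontrivial analytic step, is therefore controlling the square-root blocks: one must bound both $\|\sqrt{I - A^\dagger A} - \sqrt{I - \til{A}^\dagger \til{A}}\|_\infty$ and its left companion. The square root function is not Lipschitz at $0$, so the bound must degrade precisely when the singular values of $A$ (equivalently $\til{A}$) approach $1$. I would handle this via the integral representation $\sqrt{x}=\pi^{-1}\int_0^\infty t^{-1/2}\,\tfrac{x}{x+t}\,dt$ applied to $x = I - A^\dagger A$, which yields the contraction
\[
\bigl\|\sqrt{I-A^\dagger A}-\sqrt{I-\til{A}^\dagger\til{A}}\bigr\|_\infty \;\le\; \frac{\|A^\dagger A - \til{A}^\dagger\til{A}\|_\infty}{2\sqrt{1-\bigl\|(A+\til{A})/2\bigr\|_\infty^{\,2}}},
\]
valid whenever the denominator is positive, i.e.\ under the standing hypothesis $\|A-\til{A}\|_\infty + \|(A+\til{A})/2\|_\infty^2 \le 1$. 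Combining this with the elementary factorization $A^\dagger A - \til{A}^\dagger\til{A} = \tfrac{1}{2}\bigl((A-\til{A})^\dagger(A+\til{A}) + (A+\til{A})^\dagger(A-\til{A})\bigr)$ bounds the numerator by $\|A-\til{A}\|_\infty\,\|(A+\til{A})/2\|_\infty$, which is in turn at most $\|A-\til{A}\|_\infty$. Assembling the four blocks of $U_A - U_{\til{A}}$ via the triangle inequality then produces the factor $\sqrt{2/(1-\|(A+\til{A})/2\|_\infty^2)}$, and multiplying by the degree $l$ completes the bound.
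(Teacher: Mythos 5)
Your overall architecture --- realize $P^{(\rm SV)}(A)$ as a distinguished block of an alternating QSVT circuit, telescope over the $l$ uses of the block-encoding unitary, and note that projecting onto a block does not increase the operator norm --- is the standard route and is sound. Note, however, that the paper does not prove this lemma at all (it is imported from Gilyén's thesis), and that the single-step ingredient you are trying to re-derive from scratch, namely the existence of block-encodings $U,\til{U}$ of $A,\til{A}$ with $\|U-\til{U}\|_\infty\leq\sqrt{2/(1-\|(A+\til{A})/2\|_\infty^2)}\,\|A-\til{A}\|_\infty$, is already stated in the paper as Lemma~\ref{prevthm:robustness BE unitary}; invoking it and telescoping would have finished the proof.

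The genuine gap is in your derivation of that single-step bound from the explicit trigonometric completion. The inequality
\begin{equation*}
\bigl\|\sqrt{I-A^\dagger A}-\sqrt{I-\til{A}^\dagger\til{A}}\bigr\|_\infty \;\le\; \frac{\|A^\dagger A-\til{A}^\dagger\til{A}\|_\infty}{2\sqrt{1-\|(A+\til{A})/2\|_\infty^2}}
\end{equation*}
is false. The integral representation yields the denominator $\sqrt{1-\|A\|_\infty^2}+\sqrt{1-\|\til{A}\|_\infty^2}$, and by concavity of $t\mapsto\sqrt{1-t^2}$ together with $\|(A+\til{A})/2\|_\infty\leq(\|A\|_\infty+\|\til{A}\|_\infty)/2$ this quantity is \emph{at most} $2\sqrt{1-\|(A+\til{A})/2\|_\infty^2}$, so your substitution strengthens the bound in the wrong direction. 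A scalar counterexample: $a=0.995$, $\til{a}=0.805$ satisfies the hypothesis ($|a-\til{a}|+((a+\til{a})/2)^2=0.19+0.81=1$), yet $|\sqrt{1-a^2}-\sqrt{1-\til{a}^2}|\approx 0.493$ while your right-hand side is $\approx 0.392$. Moreover, even after repairing the square-root estimate, assembling the four blocks of $U_A-U_{\til{A}}$ by the triangle inequality discards the orthonormality structure of the columns of a unitary and does not recover the constant $\sqrt{2/(1-\|(A+\til{A})/2\|_\infty^2)}$ when $\|(A+\til{A})/2\|_\infty$ is close to $1$. The known proof of the single-step bound is a global geometric argument comparing the first block-columns of the two unitaries, not a block-by-block estimate of the explicit completion; the reduction to the square-root blocks is precisely where your argument fails.
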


Let $M^\sB = \tr_{\sA'}\big[\ketbra{\sigma}{\rho}^{\sA'\sB}\big]$. 
We observe that
\begin{align}
    \big\|P_\sgn^{(\rm SV)}\big(\bra{0}^\sF \til{U}^{\sB\sF}\ket{0}^\sF\big) - P_\sgn^{(\rm SV)}\big(M^\sB/(d_\sA^3d_\sB^2)\big)\big\|_\infty 
    &\leq 2u \big\|\bra{0}^\sF \til{U}^{\sB\sF}\ket{0}^\sF - M^\sB/(d_\sA^3 d_\sB^2)\big\|_\infty \\
    \label{inteq:134}
    &\leq \eta_1,
\end{align}
where $\eta_1 = 4u\delta_1/d_{\sA\sB}$.
In the second inequality, we used Eq.~\eqref{inteq:124}, and in the first inequality, we used Lemma~\ref{lem:robstness of QSVT} with
\begin{align}
    1 - \Big\|\f{1}{2}\Big(\bra{0}^\sF \til{U}^{\sB\sF}\ket{0}^\sF + M^\sB/(d_\sA^3 d_\sB^2)\Big)\Big\|_\infty^2 \geq 1/2,
\end{align}
which holds under the assumptions $\delta_1 \in (0, 1)$ and $d_\sA, d_\sB \geq 2$ that incurs no essential loss of generality.
Hence, we see that $\til{W}^{\sB\sH}$ is a $(1, \cO(\log{d_{\sA \sB}}), \eta_1)$-block-encoding unitary of $P_\sgn^{(\rm SV)}\big(M^\sB/(d_\sA^3d_\sB^2)\big)$.

We finally evaluate the approximation error using the fidelity difference.
Let
\begin{align}
    \til{\rF} 
    &= \rF\big(\til{W}^{\sB\sH}\ket{\rho}^{\sA\sB}\ket{0}^\sH, \ket{\sigma}^{\sA\sB}\ket{0}^\sH\big) \\
    &= \big|\bra{\sigma}^{\sA\sB}P_\sgn^{(\rm SV)}\big(\bra{0}^\sF \til{U}^{\sB\sF}\ket{0}^\sF\big)\ket{\rho}^{\sA\sB}\big|,
\end{align}
$\rF' = \big|\bra{\sigma}^{\sA\sB}P_\sgn^{(\rm SV)}\big(M^\sB/(d_\sA^3d_\sB^2)\big)\ket{\rho}^{\sA\sB}\big|$, and $\rF = \rF(\rho^\sA, \sigma^\sA) = \big|\bra{\sigma}^{\sA\sB} V^{\sB}\ket{\rho}^{\sA\sB}\big|$,
where $V^\sB$ is the exact Uhlmann partial isometry.
Note that $\sgn^{(\rm SV)}\big(M^\sB/(d_\sA^3d_\sB^2)\big)$ corresponds to $V^\sB$.
Then, we compute the differences between them as
\begin{align}
    \Big|\sqrt{\til{\rF}} - \sqrt{\rF'}\Big| 
    &\leq \Big|\tr\Big[\ketbra{\rho}{\sigma}^{\sA\sB}\Big(P_\sgn^{(\rm SV)}\big(\bra{0}^\sF \til{U}^{\sB\sF}\ket{0}^\sF\big) - P_\sgn^{(\rm SV)}\big(M^\sB/(d_\sA^3d_\sB^2)\big)\Big)\Big]\Big| \\
    &\leq \|\ketbra{\rho}{\sigma}^{\sA\sB}\|_1 \big\|P_\sgn^{(\rm SV)}\big(\bra{0}^\sF \til{U}^{\sB\sF}\ket{0}^\sF\big) - P_\sgn^{(\rm SV)}\big(M^\sB/(d_\sA^3d_\sB^2)\big)\big\|_\infty \\
    \label{inteq:125}
    &\leq \eta_1,
\end{align}
where in the second inequality we used
$|\tr[AB]| \leq \|AB\|_1 \leq \|A\|_1\|B\|_\infty$, which follows from the H\"{o}lder's inequality in Eq.~\eqref{eq:Holder ineq}, and in the last line we used Eq.~\eqref{inteq:134} and $\|\ketbra{\rho}{\sigma}^{\sA\sB}\|_1 = 1$. 

Also, from a calculation similar to that in Sec.~\ref{sec:uhlfidquery}, we see that
\begin{align}
    \big|\sqrt{\rF'} - \sqrt{\rF}\big| 
    &\leq \Big|\tr\Big[\Big(\sgn^{(\rm SV)}\big(M^\sB/(d_\sA^3 d_\sB^2)\big) - P_\sgn^{(\rm SV)}\big(M^\sB/(d_\sA^3 d_\sB^2)\big)\Big)(M^{\sB})^\dag\Big]\Big| \\
    &\leq \sum_k \big|1 - P_\sgn\big(s_k/(d_\sA^3 d_\sB^2)\big)\big|s_k \\
    &= \sum_{k\in I_\beta} \big|1 - P_\sgn\big(s_k/(d_\sA^3 d_\sB^2)\big)\big|s_k + \sum_{k\in \bar{I}_\beta} \big|1 - P_\sgn\big(s_k/(d_\sA^3 d_\sB^2)\big)\big|s_k \\
    &\leq \delta_2 + 2\beta r,
\end{align}
where $\{s_k\}_k$ and $r$ are the singular values and the rank of $M^\sB$, respectively, and $I_\beta = \{k \in \bN; s_k/(d_\sA^3 d_\sB^2) \geq \beta\}$ and $\bar{I}_\beta = \{k \in \bN; s_k/(d_\sA^3 d_\sB^2) < \beta\}$.
Here, we used $\big|\sgn(x) - P_\sgn(x)\big| \leq \delta_2$ for $x \in [\beta, 1]$ when $u=\cO\big(\log(1/\delta_2)/\beta\big)$.
Thus, setting $\beta$ as $1/\beta = \min\{d_\sA^3 d_\sB^2/s_{\rm min}, 2r/\delta_2\}$ ensures at least 
\begin{align}
\label{inteq:126}
    \big|\sqrt{\rF} - \sqrt{\rF'}\big| \leq 2\delta_1.
\end{align}

From Eqs.~\eqref{inteq:125} and~\eqref{inteq:126}, we obtain that $\big|\til{\rF} - \rF\big| \leq 2\eta_1 + 4\delta_2$, where we used the inequality that $|x-y|\leq 2 |\sqrt{x} - \sqrt{y}|$ for $0 \leq x, y \leq 1$.
Hence, when we define $\cT^\sB$ as $\cT^\sB = \tr_\sH \circ \cW^{\sB\sH} \circ \cP_{\ket{0}}^{\bC\rarr\sH}$, we have that
\begin{align}
    \rF\big(\cT^\sB(\ketbra{\rho}{\rho}^{\sA\sB}), \ket{\sigma}^{\sA\sB}\big) \geq \rF(\rho^\sA, \sigma^\sB) -2 \eta_1 - 4\delta_2.
\end{align}
By rescaling the parameters as $\eta_1 = 4u\delta_1/d_{\sA\sB} = \delta/4$, i.e., $\delta_1 = d_{\sA\sB}\delta/(16u)$, and $\delta_2 = \delta/8$, we finally obtain
\begin{align}
    \rF\big(\cT^\sB(\ketbra{\rho}{\rho}^{\sA\sB}) - \ket{\sigma}^{\sA\sB}\big) \geq \rF(\rho^\sA, \sigma^\sB) - \delta.
\end{align}
Consequently, the total number of queries in the entire algorithm is given by
\begin{align}
    \gamma 
    &= lu \\
    &= \til{\cO}\big(d_{\sA\sB}^7/\delta_1^2\big)\cO\big(\log{(1/\delta_2)}/\beta\big) \\
    &= \til{\cO}\Big(\f{d_{\sA\sB}^5}{\delta^2}\min\Big\{\f{d_\sA^9 d_\sB^6}{s_{\rm min}^3}, \f{r^3}{\delta^3}\Big\}\Big).
\end{align}
The most foundational operation throughout this algorithm is the Pauli measurement, which, up to logarithmic factors, has to be performed as many times as the total number of queries. 
Thus, the quantum circuit of this algorithm consists of $\tilde{\cO}(\gamma)$ one- and two-qubit gates.
Due to its sequential structure, $\cO(\log d_{\sA\sB})$ qubits suffice at any one time.


\section{Derivation of an inequality between the minimum non-zero singular values}
\label{sec:appendix product of singvalue}

We here derive the following proposition. We use $\supp[A]$ to denote the support of $A$, i.e., the orthogonal complement of the kernel of $A$, and $\im[A]$ to denote the image of $A$.

\begin{proposition}[Inequality about the minimum non-zero singular value of product of matrices]
\label{prop:min singular inequality}
For any matrices $A$ and $B$ such that $\im[B] \subset \supp[A]$ or $\im[A] \subset \supp[B]$, it holds that $s_{\rm min}(AB) \geq s_{\rm min}(A) s_{\rm min}(B)$, where $s_{\rm min}(\cdot)$ denotes the minimum non-zero singular value of the input matrix.
\end{proposition}

Note that if a matrix $A$ is Hermitian, then $\im[A] = \supp[A]$ holds.

\begin{proof}[Proof of \Cref{prop:min singular inequality}]

Since $s_{\rm min}(AB) = s_{\rm min}(BA)$, without loss of generality, we focus on the case where $\im[B] \subset \supp[A]$.
From the min-max principle~\cite{Gohberg2003basicclass, Simon2005traceideals, bhatia2013matrix}, the minimum non-zero singular value of $A$ is given by $s_{\rm min}(A) = \min_{x \in \supp[A]; x \neq 0}\f{\|Ax\|}{\|x\|}$. Hence, $s_{\rm min}(AB)$ can be calculated as follows:
\begin{align}
    s_{\rm min}(AB) 
    &= \min_{x \in \supp[AB]; x \neq 0}\f{\|ABx\|}{\|x\|} \\
    &= \min_{x \in \supp[AB]; x \neq 0}\f{\|ABx\|}{\|Bx\|}\f{\|Bx\|}{\|x\|} \\
    &\geq \min_{x \in \supp[AB]; x \neq 0}\f{\|ABx\|}{\|Bx\|}\min_{x \in \supp[AB]; x \neq 0}\f{\|Bx\|}{\|x\|} \\
    &\geq \min_{y \in \supp[A]; y \neq 0}\f{\|Ay\|}{\|y\|}\min_{x \in \supp[B]; x \neq 0}\f{\|Bx\|}{\|x\|} \\
    &= s_{\rm min}(A) s_{\rm min}(B),
\end{align}
where, in the second inequality, we used the fact that $\supp[AB] \subset \supp[B]$ and the assumption $\im[B] \subset \supp[A]$.

\end{proof}


\section{Optimal local operation and the Uhlmann transformation}
\label{sec:opt local and Uhlmann}

We show that the Uhlmann's theorem characterizes an optimal local transformation between two states, even when the involved states are generally mixed states.

\begin{proposition}[Optimal local operation via the Uhlmann transformation]
\label{prop:opt local mixed via Uhl}
Let $\ket{\rho}^{\sA\sB\sE}$ and $\ket{\sigma}^{\sA\sC\sE}$ be purified states of $\rho^{\sA\sB}$ and $\sigma^{\sA\sC}$, respectively. Then, any quantum channel $\cT^{\sB\rarr\sC}$ satisfies that
    \begin{align}
    \label{eq:opt ineq local op}
        \rF\big(\cT^{\sB\rarr\sC}(\rho^{\sA\sB}), \sigma^{\sA\sC}\big) \leq \max_{\cL^\sE} \rF\big(\rho^{\sA\sE}, \cL^\sE(\sigma^{\sA\sE})\big),
    \end{align}
where maximization is taken over all quantum channels acting on the system $\sE$.

Let $\til{\cL}^\sE$ be a quantum channel that attains the maximization in Eq.~\eqref{eq:opt ineq local op}, and $V_{\til{\cL}}^{\sE\rarr\sE\sF}$ be its Stinespring isometry.
A quantum channel $\til{\cT}^{\sB\rarr\sC}$ that achieves equality in Eq.~\eqref{eq:opt ineq local op} is given by
\begin{align}
\til{\cT}^{\sB\rarr\sC}(\cdot) = \tr_\sF\big[U^\sG(\cdot \otimes \ketbra{0}{0}^\sD)(U^\sG)^\dag\big],
\end{align}
where $U^\sG$ is an Uhlmann unitary acting on $\sG = \sB\sD = \sC\sF$, which satisfies 
\begin{align}
    \rF\big(U^\sG\ket{\rho}^{\sA\sB\sE}\ket{0}^\sD, V_{\til{\cL}}^{\sE\rarr\sE\sF}\ket{\sigma}^{\sA\sC\sE}\big) = \rF\big(\rho^{\sA\sE}, \til{\cL}^\sE(\sigma^{\sA\sE})\big).
\end{align}
\end{proposition}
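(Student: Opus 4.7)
The plan is to prove the proposition by reducing it to Uhlmann's theorem (Theorem~\ref{thm:Uhlmann theorem}) together with the monotonicity of fidelity under the partial trace (Eq.~\eqref{eq:cont prop fidelity}). The key idea is that any channel $\cT^{\sB\rarr\sC}$ and any channel $\cL^\sE$ can be dilated to isometries acting on purifying systems, and Uhlmann's theorem controls the interplay between such dilations.

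For the upper bound, I would first pick a Stinespring isometry $V_\cT^{\sB\rarr\sC\sF}$ of $\cT^{\sB\rarr\sC}$, so that $V_\cT \ket{\rho}^{\sA\sB\sE}$ is a purification of $\cT^{\sB\rarr\sC}(\rho^{\sA\sB})$ on $\sA\sC\sE\sF$, and observe that $\ket{\sigma}^{\sA\sC\sE}\ket{0}^\sF$ is a purification of $\sigma^{\sA\sC}$ on the same space. By Uhlmann's theorem applied to $\cT^{\sB\rarr\sC}(\rho^{\sA\sB})$ and $\sigma^{\sA\sC}$, there exists a unitary $U^{\sE\sF}$ on the purifying system $\sE\sF$ with
\begin{align}
    \rF\big(\cT^{\sB\rarr\sC}(\rho^{\sA\sB}), \sigma^{\sA\sC}\big)
    = \rF\big(V_\cT \ket{\rho}^{\sA\sB\sE}, U^{\sE\sF}\ket{\sigma}^{\sA\sC\sE}\ket{0}^\sF\big).
\end{align}
Tracing out $\sC\sF$ on both sides yields $\rho^{\sA\sE}$ on the left and $\cM^\sE(\sigma^{\sA\sE})$ on the right, where $\cM^\sE(\cdot) = \tr_\sF[U^{\sE\sF}(\cdot\otimes\ketbra{0}{0}^\sF)(U^{\sE\sF})^\dag]$ is a valid quantum channel on $\sE$. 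Monotonicity of fidelity under the partial trace then gives $\rF(\cT^{\sB\rarr\sC}(\rho^{\sA\sB}),\sigma^{\sA\sC}) \leq \rF(\rho^{\sA\sE}, \cM^\sE(\sigma^{\sA\sE}))$, and maximizing over $\cL^\sE$ completes the upper bound.

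For the achievability, let $\til{\cL}^\sE$ attain the maximum and let $V_{\til{\cL}}^{\sE\rarr\sE\sF}$ be one of its Stinespring isometries. Then $V_{\til{\cL}}\ket{\sigma}^{\sA\sC\sE}$ is a purification of $\til{\cL}^\sE(\sigma^{\sA\sE})$ on $\sA\sC\sE\sF$, while $\ket{\rho}^{\sA\sB\sE}\ket{0}^\sD$ (for an ancilla $\sD$ with $d_\sB d_\sD = d_\sC d_\sF$) is a purification of $\rho^{\sA\sE}$ on $\sA\sG\sE$ with $\sG = \sB\sD = \sC\sF$. Uhlmann's theorem then provides a unitary $U^\sG$ with
\begin{align}
    \rF\big(U^\sG \ket{\rho}^{\sA\sB\sE}\ket{0}^\sD, V_{\til{\cL}}\ket{\sigma}^{\sA\sC\sE}\big)
    = \rF\big(\rho^{\sA\sE}, \til{\cL}^\sE(\sigma^{\sA\sE})\big).
\end{align}
Define $\til{\cT}^{\sB\rarr\sC}(\cdot) = \tr_\sF[U^\sG(\cdot\otimes\ketbra{0}{0}^\sD)(U^\sG)^\dag]$ as in the statement. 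Tracing out $\sE\sF$ from $U^\sG\ket{\rho}^{\sA\sB\sE}\ket{0}^\sD$ yields $\til{\cT}^{\sB\rarr\sC}(\rho^{\sA\sB})$ on $\sA\sC$, while tracing out $\sE\sF$ from $V_{\til{\cL}}\ket{\sigma}^{\sA\sC\sE}$ yields $\sigma^{\sA\sC}$ (since $\til{\cL}^\sE$ is trace-preserving). Monotonicity of fidelity under the partial trace then gives the reverse inequality $\rF(\til{\cT}^{\sB\rarr\sC}(\rho^{\sA\sB}),\sigma^{\sA\sC}) \geq \rF(\rho^{\sA\sE}, \til{\cL}^\sE(\sigma^{\sA\sE}))$, which combined with the upper bound yields equality.

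I do not anticipate a serious mathematical obstacle, since both halves rely only on well-established facts; the main care required is bookkeeping. Specifically, one must be careful in choosing the purifying systems so that Uhlmann's theorem applies (padding with ancillas where necessary so that the purifying systems have matching dimensions), verifying that $U^\sG : \sB\sD \to \sC\sF$ is a genuine unitary, and checking that the partial traces on each side of the monotonicity step produce exactly the intended marginals, using that $V_\cT$ and $V_{\til{\cL}}$ are isometries together with the trace-preservation of the associated channels.
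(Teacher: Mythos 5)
Your proof is correct and follows essentially the same route as the paper: the upper bound via a Stinespring dilation of $\cT^{\sB\rarr\sC}$, Uhlmann's theorem on the purifying system $\sE\sF$, and monotonicity of fidelity under the partial trace over $\sC\sF$; and achievability via the Uhlmann unitary on $\sG=\sB\sD=\sC\sF$ connecting $\ket{\rho}^{\sA\sB\sE}\ket{0}^\sD$ to $V_{\til{\cL}}^{\sE\rarr\sE\sF}\ket{\sigma}^{\sA\sC\sE}$, followed by monotonicity over $\sE\sF$. The bookkeeping of marginals you flag as the main point of care is carried out correctly.
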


Note that, since the optimal transformation $\til{\cT}^{\sB\rarr\sC}$ is constructed from the Uhlmann unitary $U^\sG$ connecting $\ket{\rho}^{\sA\sB\sE}\ket{0}^\sD$ and $V_{\til{\cL}}^{\sE\rarr\sE\sF}\ket{\sigma}^{\sA\sC\sE}$, implementing it would generally require knowledge of the purified states $\ket{\rho}^{\sA\sB\sE}$, $\ket{\sigma}^{\sA\sB\sE}$, and also the quantum channel $\til{\cL}^\sE$ that attains the maximization of Eq.~\eqref{eq:opt ineq local op}.

In particular, when we apply this proposition to the decoupling argument discussed in Sec.~\ref{sec:decoupling and uhlmann}, the state $\sigma^{\sA\sE}$ can be taken to be a product state $\rho^\sA \otimes \tau^\sE$. In this case, the right-hand side of Eq.~\eqref{eq:opt ineq local op} reduces to
$\max_{\tau^\sE}\rF(\rho^{\sA\sE}, \rho^\sA\otimes\tau^\sE)$, where the maximization is taken over all states on $\sE$.
By identifying a state $\til{\tau}^\sE$ that achieves this maximization, we can construct the optimal local operation via the Uhlmann transformation.

We now prove the proposition.

\begin{proof}[Proof of \Cref{prop:opt local mixed via Uhl}]

From the Uhlmann's theorem, for any quantum channel $\cT^{\sB\rarr\sC}$, there is a unitary $\hat{U}^{\sE\sF}$ such that
\begin{align}
\label{inteq:108}
    \rF\big(\cT^{\sB\rarr\sC}(\rho^{\sA\sB}), \sigma^{\sA\sC}\big) = \rF\big(V_\cT^{\sB\rarr\sC\sF}\ket{\rho}^{\sA\sB\sE}, \hat{U}^{\sE\sF}\ket{\sigma}^{\sA\sC\sE}\ket{0}^\sF\big),
\end{align}
where $V_\cT^{\sB\rarr\sC\sF}$ is a Stinespring isometry of $\cT^{\sB\rarr\sC}$.
Since the fidelity is monotonic under the partial trace, the right-hand side on Eq.~\eqref{inteq:108} is bounded as
\begin{align}
    \rF\big(V_\cT^{\sB\rarr\sC\sF}\ket{\rho}^{\sA\sB\sE}, \hat{U}^{\sE\sF}\ket{\sigma}^{\sA\sC\sE}\ket{0}^\sF\big)
    &\leq \rF\big(\rho^{\sA\sE}, \tr_\sF\big[\hat{U}^{\sE\sF}(\sigma^{\sA\sE}\otimes\ketbra{0}{0}^\sF)(\hat{U}^{\sE\sF})^\dag\big]\big) \\
    &\leq \max_{\cL^{\sE}}\rF\big(\rho^{\sA\sE}, \cL^\sE(\sigma^{\sA\sE})\big), 
\end{align}
where we take the maximization over all quantum channels acting on system $\sE$, considering that $\tr_\sF\big[\hat{U}^{\sE\sF}(\cdot\otimes\ketbra{0}{0}^\sF)(\hat{U}^{\sE\sF})^\dag\big]$ gives a quantum channel on $\sE$.
Thus, any quantum channel $\cT^{\sB\rarr\sC}$ satisfies
\begin{align}
\label{inteq:109}
    \rF\big(\cT^{\sB\rarr\sC}(\rho^{\sA\sB}), \sigma^{\sA\sC}\big)
    \leq \max_{\cL^{\sE}}\rF\big(\rho^{\sA\sE}, \cL^\sE(\sigma^{\sA\sE})\big).
\end{align}

Regarding the equality condition, let $\til{\cL}^\sE$ be a quantum channel that attains the maximization, and let $V_{\til{\cL}}^{\sE \to \sE\sF}$ be its Stinespring isometry.
Then, from the Uhlmann's theorem, there exists a unitary $U^\sG$ on $\sG = \sB\sD = \sC\sF$ satisfying that
\begin{align}
    \rF\big(\rho^{\sA\sE}, \til{\cL}^\sE(\sigma^{\sA\sE})\big)
    &= \rF\big(U^\sG\ket{\rho}^{\sA\sB\sE}\ket{0}^\sD, V_{\til{\cL}}^{\sE\rarr\sE\sF}\ket{\sigma}^{\sA\sC\sE}\big) \\
    \label{inteq:110}
    &\leq \rF\big(\til{\cT}^{\sB\rarr\sC}(\rho^{\sA\sB}), \sigma^{\sA\sC}\big),
\end{align}
where $\til{\cT}^{\sB\rarr\sC}(\cdot) = \tr_\sF\big[U^\sG(\cdot\otimes\ketbra{0}{0}^\sD)(U^\sG)^\dag\big]$, and in the second inequality we used the monotonic property of the fidelity under the partial trace again.
From Eqs.~\eqref{inteq:109} and~\eqref{inteq:110}, we see that $\til{\cT}^{\sB\rarr\sC}$ gives the optimal transformation from $\rho^{\sA\sB}$ to $\sigma^{\sA\sC}$ through a local operation and achieves that $\rF\big(\til{\cT}^{\sB\rarr\sC}(\rho^{\sA\sB}), \sigma^{\sA\sC}\big) = \max_{\cL^{\sE}}\rF\big(\rho^{\sA\sE}, \cL^\sE(\sigma^{\sA\sE})\big)$.

\end{proof}



\section{Derivation of the explicit form of the Uhlmann partial isometry}
\label{sec:analyze partial iso}

Following the approach in Ref.~\cite{Jozsa1994FidelityforMixedQuantumStates}, we show that the isometry $V^\sB = \sgn^{(\rm SV)}\big(\tr_{\hat{\sA}}\big[\ketbra{\sigma}{\rho}^{\hat{\sA}\sB}\big]\big)$, satisfies that $\rF(\rho^\sA, \sigma^\sA) = \rF(V^\sB\ket{\rho}^{\sA\sB}, \ket{\sigma}^{\sA\sB})$, where the sign function acts on the singular values of the input matrix, as shown in Eq.~\eqref{inteq:158}.
This implies that $V^\sB$ is the Uhlmann partial isometry.

To this end, we should recall the variational characterization of the trace norm: for any square matrix $M$, it holds that $\|M\|_1 = \max_{U}|\tr[UM]|$, where maximization is taken over all unitaries. The maximization is attained by the inverse of the unitary polar factor of $M$.
On the block specified by the left and right singular spaces corresponding to the non-zero singular values of $M$, the unitary polar factor of $M$ is uniquely determined and given by $\sgn^{(\rm SV)}(M)$. Outside the block, it can act arbitrarily, but this arbitrariness does not affect the result.

Using the above fact, we can derive the explicit form of the Uhlmann partial isometry in a straightforward manner.
By the Uhlmann's theorem, $\rF(\rho^\sA, \sigma^\sA)$ is rephrased as
\begin{align}
    \rF(\rho^\sA, \sigma^\sA) 
    &= \max_{U^\sB}\rF(U^\sB\ket{\rho}^{\sA\sB}, \ket{\sigma}^{\sA\sB}) \\
    &= \max_{U^\sB} \big|\bra{\sigma}^{\sA\sB}U^\sB\ket{\rho}^{\sA\sB}\big|^2 \\
    &= \max_{U^\sB} \big|\tr\big[U^\sB\tr_\sA[\ketbra{\rho}{\sigma}^{\sA\sB}]\big]\big|^2,
\end{align}
where maximization is taken over all unitaries $U^\sB$.
On the block specified by the left and right singular spaces corresponding to the non-zero singular values of $\tr_\sA[\ketbra{\rho}{\sigma}^{\sA\sB}]$, the unitary that achieves the maximization is uniquely determined as $\big(\sgn^{(\rm SV)}(\tr_\sA[\ketbra{\rho}{\sigma}^{\sA\sB}])\big)^\dag = \sgn^{(\rm SV)}(\tr_\sA[\ketbra{\sigma}{\rho}^{\sA\sB}])$, which is a partial isometry. We denote the partial isometry by $V^\sB$ and refer to it as the Uhlmann partial isometry.

Finally, we see that the singular values of $\tr_\sA[\ketbra{\sigma}{\rho}^{\sA\sB}]$ correspond to those of $\sqrt{\sigma^\sA}\sqrt{\rho^\sA}$.
Without loss of generality, we suppose $d_\sA \leq d_\sB$, which is justified by the fact that we can arbitrarily pad $\sB$ with the state $\ket{0}$.

Suppose that the Schmidt decomposition of the purified state $\ket{\rho}^{\sA\sB}$ and $\ket{\sigma}^{\sA\sB}$ is given by
\begin{align}
    \label{inteq:schmidt decomp rho_sigma}
      &\ket{\rho}^{\sA\sB} = \sum_{i=1}^{r_\rho}\sqrt{p_i}\ket{e_i}^\sA\ket{f_i}^\sB, \ \ \text{and} \ \ \ \ket{\sigma}^{\sA\sB} = \sum_{j=1}^{r_\sigma}\sqrt{q_j}\ket{g_j}^\sA\ket{h_j}^\sB,
\end{align}
respectively.
Then, we have that
\begin{align}
    \tr_\sA\big[\ketbra{\sigma}{\rho}^{\sA\sB}\big] 
    &= \sum_{i,j}\sqrt{p_i}\sqrt{q_j}\braket{e_i}{g_j}\ketbra{h_j}{f_i}^\sB \\
    &= \Big(\sum_{i,j}\sqrt{p_i}\sqrt{q_j}\braket{g_j}{e_i}\ketbra{h_j^*}{f_i^*}^\sB\Big)^* \\
    &= \big(V_2^{\sA\rarr\sB}\sum_{i,j}\sqrt{p_i}\sqrt{q_j}\braket{g_j}{e_i} \ketbra{g_j}{e_i}^\sA(V_1^{\sA\rarr\sB})^\dag\big)^* \\
    &= \big(V_2^{\sA\rarr\sB}\sqrt{\sigma^\sA}\sqrt{\rho^\sA}(V_1^{\sA\rarr\sB})^\dag\big)^*,
\end{align}
where $V_1^{\sA\rarr\sB}$ and $V_2^{\sA\rarr\sB}$ are isometries that map $\ket{e_i}^\sA$ to $\ket{f_i^*}^\sB$ and $\ket{g_j}^\sA$ to $\ket{h_j^*}^\sB$, respectively.

Since singular values are invariant under complex conjugation and isometries, we conclude that $\tr_\sA\big[\ketbra{\sigma}{\rho}^{\sA\sB}\big]$ and $\sqrt{\sigma^\sA}\sqrt{\rho^\sA}$ have identical singular values.
Note that all of these singular values lie within the range $[0,1]$.
The rank of $\tr_{\sA}\big[\ketbra{\sigma}{\rho}^{\sA\sB}\big]$ also coincides with the rank of $\sqrt{\sigma^\sA}\sqrt{\rho^\sA}$, as they have the same number of singular values.


\section{Proof of \texorpdfstring{Lemma~\ref{lem:state error not accumulate}}{Lemma~\ref{lem:state error not accumulate}}}

\label{sec:appendix error not accumulate}

We now prove \Cref{lem:state error not accumulate}, which is restated below.

\lemrobstexp*

\begin{proof}[Proof of \Cref{lem:state error not accumulate}]

Observe that
\begin{align}
    \f{d}{ds}(e^{-isB}e^{isA}) 
    &= -iBe^{-isB}e^{isA} + e^{-isB}iAe^{isA} \\
    \label{inteq:91}
    &=ie^{-isB}(A-B)e^{isA}.
\end{align}
Suppose $t \geq 0$, we integrate both sides of Eq.~\eqref{inteq:91} from $0$ to $t$, obtaining that
\begin{align}
\label{inteq:56}
    e^{-itB}e^{itA} - \bI = \int_0^t ie^{-isB}(A-B)e^{isA} ds.
\end{align}
Multiplying both sides of Eq.~\eqref{inteq:56} from the left by $e^{itB}$ and taking the operator norm gives
\begin{align}
    \big\|e^{itA} - e^{itB}\big\|_\infty 
    &= \Big\|\int_0^t ie^{i(t-s)B}(A-B)e^{isA} ds\Big\|_\infty \\
    &\leq \int_0^t |i|\big\|e^{i(t-s)B}\big\|_\infty \|A-B\|_\infty\big\|e^{isA}\big\|_\infty ds \\
    &= \int_0^t ds \|A-B\|_\infty \\
    &= t \|A-B\|_\infty.
\end{align}
In the second equation, we used that $e^{i(t-s)B}$ and $e^{isA}$ are unitaries, since $A$ and $B$ are Hermitian.

When $t < 0$, a similar approach shows that $\big\|e^{itA} - e^{itB}\big\|_\infty \leq -t\|A-B\|_\infty$.
Hence, $\big\|e^{itA} - e^{itB}\big\|_\infty \leq |t|\|A-B\|_\infty$ holds for any $t \in \bR$, and the proof is completed.

\end{proof}


\end{document}